\documentclass[11pt]{article}

\usepackage{amsmath,amsthm,amsbsy,amssymb,relsize,cases,ulem,bm,latexsym,bbm,dsfont,stmaryrd,wasysym,esint,cmll,graphicx,psfrag,epsf,enumerate,ragged2e,geometry,colortbl,caption,longtable,float,booktabs,multirow,fancyhdr,setspace,titlesec,graphicx,rotating,abstract,pdflscape,arydshln,dashrule,subcaption,hyperref,cleveref,mathtools,enumitem,xstring,threeparttable}

\usepackage{upgreek}

\usepackage{amsfonts}

\usepackage{mathrsfs}
\usepackage{xr-hyper}

\allowdisplaybreaks[4]
\usepackage[dvipsnames]{xcolor}
\usepackage[many]{tcolorbox}
\usepackage{fancyhdr}
\newcommand{\RNum}[1]{\uppercase\expandafter{\romannumeral #1\relax}}
\titleformat{\section}{\normalfont\large\bfseries}{\thesection.}{1em}{}
\titleformat{\subsection}{\normalfont\large\bfseries}{\thesubsection}{1em}{}
\newcommand{\rn}{\romannumeral}
\titleformat{\subsubsection}{\normalfont\bfseries}{\thesubsubsection}{1em}{}

\usepackage{snaptodo}
\snaptodoset{block rise=2em}
\snaptodoset{margin block/.style={font=\scriptsize}} 
\snaptodoset{chain bias=5cm} 
\newcommand{\ag}[1]{%
	\snaptodo[margin block/.append style=green!50!black]{%
		{\sloppy\textbf{Abhi}: #1}}}
\makeatletter
\renewcommand\@biblabel[1]{}

\makeatother
\makeatletter
\newcommand{\labeltext}[1]{%
	#1%
	\protected@edef\@currentlabel{#1}%
}
\makeatother
\newcommand{\propref}[2]{%
	\hyperref[#1]{\ref*{#1} {#2}}%
}
\DeclareMathAlphabet{\mathsuet} {T1} {wesu}{bx}{sl}
\makeatletter
\def\pmb@#1#2{\setbox8\hbox{$\m@th#1{#2}$}%
	\setboxz@h{$\m@th#1\mkern.9mu$}\pmbraise@\wdz@
	\binrel@{#2}%
	\dimen@-\wd8 %
	\binrel@@{%
		\mkern-.1mu\copy8 %
		\kern\dimen@\mkern-.3mu\copy8 %
		\kern\dimen@\mkern.3mu\copy8 %
	}
}
\makeatother
\newtheorem{assumption}{Assumption}%[section]
\newtheorem{theorem}{Theorem}[section]

\newtheorem{definition}{Definition}%[section]

\newtheorem{lemma}{Lemma}%[section]
\newtheorem{notation}{Notation}

\newtheorem{remark}{Remark}%[section]
\renewenvironment{proof}{{\noindent\it Proof.}\quad}{\hfill$\blacksquare$\par}
\newcommand{\tr}{\operatorname{tr}}
\newcommand{\E}{\operatorname{E}}
\newcommand{\Var}{\operatorname{Var}}

\newcommand{\bnorm}{\big\Vert}

\newcommand{\bc}{\mathbf{c}}
\newcommand{\kwy}{j_{1}}

\newcommand{\kwu}{j_{3}}
\newcommand{\Diag}{\mathtt{Diag}}
\newcommand{\diagM}{\mathtt{diagM}}

\newcommand{\bZ}{\mathbf{Z}}

\newcommand{\Varm}{\mathbf{\Omega}}%{\pmb{\mathcal{V}}}

\newcommand{\bQ}{\mathbf{Q}}
\newcommand{\bJ}{\mathbf{J}}

\newcommand{\lin}{\texttt{line}}
\newcommand{\qua}{\texttt{quad}}
\newcommand{\bY}{\mathbf{Y}}
\newcommand{\bV}{\mathbf{V}}
\newcommand{\bX}{\mathbf{X}}

\newcommand{\bphi}{\pmb{\phi}}
\newcommand{\bvarPhi}{\pmb{\Phi}}

\newcommand{\bS}{\mathbf{S}}

\newcommand{\lwy}{\ell_{1}}
\newcommand{\lwyl}{\ell_{2}}
\newcommand{\lwu}{\ell_{3}}
\newcommand{\dwy}{\lwy}

\newcommand{\dwu}{\lwu}
\newcommand{\Sigep}{\mathbf{\Sigma}}

\newcommand{\bzero}{\mathbf{0}}
\newcommand{\bP}{\mathbf{P}}

\newcommand{\lpn}{\ell_{p}}
\newcommand{\lmn}{\ell_{m}}
\newcommand{\cI}[1]{\mathbf{1}[#1]}
\newcommand{\bestY}{\mathbb{Y}}

\newcommand{\bard}{\bar{d}}
\newcommand{\cB}{\mathcal{B}}

\newcommand{\cA}{\mathscr{A}}
\newcommand{\bC}{\mathbf{C}}

\newcommand{\lqn}{\ell_{q}}
\newcommand{\bE}{\mathbf{E}}
\newcommand{\bL}{\mathbf{L}}

\newcommand{\bM}{\mathbf{M}}

\newcommand{\rank}[1]{\mathrm{rank}(#1)}
\newcommand{\spnorm}[1]{\|#1\|_{\mathrm{sp}}}

\newcommand{\Fnorm}[1]{\|#1\|}%_{\mathrm{F}}
\newcommand{\bFnorm}[1]{\Big\Vert#1\Big\Vert}%_{\mathrm{F}}

\newcommand{\colnorm}[1]{\|#1\|_{1}}
\newcommand{\rc}{\mathrm{rc}}
\newcommand{\rcnorm}[1]{\|#1\|_{\rc}}
\newcommand{\rownorm}[1]{\|#1\|_{\infty}}

\newcommand{\approwy}{\varsigma_{1}}
\newcommand{\approwyl}{\varsigma_{2}}
\newcommand{\approwu}{\varsigma_{3}}
\newcommand{\pk}{p_{k}}
\newcommand{\gk}{g_{k}}

\newcommand{\approgk}{\varsigma_{k}}

\newcommand{\Bt}{\mathcal{B}_{t}}
\newcommand{\bH}{\mathbf{H}}
\newcommand{\hTt}{h_{Tt}}

\newcommand{\dlim}{\stackrel{d}{ \to}}
\newcommand{\mNL}{m_{N}^{\mathtt{line}}}
\newcommand{\mNQ}{m_{N}^{\mathtt{quad}}}
\newcommand{\mNQk}{m_{Nk}^{\mathtt{quad}}}
\newcommand{\invN}{\frac{1}{n(T-1)}}
\newcommand{\invNsq}{\frac{1}{\sqrt{n(T-1)}}}

\newcommand{\cJ}{\mathcal{J}}
\newcommand{\mC}{\mathfrak{C}}
\newcommand{\abs}{\mathrm{abs}}
\newcommand{\QED}{\\\noindent\makebox[\linewidth][r]{$\blacksquare$}\\}
\newcommand{\te}{\mathtt{e}}
\newcommand{\mn}{\ell_n}
\usepackage{xparse,etoolbox}
\newcommand{\errorD}[2]{\ell_{#2}^{#1\varsigma_{#2}}}

\NewDocumentCommand{\dzn}{o}{%
	\ensuremath{%
		\IfValueTF{#1}{\ell_{z#1}}{\ell_{z}}
	}%
}

\NewDocumentCommand{\maxerrorD}{o}{%
	\ensuremath{%
		\max_{1\leq k\leq 3}
		\{\ell_{k}^{%
			\IfValueTF{#1}{#1\varsigma_{k}}{\varsigma_{k}}%
		}\}%
	}%
}

\newcommand{\elln}{\ell_{n}}
\newcommand{\bEL}{\pmb{E}}
\NewDocumentCommand{\YEL}{m o}{%
	\ifstrequal{#1}{N}{%
		B_{3}^{-1}\pmb{\mathcal{E}}_{\!#1}% 
	}{%
		B_{3}^{-1}\mathcal{E}_{#1}%         % 
	}%
	\IfValueT{#2}{^{#2}}%
}

\newcommand{\lchi}{\scalebox{1.2}{$\chi$}} 
\NewDocumentCommand{\chiL}{m o}{%
	\ifstrequal{#1}{N}{%
		\pmb{\lchi}_{\!#1}% 
	}{%
		{\lchi}_{#1}%         % 
	}%
	\IfValueT{#2}{^{#2}}%
}

\NewDocumentCommand{\UL}{m o}{%
	\ifstrequal{#1}{N}{\mathbf{U}}{U}_{#1}%
	\IfValueT{#2}{^{#2}}%
}
\newcommand{\bVL}{\mathbf{V}}
\NewDocumentCommand{\JL}{m o}{%
	\ifstrequal{#1}{N}{\mathbf{J}}{J}_{#1}%
	\IfValueT{#2}{^{#2}}%
}
\NewDocumentCommand{\YL}{m o}{%
	\ifstrequal{#1}{N}{\mathbf{Y}}{Y}_{#1}%
	\IfValueT{#2}{^{#2}}%
}
\NewDocumentCommand{\dYL}{m o}{%
	\ifstrequal{#1}{N}{\ddot{\mathbf{Y}}}{\ddot{Y}}_{#1}%
	\IfValueT{#2}{^{#2}}%
}

\NewDocumentCommand{\ZL}{m o}{%
	\ifstrequal{#1}{N}{\mathbf{Z}}{Z}_{#1}%
	\IfValueT{#2}{^{#2}}%
}
\NewDocumentCommand{\XL}{m o}{%
	\ifstrequal{#1}{N}{\mathbf{X}}{X}_{#1}%
	\IfValueT{#2}{^{#2}}%
}

\newcommand{\brL}{\mathbf{r}}
\NewDocumentCommand{\KL}{m o}{%
	\ifstrequal{#1}{N}{\mathbf{K}}{K}_{#1}%
	\IfValueT{#2}{^{#2}}%
}
\newcommand{\MSL}[1]{%
	\ifstrequal{#1}{N}{%
		\pmb{M}({\bm{\Sigma}_{#1}}) %
	}{%
		{M}({\Sigma_{#1}}) %
	}%
}

\newcommand{\MSLhat}[1]{%
	\ifstrequal{#1}{N}{%
		\pmb{M}(\tilde{\bm{\Sigma}}_{#1})%
	}{%
		M(\tilde{\Sigma}_{#1}) %
	}%
}

\newcommand{\JSL}[1]{%
	\ifstrequal{#1}{N}{%
		\pmb{J}({\bm{\Sigma}_{#1}}) %
	}{%
		{J}({\Sigma_{#1}}) %
	}%
}

\newcommand{\JSLhat}[1]{%
	\ifstrequal{#1}{N}{%
		\pmb{J}({\tilde{\bm{\Sigma}}_{#1}}) %
	}{%
		{J}({\tilde{\Sigma}_{#1}}) %
	}%
}

\NewDocumentCommand{\QL}{m o}{%
	\IfSubStr{#1}{N}{%
		\mathbf{Q}_{\!#1}%       %
	}{%
		Q_{#1}%                % 
	}%
	\IfValueT{#2}{^{#2}}%
}

\NewDocumentCommand{\QLhat}{m o}{%
	\IfSubStr{#1}{N}{%
		\tilde{\mathbf{Q}}_{\!#1}%       %
	}{%
		\tilde{Q}_{#1}%                % 
	}%
	\IfValueT{#2}{^{#2}}%
}

\NewDocumentCommand{\PL}{m o}{%
	\IfSubStr{#1}{N}{%
		\mathbf{P}_{\!#1}%       %
	}{%
		P_{#1}%                % 
	}%
	\IfValueT{#2}{^{#2}}%
}

\NewDocumentCommand{\PLhat}{m o}{%
	\IfSubStr{#1}{N}{%
		\tilde{\mathbf{P}}_{\!#1}%       %
	}{%
		\tilde{P}_{#1}%                % 
	}%
	\IfValueT{#2}{^{#2}}%
}

\NewDocumentCommand{\hatSigma}{m o}{%
	\ifstrequal{#1}{N}{\widehat{\mathbf{\Sigma}}}{\widehat{\Sigma}}_{#1}%
	\IfValueT{#2}{^{#2}}%
}

\newcommand{\invB}[1]{B_{#1}^{-#1}}
\newcommand{\invS}[1]{S_{#1}^{-#1}}
\newcommand{\invR}[1]{R_{#1}^{-#1}}

\newcommand{\cQ}{\mathscr{Q}}
\newcommand{\cQtilde}{\tilde{\cQ}}
\newcommand{\sigmaI}[1]{\mathcal{I}_{#1}}
\newcommand{\fe}{\mathbf{c}_{n}+\alpha_{t}l_n}
\newcommand{\lnT}{\ln{T}}
\newcommand{\vsinf}{\underline{\varsigma}}

\newcommand{\mumin}{\mu_{\min}}
\newcommand{\mat}[1]{\{#1\}}
\newcommand{\vecD}{\mathtt{vec}}
\newcommand{\Tau}{\mathfrak{s}}
\newcommand{\calX}{\mathcal{X}}
\newcommand{\cF}{\mathcal{F}}
\newcommand{\sU}{\mathscr{U}}

\newcommand{\sG}{\mathscr{G}}
\newcommand{\sGN}[1]{m_{N}'(#1)\Omega_{N}^{-1}m_{N}(#1)}
\newcommand{\EsGN}[1]{\bigl[\E\bigl(m_{N}(#1)\bigl)\bigr]'\Omega_{N}^{-1}\E\bigl(m_{N}(#1)\bigl)}
\newcommand{\suma}[2]{\sum_{#1}^{#2}}
\newcommand{\Op}{O_{p}}
\newcommand{\op}{o_{p}}
\newcommand{\half}{\frac{1}{2}}

\newcommand{\stl}{\setlength{\tabcolsep}}
\newcommand{\stla}{\setlength{\abovecaptionskip}}

\usepackage{tocloft}
\usepackage[round]{natbib}
\usepackage{multibib}
\newcites{mmc}{Supplement Reference}
\usepackage[commandnameprefix=always]{changes}%[markup=default,commandnameprefix=always]
\setaddedmarkup{\textcolor{magenta}{#1}}
\usepackage{soul}
\setdeletedmarkup{\textcolor{gray}{\sout{#1}}}
\providecommand{\keywords}[1]
{
	\small
	\\
	\textit{Keywords:} \textit{#1}
}
\providecommand{\JEL}[1]
{
	\small
	\\~\\
	\textit{JEL classification: } \textit{#1}
}
\newcommand{\titlename}{Semi-nonparametric estimation of spatial dynamic panel data models with nonparametric spatial weights}
\title{\titlename\thanks{Gupta's research was supported in part by the Social Science and Humanities Research Council of Canada. Qu's research was supported by the National Natural Science Foundation of China via grant 72595872. Zhang's research was supported by the National Natural Science Foundation of China via grant 72503134.}}
\author{Abhimanyu Gupta\thanks{Department of Economics, Queen's University, Dunning Hall, 94 University Avenue, Kingston, Ontario K7L 3N6, Canada, and Department of Economics, University of Essex, Wivenhoe Park, Colchester CO4 3SQ, UK. Email: abhimanyu.g@queensu.ca.} 
	\and Xi Qu\thanks{Department of Economics, Antai College of Economics and Management, Shanghai Jiao Tong University, 1954 Huashan Road, Shanghai, 200030, China PRC. Email: xiqu@sjtu.edu.cn.}
	\and Jiajun Zhang\thanks{International Business School, Shanghai University of International Business and Economics, 201620, China PRC. Email: jiajun30@suibe.edu.cn.}}
\begin{document}
	\maketitle 
	\begin{abstract}
		We develop a semi-nonparametric framework for spatial dynamic panel data (SDPD) models with two-way fixed effects when the spatial interaction structure is unknown beyond a distance measure. This is accomplished by modelling spatial weights in the outcome, lagged-outcome, and disturbance channels as unknown functions of underlying economic distances. These enter the SDPD system through matrix-function operators, providing a unified approach that accommodates both spatial autoregressive and matrix exponential spatial specifications. Allowing for unknown heteroskedasticity, we propose sieve GMM estimators based on a stacked set of linear and quadratic moment conditions, and derive a feasible optimal GMM estimator and a more efficient feasible best GMM estimator. As $(n,T)\to\infty$, the parametric component is $\sqrt{n(T-1)}$-consistent and asymptotically normal, echoing classical semi-nonparametric results. Monte Carlo experiments indicate excellent finite-sample performance. We apply the method to `witch' killings as studied by \citet{miguel2005poverty}, and find that economic-geography proximity rather than cultural-geography proximity between communities significantly amplifies spatial dependence in these economic murders.
		\JEL{C13, C23}
		\keywords{Spatial dynamic panel data models, Unknown spatial weights, Sieve-based GMM estimation, Unknown heteroskedasticity, semiparametric, nonparametric}
	\end{abstract}
	%%%%%%%%%%%%%%%%%
	% \begingroup
	% \renewcommand{\addcontentsline}[3]{}
	%%%%%%%%%%%%%%%	
	\newpage
	\section{Introduction}
	This paper develops a general framework for \textit{semi-nonparametric} estimation of and inference on spatial dynamic panel data (SDPD) models with two-way fixed effects (FE), in the sense that the model contains both finite dimensional and infinite dimensional parameters of interest. SDPD models provide a workhorse framework for quantifying spillovers across units and over time, with applications in trade, public finance, housing, and regional economics. We contribute to the spatial autoregressive (SAR) literature that continues to develop extensively to accommodate various data structures and estimation challenges. Foundational studies established quasi-maximum likelihood (QML) and generalized method of moments (GMM) approaches for basic SDPD specifications \citep{yu2008quasi-maximum, lee2010, lee2014efficient}. This literature has subsequently expanded to address specific panel dynamics, such as spatial cointegration and short panels \citep{yu2012estimation, su2015qml}, as well as unobserved heterogeneity, including interactive fixed effects and latent group structures \citep{shi2017spatial, kuersteiner2020dynamic, bai2021dynamic, su2023identifying}.
	% SDPD models provide a workhorse framework for quantifying spillovers across units and over time, with applications in trade, public finance, housing, and regional economics; see, e.g.
	% \citet{yu2008quasi-maximum,lee2010,yu2012estimation,lee2014efficient,su2015qml,shi2017spatial,yang2018unified,kuersteiner2020dynamic,li2021spatial,bai2021dynamic,su2023identifying}.
	The SDPD framework is particularly versatile because it simultaneously incorporates three channels of spatial spillovers: the outcome, the lagged outcome, and the disturbance. A leading specification with two-way FE allows for contemporaneous spatial-lag dependence, space-time-lag dependence, and spatially correlated disturbances \citep{yang2018unified}. 
	% In these standard models, the individual and time fixed effects are typically combined with spatial weights matrices that are treated as known.
	\par 
	A key maintained assumption in such models is that the spatial interaction matrices are prespecified. However, this assumption may be too strict in empirical applications. Spatial weights are typically constructed from geographic or economic proximity, and misspecification may bias estimated spillovers and distort counterfactual policy conclusions. Even within a single channel, a single pre-specified matrix can be restrictive when spillovers operate through multiple distance components or higher-order paths. Moreover, linear spatial-lag terms may fail to capture nonlinear interaction patterns generated by matrix function operators, such as matrix exponential spatial specifications (MESS); see, e.g. \citet{lesage2007matrix,han2013model,debarsy2015large,yang2022unified}. In many empirical settings, researchers only know that the interaction depends on exogenous distances, but a specific parametric link is unavailable \citep{pinkse2002spatial}.
	\par 
	Motivated by these considerations, we estimate the spatial interaction structure directly from the data. We allow the weight between units in each channel to be an unknown function of observable, exogenous, and time-invariant distances. These unknown distance link functions are mapped into interaction matrices through matrix functions. This formulation, which we formalize in Section \ref{sec:model}, preserves the three-channel SDPD structure while allowing each channel to feature its own data-driven spatial interaction pattern, including SAR and MESS specifications.
	\par 
	Directly estimating these weights without structural assumptions or a penalized approach (see e.g. \citealp{lam2020estimation}) is infeasible because the number of unknown links grows quadratically with the cross-sectional dimension $n$. To accommodate settings where $n$ is comparable to or larger than the time dimension $T$, we impose structure through an observed exogenous distance measure, and approximate each unknown link function by a sieve expansion. This reduces the number of unknown links from quadratic in $n$ to a much smaller set of sieve coefficients that grows only slowly with sample size $nT$.
	\par 
	There is a growing literature on estimating unknown spatial links. For cross-sectional models, \citet{pinkse2002spatial} and \citet{sun2016functional} estimate a distance-based weights function via series approximation. For static spatial panels, \citet{lam2020estimation} and \citet{de2025identifying} exploit sparsity using LASSO-type methods, though their theory requires $T$ to be large relative to $n$. In contrast, \citet{chen2025npspatial} adopts a series approximation that accommodates large $n$ when $T$ is finite or moderately large. In this paper, we also employ a series-based approach, but extend the scope in two significant ways. First, whereas existing research typically allows spatial interaction only through the outcome variable, our framework simultaneously incorporates all three SDPD channels: the outcome, the lagged outcome, and the disturbance. Second, based on the SDPD model, we specify the unknown spatial weights themselves as varying coefficients that depend on an exogenous distance measure. This marks a departure from the existing varying-coefficient SAR literature, which typically treats the spatial weights matrix as pre-specified and instead allows the regression coefficients to vary with covariates (see, e.g. \citealp{su2012sieve, sun2018estimation, hoshino2022sieve, gupta2025wald}).
	\par 
	In this paper, we propose the sieve generalized method of moments estimator (GMME) for the SDPD models with unknown spatial weights matrices. Using sieve-based GMMEs in the case of unknown heteroskedasticity, we construct a feasible optimal GMME (OGMME) and a more efficient feasible best GMME (BGMME) based on quadratic and linear moments. Our study is closely related to \citet{pinkse2002spatial}, \citet{sun2016functional}, and \citet{chen2025npspatial}. The first two studies consider cross-sectional SAR models and derive the two-stage least squares (2SLS) using linear moment conditions. The latter study analyzes a static SAR spatial panel model with individual FE and derives the efficient 2SLS estimator (2SLSE), again based on linear moments. 
	
	Our framework differs from these studies in two important aspects. 
	First, we develop a unified SDPD framework that accommodates both SAR and MESS specifications, contrasting with \citet{pinkse2002spatial}, \citet{sun2016functional}, and \citet{chen2025spatial}, who restrict their attention to SAR models in cross-sectional or static panel settings. This is non-trivial because the interaction between the time-lagged dependent variable and the sieve approximation error introduces complex bias terms that are absent in static settings and must be controlled.
	Second, our semi-nonparametric SDPD model considers unknown heteroskedastic disturbances and develops GMM estimators based on quadratic and linear moments. This setting has not been previously explored.%
	\footnote{Our model can be viewed as a semi-nonparametric counterpart of \citet{lee2014efficient}, delivering a more flexible specification of the spatial interaction structure. %Related work has examined SDPD models with parametric endogenous weights, for example \citet{qu2017qml} and \citet{kuersteiner2020dynamic}. Extending our semiparametric framework to SDPD models with functionally unknown endogenous weights is a natural but technically more involved direction and is left for future work.
		
	}
	Under certain rate conditions, the estimator of the finite-dimensional parameters is $\sqrt{n(T-1)}$-consistent and asymptotically normal as $(n,T)\to\infty$.
	Monte Carlo experiments demonstrate that the proposed estimators exhibit excellent finite-sample performance.
	\par 
	We then apply our methods to investigate the spillover effects of so-called `witch' killing. Many studies have noted that poverty and violence occur simultaneously \citep{huang2004crime,mehlum2006poverty,Khanna20190547,mcguirk2020economic,HEILMANN2021104408}. There is a strong negative
	relationship between economic growth and crime across countries \citep{miguel2005poverty} . \cite{miguel2005poverty} empirically finds that exogenous extreme rainfall causes lower income in the region, and leads to a large increase in the murder of `witches', who are nearly all elderly women. The results show that income shocks are the driver of the increase in witch killings, instead of a scapegoat culture. Based on \cite{miguel2005poverty}, we apply our model to investigate the mechanism of spillover effects in the study of witch killings. Our results show that the economic-geography proximity rather than the cultural-geography proximity between communities significantly amplifies spatial dependence in witch killings. 
	\par 
	The rest of this paper is organized as follows. Section \ref{sec:model} introduces the formal model specification, Section \ref{sec:gmm} provides the sieve-based GMM estimation methods, and Section \ref{sec:asy} investigates the large sample properties of GMMEs. This includes a discussion of the asymptotic efficiency of the proposed estimators under unknown heteroskedasticity. Monte Carlo results for various estimators are provided in Section \ref{sec:mc}. Section \ref{sec:witch} presents our empirical applications to the witch killings. The appendix contains the regularity assumptions, key lemmas, and proofs of the theorems, while the online supplementary material provides additional lemmas and simulation results.
	\\
	%%     -----------------------------         %%
	\textit{Notation.} For a real symmetric matrix $H$, let $\mu_{\max}(H)$ and $\mu_{\min}(H)$ denote its largest and smallest eigenvalues, and let $\rho(H)$ denote its spectral radius, which is defined as the maximum absolute value of its eigenvalues. For a generic real matrix $H$, we write $\Fnorm{H}$ and $\spnorm{H}$ for the Frobenius and spectral norms, respectively. Let $\rcnorm{H}\equiv\max\mat{{\colnorm{H},\rownorm{H}}}$, where $\colnorm{H}$ stands for its column sum norm and $\rownorm{H}$ stands for its row sum norm. When $H$ is square, let $H^{s}=H+H'$, let $\diagM(H)$ denote the vector of diagonal entries of $H$, and let $\Diag_{i=1}^r(H_{i})$ denote the block diagonal matrix whose diagonal blocks are $H_1, \cdots, H_r$. 
	% For the sieve approximation, we define $\elln=\max_{1\leq k\leq 3}\ell_{k}$ and $\vsinf=\min_{1\leq k\leq 3}\varsigma_{k}$.
	\color{black}
	%%         ----------------------------------         %%
	\section{Model specification}\label{sec:model}	
	%%%                 Modifications              %%%
	%%%  -----------------------------------------------  %%%
	For $i=1,\dots,n$ and $t=1,\dots,T$, a standard parametric SDPD model typically assumes that an observable outcome $y_{it}$ is a linear function of $\sum_{j=1}^{n}w_{1ij}y_{jt}$ and $\sum_{j=1}^n w_{2ij}y_{j,t-1}$, for some known `spatial weights' $w_{1ij}$ and $w_{2ij}$. 
	We extend this to a semi-nonparametric setting by modeling the spatial links as unknown functions of observable exogenous distances. The scalar form of the model for unit $i$ at time $t$ can be expressed as:
	\[\setlength{\abovedisplayskip}{3pt}
	y_{it} = \sum_{j=1}^n g_1(d_{ij})y_{jt}+\gamma y_{i,t-1} + \sum_{j=1}^n g_2(d_{ij})y_{j,t-1}+x_{it}'\beta+c_i+\alpha_{t}+ u_{it},
	\setlength{\belowdisplayskip}{3pt}
	\]
	with $u_{it}=\sum_{j=1}^n g_3(d_{ij})u_{jt}+\epsilon_{it}$, for unknown functions $g_k(\cdot), k=1,2,3$. Here, $d_{ij}$ represents some observed time-invariant exogenous distance measures between units $i$ and $j$, $x_{it}$ is an $\ell_x\times 1$ vector of exogenous regressors, $c_i$ and $\alpha_t$ are unobserved individual and time fixed effects, respectively, and $\epsilon_{it}$ is an unobserved error term. 
	By stacking these observations across all $n$ units, we generalize these interactions through matrix-function operators, which allows us to provide a unified approach that accommodates both SAR and MESS specifications. 
	\color{black}
	%%%  -------------------------------------------------------  %%%
	Specifically, the matrix form of our model is:
	\begin{equation} \label{eq:sdpd_np}
		B_{1}Y_{t}=(\gamma I_n+B_2)Y_{t-1}+X_{t}\beta+\bc_{n}+\alpha_{t}l_{n}+U_{t}, \quad B_{3}U_{t}=E_{t},
	\end{equation}
	where the elements of the $n\times n$ matrices $B_k$ are determined by the unknown distance link functions $g_k(d_{ij})$.  These link functions form the spatial weight matrices $G_k=\mat{g_k(d_{ij})}$, which in turn characterize the operators $B_k$ as follows:
	\begin{flalign}\label{eq:matrix function}
		\begin{split}
			\text{SAR:} \
			B_1&=I_n-G_1, \ B_2=G_2, \ B_3=I_n-G_3; \\
			\text{MESS:} \ 
			B_k&=e^{G_k} \ \text{for $k=1,2,3$},
		\end{split}    
	\end{flalign}
	with $I_{n}$ the $n \times n$ identity matrix. Moreover, $Y_{t}=(y_{1t},y_{2t},...,y_{nt})'$, $E_{t}=(\epsilon_{1t},\epsilon_{2t},...,\epsilon_{nt})'$ are $n \times 1$ column vectors and the $\epsilon_{it}$ are independent with zero means and homoskedastic or unknown heteroskedastic variances across $i$ and $t$, and $U_{t}=(u_{1t},...,u_{nt})'$, which is a SAR/MESS process.
	$X_{t}$ is an $n\times \ell_x$ matrix of time varying regressors, $\bc_{n}=(c_{1},...,c_{n})'$ denotes the vector of unobserved individual FE, $a_{t}$ is a scalar unobservable time FE, and $l_{n}$ is the $n\times 1$ vector of ones. The initial condition $Y_0$ is exogenously given. We impose the normalization $l_{n}'\bc_{n}=0$ to avoid the loss of identification of $c_{i}$ and $\alpha_{t}$ because $c_{i}+\alpha_{t}=(c_{i}-\nu)+(\alpha_{t}+\nu)$ for an arbitrary $\nu$. 
	For both specifications, we ensure system stability by imposing the spectral-radius condition $\rho(A)<1$, where $A=B_{1}^{-1}(\gamma I_n+B_2)$.
	\begin{remark}\label{remark:MESS}
		For the MESS specification, setting $\gamma=\varphi-1$ allows the model in \eqref{eq:sdpd_np} to be interpreted in the intuitive form $e^{G_1}Y_{t} = (\varphi I_n + (e^{G_2}-I_n)) Y_{t-1} + X_{t}\beta+\bc_{n}+\alpha_{t}l_{n}+U_{t}$ with $e^{G_3}U_{t}=E_{t}$. In this representation, $\varphi Y_{t-1}$ captures the pure time autoregressive persistence, while the operator $(e^{G_2}-I_n)$ acting on $Y_{t-1}$ isolates the spatio-temporal spillover effect. By the standard matrix exponential expansion $e^{G_2}-I_n=\sum_{l=1}^{\infty}\frac{G_2^l}{l!}$, the MESS specification captures higher-order spillovers with factorially decaying weights. This stands in distinct contrast to the SAR specification, where the spatio-temporal effect is driven solely by $G_{2}Y_{t-1}$, restricting the immediate spillover to first-order neighbors. 
		Unlike the SAR specification which requires $\rho(B_1)<1$ and $\rho(B_3)<1$ for invertibility, the MESS ensures $B_1$ and $B_3$ are invertible for any finite $G_1$ and $G_3$ \citep{debarsy2015large}.
	\end{remark}
	Throughout the paper, the subscript $k=1, 2, 3$ refers to the contemporaneous outcome, lagged outcome, and disturbance channels, respectively. For estimation, we proceed by approximating $g_{k}(\cdot)$ via series, i.e., $g_{k}(d)=\sum_{\pk=1}^{\ell_{k}}\lambda_{\pk}\phi_{k\pk}(d)+\delta_{k}(d)$, where $\phi_{k\pk}(\cdot)$ are basis functions and $\delta_{k}(d)$ are suitably decaying approximation errors and the series lengths $\ell_{k}$ diverge to infinity as $(n,T)\to\infty$. We then denote the sieve approximation and relevant matrices as $\xi_{k}(d)=\sum_{\pk=1}^{\ell_{k}}\lambda_{\pk}\phi_{k\pk}(d)$, $\Xi_{k}=\mat{\xi_{k}(d_{ij})}$, and $\varPhi_{k\pk}=\mat{\phi_{k\pk}(d_{ij})}$. By combining \eqref{eq:matrix function} with these approximations and defining $\Delta_{k}=\mat{\delta_{k}(d_{ij})}$, we obtain the decomposition $B_{k}=S_{k}+R_{k}$, where $S_{k}$ depends on the sieve matrix $\Xi_{k}$ and $R_{k}$ collects the approximation error matrix. Specifically, for the SAR, $S_{1}=I_{n}-\Xi_{1}, S_{2}=\Xi_{2}, S_{3}=I_{n}-\Xi_{3}$ with $R_{1}=-\Delta_{1}, R_{2}=\Delta_{2}, R_{3}=-\Delta_{3}$; for the MESS, $S_{k}=e^{\Xi_{k}}$ and $R_{k}=S_{k}(e^{\Delta_{k}}-I_{n})$.
	\par 
	Substituting these into \eqref{eq:sdpd_np}, the relationship between $Y_{t}$ and $E_{t}$ can be reformulated as:
	\begin{equation}\label{eq:mess_appro_Et}
		S_{3}S_{1}Y_{t}=S_{3}\bigl((\gamma I_{n}+S_{2})Y_{t-1}+X_{t}\beta+\mathbf{c}_{n}+\alpha_{t}l_{n}\bigr)+V_t,
	\end{equation}
	where the error term is $V_{t}=r_{t}+E_{t}$ with $r_{t}=\sum_{k=1}^3 r_{kt}$, and $r_{1t}=-S_{3}R_{1}Y_{t}$, $r_{2t}=S_{3}R_{2}Y_{t-1}$, and $r_{3t}=-R_{3}U_{t}$. These components capture errors arising from contemporaneous outcome interactions, lagged outcome interactions, and disturbance feedback, respectively. 
	%While $r_{1t}$ is discussed in \citet{chen2025npspatial} without $S_3$, the components $r_{2t}$ and $r_{3t}$ are typically not treated explicitly in existing studies.
	\begin{remark}
		Our setup formally covers `higher-order' SAR models, see e.g. \cite{lee2010efficient}. These would simply have approximation error equal to zero. Our focus though is on the truly nonparametric case with a non-zero approximation error.
	\end{remark}
	\section{GMM estimation}\label{sec:gmm}
	To eliminate individual FE, we employ the forward orthogonal difference (FOD) transformation \citep{arellano1995another, alvarez2003time, lee2014efficient}, rather than the first difference (FD) approach used in studies such as \citet{sun2018estimation}. 
	Specifically, for any $n \times T$ matrix of variables $[K_1, \dots, K_T]$, we obtain the $n\times (T-1)$ transformed matrix $[K_1^*, \dots, K_{T-1}^*]$ by applying an orthogonal transformation matrix $F_{T,T-1}$, where the algebraic details are in the appendix. 
	The merit of the FOD transformation is that it yields uncorrelated transformed idiosyncratic terms $E_t^*$, preserving their serial uncorrelatedness, whereas FD induces an MA(1) process even when the original errors are serially uncorrelated.% 
	\footnote{In our semi-nonparametric setting, the transformed error term is $V_t^{*} = E_t^{*} + r_t^{*}$, where $r_t^{*}$ represents the transformed sieve approximation error. As shown in Lemma \ref{lem:Rr} in the supplement, under certain rate conditions, $r_t^{*}$ is asymptotically negligible.}
	Time FE can be eliminated by $J_{n}=I_{n}-\frac{1}{n}l_nl_n'$, recalling that $\alpha_{t}J_{n}l_{n}=0$.
	After these transformations, the reformulated model \eqref{eq:mess_appro_Et} for $t=1,\dots,T-1$ becomes:%
	\footnote{We approximate the unknown weight functions by a sieve space of dimension $\elln=\max_{1\leq k\leq 3}\ell_{k}$, where $\ell_{k}\to\infty$ but grows sufficiently slowly relative to $n,T$. This contrasts with \citet{sun2018estimation}, who nonparametrically estimate coefficient functions of a $T$-dimensional vector of covariates. Hence, the first issue in \citet[Section~5.1]{sun2018estimation} does not arise in our setting.}
	\begin{flalign}\label{eq:mess nofe}
		J_{n}S_{3}S_{1}Y_{t}^{*}=\gamma J_{n}S_{3}Y_{t-1}^{(*,-1)}+J_{n}S_{3}S_{2}Y_{t-1}^{*}+J_{n}S_{3}X_{t}^{*}\beta +J_{n}V_{t}^{*},
	\end{flalign}
	where $Y_{t}^{*}$ and $Y_{t-1}^{(*,-1)}$ represent the FOD-transformed contemporaneous and lagged outcome vectors, respectively. The transformed regressor matrix $X_{t}^{*}$ and the error term $V_{t}^{*}$ are defined analogously. These variables are constructed such that the transformation for a period $t$ depends only on current and future observations, and the algebraic construction of these transformation matrices is shown in Section \ref{sec:notation} of the appendix. We now introduce some useful notation for the estimates. Denote $N=n(T-1)$ and $\theta=(\pi',\lambda')'$, where $\pi=(\gamma,\beta')'$ and $\lambda=(\lambda_{1}^{\prime},\lambda_{2}',\lambda_{3}')'$ with $\lambda_{k}=(\lambda_{k1},...,\lambda_{k\ell_{k}})'$ for $k=1,2,3$, and $\dim(\theta)=\ell_{\theta}$. Furthermore, $\hat{g}_{k}(d)=\bphi_{k}'(d)\hat\lambda_{k}$, where $\bphi_{k}(d)=[\phi_{k1}(d),...,\phi_{k\ell_{k}}(d)]$ and $\hat\lambda_{k}$ is a generic estimator of $\lambda_{k}$.
	% Furthermore, let $g(d)=(g_{1}(d),g_{2}(d),g_{3}(d))'$. 

	\subsection{Moment conditions}
	For the estimation of equation \eqref{eq:mess nofe}, we define $\bQ_{N}=\Diag_{t=1}^{T-1}(Q_{t})$ as the block-diagonal instrument variables (IV) matrix, where $Q_{t}$ is an $n\times\lqn$ matrix of instrumental variables. A typical construction of $Q_t$ includes $Y_{t-1}$, $X_{t}^*$, and their interactions with the sieve basis functions $\varPhi_{k\pk}$. 
	For example,
	\begin{flalign*}
		Q_{t}=\bigl(Y_{t-1}, X_{t}^{*},[\varPhi_{21},\ldots,\varPhi_{2\lwyl},\varPhi_{31}\varPhi_{21},\ldots,\varPhi_{3\lwu}\varPhi_{2\lwyl}]Y_{t-1},
		[\varPhi_{11},..., \varPhi_{1\lwy}, \varPhi_{31}\varPhi_{11},
		\ldots,\varPhi_{3\lwu}\varPhi_{1\lwy}]X_{t}^{*}\bigr).
	\end{flalign*}
	We that $\lqn\sim \elln$, where the symbol `$\sim$' denotes the same rate of growth, and $\elln=\max_{1\leq k\leq 3}\ell_k$. The linear moment condition is 
	$
	m^{\lin}_{N}(\theta)=\bQ_{N}'\bJ_{N}\mathbf{V}_{N}^{*}(\theta),
	$
	where $\bJ_{N}=I_{T-1}\otimes J_{n}$, $\mathbf{V}_{N}^{*}(\theta)=\left(\bV_{1}^{*\prime}(\theta),\bV_{2}^{*\prime}(\theta),...,\bV_{T-1}^{*\prime}(\theta)\right)'$ is the vector of transformed residuals and $\otimes$ denotes the Kronecker product. Note that $Y_{t-1}$ serves as an asymptotically valid instrument for $Y_{t-1}^{(*,-1)}$ since, as can be derived from Lemma \ref{lem:VBV}(\rn4) in the supplement, the terms involving approximation errors $Y_{t-1}'r_t^{*}$ are asymptotically negligible. 
	While theoretically valid, the potential for finite sample bias is explored further in our Monte Carlo simulations.
	%%%%%%%%%%%%%%%%%%%%%%%%%%%%%%%%%%%%%%%%%%%%
	\subsection*{An initial estimator}
	Motivated by \citet{pinkse2002spatial}, \citet{sun2016functional} and \citet{chen2025npspatial}, we consider the two-stage least square estimator (2SLSE) derived from the linear moment $m^{\lin}_{N}(\theta)$, where
	%\begin{flalign*}%\label{eq:nls}
	$
	\hat\theta_{ts}=\arg\min_{\theta\in\Theta}m^{\lin\prime}_{N}(\theta)\bm{M}_{Q}m^{\lin}_{N}(\theta)
	\ \text{with} \
	\bm{M}_{Q}=\bJ_{N}\bQ_{N}(\bQ_{N}'\bJ_{N}\bQ_{N})^{-1}\bQ_{N}'\bJ_{N}.
	$
	%\end{flalign*}
	% Consequently, the sieve estimation for $g_{k}(d)$ becomes 
	% %\begin{flalign*}%\label{eq:phik}
	% $
	% \hat{g}_{k,nls}(d)=\hat{\lambda}_{k,nls}'\bphi_{k}(d).
	% $
	%\end{flalign*}
	\subsection{Optimal GMM estimation}
	We now consider GMM estimation, for which we can construct additional quadratic moment conditions. 
	We assume that the variance structure of $E_{t}$, denoted by $\Sigma_{t}$, has three cases:
	\labeltext{(V0)}\label{var:homo}: $\sigma^2 I_{n}$, \labeltext{(V1)}\label{var:hetei}: $\Diag_{i=1}^{n}(\sigma_{i}^2)$, and
	\labeltext{(V2)}\label{var:hetet}: $\sigma_{t}^2I_{n}$. There is a sense in the extant literature that combining both the linear and quadratic moments could improve GMM estimation efficiency. For related parametric SDPD models, see e.g. \citet{lee2014efficient} and \cite{kuersteiner2020dynamic}. For related nonparametric spatial panel data models, see e.g. \citet{sun2018estimation} and \citet{yang2025estimation}.
	Specifically, \sloppy
	%\begin{flalign}%\label{eq:moment_quad}
	$m_{N}^{\texttt{quad}}(\theta)=\Big(\bV_{N}^{*\prime}(\theta)\bJ_{N}\bP_{N1}'\bJ_{N}\bV_{N}^{*}(\theta),...,\bV_{N}^{*\prime}(\theta)\bJ_{N}\bP_{N\lpn}'\bJ_{N}\bV_{N}^{*}(\theta)\Big)'.$
	%\end{flalign}
	Here, $\bP_{Nk}=\Diag_{t=1}^{T-1}(P_{kt})$, $P_{kt}$ is some $n\times n$ square matrix satisfying $\diagM(J_{n}P_{kt}J_{n})=\bzero_{n\times 1}$. Lemma \ref{lem:transJ}(\rn1) in the appendix provides a way of choosing $P_{kt}$.
	\par 
	Then, the stacked moment condition combining linear and quadratic moments is
	\begin{flalign}\label{eq:optimal_moments}
		m_{N}(\theta)=\invN(m_{N}^{\texttt{quad}\prime}(\theta),m_{N}^{\texttt{line}\prime}(\theta))'.    
	\end{flalign}
	According to \citet{hansen1982large}, the optimal weights of the moments can be approximated by
	\begin{flalign}\label{eq:Var_moment}
		\Omega_{N}=\invN\Diag\Big(\Varm_{N1},\Varm_{N2}\Big),
	\end{flalign}
	% by Lemma \ref{lem:VBV} in the supplement,
	where 
	$[\Varm_{N1}]_{ij}=\tr(\Sigep_{N}\bJ_{N}\bP_{Ni}\bJ_{N}\Sigep_{N}\bJ_{N}\bP_{Nj}^{s}\bJ_{N})$, for $i,j=1,...,\lpn$,
	$\Varm_{N2}=\bQ_{N}'\bJ_{N}\Sigep_{N}\bJ_{N}\bQ_{N}$, $\Sigep_{N}=\Diag_{t=1}^{T-1}(\Sigma_{t})$, and $\bP_{Nj}^{s}=\bP_{Nj}+\bP_{Nj}'$.
	The feasible optimal GMME (OGMME) is 
	%\[
	$\hat{\theta}_{ogmm}=\arg\min_{\theta\in\Theta}m_{N}(\theta)\tilde{\Omega}_{N}^{-1}m_{N}(\theta),
	$
	%\]
	where $\tilde{\Omega}_{N}$ is a consistent estimator of $\Omega_{N}$.
	%%%%%%%%%%%%%%%%%%%%%%%%%%%%%%%%%%%%%%%%%%%%%%%%
	%%%%%%%%%%%%%%%%%%%%%%%%%%%%%%%%%%%%%%%%%%%%%%%%%%%%%%%%
	%%%%%%%%%%%%%%%%%    Best GMM    %%%%%%%%%%%%%%%%%%%%%%%
	%%%%%%%%%%%%%%%%%%%%%%%%%%%%%%%%%%%%%%%%%%%%%%%%%%%%%%%%
	\subsection{Best GMM estimation}
	In this section, we consider the case in which the OGMME attains the smallest asymptotic variance, referred to as the BGMME \citep{lee2010efficient,lee2014efficient}. 
	To the best of our knowledge, there is no such work on the semi-nonparametric SDPD model.
	For the parametric model, \cite{lee2014efficient} considers only the homoskedastic case, which cannot be directly extended to the heteroskedastic setting. Furthermore, when heteroskedasticity is unknown, we must impose additional structure on $\Sigma_{t}$ to construct a consistent estimator $\hat{\Sigma}_{t}$, as we do in our cases \ref{var:hetei} and \ref{var:hetet}.%
	\footnote{It is difficult to treat the fully two-dimensional unknown heteroskedastic case $\sigma_{it}^2$ because we cannot obtain feasible best moment conditions to construct the BGMME. A common way to restore feasibility is to impose a separable structure, for example $\sigma_{it}^2 = \sigma_{i}^2 \sigma_{t}^2$ or $\sigma_{it}^2 = \sigma_{i}^2 + \sigma_{t}^2$, which reduces the number of variance parameters and makes consistent estimation possible. Similar discussions can be found in \citet{chen2025spatial}.} To eliminate time FE under unknown heteroskedasticity and to derive the best moment conditions, we introduce the following $n \times n$ projection matrix:
	\begin{flalign}\label{eq:MSL}
		\MSL{t}=I_{n}-\Sigma_{t}^{-\half}l_{n}\bigl(l_{n}'\Sigma_{t}^{-1}l_{n}\bigr)^{-1}l_{n}'\Sigma_{t}^{-\half},
	\end{flalign}
	which is idempotent and satisfies $\MSL{t}\Sigma_{t}^{-\half}\,l_{n} = 0$. As discussed in Section \ref{sec:estimate_variance} of the supplement, the projection matrix $\MSL{t}$ is well-behaved.
	Thus, we can employ $\MSL{t}$ to eliminate the time FE even when $\Sigma_{t}$ is heteroskedastic and unknown. We then pre-multiply $\MSL{t}\Sigma_{t}^{-\half}$ and employ the FOD transformation on both sides of \eqref{eq:mess_appro_Et} to obtain the transformed model:
	\begin{align}
		\MSL{t}\Sigma_{t}^{-\half}S_{3}S_{1}Y_{t}^{*}
		&=\gamma \MSL{t}\Sigma_{t}^{-\half}S_{3}Y_{t-1}^{(*,-1)}
		+\MSL{t}\Sigma_{t}^{-\half}S_{3}S_{2}Y_{t-1}^{*}
		+\MSL{t}\Sigma_{t}^{-\half}S_{3}X_{t}^{*}\beta
		\notag
		\\
		&\quad+\MSL{t}\Sigma_{t}^{-\half}V_{t}^{*},
		\qquad t=1,\dots,T-1,
		\label{eq:trans_gmm}
	\end{align}
	with $V_{t}^{*}=r_{t}^{*}+E_{t}^{*}$. Furthermore, denote
	$\JSL{t}=\Sigma_{t}^{-\half}\MSL{t}\Sigma_{t}^{-\half},$
	and thus, the linear and quadratic moment conditions for equation \eqref{eq:trans_gmm} are now
	\begin{align}\notag
		&\invN\mNL(\theta)=\invN\bQ_{N}^{\prime}\Sigep_{N}^{-\half}\MSL{N}\Sigep_{N}^{-\half}\bVL_{N}^{*}(\theta)=\invN\bQ_{N}^{\prime}\JSL{N}\bVL_{N}^{*}(\theta),\\ \label{eq:best_moments}
		%\intertext{\centerline{\text{and}}}
		&\invN\mNQk(\theta)= \invN\bVL_{N}^{*\prime}(\theta)\JSL{N}\bP_{Nj}\JSL{N}\bVL_{N}^{*}(\theta),
	\end{align}
	for $j=1,...,\lpn$,
	where $\MSL{N}=\Diag_{t=1}^{T-1}\MSL{t}$ and  $\JSL{N}=\Diag_{t=1}^{T-1}\JSL{t}$. Therefore, we use $J(\Sigma_{t})$ in BGMM, and $J_{n}$ in OGMM and 2SLS.%
	\footnote{When the variance of $\epsilon_{it}$ is cross-sectionally homogeneous, i.e., $\sigma_{it}^2=\sigma_{t}^2$ or $\sigma^2$, we have $\MSL{t}=J_{n}$ and $\JSL{t}=\sigma_{t}^{-1}J_{n}$ or $\JSL{t}=\sigma^{-1}J_{n}$. Also, note that $\JSL{t}\Sigma_{t}\JSL{t}=\Sigma_{t}^{-1/2}\MSL{t}\Sigma_{t}^{-1/2}\Sigma_{t}\Sigma_{t}^{-1/2}\MSL{t}\Sigma_{t}^{-1/2}=\Sigma_{t}^{-1/2}\MSL{t}\Sigma_{t}^{-1/2}=\JSL{t}$.}%
	\par 
	After some algebra, for $t=1,...,T-1,$ we have
	\begin{flalign}\label{eq:best Qt}
		%	\begin{split}
			Q_{t}=S_{3}\left[[\tfrac{\partial\bS_{1N}}{\partial\lambda_{11}},\cdots,\tfrac{\partial\bS_{1N}}{\partial\lambda_{1\lwy}}]S_{1}^{-1}\mathbb{W}_{t}, [\tfrac{\partial\bS_{2N}}{\partial\lambda_{21}},\cdots,\tfrac{\partial\bS_{2N}}{\partial\lambda_{2\lwyl}}]\bestY_{t},\bestY_{t},X_{t}^{*}\right],
			%	\end{split}
	\end{flalign}
	where $\bestY_{t}$ is defined in \eqref{eq:bestylag} in the appendix, and $\mathbb{W}_{t}=(\gamma I_{n}+S_{2})\bestY_{t}+X_{t}^{*}\beta+\alpha_{t}^{*}l_{n}$.
	As for the quadratic moments, 
	\begin{flalign}\label{eq:best Pt}
		\begin{split}
			& P_{t}=[P_{1t},\cdots,P_{\lwy t},P_{\lwy+1,t},\cdots,P_{\lwu t}],
			\ \text{where} \
			P_{\kwy,t}=\left[S_{3}\tfrac{\partial S_{1}}{\partial\lambda_{1\kwy}}S_{1}^{-1}S_{3}^{-1}\Sigma_{t}\right]^{\diamond} 
			\ \text{and} \ 
			\\& 
			P_{\lwy+\kwu,t}=\left[\tfrac{\partial S_{3}}{\partial\lambda_{3\kwu}}S_{3}^{-1}\Sigma_{t}\right]^{\diamond},
		\end{split}
	\end{flalign}
	for $\kwy=1,...,\lwy$ and $\kwu=1,...,\lwu$, where 
	`$^{\diamond}$' denotes the diagonal-adjustment operator in Lemma \ref{lem:transJ}(\rn2) in the appendix: For any $n\times n$ time-varying matrix $H_t$, it modifies only the diagonal entries so that the transformed matrix has a zero diagonal, i.e., $\diagM(\JSL{t}H_t^{\diamond}\JSL{t})=\bzero_{n\times 1}$ by construction. 
	Moreover, now
	$[\Varm_{N1}]_{ij}=\tr(\JSL{N}\bP_{Ni}\JSL{N}\bP_{Nj}^{s}\JSL{N})$ for $i,j=1,...,\ell_1+\ell_3$, 
	and
	$\Varm_{N2}=\bQ_{N}'\JSL{N}\bQ_{N}$ by using $\JSL{N}\Sigep_{N}\JSL{N}=\JSL{N}$.
	We can then give the algorithm to obtain the proposed estimators:
	% \newline~\newline
	\\[15pt]
	\rule{0pt}{0.1pt}\textbf{\large Algorithm 1: Estimation Procedure}%\newline\rule{\textwidth}{1pt}\\
	\hrule\noindent
	\textbf{Step 1:} Obtain the initial 2SLSE
	$\hat\theta_{ts}=\min_{\theta\in\Theta}m^{\lin\prime}_{N}(\theta)\bm{M}_{Q}m^{\lin}_{N}(\theta)$ using linear moments $m^{\lin}_{N}(\theta)=\bQ_{N}'\bJ_{N}\mathbf{V}_{N}^{*}(\theta)$. Evaluate $\hat{g}_{k,ts}(d)=\bphi_{k}'(d)\hat\lambda_{k,ts}$ for $k=1,2,3$. Compute the residuals $\tilde{V}_{t}=J_{n}\bigl(V_t(\hat\theta_{ts})-\frac{1}{T}\sum_{h=1}^{T}V_h(\hat\theta_{ts})\bigr)$ and $\tilde{\Sigma}_{t}=\Sigma_{t}(\hat\theta_{ts})$ based on $\hat\theta_{ts}$.
	\\
	\textbf{Step 2:} Given $\tilde{\Sigma}_{t}$, construct $\tilde{\Omega}_{N}$ based on equation \eqref{eq:Var_moment}. Obtain the feasible OGMME 
	$\hat{\theta}_{ogmm}=\arg\min_{\theta\in\Theta}m_{N}'(\theta)\tilde{\Omega}_{N}^{-1}m_{N}(\theta)$ where $m_{N}(\theta)$ is given in \eqref{eq:optimal_moments}. Evaluate $\hat{g}_{k,ogmm}(d)=\bphi_{k}'(d)\hat\lambda_{k,ogmm}$.
	Update the residuals $\tilde{V}_{t}=J_{n}\bigl(V_t(\hat\theta_{ogmm})-\frac{1}{T}\sum_{h=1}^{T}V_h(\hat\theta_{ogmm})\bigr)$, and $\tilde{\Sigma}_{t}=\Sigma_{t}(\hat\theta_{ogmm})$.
	\\
	\textbf{Step 3:} Given $\hat\theta_{ogmm}$ and $\tilde{\Sigma}_{t}$, estimate $\MSLhat{t}$, $\QLhat{t}$ and $\PLhat{t}$ from equations \eqref{eq:MSL}, \eqref{eq:best Qt} and 
	\eqref{eq:best Pt}, respectively. Estimate $\JSLhat{t}=\tilde{\Sigma}_{t}^{-1/2}\MSLhat{t}\tilde{\Sigma}_{t}^{-1/2}$. Construct $\tilde{\Omega}_{N}$ based on equation \eqref{eq:Var_moment}, with $\bJ_{N}$ replaced by $\JSLhat{N}$. Obtain the feasible BGMME 
	$\hat{\theta}_{bgmm}=\arg\min_{\theta\in\Theta}m_{N}'(\theta)\tilde{\Omega}_{N}^{-1}m_{N}(\theta)$, where $m_{N}(\theta)$ is defined as \eqref{eq:best_moments}. Evaluate $\hat{g}_{k,bgmm}(d)=\bphi_{k}'(d)\hat\lambda_{k,bgmm}$.
	% \\[-2ex]
	% \rule{\textwidth}{1pt} \indent
	\hrule\medskip
	
	\section{Asymptotic properties}\label{sec:asy}
	We collect the full set of regularity assumptions in Section \ref{sec:assumption} in the appendix. In this section, we highlight two conditions that drive the asymptotics: the joint growth requirement on $(n,T)$ and the sieve dimension requirement on $\elln$.
	\begin{assumption}[Sample size]\label{ass:nT}
		As $(n,T)\to\infty$, $n=o\left(T^{p/2}\right)$ for some $p>2$.
	\end{assumption}
	\begin{assumption}[Sieve function]\label{ass:sieve-basis}
		(\rn1) The approximation errors satisfy $\delta_{k}(d)=\Op(\ell_{k}^{-\varsigma_{k}})$, $k=1,2,3$. Let $\elln=\max_{1\leq k\leq 3}\ell_{k}$ denote the sieve dimension, with $\elln\to\infty$ as $(n,T)\to\infty$. Moreover, $\sqrt{n(T-1)}\elln^{-\vsinf}+\sqrt{\frac{\elln}{n(T-1)}}\to 0$ as $(n,T)\to\infty$, where $\vsinf>2$ with $\vsinf=\min_{1\leq k\leq 3}\varsigma_{k}$.
		\\
		(\rn2) The basis functions $\mat{\phi_{kl}(\cdot), k=1,2,3,\ l=1,\cdots,\elln}$ are uniformly bounded (UB) on the compact domain $D$, and
		$\bphi_k(d)=[\phi_{k1}(d),...,\phi_{k\ell_{k}}(d)]$ satisfies
		$\max\limits_{1\leq k\leq 3}\sup\limits_{d\in D}\spnorm{\bphi_{k}(d)}\leq C_{\ell}\sqrt{\ell_{n}}$ for some constant $C_{\ell}$.
	\end{assumption}
	Assumption \ref{ass:nT} allows a larger growth rate of $n$ compared to $T$. This is a mild assumption as discussed in \citet[Assumption~3.6]{su2023identifying}.
	%This assumption can also help us control the growth rate of sieve expansions. 
	In Assumption \ref{ass:sieve-basis}, (\rn1) ensures that the sieve approximation error is asymptotically negligible, and, moreover, that the asymptotic correlation between $\bV_{N}^{*}$ and $\bX_{N}^{*}$ vanishes.\footnote{In the cross-sectional setting for each $t$, we also need $\sqrt{n}\elln^{-\vsinf}\to 0$, which is implied by $\sqrt{n(T-1)}\elln^{-\vsinf}\to 0$ as $(n,T)\to\infty$.} 
	Condition (\rn2) imposes growth restrictions on the series expansion terms.
	\subsection{Properties of estimators of the finite-dimensional parameter}{\label{subsec:fd_inf}}
	For $k=1,2,3$, 
	write $\hat{\pi}_{gmm}$ for either $\hat{\pi}_{ogmm}$ or $\hat{\pi}_{bgmm}$, and the true estimand is $\pi_0$.
	% , and 
	% $\hat{g}_{k,gmm}$ for either $\hat{g}_{k,gmm}$ or $\hat{g}_{k,bgmm}$. 
	Then, we have the following theorem.
	
	\begin{theorem}\label{thm:gm_pi}
		Suppose that $\tilde{\Omega}_{N}$ is evaluated by the initial consistent estimator $\tilde{\theta}$ based on equation \eqref{eq:Var_moment}:\\ 
		(\rn1) Under Assumptions \ref{ass:nT}-\ref{ass:sieve-basis}, and \ref{ass:disterbance}-\ref{ass:id} in the appendix, when $\elln^{3/2-\vsinf}+\frac{\elln^{3/2}}{n}\to 0$ as $(n,T)\to\infty$:%\frac{\elln^{3/2}\lnT}{n(T-1)}
		\[
		\hat\pi_{gmm}-\pi_0=\op(1).
		\]
		(\rn2) Under Assumptions \ref{ass:nT}-\ref{ass:sieve-basis}, and \ref{ass:disterbance}-\ref{ass:id} in the appendix, when $\sqrt{\frac{(T-1)\elln^{3}}{n}}+\sqrt{n(T-1)}\elln^{3/2-\vsinf}\to 0$ as $(n,T)\to\infty$:%\frac{\elln^{3/2}\lnT}{\sqrt{n(T-1)}}
		\[
		\sqrt{n(T-1)}(\hat\pi_{gmm}-\pi_0)\dlim N(0,\bm{\Sigma}_{\pi_0,gmm}),
		\]
		where $\Sigep_{\pi_0,gmm}=\lim_{n,T\to\infty}K_{\pi}\Omega_{N}K_{\pi}'=O(1)$ and $K_{\pi}$ is defined in \eqref{eq:H_pi} in the appendix.
	\end{theorem}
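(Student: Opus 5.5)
The plan is the standard sieve-GMM argument with a growing parameter dimension $\ell_\theta\sim\elln$. We treat the full vector $\theta=(\pi',\lambda')'$ jointly and only at the end extract the $\pi$-block, using a partitioned-inverse/selection matrix. This selection matrix is what produces $K_\pi$ in \eqref{eq:H_pi}.

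\textbf{Consistency (part (i)).} Write the objective as $\mathcal{Q}_N(\theta)=m_N'(\theta)\tilde\Omega_N^{-1}m_N(\theta)$ and compare it with its population analogue $\bar{\mathcal{Q}}_N(\theta)=[\E m_N(\theta)]'\Omega_N^{-1}\E m_N(\theta)$.

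1. Since $\mathbf{V}_N^*(\theta)$ is a quadratic (SAR) or smooth (MESS, via $e^{\Xi_k}$) function of $\lambda$ and linear in $\pi$, each moment is a polynomial-type function of $\theta$. Expanding $\mathbf{V}_N^*(\theta)=\mathbf{V}_N^*(\theta_0)+\text{(deterministic/predetermined terms)}(\theta-\theta_0)$, the uniform deviation $\sup_\theta|\mathcal{Q}_N-\bar{\mathcal{Q}}_N|$ is controlled termwise.

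2. The terms in that expansion are $\bQ_N'\bJ_N\bE_N^*/N$, $\bE_N^{*\prime}\bJ_N\bP_{Nj}\bJ_N\bE_N^*/N$ and cross terms. The moment lemmas in the supplement (Lemma \ref{lem:VBV}) give each of these as $O_p(N^{-1/2})$. The sieve remainder terms $r_t^*$ are bounded by Lemma \ref{lem:Rr}, i.e. $O_p(\elln^{-\vsinf})$ per entry, scaled by $\rcnorm{\cdot}$-bounded matrices.

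3. Summing over the $\sim\elln$ moments and using $\spnorm{\bphi_k}\le C_\ell\sqrt{\elln}$ yields a uniform error of order $\elln^{3/2}(n^{-1}+\elln^{-\vsinf})$, with one $\sqrt{\elln}$ from the basis bound and $\elln$ from the moment count. This vanishes exactly under the stated condition $\elln^{3/2-\vsinf}+\elln^{3/2}/n\to0$. The $1/n$ rather than $1/N$ appears because time-FE removal via $J_n$ and incidental-parameter-type terms from FOD produce $O(1/n)$ biases uniformly in $t$; Assumption \ref{ass:nT} controls the maximum over $t$.

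4. Replacing $\tilde\Omega_N$ by $\Omega_N$ costs $\spnorm{\tilde\Omega_N-\Omega_N}=o_p(1)$, given $\mu_{\min}(\Omega_N)$ bounded away from zero (assumption \ref{ass:id}).

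5. The identification assumption \ref{ass:id} gives $\bar{\mathcal Q}_N(\theta)-\bar{\mathcal Q}_N(\theta_0)\ge c\|\theta-\theta_0\|^2$ in the growing-dimension sense. The usual argmin argument then gives $\|\hat\theta-\theta_0\|=o_p(1)$, hence $\hat\pi_{gmm}-\pi_0=o_p(1)$.

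\textbf{Asymptotic normality (part (ii)).} Take a mean-value expansion of the first-order condition: $0=D_N'(\hat\theta)\tilde\Omega_N^{-1}m_N(\hat\theta)$ with $D_N=\partial m_N/\partial\theta'$. This gives
\[
\hat\theta-\theta_0=-\bigl[D_N'(\hat\theta)\tilde\Omega_N^{-1}D_N(\bar\theta)\bigr]^{-1}D_N'(\hat\theta)\tilde\Omega_N^{-1}m_N(\theta_0).
\]

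1. Show $\spnorm{D_N(\bar\theta)-\E D_N(\theta_0)}$ is small using consistency and Lipschitz bounds on second derivatives of $S_k$. For MESS these involve the Fréchet derivative of $e^{\Xi_k}$, bounded in $\rcnorm{\cdot}$.

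2. Premultiply by the selection matrix onto the $\pi$ block and $\sqrt N$, so that $\sqrt N(\hat\pi-\pi_0)=-K_\pi\sqrt N m_N(\theta_0)+o_p(1)$.

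3. Decompose $\sqrt N m_N(\theta_0)$ into an $\bE_N^*$ part and an $r^*$ part. The $\bE_N^*$ part is a vector of linear-quadratic forms in independent errors. A Cramér–Wold plus martingale CLT for linear-quadratic forms (Kelejian–Prucha type, applied to the fixed-dimension projection $K_\pi\sqrt N m_N$) gives $N(0,\lim K_\pi\Omega_NK_\pi')$. The zero-diagonal restriction on $J_nP_{kt}J_n$ (Lemma \ref{lem:transJ}) ensures the quadratic moments are centered.

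\textbf{Main obstacle.} The hardest step is bounding the remainder terms so that the $\pi$-block is unaffected. There are two sources.

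The first is the sieve bias $\sqrt N K_\pi m_N^{r}$. Its entries involve $Y_{t-1}'r_t^*$, where the lagged dependent variable interacts with the approximation error, as emphasised in the introduction. I would bound it by $\sqrt N\elln^{3/2-\vsinf}$. This uses the stability $\rho(A)<1$ to sum $\sum_h A^h$ absolutely in $\rcnorm{\cdot}$, together with $\spnorm{K_\pi}=O(\sqrt{\elln})$ and $\|m_N^r\|\le\sqrt{\elln}\cdot\elln^{-\vsinf}$-type bounds.

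The second is the FOD-induced many-instrument/dynamic-panel bias. The correlation between $Y_{t-1}^{(*,-1)}$ and $E_t^*$ contributes $O(\sqrt{T/n})$ per instrument. Aggregated over $\elln$ dimensions through the $\elln^{3/2}$ factor, this gives $\sqrt{(T-1)\elln^3/n}$.

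Both vanish under the rate condition of (ii). The error from $\tilde\Omega_N-\Omega_N$ is handled as $\sqrt N\cdot O_p(\elln/\sqrt{n})$ times a centered moment, which is likewise negligible under this condition.
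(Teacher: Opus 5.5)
Your route to (i) differs from the paper's, and as you set it up it does not close under the rate stated in (i). You obtain joint consistency of $\hat\theta$ from a uniform law for $\mathcal{Q}_N-\bar{\mathcal{Q}}_N$ plus identification, and you claim a uniform error of $\elln^{3/2}(n^{-1}+\elln^{-\vsinf})$. But write $\mathcal{Q}_N-\bar{\mathcal{Q}}_N=(m_N-\E m_N)'\Omega_N^{-1}(m_N-\E m_N)+2(m_N-\E m_N)'\Omega_N^{-1}\E m_N$. The cross term carries $\Fnorm{\E m_N(\theta)}=O(\sqrt{\elln})$ on $\Theta_N$: the sieve space is a sup-norm box, so $\Fnorm{\theta-\theta_0}$ can itself be of order $\sqrt{\elln}$. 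The same factor multiplies $\spnorm{\tilde\Omega_N-\Omega_N}$ in your step 4. Keeping it gives an error of order $\elln^{2-\vsinf}+\elln^{2}/n+\elln/\sqrt{n(T-1)}$, which is exactly the condition of Lemma \ref{lem:consistency}(ii). The paper notes this is stronger than (i)'s condition; for example, $\elln=n^{3/5}$ gives $\elln^{3/2}/n\to0$ but $\elln^{2}/n\to\infty$. The paper instead reads (i) off the linearization at $\theta_0$. Partialling out $\lambda$ through $M_\lambda$ (the $C(\alpha)$ step) gives $\hat\pi_{gmm}-\pi_0=K_\pi\invN m_N(\theta_0)$ plus the remainders $\cJ_1,\cJ_2$ of \eqref{eq:remainder term}, with $\spnorm{K_\pi}=O(1)$. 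Then only the $\pi$-block error $\elln^{3/2-\vsinf}+\elln^{3/2}/n+\sqrt{\elln/(n(T-1))}$ has to vanish. That linearization still presupposes that $\hat\theta$ lies where the Jacobian is controlled, but the rate in (i) comes from the $\pi$-block, not from a uniform law over $\Theta_N$.

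In (ii) you misplace the term that drives the rate conditions. Your step 2 absorbs the sampling error of the Jacobian into ``$+\op(1)$'', but that is where the binding term lives. Up to normalization, the linear-moment part of $\frac{\partial m_N'(\hat\theta_{gmm})}{\partial\theta}\Omega_N^{-1}m_N(\theta_0)$ is $\cJ_2=-\bL_N^{*\prime}\bM_N\bV_N^{*}$. Since $\bL_N^{*}$ contains the endogenous block $\bS_{3N}\tfrac{\partial\bS_{1N}}{\partial\lambda_{1}}\bY_N^{*}$, the piece $\invN(\bS_{3N}\tfrac{\partial\bS_{1N}}{\partial\lambda_{1}}\bE_N^{*})'\bM_N\bE_N^{*}$ has a nonzero mean. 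That mean is governed by $\tr(M_t\cdots)=O(\elln)$ in each period (the rank of the IV projection $M_t$), so after the $\sqrt{\elln}$ from the stacked derivative the piece is $\Op(\elln^{3/2}/n)$. Multiplied by $\sqrt{n(T-1)}$, this is $\sqrt{(T-1)\elln^{3}/n}$.

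This many-instrument bias of the contemporaneous spatial lag is what the first part of condition (ii) controls. It is also the source of the $\elln^{3/2}/n$ in (i), not time-FE removal or FOD incidental parameters. The $Y_{t-1}^{(*,-1)}$--$E_t^{*}$ correlation you single out contributes only $\Op(\elln^{3/2}\lnT/(n(T-1)))$, which is negligible. And $m_N(\theta_0)$ itself carries no such bias, because its instruments are predetermined.

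Likewise, $\sqrt{n(T-1)}\elln^{3/2-\vsinf}$ comes from $\bY_N^{(*,-1)\prime}\bM_N\brL_N^{*}$, via $\rcnorm{M_t}=\Op(\elln)$ and $\spnorm{\partial S_2/\partial\lambda_2}=O(\sqrt{\elln})$, with $\spnorm{K_\pi}=O(1)$. Your bound $\spnorm{K_\pi}=O(\sqrt{\elln})$ produces a different power and would contradict $\Sigep_{\pi_0,gmm}=O(1)$. Your CLT step (Cram\'er--Wold plus a martingale-difference CLT for the linear-quadratic form $m_\Delta$) does match Lemma \ref{lem:CLT}.
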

	
	The rate condition in Theorem \ref{thm:gm_pi}(ii) implies the one in (i). Indeed,
	$\sqrt{n(T-1)}\elln^{3/2-\vsinf} \to 0\Rightarrow 
	\elln^{3/2-\vsinf} \to 0,$ and
	$\sqrt{\frac{(T-1)\elln^{3}}{n}}=\frac{\elln^{3/2}}{n} \sqrt{n(T-1)}\to 0\Rightarrow \frac{\elln^{3/2}}{n}\to 0$.
	\par 
	We next establish the consistency of the covariance estimator $\bm\Sigma_{\pi_0,gmm}$ and characterize the asymptotic efficiency of the BGMME.
	\begin{theorem}\label{thm:Var_pi}
		% Let $\hat\theta$ be either $\hat\theta_{ogmm}$ or $\hat\theta_{bgmm}$. 
		Define
		$\hat{\bm\Sigma}_{\pi,gmm}= \hat{K}_{\pi}\hat{\Omega}_{N}\hat{K}_{\pi}^{\prime}$,
		where $\hat{K}_{\pi}$ and $\hat\Omega_{N}$ are the feasible counterparts of
		$K_\pi$ and $\Omega_{N}$ evaluated at $\hat\pi_{gmm}$.
		Under the conditions of Theorem \ref{thm:gm_pi}(\rn2), we have:
		\\
		(\rn1)
		$
		\spnorm{\hat{\bm\Sigma}_{\pi,gmm}-\bm\Sigma_{\pi_0,gmm}}=\op(1).
		$
		\\
		(\rn2) The asymptotic variance of the BGMME attains the efficiency lower bound $\Sigep_{b_\pi}$ by the generalized Cauchy-Schwarz inequality on $\lim_{n,T\to\infty}K_{\pi}\Omega_{N}K_{\pi}$, where $\Sigep_{b_\pi}$ is defined in equation \eqref{eq:Var_bg_pi} in the appendix. Accordingly, it can be consistently estimated by $\hat\Sigep_{b_\pi}$.
	\end{theorem}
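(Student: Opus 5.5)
For part (i) I would argue term by term. The matrix $\hat K_\pi\hat\Omega_N\hat K_\pi'$ has fixed dimension $(1+\ell_x)\times(1+\ell_x)$. Each block of it, however, is built from objects whose dimension grows with $\elln$. These are the $\ell_\theta\times\ell_\theta$ Hessian-type matrix $D_N'\Omega_N^{-1}D_N$, where $D_N=\partial \E m_N(\theta)/\partial\theta'$, and the moment covariance $\Omega_N$ itself. So the first step is to write $K_\pi$ explicitly as in \eqref{eq:H_pi}, namely the $\pi$-rows of $(D_N'\Omega_N^{-1}D_N)^{-1}D_N'\Omega_N^{-1}$. I then bound
\[
\spnorm{\hat K_\pi\hat\Omega_N\hat K_\pi'-K_\pi\Omega_NK_\pi'}\le \spnorm{\hat K_\pi-K_\pi}\spnorm{\hat\Omega_N}\bigl(\spnorm{\hat K_\pi}+\spnorm{K_\pi}\bigr)+\spnorm{K_\pi}^2\spnorm{\hat\Omega_N-\Omega_N}.
\]
This reduces (i) to showing $\spnorm{\hat\Omega_N-\Omega_N}=\op(1)$ and $\spnorm{\hat D_N-D_N}=\op(1)$ in spectral norm. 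It also needs $\spnorm{K_\pi}=O(1)$ and eigenvalues of $\Omega_N$ and $D_N'\Omega_N^{-1}D_N$ bounded away from zero, which I take from the identification assumption \ref{ass:id}.

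For $\hat\Omega_N$, the entries are traces $\tr(\Sigep_N\bJ_N\bP_{Ni}\bJ_N\Sigep_N\bJ_N\bP_{Nj}^s\bJ_N)/N$ and the quadratic forms $\bQ_N'\bJ_N\Sigep_N\bJ_N\bQ_N/N$. Under \ref{var:hetei} or \ref{var:hetet}, $\tilde\Sigma_t$ is a residual-based average, e.g.\ $\tilde\sigma_i^2=(T-1)^{-1}\sum_t\tilde v_{it}^2$. I expand $\tilde v_{it}-\epsilon^*_{it}$ into a part linear in $\hat\theta-\theta_0$ and the sieve remainder $r_t^*$. The first part is $\Op(\sqrt{\elln/N})$ per direction times UB row/column sums of $S_k^{-1}$ and $\partial S_k/\partial\lambda$. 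The second is controlled by Lemma \ref{lem:Rr} at order $\elln^{-\vsinf}$. Summing over the $\ell_\theta\sim\elln$ coordinates and converting entrywise bounds to spectral norm costs a factor $\elln$. This gives $\spnorm{\hat\Omega_N-\Omega_N}=\Op\bigl(\elln^{3/2}/\sqrt N+\elln^{3/2-\vsinf}\bigr)$, plus the cross-sectional averaging error $\Op(\sqrt{\elln/T})$ or $\Op(\sqrt{\elln/n})$. All of these vanish under the conditions of Theorem \ref{thm:gm_pi}(ii). The same Lipschitz-in-$\theta$ argument, using a mean-value expansion of $S_k(\lambda)$ (valid for both SAR and MESS because $e^{\Xi}$ is smooth with UB derivatives), handles $\hat D_N$. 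For BGMM, $\tilde P_t$, $\tilde Q_t$ and $\JSLhat{t}$ depend on $\tilde\Sigma_t$ and $\hat\theta$ smoothly, with $\MSL{t}$ well-behaved (supplement Section \ref{sec:estimate_variance}), so the same bounds apply.

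For part (ii) I would follow \citet{lee2014efficient}. For arbitrary admissible $(\bP_N,\bQ_N)$, write the asymptotic variance as $(D_N'\Omega_N^{-1}D_N)^{-1}$ restricted to $\pi$. Then show that with the best $P_t,Q_t$ of \eqref{eq:best Qt}--\eqref{eq:best Pt}, $D_N=\Omega_N^{*}$-type covariance with the best moments, i.e.\ $D_N=\Cov(m_N,m_N^{b})$. The generalized Cauchy--Schwarz inequality $D'\Omega^{-1}D\le \Cov(m,m^b)'\Omega^{-1}\Cov(m,m^b)\le\Var(m^b)$ then yields $\Sigep_{\pi_0,gmm}\ge\Sigep_{b_\pi}$. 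The key computation is the derivative of $\E\bV_N^{*\prime}\JSL{N}\bP\JSL{N}\bV_N^*$ in $\lambda$. Here the $^{\diamond}$ adjustment (Lemma \ref{lem:transJ}) is what makes the $\Sigma_t$-dependent diagonal terms cancel under heteroskedasticity. Consistency of $\hat\Sigep_{b_\pi}$ is then a special case of (i).

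I expect the main obstacle to be the spectral-norm control of the growing-dimensional $\hat\Omega_N-\Omega_N$ under unknown heteroskedasticity. That is where the $\elln^{3/2}$ factors and the interaction between the lagged $Y_{t-1}$ and $r_t^*$ must be tracked carefully.
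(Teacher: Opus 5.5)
Your structure matches the paper's proof. In (i), the same three-term perturbation bound reduces the claim to spectral-norm consistency of $\hat\Omega_N$ and $\hat D_N$, with $\spnorm{K_\pi}$ and the inverses controlled by Assumption \ref{ass:id}. In (ii), your identity $D_N=\operatorname{Cov}(m_N,m_N^{b})$ followed by the projection inequality is the same generalized Cauchy--Schwarz argument the paper uses. The paper writes the trace entries of $\bC$ and $\Varm_{N1}$ as inner products of vectorized matrices, and bounds $\Sigma_{\lin}$ and $\Sigma_{\qua}$ by their unprojected counterparts.

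The step that fails as written is the last line of your rate bookkeeping for $\hat\Omega_N$. Under \ref{var:hetei} you get a variance-estimation error of order $\Op(\sqrt{\elln/T})$ and assert that it vanishes under the conditions of Theorem \ref{thm:gm_pi}(ii). Those conditions do not imply $\elln=o(T)$. They give $\elln^{3}=o(n/T)$, while Assumption \ref{ass:nT} only asks $n=o(T^{p/2})$ for \emph{some} $p>2$. For example, $n=T^{10}$, $\elln=T^{2}$ and $\vsinf=10$ satisfy every stated rate condition, including Assumption \ref{ass:sieve-basis}, yet $\elln/T\to\infty$. Your other terms ($\elln^{3/2}/\sqrt{N}$, $\elln^{3/2-\vsinf}$ and $\sqrt{\elln/n}$) do vanish.

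The problem comes from converting entrywise bounds into a spectral-norm bound at the cost of a power of $\elln$. You have two ways out. One is to avoid that loss: the paper's Lemma \ref{lem:Var_moments} claims an error of order $\Op(\Fnorm{\hat\theta-\theta_0})+\Op(1/n)+\Op(1/T)$ with no $\elln$ factor. That is plausible, because each entry of $\hat\Omega_N-\Omega_N$ is a cross-sectional average of $\hat\sigma_i^2-\sigma_i^2$ against bounded weights. Its bias part, of order $1/T+1/n$ uniformly in $i$, can then be dominated in the quadratic form by $\max_i|\E\hat\sigma_i^2-\sigma_i^2|$ times a Gram matrix comparable to $\Omega_N$. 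Its mean-zero part should average out over $i$. The other way is to state $\elln=o(T)$ as an additional condition.
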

	\subsection{Properties of estimators of the infinite-dimensional parameter}{\label{subsec:infd_inf}}
	Having established the asymptotic properties of the finite-dimensional parameters, we now turn to the inference on the unknown spatial weight functions $g_{k}(\cdot)$. The following theorem establishes the uniform consistency and asymptotic normality of the sieve estimators $\hat{g}_{k,gmm}(\cdot)$.
	\begin{theorem}\label{thm:gm_gk}
		Let $\hat{g}_{k,gmm}$ be either $\hat{g}_{k,gmm}$ or $\hat{g}_{k,bgmm}$, and let the true $g_{k}$ be $g_{k0}$, $k=1,2,3$. Let $\hat{g}_{k,gmm}(d)=\bphi_{k}'(d)\hat\lambda_{k,gmm}$. Then:\\
		(\rn1) Under Assumptions \ref{ass:nT}-\ref{ass:sieve-basis}, and \ref{ass:disterbance}-\ref{ass:id} in the appendix, when  $\elln^{2-\vsinf}+\frac{\elln^{2}}{n}+\frac{\elln}{\sqrt{n(T-1)}}\to 0$ as $(n,T)\to\infty$:%\frac{\elln^{2}\lnT}{n(T-1)}
		\[
		\sup_{d\in D}|\hat{g}_{k,gmm}(d)-g_{k0}(d)|=\op(1).
		\]
		(\rn2) Under Assumptions \ref{ass:nT}-\ref{ass:sieve-basis}, and \ref{ass:disterbance}-\ref{ass:id} in the appendix, when  $\sqrt{\frac{(T-1)\elln^{3}}{n}}+\sqrt{n(T-1)}\elln^{3/2-\vsinf}\to 0$ as $(n,T)\to\infty$: 
		\[
		\sqrt{\frac{n(T-1)}{\elln}}(\hat{g}_{k,gmm}(d)-g_{k0}(d))\stackrel{d}{\to}N(0,\Sigep_{g_{k0},gmm}),
		\]
		where $\Sigep_{g_{k0},gmm}=O(1)$ is defined in equation \eqref{eq:Var_g} in the appendix.
	\end{theorem}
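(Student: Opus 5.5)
The argument runs through the joint expansion of the full sieve GMM estimator $\hat\theta_{gmm}=(\hat\pi_{gmm}',\hat\lambda_{gmm}')'$ that underlies Theorem \ref{thm:gm_pi}, and then reads off the $\lambda_k$ block through the evaluation functional $\bphi_k'(d)$. For every $d\in D$ we split
\[
\hat g_{k,gmm}(d)-g_{k0}(d)=\bphi_k'(d)(\hat\lambda_{k,gmm}-\lambda_{k0})-\delta_k(d),
\]
where $\lambda_{k0}$ is the pseudo-true sieve coefficient. By Assumption \ref{ass:sieve-basis}(\rn1) the second term is $\Op(\ell_k^{-\varsigma_k})$ uniformly in $d$. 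For uniform consistency, Cauchy--Schwarz and Assumption \ref{ass:sieve-basis}(\rn2) bound the first term by $C_\ell\sqrt{\elln}\,\Fnorm{\hat\lambda_{k,gmm}-\lambda_{k0}}$. Part (\rn1) therefore reduces to the rate
\[
\Fnorm{\hat\theta_{gmm}-\theta_0}=\op(\elln^{-1/2}).
\]

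\textbf{Rate for $\hat\theta$.} This is the step that carries the argument. Consistency in the Euclidean norm of a growing-dimension $\theta$ comes from the same identification argument (Assumption \ref{ass:id}) as in Theorem \ref{thm:gm_pi}(\rn1). We then expand the first-order condition
\[
D_N'(\bar\theta)\tilde\Omega_N^{-1}m_N(\hat\theta)=0,\qquad D_N=\partial m_N/\partial\theta'.
\]
Since $\mu_{\min}$ of $D_N'\Omega_N^{-1}D_N$ is bounded away from zero, we get
\[
\hat\theta-\theta_0=-(D_N'\Omega_N^{-1}D_N)^{-1}D_N'\Omega_N^{-1}m_N(\theta_0)+\text{remainder}.
\]
The stochastic part of $m_N(\theta_0)$ is built from $E_t^*$. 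It has $\ell_q+\ell_p\sim\elln$ components, each of order $N^{-1/2}$, so its norm is $\Op(\sqrt{\elln/N})$. The approximation part comes from $r_t^*$, and Lemma \ref{lem:Rr} and Lemma \ref{lem:VBV} bound its contribution by $\Op(\elln^{1-\vsinf})$ in norm. This gives
\[
\Fnorm{\hat\theta-\theta_0}=\Op\bigl(\sqrt{\elln/N}+\elln^{1-\vsinf}\bigr)+\text{remainder}.
\]
\textbf{Main obstacle: the remainder.} It collects the difference between $D_N(\bar\theta)$ and $D_N(\theta_0)$, which involves the quadratic moments and the dynamic terms $Y_{t-1}^{(*,-1)}$, plus the error from estimating $\tilde\Omega_N$. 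The Hessian-type terms are sums over $\elln^2$ entries, and each entry contributes a spectral-norm deviation of order $n^{-1/2}$ through $S_1^{-1}$ and $S_3^{-1}$. The plan is to bound these with the uniform-boundedness lemmas on $\rcnorm{S_k^{-1}}$ and on the derivatives of $S_k$. This should give a remainder of order $\Op(\elln/\sqrt n)\Fnorm{\hat\theta-\theta_0}$, which is negligible once $\elln^2/n\to0$. Multiplying the resulting rate by $\sqrt{\elln}$, the bound is $o(1)$ under
\[
\elln^{2-\vsinf}+\elln^2/n+\elln/\sqrt N\to0,
\]
which is exactly the stated condition. This proves (\rn1).

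\textbf{Asymptotic normality.} For part (\rn2) we write
\[
\sqrt{N/\elln}\,\bigl(\hat g_k(d)-g_{k0}(d)\bigr)=-\sqrt{N/\elln}\;\bphi_k'(d)\,\Pi_k\,(D_N'\Omega_N^{-1}D_N)^{-1}D_N'\Omega_N^{-1}m_N^{E}(\theta_0)+o_p(1),
\]
where $\Pi_k$ selects the $\lambda_k$ block and $m_N^E(\theta_0)$ is the part of $m_N(\theta_0)$ generated by $E_t^*$. The $o_p(1)$ collects three pieces.
\begin{itemize}
\item The bias term is $\sqrt{N/\elln}\,\Op(\elln^{-\vsinf})$ from $\delta_k$, plus the $r_t^*$ contribution, which is at most $\sqrt N\,\elln^{3/2-\vsinf}\to0$.
\item The nonlinear remainder is $\sqrt N\,\elln\,\Op(\elln/\sqrt n)$ times the rate, which is controlled by $\sqrt{(T-1)\elln^3/n}\to0$. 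These are the same conditions as in Theorem \ref{thm:gm_pi}(\rn2).
\item Replacing $\tilde\Omega_N$ by $\Omega_N$ is handled as in Theorem \ref{thm:Var_pi}(\rn1).
\end{itemize}
The leading term is a linear-quadratic form $\sum_t\bigl(a_t'E_t^*+E_t^{*\prime}B_tE_t^*-\tr(\cdot)\bigr)$. We normalize it by its standard deviation; the variance is $\elln^{-1}\bphi_k'(d)\Pi_k(\cdot)\Pi_k'\bphi_k(d)\asymp1$, which defines $\Sigep_{g_{k0},gmm}$. We then apply the martingale-difference CLT for linear-quadratic forms (Kelejian--Prucha type) over the FOD-ordered $E_t^*$. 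The Lindeberg condition follows from $\spnorm{\bphi_k(d)}\le C_\ell\sqrt{\elln}$ and the row/column-sum boundedness of $a_t$ and $B_t$. The variance normalization by $\elln$ is what keeps $\Sigep_{g_{k0},gmm}=O(1)$.
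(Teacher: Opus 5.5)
\textbf{Overall.} Your skeleton matches the paper's. You split off $\delta_k(d)$, bound $\sup_{d}|\boldsymbol{\phi}_k'(d)(\hat\lambda_{k,gmm}-\lambda_{k0})|$ by $C_\ell\sqrt{\ell_n}\,\|\hat\lambda_{k,gmm}-\lambda_{k0}\|$, insert the rate of $\hat\lambda$, and apply a linear--quadratic CLT to the $\lambda_k$-block. Your $\Pi_k$-block variance is in fact the dimensionally correct object.

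\textbf{The gap: a missing many-instrument bias term.} The problem is how you obtain the rate and the leading term. Your leading term $-(D_N'\Omega_N^{-1}D_N)^{-1}D_N'\Omega_N^{-1}m_N(\theta_0)$ is a linear--quadratic form in $\mathbf{E}_N^{*}$ only if $D_N$ is the \emph{population} Jacobian. Your remainder, of order $O_p(\ell_n/\sqrt{n})\,\|\hat\theta-\theta_0\|$, covers only $D_N(\bar\theta)-D_N(\theta_0)$.

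Replacing the \emph{sample} Jacobian at $\theta_0$ by its expectation leaves the term $\bigl[\partial m_N(\theta_0)/\partial\theta'-D_N\bigr]'\Omega_N^{-1}m_N(\theta_0)$. This term is not proportional to $\|\hat\theta-\theta_0\|$, and its mean is not zero. The reason is that $\partial m_N/\partial\lambda_1'$ contains $\mathbf{S}_{3N}\frac{\partial\mathbf{S}_{1N}}{\partial\lambda_1}\mathbf{Y}_N^{*}$, whose stochastic part is driven by the same $\mathbf{E}_N^{*}$ that drives $m_N(\theta_0)$: the contemporaneous spatial lag is endogenous.

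This is the paper's $\mathcal{J}_2=-\mathbf{L}_N^{*\prime}\mathbf{M}_N\mathbf{V}_N^{*}$. Its $\lambda_1$-block is $\frac{1}{n(T-1)}\bigl(\mathbf{S}_{3N}\frac{\partial\mathbf{S}_{1N}}{\partial\lambda_1}\mathbf{E}_N^{*}\bigr)'\mathbf{M}_N\mathbf{E}_N^{*}$, whose mean has the form $\frac{1}{n(T-1)}\sum_{t}\operatorname{tr}(\,\cdot\,M_t\Sigma_t)$. Since $\mathbf{Q}_N=\mathtt{Diag}_t(Q_t)$ produces $T-1$ idempotents $M_t\Sigma_t$, each of rank $\ell_q$, this mean is $O(\ell_n/n)$ per coordinate and $O(\ell_n^{3/2}/n)$ in norm. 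It does not shrink with $T$.

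\textbf{Why it matters.} The stated rate conditions are built to remove exactly this term. In (i) it contributes $\sqrt{\ell_n}\cdot\ell_n^{3/2}/n=\ell_n^2/n$. In (ii) it contributes $\sqrt{n(T-1)}\cdot\ell_n^{3/2}/n=\sqrt{(T-1)\ell_n^3/n}$. It is also why the paper's rate for $\hat\lambda-\lambda_0$ is $\ell_n^{3/2-\underline{\varsigma}}+\ell_n^{3/2}/n+\sqrt{\ell_n/(n(T-1))}$, not your $\sqrt{\ell_n/N}+\ell_n^{1-\underline{\varsigma}}$.

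Your own accounting cannot produce the factor $T-1$: $\sqrt{N}\cdot(\ell_n/\sqrt{n})\cdot\sqrt{\ell_n/N}=\ell_n^{3/2}/\sqrt{n}$. So your claim that the remainder is ``controlled by $\sqrt{(T-1)\ell_n^3/n}$'' holds only by coincidence. More importantly, in (ii) your leading term is not centred, so the CLT step does not yet deliver the stated limit.

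\textbf{How to repair it.}
\begin{enumerate}
\item Split off $\partial m_N(\theta_0)/\partial\theta'-D_N$ explicitly.
\item Bound the mean of the resulting quadratic form in $\mathbf{E}_N^{*}$ using $\operatorname{tr}(M_t\Sigma_t)=\ell_q$ and $\|M_t\Sigma_t\|_{\mathrm{sp}}=O(1)$, as the paper does for $\mathcal{J}_2$ in the proof of Theorem \ref{thm:gm_pi}.
\item Show that its centred part is $O_p(\ell_n/\sqrt{n})$ after $\sqrt{N}$-scaling.
\item Only then apply the linear--quadratic CLT to the part of $m_N(\theta_0)$ generated by $\mathbf{E}_N^{*}$.
\end{enumerate}
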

	The asymptotic results in Theorem \ref{thm:gm_gk} highlight two key distinctions between the nonparametric sieve estimator and the parametric case in Theorem \ref{thm:gm_pi}. First, the rate conditions required to establish uniform consistency are slightly stronger than those for $\hat{\pi}$. Specifically, the condition involves terms of order $n^{-1}\elln^{2}$ rather than $n^{-1}\elln^{3/2}$. This stricter requirement reflects the difficulty of achieving uniform convergence of the function $\hat{g}_k(d)$ over the domain $d\in D$, as opposed to the standard convergence of a finite parameter vector. Second, the convergence rate for asymptotic normality is slower. While the finite-dimensional estimators achieve the standard $\sqrt{n(T-1)}$ rate, the nonparametric estimators converge at the slower rate $\sqrt{n(T-1)/\elln}$.%
	\footnote{\label{foot:rate} For illustration, let $T\to\infty$ and set $n=O(T^{13/5})$ and $\elln=O(n^{1/5})$.
		With $\vsinf=p=6$, $\frac{n}{T^{p/2}}=O(T^{-2/5})$ so Assumption \ref{ass:nT} holds. Note that $\elln^{2-\vsinf}\to 0$ as $\vsinf>2$, $\frac{\elln^{2}}{n}=O(n^{-3/5})=O(T^{-39/25})$, $\frac{\elln}{\sqrt{n(T-1)}}=O(T^{-32/25})$, $\sqrt{\frac{(T-1)\elln^{3}}{n}}=O(T^{-1/50})$ and $\sqrt{n(T-1)}\elln^{3/2-\vsinf}=O(T^{-27/50})$.
		Thus, this choice of $(n,T,\elln)$ satisfies all necessary rate conditions.}
	\begin{remark}\label{remark:finite T}
		(\rn1) While our model assumes $(n,T) \to \infty$, the proposed estimators are also applicable in panels with large $n$ and finite $T$. Specifically, the 2SLSE and OGMME remain consistent under large $n$ and fixed $T$ for the variance structures specified in \ref{var:homo} and \ref{var:hetet}. However, the implementation of the BGMME requires $(n,T)\to\infty$. This is because the individual FE, which are required to construct $\bestY_{t}$ as defined in \eqref{eq:bestylag} in the appendix, cannot be consistently estimated with finite $T$, and thus the best IVs might not be available (see \citealp[footnote 24]{lee2014efficient}).
	\end{remark}
	\section{Monte Carlo simulations}\label{sec:mc}
	We conduct Monte Carlo experiments to evaluate the finite sample performance of our proposed estimators. To conserve space, we report the key baseline results for the MESS specification in the main text, while the results for the SAR specifications and additional results are provided in the supplement.
	\par
	For the DGP in equation \eqref{eq:sdpd_np}, $X_{t}$, $\bc_{n}$, $\alpha_{t}$, and $E_{t}$ are generated from independent standard normal distributions. We consider $\pi_0=(\gamma_0,\beta_0)'=(-0.7,1)'$ for the MESS and $\pi_0=(0.3,1)'$ for the SAR. We set $n\in\mat{100, 200, 400}$ and $T\in\mat{10, 25}$, generate the spatial panel data with $500+T$ periods, and then use the last $T$ periods as our sample. The initial value is generated as $N(0, I_n)$, and we consider the case of \ref{var:hetei} with 1,000 replications. 
	\par 
	To generate the unknown $G_{k}$, we randomly generate the distances $d_{ij}$ between individuals and set $G_{k}=G$, constructed as $g^{*}(d_{ij})=\texttt{Normal}(-d_{ij})$ if $i \neq j$, and $g_{ii}=0$, where $\texttt{Normal}(\cdot)$ is the standard normal cdf. Here, $d_{ij}$ is the Euclidean distance between the coordinates of units $i$ and $j$, and we set $d_{ij}=0$ if $d_{ij}$ exceeds a threshold $\bar{d}_{0}$, where $\bar{d}_{0}$ is the 10th percentile of all $d_{ij}$. The two-dimensional coordinates are generated from a uniform distribution $U[0,1]$.
	% To generate the unknown $G_{k}$, we randomly generate the distances $d_{ij}$ between individuals and set $G_{k}=G$, constructed as  
	% $g^{*}(d_{ij})=\texttt{Normal}(-d_{ij})\cI{d_{ij}\leq\text{10th percentile of all $d_{ij}$}}$ if $i \neq j$, and $g_{ii}=0$, where $\texttt{Normal}(\cdot)$ is the standard normal cdf, and 
	% $d_{ij}$ is the Euclidean distance between the coordinates of the units $i$ and $j$, where the two-dimensional coordinates are generated from $U[0,1]$.
	Then, $G=G^{*}/(1.2\spnorm{G^{*}})$. From the construction, $\rho(A)=0.6949$. 
	In estimation, we use the row-normalization version of $\varPhi_{kl}=\varPhi_{l}=\mat{\phi_{l}(d_{ij})}$ for $l=1,...,\elln$.
	We consider two different choices for the sieve dimension: $\ell_{k}=\elln=2$, and $\ell_{k}=\elln=[n^{1/5}]+2$, a choice motivated by the theoretical rate conditions derived in Section \ref{sec:asy} and Footnote \ref{foot:rate}.
	\par 
	To generate the heteroskedastic variances $\sigma_{\epsilon_i}^2$, we proceed in three steps. First, partition the $n$ individuals into $K$ groups and assign initial group labels $k_i\in\{1,\dots,K\}$ accordingly; then randomly permute these labels to obtain $\tilde k_i$, so that group sizes remain fixed but membership is randomized. Second, define 
	$u_h = u(h)=(h-1)/(K-1), h=1,\dots,K,$
	and interpret $u_{\tilde{h}_i}=u(\tilde{h}_i)$ as the value associated with the randomized group of observation $i$. Finally, set
	$\sigma_{\epsilon_i}^2 = \bigl(1 + {u_{\tilde{h}_i}}/{K}\bigr)^2$.
	In our design, we set $K=3$. Thus, $u=\mat{0,{1}/{2},1}$. 
	% If an observation is assigned to group $\tilde{h}_i=2$ after random permutation, then $u_{\tilde{h}_i}=1/2$ and $\sigma_{\epsilon_i}^2=\bigl(1+{1/6}\bigr)^2$.
	\par 
	For the estimation of $\pi$, Table \ref{sim:hei} shows the following: First, as $n$, $T$, and $\elln$ increase, both OGMME and BGMME exhibit lower root mean squared error (RMSE) and smaller bias compared with the 2SLSE. This indicates that the GMME can improve the finite sample performance of the 2SLSE, especially when $n$, $T$, and $\elln$ are large. Second, within the GMME framework, BGMME is more efficient than OGMME when $n$ and $T$ increase and $\elln/(nT)\to 0$, as its empirical standard deviation (ESD) and RMSE are smaller. However, this efficiency gain is less pronounced when $T$ and $\elln$ are both small, which is consistent with the theoretical rate conditions. Third, as $(n, T)$ increases, the coverage probability (CP) approaches the nominal 0.95 level, suggesting that the variance approximation used for inference is reliable.
	\par 
	The results regarding $\hat{G}_k$ and $\rho(\hat{A})$ are as follows. First, as shown in Table \ref{sim:rhoG_mess}, the estimates for $\rho(\hat{A})$ are consistently less than 1 under both the baseline specification with $\bar{d}_0=10\%$ in Panel A and the alternative threshold with $\bar{d}_0=15\%$ in Panel B.% 
	\footnote{If the estimated $\rho(\hat{A})\geq 1$, following the approach of 
		\citet{hall2001nonparametric}, we can employ an Euclidean minimum distance estimator:
		$\hat{\theta}_{c}=\arg\min_{\theta} \sum_{k=1}^{\ell_{\theta}}%
		w_k(\hat{\theta}_{unc, k}-\theta_k)^2$ subject to 
		$\rho(A(\theta))\leq 1-\epsilon$, where $\hat{\theta}_{unc, k}$ is the unconstrained estimator, and the weights satisfy $\sum_{k}w_k=1$. 
		For example, 
		$w_{k}=\Var(\hat{\theta}_{unc, k})/\sum_{l=1}^{\ell_{\theta}}\Var(\hat{\theta}_{unc, l})$. This approach is also discussed in nonparametric SAR models \citep{malikov2017semiparametric,sun2018estimation}.} 
	The approximation of $\rho(\hat{A})$ improves with increasing $(n,T)$: Both bias and RMSE decline; even with $\elln=2$, larger $n$ and $T$ reduce bias, while $\elln=[n^{1/5}]+2$ produces the best performance in finite samples. This confirms that the estimated spatial dynamic system remains stable,  satisfying Assumption \ref{ass:appro_G} in the appendix regardless of the chosen distance cutoff.
	Second, from Table \ref{sim:heteG}, the bias and RMSE of the components in $\hat{\tilde{g}}_{k}$ decrease as $(n,T,\elln)$ increases. Analogous to $\pi$, BGMME yields the smallest bias and RMSE in most settings, followed by OGMME and then 2SLSE. However, this advantage may disappear when $T$ and $\elln$ are small. 
	In addition, the MAE also declines as $(n,T,\elln)$ increase, with the BGMME consistently yielding the smallest MAE. 
	\par 
	Moreover, we assess robustness to functional-form misspecification by simulating data from a SAR DGP and estimating a MESS model, and vice versa. 
	To facilitate comparability across parameterizations, Remark \ref{remark:MESS} implies a mapping between the two spatial parameters: 
	under a SAR DGP, $\gamma_0=0.3$ corresponds to $\gamma_0-1$ under the MESS parameterization; under a MESS DGP, $\gamma_0=-0.7$ corresponds to $\gamma_0+1$ under the SAR parameterization. We set $\beta_0=1$, keep all other design features unchanged, and maintain the stability condition $\rho(\hat{A})<1$ in estimation. Table \ref{sim:misspecification} shows that under misspecification, although the finite-dimensional $\pi$ has smaller biases, the CPs fall below the nominal 0.95 level and deteriorate as $(n,T)$ increases.
	%%%%%%%%%%%%%%%%%%%%%%%%%%%%%%%%%%%%%%%%%%%%%%%%%
	\begin{table}[htbp]
		\centering \small \stla{0cm} \stl{0.8mm} 
		\caption{Finite sample performance of estimators of $\pi$ for the MESS, $\Var(\epsilon_{it})=\sigma_i^2$.}
		\begin{spacing}{0.9}%\hspace*{-5mm}
			\begin{tabular}{lcccccclcccccc}
				\toprule
				\multirow{2}[4]{*}{} & \multicolumn{2}{c}{2SLS} & \multicolumn{2}{c}{OGMM} & \multicolumn{2}{c}{BGMM} & \multirow{2}[4]{*}{} & \multicolumn{2}{c}{2SLS} & \multicolumn{2}{c}{OGMM} & \multicolumn{2}{c}{BGMM} \\
				\cmidrule{2-7}\cmidrule{9-14}          & $\gamma$ & $\beta$  & $\gamma$ & $\beta$  & $\gamma$ & $\beta$  &       & $\gamma$ & $\beta$  & $\gamma$ & $\beta$  & $\gamma$ & $\beta$ \\
				\cmidrule{2-7}\cmidrule{9-14}
				\multicolumn{7}{c}{$(n,T,\elln)=(100,10,2)$}                         & \multicolumn{7}{c}{$(n,T,\elln)=(100,10,[n^{1/5}]+2)$} \\
				\cmidrule{2-7}\cmidrule{9-14}
				Bias  & -0.0237  & 0.0045  & -0.0255  & 0.0001  & -0.0278  & -0.0032  &  & -0.0229  & 0.0042  & -0.0252  & 0.0005  & -0.0277  & 0.0015  \\
				ESD   & 0.0381  & 0.0344  & 0.0380  & 0.0321  & 0.0412  & 0.0349  &   & 0.0360  & 0.0344  & 0.0366  & 0.0317  & 0.0396  & 0.0340  \\
				RMSE  & 0.0449  & 0.0347  & 0.0457  & 0.0321  & 0.0497  & 0.0350  &  & 0.0427  & 0.0347  & 0.0444  & 0.0317  & 0.0483  & 0.0340  \\
				CP    & 0.7970  & 0.9070  & 0.7950  & 0.9280  & 0.7850  & 0.9250  &   & 0.8930  & 0.9030  & 0.8830  & 0.9100  & 0.8720  & 0.8990  \\  
				\cmidrule{2-7}\cmidrule{9-14}
				\multicolumn{7}{c}{$(n,T,\elln)=(100,25,2)$}                         & \multicolumn{7}{c}{$(n,T,\elln)=(100,25,[n^{1/5}]+2)$} \\
				\cmidrule{2-7}\cmidrule{9-14}
				\cmidrule{2-7}\cmidrule{9-14}    Bias  & -0.0151  & 0.0023  & -0.0178  & -0.0004  & -0.0218  & -0.0072  &       & -0.0149  & 0.0014  & -0.0170  & -0.0007  & -0.0180  & 0.0020  \\
				ESD   & 0.0248  & 0.0222  & 0.0247  & 0.0221  & 0.0268  & 0.0246  &       & 0.0242  & 0.0216  & 0.0245  & 0.0212  & 0.0266  & 0.0235  \\
				RMSE  & 0.0290  & 0.0223  & 0.0304  & 0.0221  & 0.0345  & 0.0256  &       & 0.0284  & 0.0216  & 0.0298  & 0.0212  & 0.0321  & 0.0236  \\
				CP    & 0.8960  & 0.9070  & 0.8840  & 0.9270  & 0.8730  & 0.9160  &    & 0.8980  & 0.9140  & 0.8990  & 0.9180  & 0.8880  & 0.9070  \\
				\cmidrule{2-7}\cmidrule{9-14}
				\multicolumn{7}{c}{$(n,T,\elln)=(200,10,2)$}                         & \multicolumn{7}{c}{$(n,T,\elln)=(200,10,[n^{1/5}]+2)$} \\
				\cmidrule{2-7}\cmidrule{9-14}    Bias  & -0.0201  & 0.0055  & -0.0230  & 0.0038  & -0.0240  & 0.0003  &       & -0.0170  & 0.0041  & -0.0200  & 0.0028  & -0.0198  & 0.0038  \\
				ESD   & 0.0193  & 0.0199  & 0.0193  & 0.0183  & 0.0191  & 0.0182  &       & 0.0193  & 0.0195  & 0.0186  & 0.0187  & 0.0184  & 0.0186  \\
				RMSE  & 0.0279  & 0.0206  & 0.0301  & 0.0187  & 0.0307  & 0.0182  &       & 0.0257  & 0.0199  & 0.0273  & 0.0189  & 0.0271  & 0.0189  \\
				CP    & 0.8000  & 0.8990  & 0.8060  & 0.9260  & 0.7960  & 0.9150  &    & 0.8430  & 0.8880  & 0.8140  & 0.9200  & 0.8040  & 0.9090  \\
				\cmidrule{2-7}\cmidrule{9-14}
				\multicolumn{7}{c}{$(n,T,\elln)=(200,25,2)$}                         & \multicolumn{7}{c}{$(n,T,\elln)=(200,25,[n^{1/5}]+2)$}
				\\
				\cmidrule{2-7}\cmidrule{9-14}
				\cmidrule{2-7}\cmidrule{9-14}    Bias  & 0.0069  & 0.0046  & 0.0069  & 0.0046  & 0.0070  & 0.0045  &       & -0.0066  & 0.0036  & 0.0072  & 0.0025  & -0.0081  & 0.0035  \\
				ESD   & 0.0142  & 0.0131  & 0.0136  & 0.0128  & 0.0135  & 0.0126  &       & 0.0136  & 0.0129  & 0.0135  & 0.0124  & 0.0132  & 0.0122  \\
				RMSE  & 0.0158  & 0.0138  & 0.0153  & 0.0136  & 0.0152  & 0.0134  &       & 0.0151  & 0.0134  & 0.0153  & 0.0127  & 0.0155  & 0.0127  \\
				CP    & 0.9260  & 0.9400  & 0.9360  & 0.9330  & 0.9250  & 0.9220  &    & 0.9400  & 0.9350  & 0.9420  & 0.9420  & 0.9350  & 0.9400  \\
				\cmidrule{2-7}\cmidrule{9-14}
				\multicolumn{7}{c}{$(n,T,\elln)=(400,10,2)$}                         & \multicolumn{7}{c}{$(n,T,\elln)=(400,10,[n^{1/5}]+2)$} \\
				\cmidrule{2-7}\cmidrule{9-14}
				\cmidrule{2-7}\cmidrule{9-14}    Bias  & -0.0105  & 0.0065  & -0.0075  & 0.0050  & -0.0074  & -0.0047  &       & -0.0101  & 0.0053  & -0.0070  & 0.0043  & -0.0063  & 0.0045  \\
				ESD   & 0.0169  & 0.0146  & 0.0169  & 0.0146  & 0.0167  & 0.0145  &       & 0.0124  & 0.0116  & 0.0125  & 0.0112  & 0.0121  & 0.0112  \\
				RMSE  & 0.0199  & 0.0160  & 0.0185  & 0.0154  & 0.0182  & 0.0152  &       & 0.0160  & 0.0127  & 0.0143  & 0.0120  & 0.0136  & 0.0120  \\
				CP    & 0.9120  & 0.9220  & 0.9270  & 0.9300  & 0.9190  & 0.9320  &  & 0.9130  & 0.9300  & 0.9320  & 0.9360  & 0.9240  & 0.9330  \\
				\cmidrule{2-7}\cmidrule{9-14}
				\multicolumn{7}{c}{$(n,T,\elln)=(400,25,2)$}                         & \multicolumn{7}{c}{$(n,T,\elln)=(400,25,[n^{1/5}]+2)$}
				\\
				\cmidrule{2-7}\cmidrule{9-14}
				Bias  & -0.0053  & 0.0016  & -0.0033  & -0.0005  & -0.0036  & -0.0044  &   & -0.0046  & 0.0012  & -0.0026  & -0.0009  & -0.0029  & -0.0032  \\
				ESD   & 0.0094  & 0.0086  & 0.0094  & 0.0086  & 0.0093  & 0.0086  &    & 0.0090  & 0.0085  & 0.0090  & 0.0086  & 0.0089  & 0.0086  \\
				RMSE  & 0.0108  & 0.0088  & 0.0100  & 0.0086  & 0.0100  & 0.0097  &   & 0.0101  & 0.0086  & 0.0094  & 0.0087  & 0.0094  & 0.0092  \\
				CP    & 0.9250  & 0.9360  & 0.9320  & 0.9380  & 0.9250  & 0.9360  &     & 0.9340  & 0.9420  & 0.9400  & 0.9450  & 0.9410  & 0.9420  \\
				\bottomrule
			\end{tabular}%
			\hspace*{-1cm}
			\begin{tablenotes}
				\footnotesize
				\item \textbf{Note:} The true parameters are set to $\pi_0=(-0.7,1)'$, and $\bard_0=10\%$. The results are based on 1,000 Monte Carlo replications. Bias denotes the mean bias of the estimates, ESD denotes the standard deviation, RMSE denotes the root mean squared error, and CP denotes the 95\% coverage probability. The 2SLS refers to the two-stage least square estimator, OGMM refers to the feasible optimal GMM estimator, and BGMM refers to the feasible best GMM estimator. 
			\end{tablenotes}
		\end{spacing}
		\label{sim:hei}%
	\end{table}%
	
	% \clearpage
	% \thispagestyle{empty}
	\begin{table}[htbp]
		\centering 
		\small
		\stla{0cm} \stl{0.8mm} 
		\caption{Estimated $\rho(\hat{A})$ for the MESS, $\mathrm{Var}(\epsilon_{it})=\sigma_i^2$, $\rho(A)=0.6949$.}
		
		\begin{spacing}{0.9}\hspace*{-7mm}
			\begin{minipage}{1.1\textwidth} 
				\begin{tabular}{lcccccclcccccc}
					\toprule
					% ---------------- PANEL A ----------------
					\multicolumn{14}{c}{Panel A: Baseline results, $\bard_0=10\%$.} \\
					\cmidrule{2-7}\cmidrule{9-14}
					\multicolumn{7}{c}{$(n,T)=(100,10)$}             & \multicolumn{7}{c}{$(n,T,\elln)=(100,25)$} \\
					\cmidrule{2-7}\cmidrule{9-14}
					\multirow{2}[4]{*}{} & \multicolumn{3}{c}{$\elln=2$} & \multicolumn{3}{c}{$\elln=[n^{1/5}]+2$} & \multirow{2}[4]{*}{} & \multicolumn{3}{c}{$\elln=2$} & \multicolumn{3}{c}{$\elln=[n^{1/5}]+2$} \\
					\cmidrule{2-7}\cmidrule{9-14}          & 2SLS &OGMM & BGMM & 2SLS &OGMM & BGMM  &        & 2SLS &OGMM & BGMM & 2SLS &OGMM & BGMM \\              
					\cmidrule{2-7}\cmidrule{9-14}
					\cmidrule{2-7}\cmidrule{9-14}    Mean  & 0.5664  & 0.5781  & 0.5856  & 0.5784  & 0.5896  & 0.5925  &       & 0.6022  & 0.5961  & 0.5999  & 0.6178  & 0.6067  & 0.6112  \\
					% Bias  & -0.1285  & -0.1168  & -0.1093  & -0.1165  & -0.1053  & -0.1024  &       & -0.0927  & -0.0988  & -0.0950  & -0.0771  & -0.0882  & -0.0837  \\
					SD    & 0.1812  & 0.1797  & 0.2167  & 0.1890  & 0.1886  & 0.1866  &       & 0.1132  & 0.1135  & 0.1180  & 0.1017  & 0.1023  & 0.1021  \\
					RMSE  & 0.2221  & 0.2143  & 0.2427  & 0.2220  & 0.2160  & 0.2129  &       & 0.1463  & 0.1505  & 0.1515  & 0.1276  & 0.1351  & 0.1320  \\
					\cmidrule{2-7}\cmidrule{9-14}
					\multicolumn{7}{c}{$(n,T)=(200,10)$}     & \multicolumn{7}{c}{$(n,T,\elln)=(200,25)$} \\
					\cmidrule{2-7}\cmidrule{9-14}
					\multirow{2}[4]{*}{} & \multicolumn{3}{c}{$\elln=2$} & \multicolumn{3}{c}{$\elln=[n^{1/5}]+2$} & \multirow{2}[4]{*}{} & \multicolumn{3}{c}{$\elln=2$} & \multicolumn{3}{c}{$\elln=[n^{1/5}]+2$} \\
					\cmidrule{2-7}\cmidrule{9-14}          & 2SLS &OGMM & BGMM & 2SLS &OGMM & BGMM  &        & 2SLS &OGMM & BGMM & 2SLS &OGMM & BGMM \\
					\cmidrule{2-7}\cmidrule{9-14}
					\cmidrule{2-7}\cmidrule{9-14}    Mean  & 0.5827  & 0.5907  & 0.5899  & 0.5796  & 0.5867  & 0.5969  &       & 0.6198  & 0.6152  & 0.6084  & 0.6212  & 0.6214  & 0.6228  \\
					% Bias  & -0.1122  & -0.1042  & -0.1050  & -0.1153  & -0.1082  & -0.0980  &       & -0.0751  & -0.0797  & -0.0865  & -0.0737  & -0.0735  & -0.0721  \\
					SD    & 0.1631  & 0.1617  & 0.1950  & 0.1491  & 0.1493  & 0.1494  &       & 0.0925  & 0.0948  & 0.0923  & 0.0865  & 0.0865  & 0.0865  \\
					RMSE  & 0.1980  & 0.1924  & 0.2215  & 0.1885  & 0.1844  & 0.1787  &       & 0.1191  & 0.1239  & 0.1265  & 0.1136  & 0.1135  & 0.1126  \\
					\cmidrule{2-7}\cmidrule{9-14}
					\multicolumn{7}{c}{$(n,T)=(400,10)$}                           & \multicolumn{7}{c}{$(n,T,\elln)=(400,25)$} \\
					\cmidrule{2-7}\cmidrule{9-14}
					\multirow{2}[4]{*}{} & \multicolumn{3}{c}{$\elln=2$} & \multicolumn{3}{c}{$\elln=[n^{1/5}]+2$} & \multirow{2}[4]{*}{} & \multicolumn{3}{c}{$\elln=2$} & \multicolumn{3}{c}{$\elln=[n^{1/5}]+2$} \\
					\cmidrule{2-7}\cmidrule{9-14}          & 2SLS &OGMM & BGMM & 2SLS &OGMM & BGMM  &        & 2SLS &OGMM & BGMM & 2SLS &OGMM & BGMM \\
					\cmidrule{2-7}\cmidrule{9-14}
					\cmidrule{2-7}\cmidrule{9-14}    Mean  & 0.6321  & 0.6354  & 0.6351  & 0.6366  & 0.6362  & 0.6388  &       & 0.6458  & 0.6482  & 0.6485  & 0.6485  & 0.6486  & 0.6491  \\
					% Bias  & -0.0628  & -0.0595  & -0.0598  & -0.0583  & -0.0587  & -0.0561  &       & -0.0491  & -0.0467  & -0.0465  & -0.0464  & -0.0463  & -0.0458  \\
					SD    & 0.1059  & 0.1061  & 0.1061  & 0.0955  & 0.0956  & 0.0953  &       & 0.0754  & 0.0755  & 0.0756  & 0.0753  & 0.0731  & 0.0723  \\
					RMSE  & 0.1231  & 0.1217  & 0.1218  & 0.1119  & 0.1122  & 0.1106  &       & 0.0900  & 0.0888  & 0.0887  & 0.0885  & 0.0865  & 0.0856  \\
					% ---------------- PANEL B ----------------
					\cmidrule{2-7}\cmidrule{9-14}
					\multicolumn{14}{c}{Panel B: Additional results, $\bar{d}_0=15\%$.} \\
					\cmidrule{2-7}\cmidrule{9-14}
					\multicolumn{7}{c}{$(n,T)=(100,10)$}                           & \multicolumn{7}{c}{$(n,T,\elln)=(100,25)$} \\
					\cmidrule{2-7}\cmidrule{9-14}
					\multirow{2}[4]{*}{} & \multicolumn{3}{c}{$\elln=2$} & \multicolumn{3}{c}{$\elln=[n^{1/5}]+2$} & \multirow{2}[4]{*}{} & \multicolumn{3}{c}{$\elln=2$} & \multicolumn{3}{c}{$\elln=[n^{1/5}]+2$} \\
					\cmidrule{2-7}\cmidrule{9-14}         & 
					\multicolumn{1}{c}{2SLS} & \multicolumn{1}{c}{OGMM} & \multicolumn{1}{c}{BGMM} & \multicolumn{1}{c}{2SLS} & \multicolumn{1}{c}{OGMM} & \multicolumn{1}{c}{BGMM} &       & \multicolumn{1}{c}{2SLS} & \multicolumn{1}{c}{OGMM} & \multicolumn{1}{c}{BGMM} & \multicolumn{1}{c}{2SLS} & \multicolumn{1}{c}{OGMM} & \multicolumn{1}{c}{BGMM} \\
					\cmidrule{2-7}\cmidrule{9-14}    Mean  & 0.5618  & 0.5574  & 0.5603  & 0.5709  & 0.5682  & 0.5732  &       & 0.5830  & 0.5819  & 0.5820  & 0.5821  & 0.5921  & 0.5921  \\
					% Bias  & -0.1331  & -0.1375  & -0.1346  & -0.1240  & -0.1267  & -0.1217  &       & -0.1119  & -0.1130  & -0.1129  & -0.1128  & -0.1028  & -0.1028  \\
					SD    & 0.1107  & 0.1095  & 0.1089  & 0.0927  & 0.0904  & 0.0916  &       & 0.0955  & 0.0964  & 0.0990  & 0.0553  & 0.0520  & 0.0534  \\
					RMSE  & 0.1731  & 0.1758  & 0.1731  & 0.1548  & 0.1556  & 0.1523  &       & 0.1471  & 0.1485  & 0.1502  & 0.1256  & 0.1152  & 0.1159  \\
					\cmidrule{2-7}\cmidrule{9-14}
					\multicolumn{7}{c}{$(n,T)=(200,10)$}                           & \multicolumn{7}{c}{$(n,T,\elln)=(200,25)$} \\
					\cmidrule{2-7}\cmidrule{9-14}
					\multirow{2}[4]{*}{} & \multicolumn{3}{c}{$\elln=2$} & \multicolumn{3}{c}{$\elln=[n^{1/5}]+2$} & \multirow{2}[4]{*}{} & \multicolumn{3}{c}{$\elln=2$} & \multicolumn{3}{c}{$\elln=[n^{1/5}]+2$} \\
					\cmidrule{2-7}\cmidrule{9-14}          & 2SLS &OGMM & BGMM & 2SLS &OGMM & BGMM  &        & 2SLS &OGMM & BGMM & 2SLS &OGMM & BGMM \\
					\cmidrule{2-7}\cmidrule{9-14}
					\cmidrule{2-7}\cmidrule{9-14}    Mean  & 0.6032  & 0.6017  & 0.6017  & 0.6017  & 0.6017  & 0.6018  &       & 0.6179  & 0.6215  & 0.6233  & 0.6233  & 0.6227  & 0.6226  \\
					% Bias  & -0.0917  & -0.0932  & -0.0932  & -0.0932  & -0.0932  & -0.0931  &       & -0.0770  & -0.0734  & -0.0716  & -0.0716  & -0.0722  & -0.0723  \\
					SD    & 0.1061  & 0.1077  & 0.1063  & 0.0989  & 0.0958  & 0.0963  &       & 0.0861  & 0.0858  & 0.0858  & 0.0840  & 0.0837  & 0.0840  \\
					RMSE  & 0.1403  & 0.1424  & 0.1414  & 0.1359  & 0.1336  & 0.1340  &       & 0.1155  & 0.1129  & 0.1117  & 0.1104  & 0.1105  & 0.1109  \\
					\cmidrule{2-7}\cmidrule{9-14}
					\multicolumn{7}{c}{$(n,T)=(400,10)$}                           & \multicolumn{7}{c}{$(n,T,\elln)=(400,25)$} \\
					\cmidrule{2-7}\cmidrule{9-14}
					\multirow{2}[4]{*}{} & \multicolumn{3}{c}{$\elln=2$} & \multicolumn{3}{c}{$\elln=[n^{1/5}]+2$} & \multirow{2}[4]{*}{} & \multicolumn{3}{c}{$\elln=2$} & \multicolumn{3}{c}{$\elln=[n^{1/5}]+2$} \\
					\cmidrule{2-7}\cmidrule{9-14}          & 2SLS &OGMM & BGMM & 2SLS &OGMM & BGMM  &        & 2SLS &OGMM & BGMM & 2SLS &OGMM & BGMM \\
					\cmidrule{2-7}\cmidrule{9-14}
					\cmidrule{2-7}\cmidrule{9-14}
					Mean  & 0.6056  & 0.6071  & 0.6059  & 0.6075  & 0.6090  & 0.6111  &       & 0.6289  & 0.6331  & 0.6318  & 0.6311  & 0.6325  & 0.6321  \\
					% Bias  & -0.0893  & -0.0878  & -0.0890  & -0.0874  & -0.0859  & -0.0838  &       & -0.0660  & -0.0618  & -0.0631  & -0.0638  & -0.0624  & -0.0628  \\
					SD    & 0.1094  & 0.1025  & 0.1106  & 0.0955  & 0.0964  & 0.0990  &       & 0.0760  & 0.0756  & 0.0753  & 0.0739  & 0.0736  & 0.0740  \\
					RMSE  & 0.1412  & 0.1350  & 0.1420  & 0.1295  & 0.1291  & 0.1297  &       & 0.1006  & 0.0977  & 0.0982  & 0.0976  & 0.0965  & 0.0971  \\
					\bottomrule
				\end{tabular}%
				\hspace*{-1cm}
				\begin{tablenotes}
					\footnotesize
					\item \textbf{Note:} The results are based on 1,000 Monte Carlo replications. Mean denotes the mean of $\rho(\hat{A})$'s. ESD denotes the standard deviation of $\rho(\hat{A})$'s, and RMSE denotes the root mean squared error of $\rho(\hat{A})$'s. The 2SLS refers to the two-stage least square estimator, OGMM refers to the feasible optimal GMM estimator, and BGMM refers to the feasible best GMM estimator.
				\end{tablenotes}	
			\end{minipage}
		\end{spacing}	
		\label{sim:rhoG_mess}%
	\end{table}%
	% \clearpage

	\begin{sidewaystable}[htbp]
		\centering\small\stla{0cm} \stl{0.8mm} \vspace*{-5mm}
		\caption{Finite sample performance of estimators of $G_{1}, G_{2}$ and $G_{3}$ for the MESS. $\Var(\epsilon_{it})=\sigma_i^2$.}
		\begin{spacing}{0.5}\hspace*{-5mm}
			\begin{tabular}{lrrrrrrrrrlrrrrrrrrr}
				\toprule
				\multirow{4}[3]{*}{}	& \multicolumn{9}{c}{$(n,T,\elln)=(100,10,2)$} &   & \multicolumn{9}{c}{$(n,T,\elln)=(100,10,[n^{1/5}]+2)$} \\
				\cmidrule{2-10}\cmidrule{12-20}   
				\multirow{2}[4]{*}{} & \multicolumn{3}{c}{2SLS} & \multicolumn{3}{c}{OGMM} & \multicolumn{3}{c}{BGMM} & \multirow{2}[4]{*}{} & \multicolumn{3}{c}{2SLS} & \multicolumn{3}{c}{OGMM} & \multicolumn{3}{c}{BGMM} \\        
				\cmidrule{2-10}\cmidrule{12-20}        & \multicolumn{1}{c}{$\tilde{g}_1$} & \multicolumn{1}{c}{$\tilde{g}_2$} & \multicolumn{1}{c}{$\tilde{g}_3$} & \multicolumn{1}{c}{$\tilde{g}_{1}$} & \multicolumn{1}{c}{$\tilde{g}_{2}$} & \multicolumn{1}{c}{$\tilde{g}_{3}$} & \multicolumn{1}{c}{$\tilde{g}_{1}$} & \multicolumn{1}{c}{$\tilde{g}_{2}$} & \multicolumn{1}{c}{$\tilde{g}_{3}$} &       & \multicolumn{1}{c}{$\tilde{g}_{1}$} & \multicolumn{1}{c}{$\tilde{g}_{2}$} & \multicolumn{1}{c}{$\tilde{g}_{3}$} & \multicolumn{1}{c}{$\tilde{g}_{1}$} & \multicolumn{1}{c}{$\tilde{g}_{2}$} & \multicolumn{1}{c}{$\tilde{g}_{3}$} & \multicolumn{1}{c}{$\tilde{g}_{1}$} & \multicolumn{1}{c}{$\tilde{g}_{2}$} & \multicolumn{1}{c}{$\tilde{g}_{3}$} \\
				\cmidrule{2-10}\cmidrule{12-20}    MAE   & 0.0346  & 0.0396  & 0.0733  & 0.0342  & 0.0393  & 0.0712  & 0.0342  & 0.0394  & 0.0693  &       & 0.0344  & 0.0401  & 0.0728  & 0.0340  & 0.0397  & 0.0697  & 0.0343  & 0.0395  & 0.0679  \\
				Bias  & -0.0221  & -0.0326  & -0.0664  & -0.0206  & -0.0321  & -0.0648  & -0.0189  & -0.0315  & -0.0630  &       & -0.0219  & -0.0328  & -0.0661  & -0.0196  & -0.0320  & -0.0635  & -0.0182  & -0.0311  & -0.0618  \\
				RMSE  & 0.0266  & 0.0364  & 0.0774  & 0.0254  & 0.0359  & 0.0749  & 0.0246  & 0.0358  & 0.0727  &       & 0.0269  & 0.0374  & 0.0771  & 0.0253  & 0.0368  & 0.0734  & 0.0250  & 0.0362  & 0.0711  \\
				\cmidrule{2-10}\cmidrule{12-20}   
				& \multicolumn{9}{c}{$(n,T,\elln)=(100,25,2)$} &   & \multicolumn{9}{c}{$(n,T,\elln)=(100,25,[n^{1/5}]+2)$} \\
				\cmidrule{2-10}\cmidrule{12-20}
				\multirow{2}[4]{*}{} & \multicolumn{3}{c}{2SLS} & \multicolumn{3}{c}{OGMM} & \multicolumn{3}{c}{BGMM} & \multirow{2}[4]{*}{} & \multicolumn{3}{c}{2SLS} & \multicolumn{3}{c}{OGMM} & \multicolumn{3}{c}{BGMM} \\        
				\cmidrule{2-10}\cmidrule{12-20}          & \multicolumn{1}{c}{$\tilde{g}_{1}$} & \multicolumn{1}{c}{$\tilde{g}_{2}$} & \multicolumn{1}{c}{$\tilde{g}_{3}$} & \multicolumn{1}{c}{$\tilde{g}_{1}$} & \multicolumn{1}{c}{$\tilde{g}_{2}$} & \multicolumn{1}{c}{$\tilde{g}_{3}$} & \multicolumn{1}{c}{$\tilde{g}_{1}$} & \multicolumn{1}{c}{$\tilde{g}_{2}$} & \multicolumn{1}{c}{$\tilde{g}_{3}$} &       & \multicolumn{1}{c}{$\tilde{g}_{1}$} & \multicolumn{1}{c}{$\tilde{g}_{2}$} & \multicolumn{1}{c}{$\tilde{g}_{3}$} & \multicolumn{1}{c}{$\tilde{g}_{1}$} & \multicolumn{1}{c}{$\tilde{g}_{2}$} & \multicolumn{1}{c}{$\tilde{g}_{3}$} & \multicolumn{1}{c}{$\tilde{g}_{1}$} & \multicolumn{1}{c}{$\tilde{g}_{2}$} & \multicolumn{1}{c}{$\tilde{g}_{3}$} \\
				\cmidrule{2-10}\cmidrule{12-20}    
				MAE   & 0.0160  & 0.0189  & 0.0456  & 0.0159  & 0.0188  & 0.0445  & 0.0160  & 0.0189  & 0.0433  &       & 0.0162  & 0.0190  & 0.0483  & 0.0162  & 0.0189  & 0.0467  & 0.0167  & 0.0189  & 0.0460  \\
				Bias  & -0.0089  & -0.0154  & -0.0433  & -0.0080  & -0.0152  & -0.0424  & -0.0069  & -0.0150  & -0.0413  &       & -0.0094  & -0.0152  & -0.0437  & -0.0079  & -0.0150  & -0.0419  & -0.0080  & -0.0147  & -0.0417  \\
				RMSE  & 0.0115  & 0.0177  & 0.0490  & 0.0108  & 0.0176  & 0.0476  & 0.0106  & 0.0175  & 0.0462  &       & 0.0111  & 0.0171  & 0.0482  & 0.0098  & 0.0170  & 0.0463  & 0.0101  & 0.0169  & 0.0458  \\
				\cmidrule{2-10}\cmidrule{12-20}   
				& \multicolumn{9}{c}{$(n,T,\elln)=(200,10,2)$} &   & \multicolumn{9}{c}{$(n,T,\elln)=(200,10,[n^{1/5}]+2)$} \\
				\cmidrule{2-10}\cmidrule{12-20}    
				\multirow{2}[4]{*}{} & \multicolumn{3}{c}{2SLS} & \multicolumn{3}{c}{OGMM} & \multicolumn{3}{c}{BGMM} & \multirow{2}[4]{*}{} & \multicolumn{3}{c}{2SLS} & \multicolumn{3}{c}{OGMM} & \multicolumn{3}{c}{BGMM} \\        
				\cmidrule{2-10}\cmidrule{12-20}        & \multicolumn{1}{c}{$\tilde{g}_1$} & \multicolumn{1}{c}{$\tilde{g}_2$} & \multicolumn{1}{c}{$\tilde{g}_3$} & \multicolumn{1}{c}{$\tilde{g}_{1}$} & \multicolumn{1}{c}{$\tilde{g}_{2}$} & \multicolumn{1}{c}{$\tilde{g}_{3}$} & \multicolumn{1}{c}{$\tilde{g}_{1}$} & \multicolumn{1}{c}{$\tilde{g}_{2}$} & \multicolumn{1}{c}{$\tilde{g}_{3}$} &       & \multicolumn{1}{c}{$\tilde{g}_{1}$} & \multicolumn{1}{c}{$\tilde{g}_{2}$} & \multicolumn{1}{c}{$\tilde{g}_{3}$} & \multicolumn{1}{c}{$\tilde{g}_{1}$} & \multicolumn{1}{c}{$\tilde{g}_{2}$} & \multicolumn{1}{c}{$\tilde{g}_{3}$} & \multicolumn{1}{c}{$\tilde{g}_{1}$} & \multicolumn{1}{c}{$\tilde{g}_{2}$} & \multicolumn{1}{c}{$\tilde{g}_{3}$} \\
				\cmidrule{2-10}\cmidrule{12-20}    
				MAE   & 0.0333  & 0.0377  & 0.0417  & 0.0326  & 0.0376  & 0.0491  & 0.0321  & 0.0381  & 0.0466  &       & 0.0327  & 0.0368  & 0.0393  & 0.0319  & 0.0366  & 0.0359  & 0.0320  & 0.0363  & 0.0338  \\
				Bias  & -0.0233  & -0.0325  & -0.0397  & -0.0208  & -0.0324  & -0.0371  & -0.0180  & -0.0328  & -0.0342  &       & -0.0228  & -0.0312  & -0.0371  & -0.0198  & -0.0308  & -0.0305  & -0.0177  & -0.0298  & -0.0312  \\
				RMSE  & 0.0248  & 0.0337  & 0.0461  & 0.0225  & 0.0337  & 0.0432  & 0.0203  & 0.0343  & 0.0401  &       & 0.0247  & 0.0327  & 0.0442  & 0.0220  & 0.0324  & 0.0378  & 0.0206  & 0.0316  & 0.0376  \\
				\cmidrule{2-10}\cmidrule{12-20}   
				& \multicolumn{9}{c}{$(n,T,\elln)=(200,25,2)$} &   & \multicolumn{9}{c}{$(n,T,\elln)=(200,25,[n^{1/5}]+2)$} \\
				\cmidrule{2-10}\cmidrule{12-20}  
				\multirow{2}[4]{*}{} & \multicolumn{3}{c}{2SLS} & \multicolumn{3}{c}{OGMM} & \multicolumn{3}{c}{BGMM} & \multirow{2}[4]{*}{} & \multicolumn{3}{c}{2SLS} & \multicolumn{3}{c}{OGMM} & \multicolumn{3}{c}{BGMM} \\  
				\cmidrule{2-10}\cmidrule{12-20}          & \multicolumn{1}{c}{$\tilde{g}_{1}$} & \multicolumn{1}{c}{$\tilde{g}_{2}$} & \multicolumn{1}{c}{$\tilde{g}_{3}$} & \multicolumn{1}{c}{$\tilde{g}_{1}$} & \multicolumn{1}{c}{$\tilde{g}_{2}$} & \multicolumn{1}{c}{$\tilde{g}_{3}$} & \multicolumn{1}{c}{$\tilde{g}_{1}$} & \multicolumn{1}{c}{$\tilde{g}_{2}$} & \multicolumn{1}{c}{$\tilde{g}_{3}$} &       & \multicolumn{1}{c}{$\tilde{g}_{1}$} & \multicolumn{1}{c}{$\tilde{g}_{2}$} & \multicolumn{1}{c}{$\tilde{g}_{3}$} & \multicolumn{1}{c}{$\tilde{g}_{1}$} & \multicolumn{1}{c}{$\tilde{g}_{2}$} & \multicolumn{1}{c}{$\tilde{g}_{3}$} & \multicolumn{1}{c}{$\tilde{g}_{1}$} & \multicolumn{1}{c}{$\tilde{g}_{2}$} & \multicolumn{1}{c}{$\tilde{g}_{3}$} \\
				\cmidrule{2-10}\cmidrule{12-20}    
				MAE   & 0.0074  & 0.0089  & 0.0150  & 0.0074  & 0.0089  & 0.0148  & 0.0076  & 0.0089  & 0.0143  &       & 0.0073  & 0.0085  & 0.0149  & 0.0073  & 0.0085  & 0.0145  & 0.0072  & 0.0084  & 0.0139  \\
				Bias  & -0.0038  & -0.0065  & -0.0104  & -0.0038  & -0.0065  & -0.0104  & -0.0031  & -0.0061  & -0.0103  &       & -0.0037  & -0.0063  & -0.0102  & -0.0037  & -0.0063  & -0.0102  & -0.0031  & -0.0060  & -0.0101  \\
				RMSE  & 0.0056  & 0.0089  & 0.0134  & 0.0056  & 0.0089  & 0.0134  & 0.0052  & 0.0086  & 0.0133  &       & 0.0052  & 0.0083  & 0.0130  & 0.0052  & 0.0084  & 0.0130  & 0.0047  & 0.0080  & 0.0129  \\
				\cmidrule{2-10}\cmidrule{12-20}   
				& \multicolumn{9}{c}{$(n,T,\elln)=(400,10,2)$} &   & \multicolumn{9}{c}{$(n,T,\elln)=(400,10,[n^{1/5}]+2)$} \\
				\cmidrule{2-10}\cmidrule{12-20}    
				\multirow{2}[4]{*}{} & \multicolumn{3}{c}{2SLS} & \multicolumn{3}{c}{OGMM} & \multicolumn{3}{c}{BGMM} & \multirow{2}[4]{*}{} & \multicolumn{3}{c}{2SLS} & \multicolumn{3}{c}{OGMM} & \multicolumn{3}{c}{BGMM} \\  
				\cmidrule{2-10}\cmidrule{12-20}        & \multicolumn{1}{c}{$\tilde{g}_1$} & \multicolumn{1}{c}{$\tilde{g}_2$} & \multicolumn{1}{c}{$\tilde{g}_3$} & \multicolumn{1}{c}{$\tilde{g}_{1}$} & \multicolumn{1}{c}{$\tilde{g}_{2}$} & \multicolumn{1}{c}{$\tilde{g}_{3}$} & \multicolumn{1}{c}{$\tilde{g}_{1}$} & \multicolumn{1}{c}{$\tilde{g}_{2}$} & \multicolumn{1}{c}{$\tilde{g}_{3}$} &       & \multicolumn{1}{c}{$\tilde{g}_{1}$} & \multicolumn{1}{c}{$\tilde{g}_{2}$} & \multicolumn{1}{c}{$\tilde{g}_{3}$} & \multicolumn{1}{c}{$\tilde{g}_{1}$} & \multicolumn{1}{c}{$\tilde{g}_{2}$} & \multicolumn{1}{c}{$\tilde{g}_{3}$} & \multicolumn{1}{c}{$\tilde{g}_{1}$} & \multicolumn{1}{c}{$\tilde{g}_{2}$} & \multicolumn{1}{c}{$\tilde{g}_{3}$} \\
				\cmidrule{2-10}\cmidrule{12-20}    
				MAE   & 0.0086  & 0.0103  & 0.0174  & 0.0086  & 0.0103  & 0.0172  & 0.0088  & 0.0103  & 0.0166  &       & 0.0085  & 0.0099  & 0.0173  & 0.0085  & 0.0099  & 0.0168  & 0.0084  & 0.0097  & 0.0161  \\
				Bias  & -0.0044  & -0.0075  & -0.0121  & -0.0044  & -0.0075  & -0.0121  & -0.0036  & -0.0071  & -0.0119  &       & -0.0043  & -0.0073  & -0.0118  & -0.0043  & -0.0073  & -0.0118  & -0.0036  & -0.0070  & -0.0117  \\
				RMSE  & 0.0057  & 0.0081  & 0.0128  & 0.0057  & 0.0081  & 0.0128  & 0.0058  & 0.0080  & 0.0127  &       & 0.0052  & 0.0077  & 0.0125  & 0.0052  & 0.0078  & 0.0125  & 0.0048  & 0.0075  & 0.0123  \\
				\cmidrule{2-10}\cmidrule{12-20}   
				& \multicolumn{9}{c}{$(n,T,\elln)=(400,25,2)$} &   & \multicolumn{9}{c}{$(n,T,\elln)=(400,25,[n^{1/5}]+2)$} \\
				\cmidrule{2-10}\cmidrule{12-20} 
				\multirow{2}[4]{*}{} & \multicolumn{3}{c}{2SLS} & \multicolumn{3}{c}{OGMM} & \multicolumn{3}{c}{BGMM} & \multirow{2}[4]{*}{} & \multicolumn{3}{c}{2SLS} & \multicolumn{3}{c}{OGMM} & \multicolumn{3}{c}{BGMM} \\  
				\cmidrule{2-10}\cmidrule{12-20}          & \multicolumn{1}{c}{$\tilde{g}_{1}$} & \multicolumn{1}{c}{$\tilde{g}_{2}$} & \multicolumn{1}{c}{$\tilde{g}_{3}$} & \multicolumn{1}{c}{$\tilde{g}_{1}$} & \multicolumn{1}{c}{$\tilde{g}_{2}$} & \multicolumn{1}{c}{$\tilde{g}_{3}$} & \multicolumn{1}{c}{$\tilde{g}_{1}$} & \multicolumn{1}{c}{$\tilde{g}_{2}$} & \multicolumn{1}{c}{$\tilde{g}_{3}$} &       & \multicolumn{1}{c}{$\tilde{g}_{1}$} & \multicolumn{1}{c}{$\tilde{g}_{2}$} & \multicolumn{1}{c}{$\tilde{g}_{3}$} & \multicolumn{1}{c}{$\tilde{g}_{1}$} & \multicolumn{1}{c}{$\tilde{g}_{2}$} & \multicolumn{1}{c}{$\tilde{g}_{3}$} & \multicolumn{1}{c}{$\tilde{g}_{1}$} & \multicolumn{1}{c}{$\tilde{g}_{2}$} & \multicolumn{1}{c}{$\tilde{g}_{3}$} \\
				\cmidrule{2-10}\cmidrule{12-20}   
				MAE   & 0.0072  & 0.0085  & 0.0128  & 0.0072  & 0.0084  & 0.0126  & 0.0072  & 0.0085  & 0.0122  &    & 0.0070  & 0.0081  & 0.0124  & 0.0070  & 0.0081  & 0.0122  & 0.0069  & 0.0084  & 0.0118  \\
				Bias  & -0.0035  & -0.0060  & -0.0088  & -0.0035  & -0.0060  & -0.0088  & -0.0029  & -0.0056  & -0.0088  &   & -0.0034  & -0.0058  & -0.0086  & -0.0034  & -0.0058  & -0.0086  & -0.0029  & -0.0055  & -0.0085  \\
				RMSE  & 0.0041  & 0.0065  & 0.0103  & 0.0041  & 0.0065  & 0.0103  & 0.0038  & 0.0063  & 0.0103  &   & 0.0038  & 0.0062  & 0.0099  & 0.0038  & 0.0062  & 0.0099  & 0.0034  & 0.0060  & 0.0098  \\
				\bottomrule
			\end{tabular}%
			\hspace*{-1cm}
			\begin{tablenotes}
				\small
				\item \textbf{Note:} The $\tilde{g}_{k}$ extracts the column vectors composed of non-zero elements from the upper triangular submatrix of $G_{k}$, and $\bard_0=10\%$. The results are based on 1,000 Monte Carlo replications. MAE denotes the mean absolute error. Bias denotes the mean bias of the estimates, and RMSE denotes the root mean squared error. The 2SLS refers to the two-stage least square estimator, OGMM refers to the feasible optimal GMM estimator, and BGMM refers to the feasible best GMM estimator. 
			\end{tablenotes}
		\end{spacing}
		\label{sim:heteG}%
	\end{sidewaystable}
	
	\begin{table}[htbp]
		\centering \footnotesize \stla{0cm} \stl{0.8mm} 
		\caption{Finite sample performance of estimators of $\pi$ under misspecification, $\Var(\epsilon_{it})=\sigma_i^2$.}
		\begin{spacing}{1}%\hspace*{-5mm}
			\begin{tabular}{lcccccclcccccc}
				\toprule     
				& \multicolumn{13}{c}{DGP: SAR, Estimation: MESS} \\
				\cmidrule{2-7}\cmidrule{9-14} 
				\multirow{3}[6]{*}{} & \multicolumn{2}{c}{2SLS} & \multicolumn{2}{c}{OGMM} & \multicolumn{2}{c}{BGMM} &       & \multicolumn{2}{c}{2SLS} & \multicolumn{2}{c}{OGMM} & \multicolumn{2}{c}{BGMM} \\
				\cmidrule{2-7}\cmidrule{9-14}         & $\gamma$ & $\beta$  & $\gamma$ & $\beta$  & $\gamma$ & $\beta$  &       & $\gamma$ & $\beta$  & $\gamma$ & $\beta$  & $\gamma$ & $\beta$ \\
				\cmidrule{2-7}\cmidrule{9-14} 
				& \multicolumn{6}{c}{$(n,T,\elln)=(100,10,2)$}                         &       & \multicolumn{6}{c}{$(n,T,\elln)=(100,10,[n^{1/5}]+2)$} \\
				\cmidrule{2-7}\cmidrule{9-14} 
				Bias  & 0.0037  & -0.0024  & 0.0037  & 0.0012  & 0.0032  & 0.0012  
				&  & -0.0044  & 0.0002  & -0.0034  & 0.0002  & -0.0034  & -0.0005  \\
				RMSE  & 0.0442  & 0.0361  & 0.0400  & 0.0329  & 0.0399  & 0.0329  
				&  & 0.0430  & 0.0378  & 0.0400  & 0.0344  & 0.0400  & 0.0344  \\
				CP    & 0.8990  & 0.9000  & 0.8990  & 0.8970  & 0.7940  & 0.8610  
				&    & 0.9210  & 0.8850  & 0.9190  & 0.8820  & 0.7950  & 0.8410  \\
				\cmidrule{2-7}\cmidrule{9-14} 
				& \multicolumn{6}{c}{$(n,T,\elln)=(100,25,2)$}                         &       & \multicolumn{6}{c}{$(n,T,\elln)=(100,25,[n^{1/5}]+2)$} \\
				\cmidrule{2-7}\cmidrule{9-14} 
				Bias  & 0.0040  & 0.0024  & 0.0029  & 0.0024  & 0.0029  & -0.0020  
				&  & -0.0017  & 0.0010  & -0.0003  & 0.0010  & -0.0003  & -0.0001  \\
				RMSE  & 0.0340  & 0.0277  & 0.0280  & 0.0224  & 0.0280  & 0.0224  
				&  & 0.0306  & 0.0269  & 0.0265  & 0.0222  & 0.0265  & 0.0222  \\
				CP    & 0.9120  & 0.9100  & 0.9120  & 0.9070  & 0.7310  & 0.8230  
				&  & 0.9220  & 0.9220  & 0.9230  & 0.9220  & 0.7620  & 0.8350  \\
				\cmidrule{2-7}\cmidrule{9-14} 
				& \multicolumn{6}{c}{$(n,T,\elln)=(200,10,2)$}                         &       & \multicolumn{6}{c}{$(n,T,\elln)=(200,10,[n^{1/5}]+2)$} \\
				\cmidrule{2-7}\cmidrule{9-14} 
				Bias  & 0.0167  & 0.0033  & 0.0163  & 0.0033  & 0.0163  & -0.0008 
				&  & 0.0153  & 0.0034  & 0.0143  & 0.0034  & 0.0143  & 0.0020  \\
				RMSE  & 0.0332  & 0.0235  & 0.0282  & 0.0207  & 0.0282  & 0.0204 
				&   & 0.0324  & 0.0247  & 0.0275  & 0.0207  & 0.0275  & 0.0205  \\
				CP    & 0.7810  & 0.8880  & 0.7810  & 0.8860  & 0.5800  & 0.8340 
				&  & 0.7910  & 0.8880  & 0.7920  & 0.8880  & 0.6180  & 0.8100  \\
				\cmidrule{2-7}\cmidrule{9-14} 
				& \multicolumn{6}{c}{$(n,T,\elln)=(200,25,2)$}                         &       & \multicolumn{6}{c}{$(n,T,\elln)=(200,25,[n^{1/5}]+2)$} \\
				\cmidrule{2-7}\cmidrule{9-14} 
				Bias  & 0.0116  & 0.0032  & 0.0116  & 0.0032  & 0.0103  & 0.0022  
				&  & 0.0098  & 0.0026  & 0.0095  & 0.0019  & 0.0095  & 0.0019  \\
				RMSE  & 0.0235  & 0.0153  & 0.0196  & 0.0133  & 0.0189  & 0.0131 
				&   & 0.0227  & 0.0185  & 0.0187  & 0.0131  & 0.0187  & 0.0131  \\
				CP    & 0.7940  & 0.9220  & 0.7960  & 0.9220  & 0.5930  & 0.8840 
				&  & 0.7980  & 0.9120  & 0.8010  & 0.9120  & 0.6050  & 0.7750  \\
				\cmidrule{2-7}\cmidrule{9-14} 
				& \multicolumn{13}{c}{DGP: MESS; Estimation: SAR} \\
				\cmidrule{2-7}\cmidrule{9-14} 
				\multirow{3}[6]{*}{} & \multicolumn{2}{c}{2SLS} & \multicolumn{2}{c}{OGMM} & \multicolumn{2}{c}{BGMM} &       & \multicolumn{2}{c}{2SLS} & \multicolumn{2}{c}{OGMM} & \multicolumn{2}{c}{BGMM} \\
				\cmidrule{2-7}\cmidrule{9-14}         & $\gamma$ & $\beta$  & $\gamma$ & $\beta$  & $\gamma$ & $\beta$  &       & $\gamma$ & $\beta$  & $\gamma$ & $\beta$  & $\gamma$ & $\beta$ \\
				\cmidrule{2-7}\cmidrule{9-14}     
				& \multicolumn{6}{c}{$(n,T,\elln)=(100,10,2)$}                         &       & \multicolumn{6}{c}{$(n,T,\elln)=(100,10,[n^{1/5}]+2)$} \\
				\cmidrule{2-7}\cmidrule{9-14} 
				Bias  & -0.0250  & -0.0102  & -0.0164  & -0.0099  & -0.0164  & -0.0099  &   & -0.0247  & -0.0182  & -0.0132  & -0.0182  & -0.0132  & -0.0186  \\
				RMSE  & 0.0497  & 0.0310  & 0.0440  & 0.0308  & 0.0419  & 0.0308  &   & 0.0457  & 0.0384  & 0.0382  & 0.0382  & 0.0349  & 0.0384  \\
				CP    & 0.8880  & 0.9040  & 0.9060  & 0.9050  & 0.8070  & 0.8980  &      & 0.8590  & 0.9180  & 0.8940  & 0.9080  & 0.7540  & 0.8960  \\
				\cmidrule{2-7}\cmidrule{9-14} 
				& \multicolumn{6}{c}{$(n,T,\elln)=(100,25,2)$}                         &       & \multicolumn{6}{c}{$(n,T,\elln)=(100,25,[n^{1/5}]+2)$} \\
				\cmidrule{2-7}\cmidrule{9-14} 
				Bias  & -0.0101  & -0.0121  & -0.0049  & -0.0120  & -0.0049  & -0.0111  &  & -0.0086  & -0.0089  & -0.0023  & -0.0116  & -0.0023  & -0.0116  \\
				RMSE  & 0.0270  & 0.0243  & 0.0255  & 0.0242  & 0.0240  & 0.0238  &   & 0.0246  & 0.0213  & 0.0231  & 0.0226  & 0.0224  & 0.0223  \\
				CP    & 0.9100  & 0.8930  & 0.9310  & 0.8710  & 0.8320  & 0.8690  &      & 0.9120  & 0.8590  & 0.9270  & 0.8070  & 0.8070  & 0.7980  \\
				\cmidrule{2-7}\cmidrule{9-14} 
				& \multicolumn{6}{c}{$(n,T,\elln)=(200,10,2)$}                         &       & \multicolumn{6}{c}{$(n,T,\elln)=(200,10,[n^{1/5}]+2)$} \\
				\cmidrule{2-7}\cmidrule{9-14} 
				Bias  & -0.0220  & -0.0077  & -0.0185  & -0.0077  & -0.0185  & -0.0076  &   & -0.0172  & -0.0116  & -0.0131  & -0.0131  & -0.0131  & -0.0131  \\
				RMSE  & 0.0293  & 0.0202  & 0.0268  & 0.0191  & 0.0263  & 0.0191  &  & 0.0259  & 0.0204  & 0.0231  & 0.0213  & 0.0231  & 0.0211  \\
				CP    & 0.8220  & 0.9240  & 0.8600  & 0.9060  & 0.7070  & 0.9040  &       & 0.8410  & 0.9100  & 0.8660  & 0.8710  & 0.7490  & 0.8680  \\
				\cmidrule{2-7}\cmidrule{9-14} 
				& \multicolumn{6}{c}{$(n,T,\elln)=(200,25,2)$}                         &       & \multicolumn{6}{c}{$(n,T,\elln)=(200,25,[n^{1/5}]+2)$} \\
				\cmidrule{2-7}\cmidrule{9-14} 
				Bias  & -0.0080  & -0.0099  & -0.0052  & -0.0099  & -0.0052  & -0.0077  &   & -0.0044  & -0.0110  & -0.0015  & -0.0146  & -0.0015  & -0.0146  \\
				RMSE  & 0.0151  & 0.0163  & 0.0138  & 0.0163  & 0.0130  & 0.0149  &  & 0.0149  & 0.0164  & 0.0143  & 0.0190  & 0.0132  & 0.0186  \\
				CP    & 0.8040  & 0.8610  & 0.8110  & 0.8050  & 0.8290  & 0.8040  &       & 0.8290  & 0.8210  & 0.8310  & 0.7080  & 0.8390  & 0.7060  \\
				\bottomrule
			\end{tabular}%
			\hspace*{-1cm}
			\begin{tablenotes}
				\footnotesize
				\item \textbf{Note:} The results are based on 1,000 Monte Carlo replications. Bias denotes the mean bias of the estimates, RMSE denotes the root mean squared error, and CP denotes the 95\% coverage probability. The 2SLS refers to the two-stage least square estimator, OGMM refers to the feasible optimal GMM estimator, and BGMM refers to the feasible best GMM estimator. 
			\end{tablenotes}
		\end{spacing}
		\label{sim:misspecification}%
	\end{table}%

	\section{An empirical application}\label{sec:witch}
	\cite{miguel2005poverty} empirically finds that exogenous extreme rainfall causes lower income in rural Tanzania, and leads to a large increase in the murder of `witches', who are nearly all elderly women. \cite{miguel2005poverty} further separates this income shock hypothesis from non-economic factors, namely the scapegoat culture hypothesis that families under exogenous extreme rainfall or disease epidemic shocks treat witches as scapegoats and increase their vilification and killing. The results show that income shocks are the driver of the increase in witch killings, instead of this scapegoat culture. Based on \cite{miguel2005poverty}, we apply our proposed model to investigate the mechanism of spatial spillover effects regarding witch killings in rural Tanzania.
	This analysis involves a dataset from 67 villages over the period from 1992 to 2002. We plot the specific location of these villages in Figure \ref{emp:fig_ploty} and it illustrates clear clustering within these communities. 
	\par 
	The clustering pattern of the data motivates our study, in which we conjecture that geographic closeness leads to parallel behavioral patterns in witch killings. For example, when a village experiences severe weather or unexpected illnesses, it could lead to a rise in witch killings, an action that nearby communities might mimic. While it is natural to construct analyses using matrices such as the $k$-nearest neighbor matrix or inverse distance geographical matrix, these are limited by their specific forms. Therefore, we consider unknown weights $G_{1}, G_{2}, G_{3}$, which are functions of the geographical distance $d_{ij}$. We then approximate using a method similar to that in Section \ref{sec:mc}, where $\ell_{k}=[n^{1/5}]+2$, $\bar{d}_{0}$ is set to the 10\%, 25\%, and 50\% quantiles of $d_{ij}$, respectively. 
	Specifically, our empirical models are
	\begin{flalign}
		\tag{EP1}\label{emp:EP1} B_{1}Y_{t}&=(\gamma I_{n}+B_2)Y_{t-1}+X_{1}\beta_{1}+FEs+U_{t},\\
		\tag{EP2}\label{emp:EP2} B_{1}Y_{t}&=(\gamma I_{n}+B_2)Y_{t-1}+X_{1t}\beta_{1}+X_{2t}\beta_{2}+X_{3t}\beta_{3}+FEs+U_{t},\\
		\tag{EP3}\label{emp:EP3} B_{1}Y_{t}&=(\gamma I_{n}+B_2)Y_{t-1}+X_{1t}\beta_{1}+X_{2t}\beta_{2}+X_{3t}\beta_{3}+X_{4t}\beta_{4}+FEs+U_{t},
	\end{flalign}
	with $B_{3}U_{t}=E_{t}$, where the description of variables is shown in Table \ref{tab:des}.
	\par 
	In Table \ref{emp:hetei}, we report the baseline regression results using the number of witch murders as the primary dependent variable under the case of \ref{var:hetei}. To ensure the reliability of our estimates, we consider both the MESS and SAR specifications. The robustness checks for the MESS are presented in Table \ref{emp:robust}, and the corresponding estimation results for the SAR are presented in Table \ref{emp:hetei-sar}. These results are qualitatively similar to our baseline findings, confirming the robustness of our framework.\footnote{The Monte Carlo results indicate that the estimator performs well in the \ref{var:hetei} case with $T=10$. This validates our use of the \ref{var:hetei} specification in the empirical application, where the sample size is comparable ($T=11$).}
	\par 
	Our baseline results reveal that as the geographic distance between communities decreases, the absolute value of estimated spatial weights increases, as shown in Figure \ref{emp:fig_Ghetei}. This implies that neighboring communities exert a stronger spillover effect on one another's witch killing rates when they are located closer together, indicating stronger community spillovers in crimes. We further examine the mechanism of witch killing spillovers between communities from the economic and cultural perspectives, respectively.%
	\footnote{Because the household survey in \cite{miguel2005poverty} was conducted as a single cross-section, consumption data are only available for the year 2001. We construct our economic distance matrix using these 2001 values. While using end-of-period data introduces potential endogeneity, we follow the methodology of \cite{miguel2005poverty}, who utilizes this same 2001 cross-section to proxy for village characteristics across the entire panel. This approach relies on the reasonable assumption that the relative wealth distances between rural villages are structurally persistent over a single decade, even in the presence of transient weather shocks.}
	\par 
	We evaluate economic distance by measuring the similarity in annual per capita consumption expenditures, defined as $ce_{ij}$, using the 2001 survey data. Specifically, we define the distance as $d_{ij}=d_{ij}^{*}ce_{ij}$, where $ce_{ij}=1+|ce_{i,2001}-ce_{j,2001}|,$
	and $d_{ij}^{*}$ is the geographical distance mentioned earlier, denoted with a superscript asterisk.
	We find that as the geographic distance between communities decreases, and if the similarity in annual per capita consumption expenditures increases, which indicates a smaller economic difference, the absolute value of estimated spatial weights rises, as shown in Figure \ref{emp:fig_Ghatecon}. The mean coefficient of $\hat{G}_{1}$ in this specification ranges from $-0.2$ to $-0.25$ and is significantly larger in magnitude than the baseline result of around $-0.08$ to $-0.1$. This finding highlights that economic similarity is a key driver of the copycat pattern in witch killings.
	\par
	We then evaluate cultural distance using ethnic shares.\footnote{Although the demographic data were collected during the 2001 survey described in \cite{miguel2005poverty}, village-level ethnic composition is highly time-invariant. Therefore, we use these ethnic shares to construct the cultural distance matrix for the entire sample period, thereby avoiding the endogeneity concerns associated with time-varying economic variables.} 
	Specifically, we define the cultural distance by the similarity in Sukuma ethnic shares, defined as $st_{ij}$, over all $t$, where $st_{ij}=1+|st_{i}-st_{j}|$. We explicitly focus on the Sukuma population because established literature demonstrates their unique relevance to this phenomenon. Specifically, \cite{miguel2005poverty} notes that, according to \cite{mesaki1994}, the ethnically Sukuma regions of western Tanzania historically account for roughly two-thirds of all reported witch killings in the country, even though the Sukuma ethnic group comprises only 12\% of the national population.
	We find that as the geographic distance decreases, if the similarity in Sukuma ethnic shares increases, the absolute value of estimated spatial weights rises. However, the magnitude of the mean coefficient in this specification (around $-0.06$ to $-0.07$) is not statistically larger than the magnitude of the baseline result. In other words, cultural proximity does not significantly amplify spatial dependence in witch killings. These results corroborate \cite{miguel2005poverty}'s conclusions on how economic and cultural factors influence crime rates, extending the framework by modeling the spatial spillovers in witch killings.
	\par
	Specifically, Figure \ref{emp:fig_Ghatecon} illustrates that the estimated spatial weighting functions gradually vanish as the geographical economic distance increases. We note that the estimated elements in $\hat{G}_{1}$ are predominantly negative. Since $e^{-\hat{G}_1}=I_n+\sum_{k=1}^{\infty}{(-1)^k\hat{G}_{1}^k}/{k!}$, negative entries in $\hat{G}_1$ result in positive off-diagonal elements in the inverse spatial filter, indicating a positive reinforcement mechanism between neighbors. We observe that if the entries of $\hat{G}_1$ are negative, the linear term $-\hat{G}_1$ becomes positive. This implies that the spatial multiplier $B_1^{-1}$ contains positive off-diagonal elements, indicating a positive reinforcement mechanism between neighbors. Thus, an increase in witch killings in one village spills over to increase killings in neighboring villages, consistent with the `copycat' hypothesis, even though the underlying weight function $g_1(d)$ takes negative values. This contrasts with standard specifications that typically impose non-negative weights a priori, a flexibility emphasized by \cite{sun2016functional}.
	\par 
	Moreover, we confirm that the stability condition $\rho(\hat{A})<1$ holds, as shown in Table \ref{emp:rhoA}, and proceed to examine the marginal effects, adopting the methodology of \cite{lesage2009introduction} and \cite{elhorst2014spatial}. The short-term and long-term marginal effects with respect to the $k$-th independent variable are evaluated as:
	\begin{flalign*}
		f_{Si}(\beta)=\te_{n,i}\Diag(S^{-1})\te_{n,i}'\beta \quad \text{and} \quad f_{Li}(\beta)=\te_{n,i}\Diag((S-B)^{-1})\te_{n,i}'\beta,
	\end{flalign*}
	respectively, for $i=1,\ldots,n$, where $\te_{n,i}=(0,\ldots,\underbrace{1}_{\text{$i$th}},\ldots,0)'$.
	\par 
	Figure \ref{emp:fig_margin} presents the results for the case where $\bar{d}_{0}=25\%$. The figure reveals that focusing solely on short-term marginal effects underestimates the true magnitude of the coefficients $\beta_{1}$ and $\beta_{4}$. The short-term estimates capture only 50\% to 95\% of the total long-term impact, with the average long-term effect being significantly larger. This discrepancy suggests that spatial spillovers driven by geographic-economic proximity amplify the initial shocks over time, resulting in persistent and long-lasting dynamics.
	
	\begin{figure}[htbp]
		\begin{minipage}[b]{0.58\textwidth}
			\centering\small\stla{0cm} \stl{0.8mm} 
			\captionof{table}{Data descriptions}
			\begin{spacing}{0.8}
				\begin{tabular}{ll >{\raggedright\arraybackslash}p{4.5cm} rr}
					\toprule
					\multicolumn{2}{c}{Variable}    & Definitions & Mean & ESD\\
					\midrule
					\multirow{4}[2]{*}{$Y_{t}$} & Baseline   & Witch murders       & 0.09  & 0.32  \\
					& $\mathrm{Robust}_1$   & Witch murders per 1000 households        & 0.27  & 1.06  \\
					& $\mathrm{Robust}_2$   & Witch murders and attacks per 1000 households       & 0.56  & 1.86  \\
					& $\mathrm{Robust}_3$   & Total murders per 1000 households       & 0.50  & 1.56  \\
					\midrule
					\multirow{4}[2]{*}{$X_{t}$} & $X_{1t}$ & Extreme rainfall (drought or flood)       & 0.18  & 0.38  \\
					& $X_{2t}$ & Extreme rainfall, previous year        & 0.17  & 0.37  \\
					& $X_{3t}$ & Extreme rainfall, current year and previous year        & 0.08  & 0.27  \\
					& $X_{4t}$ & Human disease epidemic       & 0.15  & 0.36  \\
					\bottomrule
				\end{tabular}%
			\end{spacing}
			\label{tab:des}%
		\end{minipage}%
		\hfill
		\begin{minipage}[b]{0.45\textwidth}
			\centering
			\includegraphics[width=5.5cm, height=4.5cm]{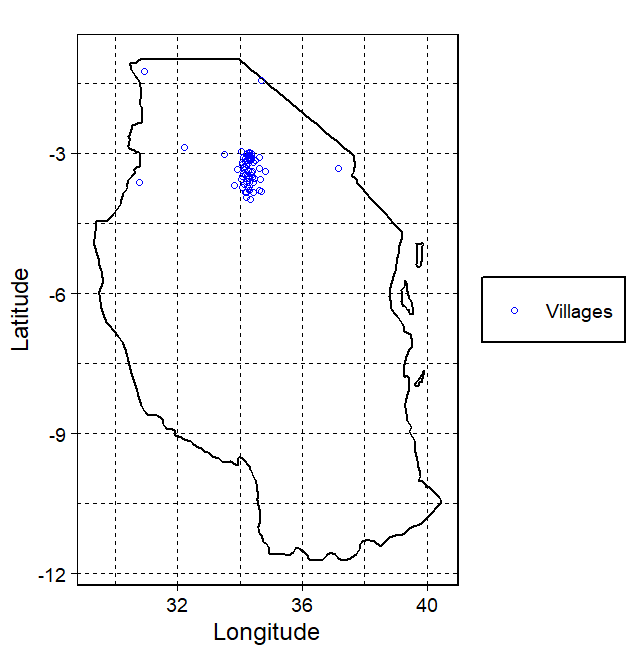}
			\caption{Scatter plot of sample villages in Tanzania}
			\label{emp:fig_ploty}
		\end{minipage}
	\end{figure}

	\begin{table}[htbp]
		\centering\small\stla{0cm} \stl{0.8mm} 
		\caption{Estimation results for the MESS, $\Var(\epsilon_{it})=\sigma_i^2$.}
		\begin{spacing}{0.7}
			\begin{tabular}{lcccccccccccc}
				\toprule
				\multicolumn{2}{c}{\multirow{2}[3]{*}{}} & 2SLS & OGMM & BGMM  &       & 2SLS & OGMM & BGMM  &       & 2SLS & OGMM & BGMM \\
				\cmidrule{3-5}\cmidrule{7-9}\cmidrule{11-13}    \multicolumn{2}{c}{} & \multicolumn{3}{c}{$\bar{d}_{0}=10\%$} &       & \multicolumn{3}{c}{$\bar{d}_{0}=25\%$} &       & \multicolumn{3}{c}{$\bar{d}_{0}=50\%$} \\
				\cmidrule{3-5}\cmidrule{7-9}\cmidrule{11-13} 
				\multicolumn{1}{l}{\multirow{2}[1]{*}{\eqref{emp:EP1}}} & \multicolumn{1}{l}{$\hat{\beta}_{1}$} & \multicolumn{1}{l}{0.0640$^{**}$} & \multicolumn{1}{l}{0.0540$^{*}$} & \multicolumn{1}{l}{0.0740$^{**}$} &       & \multicolumn{1}{l}{0.0540 } & \multicolumn{1}{l}{0.0240 } & \multicolumn{1}{l}{0.0340 } &       & \multicolumn{1}{l}{0.0360 } & \multicolumn{1}{l}{0.0660$^{*}$} & \multicolumn{1}{l}{0.0860$^{**}$} \\
				&       & \multicolumn{1}{l}{(0.0314)} & \multicolumn{1}{l}{(0.0292)} & \multicolumn{1}{l}{(0.0323)} &       & \multicolumn{1}{l}{(0.0334)} & \multicolumn{1}{l}{(0.0329)} & \multicolumn{1}{l}{(0.0463)} &       & \multicolumn{1}{l}{(0.0361)} & \multicolumn{1}{l}{(0.0335)} & \multicolumn{1}{l}{(0.0388)} \\
				\cmidrule{3-5}\cmidrule{7-9}\cmidrule{11-13} 
				\multicolumn{1}{l}{\multirow{6}[2]{*}{\eqref{emp:EP2}}} & \multicolumn{1}{l}{$\hat{\beta}_{1}$} & \multicolumn{1}{l}{0.0650$^{*}$} & \multicolumn{1}{l}{0.0510$^{*}$} & \multicolumn{1}{l}{0.0710$^{**}$} &       & \multicolumn{1}{l}{0.0590 } & \multicolumn{1}{l}{0.0890$^{**}$} & \multicolumn{1}{l}{0.0790$^{*}$} &       & \multicolumn{1}{l}{0.0400 } & \multicolumn{1}{l}{0.0700$^{*}$} & \multicolumn{1}{l}{0.0900$^{*}$} \\
				&       & \multicolumn{1}{l}{(0.0382)} & \multicolumn{1}{l}{(0.0310)} & \multicolumn{1}{l}{(0.0357)} &       & \multicolumn{1}{l}{(0.0409)} & \multicolumn{1}{l}{(0.0425)} & \multicolumn{1}{l}{(0.0419)} &       & \multicolumn{1}{l}{(0.0424)} & \multicolumn{1}{l}{(0.0405)} & \multicolumn{1}{l}{(0.0463)} \\
				& \multicolumn{1}{l}{$\hat{\beta}_{2}$} & \multicolumn{1}{l}{0.0490 } & \multicolumn{1}{l}{0.0520 } & \multicolumn{1}{l}{0.0720$^{*}$} &       & \multicolumn{1}{l}{0.0410 } & \multicolumn{1}{l}{0.0110 } & \multicolumn{1}{l}{0.0310 } &       & \multicolumn{1}{l}{0.0280 } & \multicolumn{1}{l}{0.0220 } & \multicolumn{1}{l}{0.0220 } \\
				&       & \multicolumn{1}{l}{(0.0369)} & \multicolumn{1}{l}{(0.0363)} & \multicolumn{1}{l}{(0.0374)} &       & \multicolumn{1}{l}{(0.0371)} & \multicolumn{1}{l}{(0.0365)} & \multicolumn{1}{l}{(0.0373)} &       & \multicolumn{1}{l}{(0.0385)} & \multicolumn{1}{l}{(0.0374)} & \multicolumn{1}{l}{(0.0381)} \\
				& \multicolumn{1}{l}{$\hat{\beta}_{3}$} & \multicolumn{1}{l}{-0.0120 } & \multicolumn{1}{l}{-0.0120 } & \multicolumn{1}{l}{-0.0080 } &       & \multicolumn{1}{l}{-0.0280 } & \multicolumn{1}{l}{-0.0580 } & \multicolumn{1}{l}{-0.0780 } &       & \multicolumn{1}{l}{-0.0070 } & \multicolumn{1}{l}{-0.0230 } & \multicolumn{1}{l}{-0.0430 } \\
				&       & \multicolumn{1}{l}{(0.0709)} & \multicolumn{1}{l}{(0.0650)} & \multicolumn{1}{l}{(0.0652)} &       & \multicolumn{1}{l}{(0.0658)} & \multicolumn{1}{l}{(0.0665)} & \multicolumn{1}{l}{(0.0984)} &       & \multicolumn{1}{l}{(0.0672)} & \multicolumn{1}{l}{(0.0658)} & \multicolumn{1}{l}{(0.0700)} \\
				\cmidrule{3-5}\cmidrule{7-9}\cmidrule{11-13} 
				\multicolumn{1}{l}{\multirow{8}[2]{*}{\eqref{emp:EP3}}} & \multicolumn{1}{l}{$\hat{\beta}_{1}$} & \multicolumn{1}{l}{0.0600$^{*}$} & \multicolumn{1}{l}{0.0500 } & \multicolumn{1}{l}{0.0540 } &       & \multicolumn{1}{l}{0.0560$^{*}$} & \multicolumn{1}{l}{0.0860$^{**}$} & \multicolumn{1}{l}{0.0660$^{*}$} &       & \multicolumn{1}{l}{0.0370 } & \multicolumn{1}{l}{0.0670$^{*}$} & \multicolumn{1}{l}{0.0870$^{**}$} \\
				&       & \multicolumn{1}{l}{(0.0360)} & \multicolumn{1}{l}{(0.0378)} & \multicolumn{1}{l}{(0.0392)} &       & \multicolumn{1}{l}{(0.0330)} & \multicolumn{1}{l}{(0.0429)} & \multicolumn{1}{l}{(0.0360)} &       & \multicolumn{1}{l}{(0.0415)} & \multicolumn{1}{l}{(0.0400)} & \multicolumn{1}{l}{(0.0418)} \\
				& \multicolumn{1}{l}{$\hat{\beta}_{2}$} & \multicolumn{1}{l}{0.0460 } & \multicolumn{1}{l}{0.0500 } & \multicolumn{1}{l}{0.0570 } &       & \multicolumn{1}{l}{0.0390 } & \multicolumn{1}{l}{0.0090 } & \multicolumn{1}{l}{0.0290 } &       & \multicolumn{1}{l}{0.0270 } & \multicolumn{1}{l}{0.0260 } & \multicolumn{1}{l}{0.0170 } \\
				&       & \multicolumn{1}{l}{(0.0362)} & \multicolumn{1}{l}{(0.0358)} & \multicolumn{1}{l}{(0.0372)} &       & \multicolumn{1}{l}{(0.0369)} & \multicolumn{1}{l}{(0.0364)} & \multicolumn{1}{l}{(0.0369)} &       & \multicolumn{1}{l}{(0.0382)} & \multicolumn{1}{l}{(0.0373)} & \multicolumn{1}{l}{(0.0382)} \\
				& \multicolumn{1}{l}{$\hat{\beta}_{3}$} & \multicolumn{1}{l}{-0.0020 } & \multicolumn{1}{l}{-0.0010 } & \multicolumn{1}{l}{-0.0020 } &       & \multicolumn{1}{l}{-0.0150 } & \multicolumn{1}{l}{-0.0450 } & \multicolumn{1}{l}{-0.0280 } &       & \multicolumn{1}{l}{-0.0060 } & \multicolumn{1}{l}{-0.0360 } & \multicolumn{1}{l}{-0.0560 } \\
				&       & \multicolumn{1}{l}{(0.0649)} & \multicolumn{1}{l}{(0.0640)} & \multicolumn{1}{l}{(0.0065)} &       & \multicolumn{1}{l}{(0.0661)} & \multicolumn{1}{l}{(0.0664)} & \multicolumn{1}{l}{(0.0889)} &       & \multicolumn{1}{l}{(0.0663)} & \multicolumn{1}{l}{(0.0652)} & \multicolumn{1}{l}{(0.0650)} \\
				& \multicolumn{1}{l}{$\hat{\beta}_{4}$} & \multicolumn{1}{l}{0.0870$^{**}$} & \multicolumn{1}{l}{0.0760$^{*}$} & \multicolumn{1}{l}{0.0820$^{**}$} &       & \multicolumn{1}{l}{0.0790$^{*}$} & \multicolumn{1}{l}{0.1090$^{**}$} & \multicolumn{1}{l}{0.0890$^{*}$} &       & \multicolumn{1}{l}{0.1030$^{**}$} & \multicolumn{1}{l}{0.1330$^{***}$} & \multicolumn{1}{l}{0.1530$^{***}$} \\
				&       & \multicolumn{1}{l}{(0.0423)} & \multicolumn{1}{l}{(0.0402)} & \multicolumn{1}{l}{(0.0412)} &       & \multicolumn{1}{l}{(0.0436)} & \multicolumn{1}{l}{(0.0431)} & \multicolumn{1}{l}{(0.0477)} &       & \multicolumn{1}{l}{(0.0444)} & \multicolumn{1}{l}{(0.0434)} & \multicolumn{1}{l}{(0.0465)} \\
				\midrule
				\multicolumn{1}{l}{Village} &\multicolumn{1}{l}{FEs}   & \multicolumn{11}{c}{YES} \\
				\multicolumn{1}{l}{Time}  & \multicolumn{1}{l}{FEs}   & \multicolumn{11}{c}{YES} \\
				\bottomrule
			\end{tabular}%
			\hspace*{-0.2cm}
			\begin{tablenotes}\footnotesize
				\item 
				\textbf{Notes}: The parameter $\hat{\beta}_1$ denotes the coefficient of extreme rainfall, $\hat{\beta}_2$ denotes the coefficient of the previous year of extreme rainfall, $\hat{\beta}_3$ denotes the coefficient of the current and previous year of extreme rainfall, and $\hat{\beta}_4$ denotes the coefficient of the human disease epidemic. 
				The term 2SLS refers to the 2SLS estimator, OGMM refers to the feasible optimal GMM estimator, and BGMM refers to the feasible best GMM estimator. 
				The $^{*}$ denotes $p<0.1$, $^{**}$ denotes $p<0.05$ and $^{***}$ denotes $p<0.01$. Theoretical standard deviations are reported in parentheses. 
			\end{tablenotes}
		\end{spacing}
		\label{emp:hetei}%
	\end{table}%
	
	\begin{table}[htbp]
		\centering\small\stla{0cm} \stl{0.8mm} 
		\caption{Estimation results for the MESS robustness check.}
		\begin{spacing}{0.8}
			\begin{tabular}{llccccccccccc}
				\toprule
				\multicolumn{2}{c}{\multirow{2}[4]{*}{}} & 2SLS & OGMM & BGMM  &       & 2SLS & OGMM & BGMM  &       & 2SLS & OGMM & BGMM \\
				\cmidrule{3-5}\cmidrule{7-9}\cmidrule{11-13}    \multicolumn{2}{c}{} & \multicolumn{3}{c}{$\bar{d}_0=10\%$} &       & \multicolumn{3}{c}{$\bar{d}_0=25\%$} &       & \multicolumn{3}{c}{$\bar{d}_0=50\%$} \\
				\cmidrule{3-5}\cmidrule{7-9}\cmidrule{11-13} 
				\multirow{4}[2]{*}{$\mathrm{Robust}_1$} & $\hat{\beta}_{1}$ & \multicolumn{1}{l}{0.1070$^{*}$} & \multicolumn{1}{l}{0.0900$^{*}$} & \multicolumn{1}{l}{0.1100$^{*}$} &       & \multicolumn{1}{l}{0.0880 } & \multicolumn{1}{l}{0.1180$^{*}$} & \multicolumn{1}{l}{0.0980$^{*}$} &       & \multicolumn{1}{l}{0.0420 } & \multicolumn{1}{l}{0.0720$^{*}$} & \multicolumn{1}{l}{0.0920$^{*}$} \\
				&       & \multicolumn{1}{l}{(0.0617)} & \multicolumn{1}{l}{(0.0499)} & \multicolumn{1}{l}{(0.0610)} &       & \multicolumn{1}{l}{(0.1198)} & \multicolumn{1}{l}{(0.0709)} & \multicolumn{1}{l}{(0.0533)} &       & \multicolumn{1}{l}{(0.0634)} & \multicolumn{1}{l}{(0.0435)} & \multicolumn{1}{l}{(0.0558)} \\
				& $\hat{\beta}_{4}$  & \multicolumn{1}{l}{0.1220 } & \multicolumn{1}{l}{0.1020 } & \multicolumn{1}{l}{0.1220 } &       & \multicolumn{1}{l}{0.1200 } & \multicolumn{1}{l}{0.1500$^{**}$} & \multicolumn{1}{l}{0.1300$^{*}$} &       & \multicolumn{1}{l}{0.1410$^{*}$} & \multicolumn{1}{l}{0.1710$^{**}$} & \multicolumn{1}{l}{0.1910$^{**}$} \\
				&       & \multicolumn{1}{l}{(0.0756)} & \multicolumn{1}{l}{(0.0754)} & \multicolumn{1}{l}{(0.0747)} &       & \multicolumn{1}{l}{(0.0820)} & \multicolumn{1}{l}{(0.0759)} & \multicolumn{1}{l}{(0.0774)} &       & \multicolumn{1}{l}{(0.0815)} & \multicolumn{1}{l}{(0.0847)} & \multicolumn{1}{l}{(0.0831)} \\
				\cmidrule{3-5}\cmidrule{7-9}\cmidrule{11-13}    \multirow{4}[2]{*}{$\mathrm{Robust}_2$} & $\hat{\beta}_{1}$  & \multicolumn{1}{l}{0.0470 } & \multicolumn{1}{l}{0.0500$^{*}$} & \multicolumn{1}{l}{0.0700$^{*}$} &       & \multicolumn{1}{l}{0.0280 } & \multicolumn{1}{l}{0.0200 } & \multicolumn{1}{l}{0.0290 } &       & \multicolumn{1}{l}{0.0320$^{*}$} & \multicolumn{1}{l}{0.0620$^{*}$} & \multicolumn{1}{l}{0.0420$^{*}$} \\
				&       & \multicolumn{1}{l}{(0.0828)} & \multicolumn{1}{l}{(0.0297)} & \multicolumn{1}{l}{(0.0378)} &       & \multicolumn{1}{l}{(0.0893)} & \multicolumn{1}{l}{(0.8679)} & \multicolumn{1}{l}{(0.0232)} &       & \multicolumn{1}{l}{(0.0183)} & \multicolumn{1}{l}{(0.0365)} & \multicolumn{1}{l}{(0.0245)} \\
				& $\hat{\beta}_{4}$ & \multicolumn{1}{l}{0.1100 } & \multicolumn{1}{l}{0.0940$^{*}$} & \multicolumn{1}{l}{0.0740$^{*}$} &       & \multicolumn{1}{l}{0.0900 } & \multicolumn{1}{l}{0.0600 } & \multicolumn{1}{l}{0.0800 } &       & \multicolumn{1}{l}{0.1390 } & \multicolumn{1}{l}{0.1090 } & \multicolumn{1}{l}{0.0890 } \\
				&       & \multicolumn{1}{l}{(0.1066)} & \multicolumn{1}{l}{(0.0563)} & \multicolumn{1}{l}{(0.0431)} &       & \multicolumn{1}{l}{(0.1118)} & \multicolumn{1}{l}{(0.1118)} & \multicolumn{1}{l}{(0.1098)} &       & \multicolumn{1}{l}{(0.1212)} & \multicolumn{1}{l}{(0.1109)} & \multicolumn{1}{l}{(0.1099)} \\
				\cmidrule{3-5}\cmidrule{7-9}\cmidrule{11-13}   \multirow{4}[2]{*}{$\mathrm{Robust}_3$} & $\hat{\beta}_{1}$  & \multicolumn{1}{l}{0.0680 } & \multicolumn{1}{l}{0.0660 } & \multicolumn{1}{l}{0.0860 } &       & \multicolumn{1}{l}{0.0110 } & \multicolumn{1}{l}{-0.0190 } & \multicolumn{1}{l}{-0.0390 } &       & \multicolumn{1}{l}{0.0200 } & \multicolumn{1}{l}{0.0500 } & \multicolumn{1}{l}{0.0700 } \\
				&       & \multicolumn{1}{l}{(0.0904)} & \multicolumn{1}{l}{(0.0865)} & \multicolumn{1}{l}{(0.0819)} &       & \multicolumn{1}{l}{(0.0977)} & \multicolumn{1}{l}{(0.1159)} & \multicolumn{1}{l}{(0.1441)} &       & \multicolumn{1}{l}{(0.0850)} & \multicolumn{1}{l}{(0.0845)} & \multicolumn{1}{l}{(0.0811)} \\
				& $\hat{\beta}_{4}$  & \multicolumn{1}{l}{0.1380} & \multicolumn{1}{l}{0.1480 } & \multicolumn{1}{l}{0.1680 } &       & \multicolumn{1}{l}{0.1940 } & \multicolumn{1}{l}{0.1640 } & \multicolumn{1}{l}{0.1840 } &       & \multicolumn{1}{l}{0.1190} & \multicolumn{1}{l}{0.0830} & \multicolumn{1}{l}{0.0960} \\
				&       & \multicolumn{1}{l}{(0.1017)} & \multicolumn{1}{l}{(0.1024)} & \multicolumn{1}{l}{(0.1030)} &       & \multicolumn{1}{l}{(0.1534)} & \multicolumn{1}{l}{(0.1163)} & \multicolumn{1}{l}{(0.1693)} &       & \multicolumn{1}{l}{(0.1180)} & \multicolumn{1}{l}{(0.1038)} & \multicolumn{1}{l}{(0.0986)} \\
				\midrule
				\multicolumn{1}{l}{Village} &\multicolumn{1}{l}{FEs}   & \multicolumn{11}{c}{YES} \\
				\multicolumn{1}{l}{Time}  & \multicolumn{1}{l}{FEs}   & \multicolumn{11}{c}{YES} \\
				\bottomrule
			\end{tabular}%
			\hspace*{-0.2cm}
			\begin{tablenotes}\footnotesize
				\item 
				\textbf{Notes}: In $\mathrm{Robust}_1$, $Y_{t}$ denotes witch murders per 1000 households. In $\mathrm{Robust}_2$, $Y_{t}$ denotes witch murders and attacks per 1000 households, and in $\mathrm{Robust}_3$, $Y_{t}$ denotes total murders per 1000 households. The parameter $\hat{\beta}_1$ denotes the coefficient of extreme rainfall, and $\hat{\beta}_4$ denotes the coefficient of the human disease epidemic. 
				The term 2SLS refers to the two-stage least squares estimator, OGMM refers to the feasible optimal GMM estimator, and BGMM refers to the feasible best GMM estimator. 
				The $^{*}$ denotes $p<0.1$, $^{**}$ denotes $p<0.05$ and $^{***}$ denotes $p<0.01$. Theoretical standard deviations are reported in parentheses. 
				We observe that the regressors have no significant impact on total murders under the $\mathrm{Robust}_3$ specification, and this is consistent with \cite{miguel2005poverty}.
			\end{tablenotes}
		\end{spacing}
		\label{emp:robust}%
	\end{table}%

	\begin{table}[htbp]
		\centering\small\stla{0cm} \stl{0.8mm} 
		\caption{Estimation results for the SAR, $\Var(\epsilon_{it})=\sigma_i^2$.}
		\begin{spacing}{0.7}
			\begin{tabular}{llccccccccccc}
				\toprule
				\multicolumn{2}{c}{\multirow{2}[3]{*}{}} & 2SLS & OGMM & BGMM  &       & 2SLS & OGMM & BGMM  &       & 2SLS & OGMM & BGMM \\
				\cmidrule{3-5}\cmidrule{7-9}\cmidrule{11-13}    \multicolumn{2}{c}{} & \multicolumn{3}{c}{$\bar{d}_{0}=10\%$} &       & \multicolumn{3}{c}{$\bar{d}_{0}=25\%$} &       & \multicolumn{3}{c}{$\bar{d}_{0}=50\%$} \\
				\midrule
				\multirow{2}[2]{*}{\eqref{emp:EP1}} & \multicolumn{1}{l}{$\hat\beta_1$} & 0.0810$^{*}$ & 0.0880$^{*}$ & 0.0890$^{*}$ &       & 0.0720$^{*}$ & 0.0730$^{*}$ & 0.0760$^{*}$ &       & 0.0930$^{*}$ & 0.0830$^{*}$ & 0.0820$^{*}$ \\
				&       & (0.0427) & (0.0460) & (0.0503) &       & (0.0409) & (0.0418) & (0.0403) &       & (0.0516) & (0.0478) & (0.0469) \\
				\cmidrule{3-5}\cmidrule{7-9}\cmidrule{11-13} 
				\multirow{6}[2]{*}{\eqref{emp:EP2}} & \multicolumn{1}{l}{$\hat\beta_1$} & 0.0770$^{*}$ & 0.0800$^{*}$ & 0.0860$^{*}$ &       & 0.0660$^{*}$ & 0.0690$^{*}$ & 0.0670$^{*}$ &       & 0.0920$^{*}$ & 0.0830$^{*}$ & 0.0840$^{*}$ \\
				&       & (0.0395) & (0.0439) & (0.0474) &       & (0.0363) & (0.0406) & (0.0344) &       & (0.0477) & (0.0426) & (0.0436) \\
				& \multicolumn{1}{l}{$\hat\beta_2$} & 0.1050  & 0.0200  & 0.0030  &       & 0.1510$^{**}$ & 0.0950  & 0.0620  &       & 0.1490$^{*}$ & 0.0840  & 0.0420  \\
				&       & (0.0755) & (0.0566) & (0.0635) &       & (0.0741) & (0.0634) & (0.0666) &       & (0.0771) & (0.0702) & (0.0874) \\
				& \multicolumn{1}{l}{$\hat\beta_3$} & -0.0190  & 0.0410  & 0.0270  &       & 0.0230  & -0.0140  & -0.0350  &       & -0.0130  & -0.0240  & -0.0410  \\
				&       & (0.1327) & (0.1026) & (0.1054) &       & (0.1339) & (0.1206) & (0.1157) &       & (0.1343) & (0.1149) & (0.1373) \\
				\cmidrule{3-5}\cmidrule{7-9}\cmidrule{11-13} 
				\multirow{8}[2]{*}{\eqref{emp:EP3}} & \multicolumn{1}{l}{$\hat\beta_1$} & 0.0900$^{*}$ & 0.0890$^{*}$ & 0.0860$^{*}$ &       & 0.0620$^{*}$ & 0.0620$^{*}$ & 0.0550$^{*}$ &       & 0.0960$^{*}$ & 0.0870$^{**}$ & 0.0870$^{*}$ \\
				&       & (0.0521) & (0.0510) & (0.0488) &       & (0.0350) & (0.0319) & (0.0333) &       & (0.0576) & (0.0445) & (0.0490) \\
				& \multicolumn{1}{l}{$\hat\beta_2$} & 0.1160  & 0.1160$^{*}$ & 0.0710  &       & 0.1380$^{*}$ & 0.1260$^{*}$ & 0.0780  &       & 0.1350$^{*}$ & 0.0730  & 0.0720  \\
				&       & (0.0738) & (0.0671) & (0.0634) &       & (0.0740) & (0.0673) & (0.0806) &       & (0.0774) & (0.0677) & (0.0888) \\
				& \multicolumn{1}{l}{$\hat\beta_3$} & -0.0230  & -0.0230  & 0.0000  &       & 0.0620  & -0.0230  & -0.0430  &       & 0.0390  & 0.0050  & 0.0050  \\
				&       & (0.1342) & (0.1255) & (0.1156) &       & (0.1342) & (0.1123) & (0.1418) &       & (0.1353) & (0.1149) & (0.2274) \\
				& \multicolumn{1}{l}{$\hat\beta_4$} & 0.1230$^{*}$ & 0.1230$^{*}$ & 1.1930$^{*}$ &       & 0.1040$^{*}$ & 0.1330$^{*}$ & 0.1710$^{*}$ &       & 0.1170$^{*}$ & 0.0850$^{*}$ & 0.0850$^{*}$ \\
				&       & (0.0739) & (0.0732) & (0.7025) &       & (0.0540) & (0.0746) & (0.0998) &       & (0.0608) & (0.0493) & (0.0485) \\
				\midrule
				\multicolumn{1}{l}{Village} &\multicolumn{1}{l}{FEs}   & \multicolumn{11}{c}{YES} \\
				\multicolumn{1}{l}{Time}  & \multicolumn{1}{l}{FEs}   & \multicolumn{11}{c}{YES} \\
				%    $nT$ & & \multicolumn{11}{c}{737} \\
				\bottomrule
			\end{tabular}%
			\hspace*{-0.2cm}
			\begin{tablenotes}\footnotesize
				\item 
				\textbf{Notes}: The parameter $\hat{\beta}_1$ denotes the coefficient of extreme rainfall, $\hat{\beta}_2$ denotes the coefficient of the previous year of extreme rainfall, $\hat{\beta}_3$ denotes the coefficient of the current and previous year of extreme rainfall, and $\hat{\beta}_4$ denotes the coefficient of the human disease epidemic. 
				The term 2SLS refers to the two-stage least squares estimator, OGMM refers to the feasible optimal GMM estimator, and BGMM refers to the feasible best GMM estimator. 
				The $^{*}$ denotes $p<0.1$, $^{**}$ denotes $p<0.05$ and $^{***}$ denotes $p<0.01$. Theoretical standard deviations are reported in parentheses. 
			\end{tablenotes}
		\end{spacing}
		\label{emp:hetei-sar}%
	\end{table}%
	
	\begin{table}[htbp]
		\centering\small\stla{0cm} \stl{0.8mm}
		\caption{Estimation results for the MESS, $\rho(\hat{A})$, in the cases of $\mathrm{Var}(\epsilon_{it})=\sigma_i^2$, $\sigma_t^2$ or $\sigma^2$.}
		\begin{spacing}{0.8}
			\hspace*{-8mm}%
			\begin{minipage}{1.1\textwidth} 
				\begin{tabular}{llrrrllrrrllrrr}
					\toprule
					\multicolumn{2}{l}{\ref{var:hetei}, $\bar{d}_{0}$} & 2SLS & OGMM & BGMM  & \multicolumn{2}{l}{\ref{var:hetet}, $\bar{d}_{0}$} & 2SLS & OGMM & BGMM  & \multicolumn{2}{l}{\ref{var:homo}, $\bar{d}_{0}$} & 2SLS & OGMM & BGMM \\
					\cmidrule{3-5}\cmidrule{8-10}\cmidrule{13-15} 
					\multirow{3}[2]{*}{$10\%$} & \eqref{emp:EP1}    & 0.88487  & 0.88470  & 0.88270  & \multirow{3}[2]{*}{$10\%$} & \eqref{emp:EP1}    & 0.88485  & 0.88486  & 0.88678  & \multirow{3}[2]{*}{$10\%$} & \eqref{emp:EP1}    & 0.88486  & 0.88479  & 0.88260  \\
					& \eqref{emp:EP2}    & 0.88434  & 0.88259  & 0.88051  &       & \eqref{emp:EP2}    & 0.88433  & 0.88336  & 0.88546  &       & \eqref{emp:EP2}    & 0.88427  & 0.88327  & 0.88533  \\
					& \eqref{emp:EP3}    & 0.88330  & 0.88171  & 0.87959  &       & \eqref{emp:EP3}    & 0.88326  & 0.88252  & 0.88465  &       & \eqref{emp:EP3}    & 0.88330  & 0.88251  & 0.88458  \\
					\cmidrule{3-5}\cmidrule{8-10}\cmidrule{13-15} 
					\multirow{3}[2]{*}{$25\%$} & \eqref{emp:EP1}    & 0.93905  & 0.93790  & 0.93709  & \multirow{3}[2]{*}{$25\%$} & \eqref{emp:EP1}    & 0.93892  & 0.93763  & 0.93683  & \multirow{3}[2]{*}{$25\%$} & \eqref{emp:EP1}    & 0.93893  & 0.93767  & 0.93687  \\
					& \eqref{emp:EP2}    & 0.91590  & 0.91436  & 0.91329  &       & \eqref{emp:EP2}    & 0.91595  & 0.91445  & 0.91339  &       & \eqref{emp:EP2}    & 0.91602  & 0.91454  & 0.91348  \\
					& \eqref{emp:EP3}    & 0.91324  & 0.91165  & 0.91056  &       & \eqref{emp:EP3}    & 0.91327  & 0.91184  & 0.91075  &       & \eqref{emp:EP3}    & 0.91326  & 0.91182  & 0.91094  \\
					\cmidrule{3-5}\cmidrule{8-10}\cmidrule{13-15} 
					\multirow{3}[2]{*}{$50\%$} & \eqref{emp:EP1}    & 0.95968  & 0.95938  & 0.95917  & \multirow{3}[2]{*}{$50\%$} & \eqref{emp:EP1}    & 0.95965  & 0.95935  & 0.95915  & \multirow{3}[2]{*}{$50\%$} & \eqref{emp:EP1}    & 0.95969  & 0.95939  & 0.95918  \\
					& \eqref{emp:EP2}    & 0.92426  & 0.92373  & 0.92337  &       & \eqref{emp:EP2}    & 0.92431  & 0.92378  & 0.92342  &       & \eqref{emp:EP2}    & 0.92431  & 0.92378  & 0.92342  \\
					& \eqref{emp:EP3}    & 0.92809  & 0.92758  & 0.92723  &       & \eqref{emp:EP3}    & 0.92809  & 0.92758  & 0.92723  &       & \eqref{emp:EP3}    & 0.92809  & 0.92758  & 0.92723  \\
					\bottomrule
				\end{tabular}%            
				\begin{tablenotes}\footnotesize
					\item \textbf{Notes}: The 2SLS refers to the 2SLS estimator, OGMM refers to the feasible optimal GMM estimator, and BGMM refers to the feasible best GMM estimator. The parameter $\hat{\beta}_1$ denotes the coefficient of extreme rainfall, $\hat{\beta}_2$ denotes the coefficient of the previous year of extreme rainfall, $\hat{\beta}_3$ denotes the coefficient of the current and previous year of extreme rainfall, and $\hat{\beta}_4$ denotes the coefficient of the human disease epidemic.
				\end{tablenotes}
			\end{minipage}
		\end{spacing}
		\label{emp:rhoA}%
	\end{table}%
	%%%%%%%%%%%%%%  =========================  %%%%%%%%%%%%%% 
	%%%%%%%%%%%%           Figures    %%%%%%%%%%%% 
	%%%%%%%%%%%%%%  =========================  %%%%%%%%%%%%%% 
	\begin{figure}[htbp]
		\centering
		\begin{subfigure}[t]{0.6\textwidth}\hspace*{-1.5cm}
			\includegraphics[width=\textwidth]{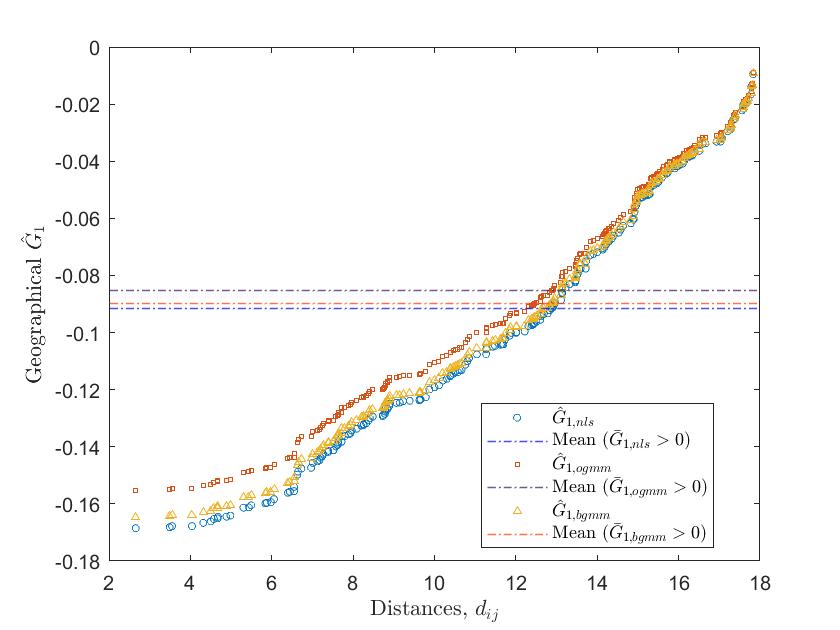}
		\end{subfigure}%\hfill
		\begin{subfigure}[t]{0.6\textwidth}\hspace*{-2cm}
			\includegraphics[width=\textwidth]{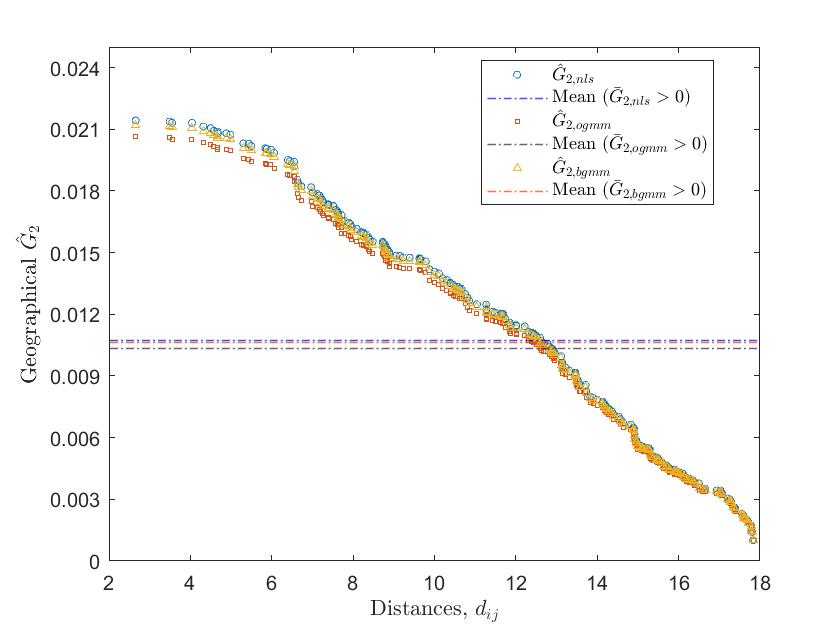}
		\end{subfigure}%\hfill
		\\
		\begin{subfigure}[t]{0.6\textwidth}%\hspace*{-2cm}
			\includegraphics[width=\textwidth]{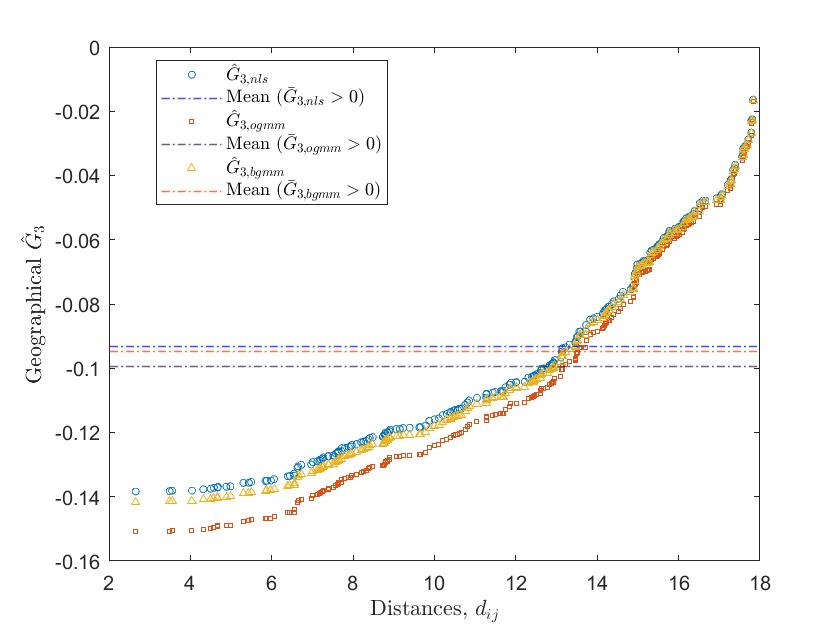}
		\end{subfigure}
		\caption{Estimated pure-geographical $G_{1}$, $G_{2}$ and $G_{3}$ based on three estimators, the case of $\Var(\epsilon_{it})=\sigma_i^2$.}\label{emp:fig_Ghetei}
		\begin{tablenotes}\footnotesize
			\item \textbf{Notes}: For the MESS specification, the negative entries in $\hat{G}_1$ indicate a positive mechanism between neighbors.
		\end{tablenotes}
	\end{figure}
	
	\begin{figure}[htbp]
		\centering
		\begin{subfigure}[t]{0.5\textwidth}%\hspace*{-1cm}
			\includegraphics[width=\textwidth]{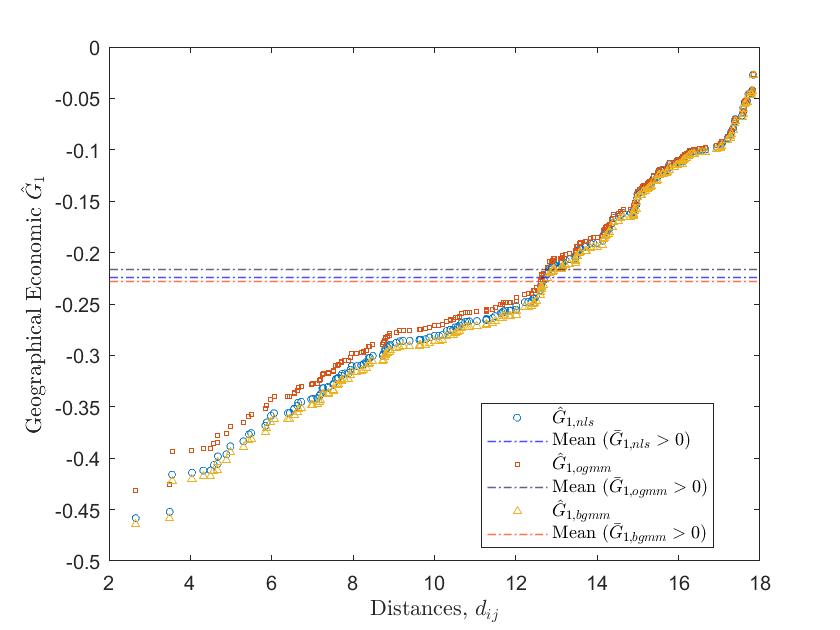}
		\end{subfigure}%\hfill
		\begin{subfigure}[t]{0.5\textwidth}%\hspace*{-1.5cm}
			\includegraphics[width=\textwidth]{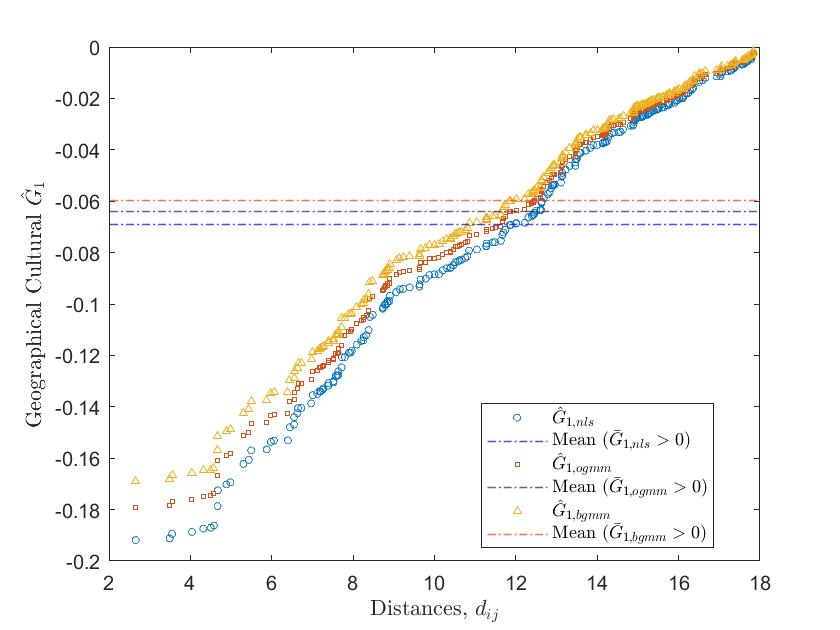}
		\end{subfigure}%\hfill
		\caption{Estimated $G_{1}$ based on economic and cultural geographic distances for the MESS, in the case of $\Var(\epsilon_{it})=\sigma_i^2$.}\label{emp:fig_Ghatecon}
		\begin{tablenotes}\footnotesize
			\item \textbf{Notes}: For the MESS specification, the negative entries in $\hat{G}_1$ indicate a positive mechanism between neighbors. 
		\end{tablenotes}
	\end{figure}
	
	\begin{figure}[htbp]
		\centering
		\begin{subfigure}[t]{0.5\textwidth}%\hspace*{-1cm}
			\includegraphics[width=\textwidth]{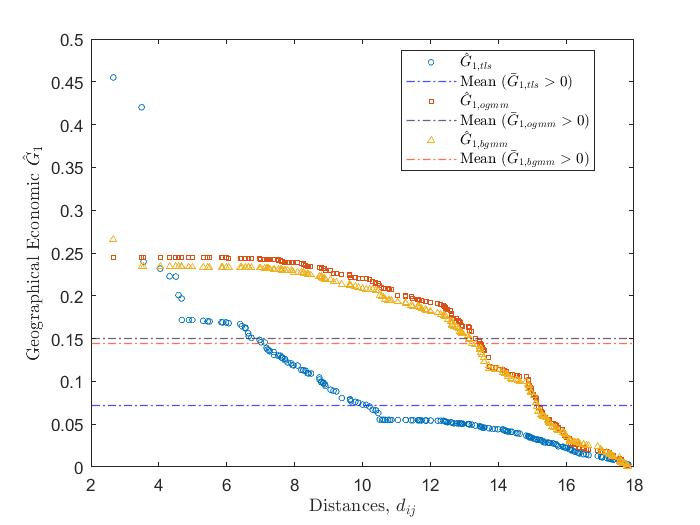}
		\end{subfigure}%\hfill
		\begin{subfigure}[t]{0.5\textwidth}%\hspace*{-1.5cm}
			\includegraphics[width=\textwidth]{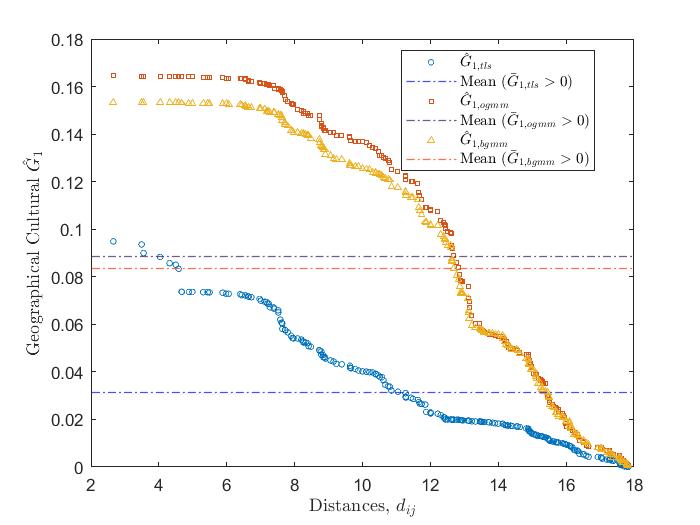}
		\end{subfigure}%\hfill
		\caption{Estimated $G_{1}$ based on economic and cultural geographic distances for the SAR, $\Var(\epsilon_{it})=\sigma_i^2$. }\label{emp:fig_Ghatecon_sar}
	\end{figure}

	\begin{figure}[htbp]
		\centering
		\begin{subfigure}[t]{0.4\textwidth}
			\includegraphics[width=\textwidth]{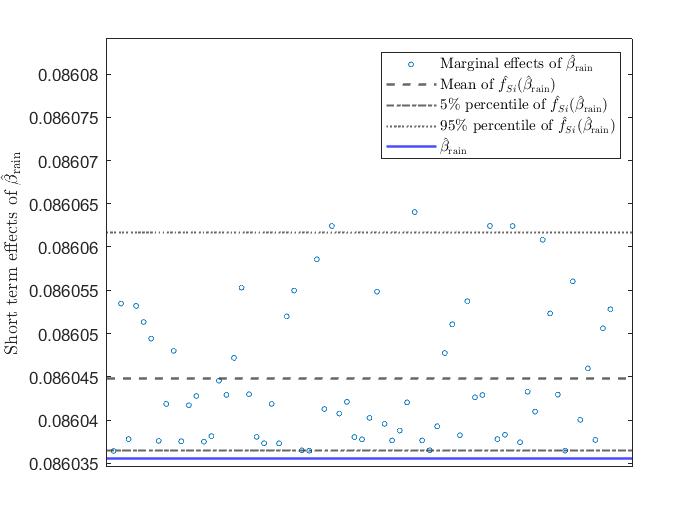}
			\caption{Short-term marginal effects, rainfall}
		\end{subfigure}%
		\begin{subfigure}[t]{0.4\textwidth}
			\includegraphics[width=\textwidth]{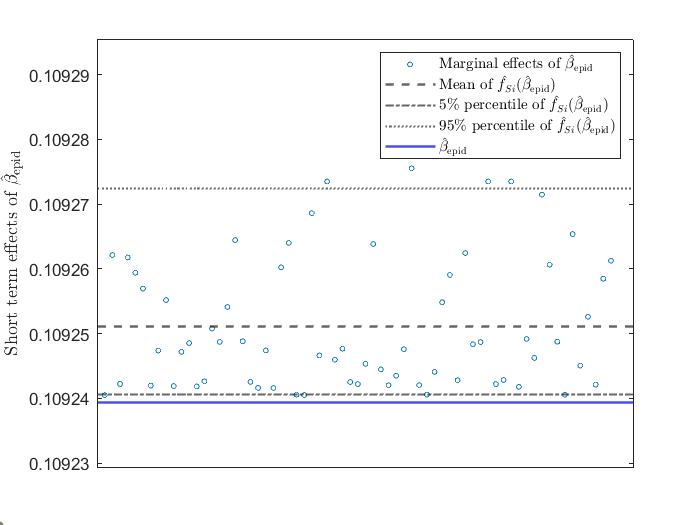}
			\caption{Short-term marginal effects, epidemic}
		\end{subfigure}
		
		\begin{subfigure}[t]{0.4\textwidth}
			\includegraphics[width=\textwidth]{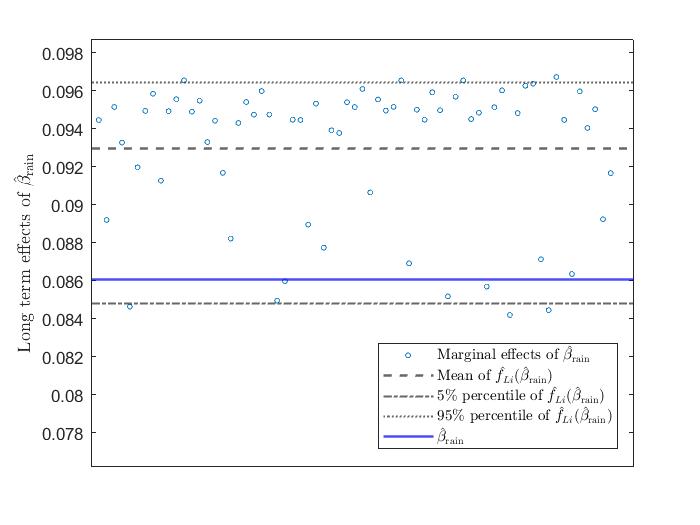}
			\caption{Long-term marginal effects, rainfall}
		\end{subfigure}%
		\begin{subfigure}[t]{0.4\textwidth}
			\includegraphics[width=\textwidth]{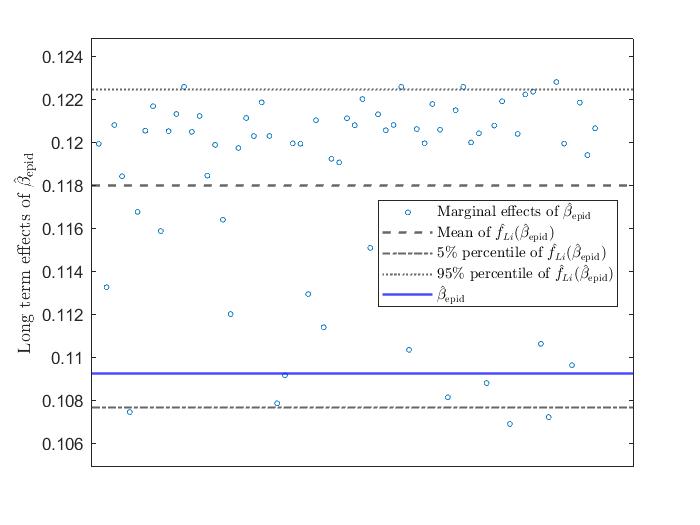}
			\caption{Long-term marginal effects, epidemic}
		\end{subfigure}
		\stla{-0.5mm}
		\caption{Marginal effects with $\bar{d}_{0}=25\%$. Top row: Short-term effects. Bottom row: Long-term effects.}
		\label{emp:fig_margin}
	\end{figure}
	%%%%%%%%%%%%%%%%%%%%%%%%%%%%%%%%%%%%%%
	\section{Conclusion}\label{sec:con}
	We provide a method for estimating SDPD models with unknown spatial weights using series approximations, while maintaining ease of implementation. 
	We allow the spatial weights in the outcome, lagged-outcome, and disturbance channels to be unknown functions of exogenous distance measures and cover both SAR and MESS specifications. Using sieve-based GMMEs, we construct an OGMME that is robust to unknown heteroskedasticity and a more efficient feasible BGMME via both linear and quadratic moments. This estimation approach can be directly extended to time-varying exogenous distances without any technical difficulties. Testing and selecting models between SAR and MESS matrix remain an area for future research.
	%%%%%%%%%%%%%% references %%%%%%%%%%%%%%%%%
	%\normalem
	\setlength{\bibsep}{0.6ex}
	\begin{spacing}{0.8}
		\normalem
		\bibliographystyle{chicago}
		\bibliography{semiW}
	\end{spacing}
	% \endgroup
	%%%%%%%%%%%%%%%%%%%%%%%%%%%%%%%%
	\appendix
	\setcounter{assumption}{0}
	\setcounter{lemma}{0}
	\setcounter{footnote}{0}
	\setcounter{table}{0}
	\setcounter{figure}{0}
	\setcounter{equation}{0}
	\setcounter{section}{0}
	\setcounter{definition}{0}
	\setcounter{notation}{0}
	%%%%%%%%%%%%%%%%%%%%%%%%%%%%%%%%%%%%%%%%%%%%%%%%%
	\renewcommand{\theequation}{\thesection\arabic{equation}}%
	\renewcommand{\thelemma}{\thesection\arabic{lemma}}
	\renewcommand{\theassumption}{\thesection\arabic{assumption}}
	\renewcommand{\thefootnote}{A\arabic{footnote}}
	\renewcommand{\theremark}{\thesection\arabic{remark}}
	\renewcommand{\thedefinition}{\thesection\arabic{definition}}
	\renewcommand{\thenotation}{\thesection\arabic{notation}}
	%\renewcommand{\theHnotation}{S\arabic{notation}}
	%%%%%%%%%%%%%%%%%%%%%%%%%%%%%%%%%%%%%%%%%%%%%%%%%
	\newgeometry{top=2cm, bottom=2cm, left=1.7cm, right=1.7cm}
	\section{Notations and Definitions}\label{sec:notation}
	We first collect all notations in our paper.
	Although some of them are already defined in the main text we believe it will be useful to state all notations at this point.
	\begin{notation}\label{notation:sup}
		(\rn1) 
		For the sieve approximation, define $\elln=\max_{1\leq k\leq 3}\ell_{k}$ and $\vsinf=\min_{1\leq k\leq 3}\varsigma_{k}$.
		\par
		(\rn2) Define $H=\mat{h_{ij}}$ as the matrix whose $(i,j)$-th entry is $h_{ij}$.
		Then, for $k=1,2,3$ and $j=1,...,\ell_{k}$, we assume $g_{k}(d)=\sum_{\pk=1}^{\ell_{k}}\lambda_{\pk}\phi_{k\pk}(d)+\delta_{k}(d)$, where $\phi_{k\pk}(\cdot)$ are basis functions and $\delta_{k}(d)$ are suitably decaying approximation errors. 
		We then denote the sieve approximation and related matrices as $\xi_{k}(d)=\sum_{\pk=1}^{\ell_{k}}\lambda_{\pk}\phi_{k\pk}(d)$, $\Xi_{k}=\mat{\xi_{k}(d_{ij})}$, and $\varPhi_{k\pk}=\mat{\phi_{k\pk}(d_{ij})}$. Defining $\Delta_{k}=\mat{\delta_{k}(d_{ij})}$, we obtain the decomposition $B_{k}=S_{k}+R_{k}$, where $S_{k}$ depends on the sieve matrix $\Xi_{k}$ and $R_{k}$ collects the approximation error matrix. Specifically, for the SAR, $S_{1}=I_{n}-\Xi_{1}, S_{2}=\Xi_{2}, S_{3}=I_{n}-\Xi_{3}$ with $R_{1}=-\Delta_{1}, R_{2}=\Delta_{2}, R_{3}=-\Delta_{3}$; for the MESS, $S_{k}=e^{\Xi_{k}}$ and $R_{k}=S_{k}(e^{\Delta_{k}}-I_{n})$. 
		\par
		(\rn3) For a set of matrices $M_{j}$, $j=1,...,p$, the following equation represents a specific matrix operation: 
		\[
		M_{1}[M_{2},M_{3},...,M_{p-1}]M_{p}=[M_{1}M_{2}M_{p},M_{1}M_{3}M_{p},...,M_{1}M_{p-1}M_{p}]
		\]
		where the column-dimension in $M_{m}$ is equal to the row-dimension in $M_{1}$, and the row-dimension in $M_{m}$ is equal to the column-dimension in $M_{p}$, for $m=2,...,p-1$. 
		\par 
		(\rn4) For the FOD transformation,
		Let $[F_{T,T-1},\frac{l_T}{\sqrt{T}}]$ be the orthogonal matrix of $J_T=I_{T}-\frac{1}{T}l_{T}l_{T}'$, where $l_{T}$ is the $T\times 1$ vector of ones, and then, the $n \times T$ matrix $\left[H_{1}, H_{2}, \cdots, H_{T}\right]$ can be transformed into the $n(T-1)$ matrix $\left[H_{1}^{*}, H_{2}^{*}, \cdots, H_{T-1}^{*}\right]=\left[H_{1}, H_{2}, \cdots, H_{T}\right] F_{T,T-1}$. Also, the $n(T-1)$ matrix $[H_{0}^{(*,-1)}, H_{1}^{(*,-1)}, \cdots, H_{T-2}^{(*,-1)}]=[H_{0}, H_{1}, \cdots, H_{T-1}]F_{T,T-1}$.  For example, $Y_{t}^{*}=\hTt(Y_{t}-\suma{h=t+1}{T}Y_{h})$ and $Y_{t-1}^{(*,-1)}=\hTt(Y_{t-1}-\suma{h=t}{T-1}Y_{h})$ depend on current and future variables, but not on the past ones, where $h_{Tt}=\sqrt{\frac{T-t}{T-t+1}}$.
		\par 
		(\rn5) For the stacked matrix form, $\bH_{N}^{*}=(H_{1}^{*\prime},...,H_{T-1}^{*\prime})'$ and $\bH_{N}^{(*,-1)}=(H_{0}^{(*,-1)\prime},...,H_{T-2}^{(*,-1)\prime})'$ for any $n\times \ell_{h}$ matrices $H_{t}$ and $H_{t-1}$ with full columns. 
		For example, $\bY_{N}^{*}=(Y_{1}^{*\prime},...,Y_{T-1}^{*\prime})$, $\bY_{N}^{(*,-1)}=(Y_{0}^{(*,-1)\prime},...,Y_{T-2}^{(*,-1)\prime})$ and $\bE_{N}^{*}=(E_{1}^{*\prime},...,E_{T-1}^{*\prime})$.
		For $k=1,2,3$ and $j=1,..,\ell_{k}$, $\bS_{kN}=I_{T-1}\otimes S_{k}$, $\bvarPhi_{kj,N}=I_{T-1}\otimes \varPhi_{kj}$, where $\varPhi_{kj}$ and $S_{k}$ are defined as Notation \ref{notation:sup}(\rn2). Thus, 
		\sloppy
		$\bZ_{N}^{*}=[\bY_{N}^{(*,-1)}, \bX_{N}^{*}]$, and
		$\pmb{\alpha}_{T-1}^{*}=(\alpha_{0}^{*},...,\alpha_{T-1}^{*})'$. 
		Moreover, $\varPhi_{k}=[\varPhi_{k1},\ldots,\varPhi_{kj}]$, $\bvarPhi_{Nk}=[\bvarPhi_{k1,N},\ldots,\bvarPhi_{kj,N}]$, and $\bL_{N}^{*}=\bS_{3N}\left[ \bZ_{N}^{*},\tfrac{\partial\bS_{1N}}{\lambda_{1}}\bY_{N}^{*}, \tfrac{\partial\bS_{2N}}{\lambda_{2}}\bY_{N}^{(*,-1)},\bzero_{N\times \ell_3}\right]$ with
		$\tfrac{\partial\bS_{kN}}{\partial\lambda_{k}}=\left[\tfrac{\partial\bS_{kN}}{\partial\lambda_{k1}},\cdots,\tfrac{\partial\bS_{kN}}{\partial\lambda_{kj}}\right]$. 
		\par 
		(\rn6) 
		Denote $N=n(T-1)$ and $\theta=(\pi',\lambda')'$, where $\pi=(\gamma,\beta')'$ and $\lambda=(\lambda_{1}^{\prime},\lambda_{2}',\lambda_{3}')'$ with $\lambda_{k}=(\lambda_{k1},...,\lambda_{k\pk})'$ for $k=1,2,3$. Denote  $\hat{g}_{k}(d)=\bphi_{k}'(d)\hat\lambda_{k}$, where $\bphi_{k}(d)=[\phi_{k1}(d),...,\phi_{k\ell_{k}}(d)]$ and $\hat\lambda_{k}$ is an estimator of $\lambda_{k}$. For the dimension,
		$\dim(\theta)=\ell_{\theta}$, $\dim(\pi)=\ell_{\pi}$, $\dim(\lambda)=\ell_{\lambda}$, $\dim(m_{N}^{\qua}(\theta))=\ell_{p}$, $\dim(m_{N}^{\lin}(\theta))=\ell_{q}$, $\dim(m_{N}(\theta))=\ell_{m}=\ell_{p}+\ell_{q}$.
		. 
	\end{notation}
	
	We then introduce the following definitions.
	\begin{definition}\label{def:UB}
		(\rn1) For any real time-varying $n\times n$ random matrix $H_t$, we say that $H_t$ has \labeltext{\textit{Property UB}}\label{propUB} if
		\begin{align*}
			\sup_{t}\rcnorm{H_t}&=O_p(1), \
			\sup_{t}\spnorm{H_t}=O_p(1),
			\ \text{and} \
			\sup_{t}\Fnorm{H_t}=O_p(\sqrt{n}),\\
			\intertext{
				\text{or the non-stochastic version}
			} 
			\sup_{t}\rcnorm{H_t}&=O(1), \
			\sup_{t}\spnorm{H_t}=O(1),
			\ \text{and} \
			\sup_{t}\Fnorm{H_t}=O(\sqrt{n})
		\end{align*}
		(\rn2) For any real time-varying $n\times n$ random matrix $H_t$, we say that $H_t$ has \labeltext{\textit{Property}}\label{propzero} $O_p(h^j)$ if
		\begin{align*}
			\sup\limits_{t}\rcnorm{H_t}&=O_p(h^j), \
			\sup\limits_{t}\spnorm{H_t}=O_p(h^j),
			\ \text{and} \
			\sup\limits_{t}\Fnorm{H_t}=O_p((\sqrt{n}h)^{j}),\\
			\intertext{
				\text{or the non-stochastic version $O(h^j)$}
			} 
			\sup\limits_{t}\rcnorm{H_t}&=O(h^j), \
			\sup\limits_{t}\spnorm{H_t}=O(h^j),
			\ \text{and} \
			\sup\limits_{t}\Fnorm{H_t}=O((\sqrt{n}h)^{j})
		\end{align*}
		with $h \to 0$ and $\lim_{n \to\infty}\sqrt{n}h \to 0$  for some positive integer $j$.
	\end{definition}
	
	\begin{definition}\label{def:eigs}
		For any real square matrix (or random matrix) $H_{N}$, where $N=n(T-1)$, we say that $H_{N}$ has \labeltext{\textit{Property SP}}\label{propSP} if
		\begin{align*}
			\sup_{N}\spnorm{H_N}&=\lim_{N\to\infty}\mu_{\max}(H_N'H_N)=O_p(1), 
			\ \text{and} \  
			\bigl(\sup_{N}\mu_{\min}(H_N'H_N)\bigr)^{-1}=O_p(1),\\ 
			\intertext{
				%\qquad\qquad\qquad\quad
				\text{or the non-stochastic version}
			} 
			\lim_{N\to\infty}\spnorm{H_N}&=\lim_{N\to\infty}\mu_{\max}(H_N'H_N)<\infty, 
			\ \text{and} \ 
			\lim_{N\to\infty}\mu_{\min}(H_N'H_N)>0.
		\end{align*}
	\end{definition}
	
	Definition \ref{def:UB} introduces the UB and convergence to zero for certain  $n\times n$ random and nonrandom matrices in each $t$. 
	For example, according to Lemma \ref{lem:boundB}(\rn2), $S_{k}$'s satisfy \ref{propUB} and $R_{k}$'s satisfy \ref{propzero}~$\Op(\elln)$.
	Definition \ref{def:eigs} captures boundedness and non-multicollinearity properties of random or fixed matrices of growing dimension. 
	\section{Regularity Assumptions}\label{sec:assumption}
	\begin{assumption}\label{ass:disterbance}
		The $\epsilon_{it}$ are independent with zero mean and finite variances that are bounded away from zero (denoted by $\sigma^2$, $\sigma_{i}^2$, or $\sigma_{t}^2$), as discussed in \ref{var:homo}, \ref{var:hetei} and \ref{var:hetet}, respectively.
		Furthermore, $\E|\epsilon_{it}|^{4+\delta_{\epsilon}}\leq C_{\epsilon}$ for some constant $C_{\epsilon}$ and $\delta_{\epsilon}>0$.%
		\footnote{If $0<\delta_{\epsilon}<1$, where $\delta_{\epsilon}$ is in Assumption \ref{ass:disterbance}, then we need $\delta>\frac{p-2}{p+4}$ to guarantee the CLT. Discussions are in footnote \footref{foot:clt}. Moreover, Assumption \ref{ass:nT} is required for the joint limit theory where $(n,T) \to \infty$. For the fixed $T$ case in Remark \ref{remark:finite T}, this rate condition is not required.}
	\end{assumption}
	\begin{assumption}\label{ass:appro_G}
		For $k=1,2,3$ and $l=1,2,\cdots,\ell_{k}$:\\  
		(\rn1) $\rcnorm{G_{k}}$ and $\rcnorm{\Upsilon}$ are uniformly bounded in $n$, and $\rho(A)<1$, where $A=B_{1}^{-1}(\gamma I_{n}+B_{2})$, $\Upsilon=\sum_{h=0}^{\infty}\abs(A^{h})$, $[\abs(A)]_{ij}=|A_{ij}|$ and $A_{ij}$ is the $(i,j)$-th element of $A$. Moreover, the initial condition $Y_{0}$ is exogenously given.
		\\
		(\rn2) The basis functions $\phi_{kl}(\cdot)$ satisfy
		$\sup_{i,k,l}\sum_{j=1}^{n}|\phi_{kl}(d_{ij})|<\infty$,
		$\sup_{j,k,l}\sum_{i=1}^{n}|\phi_{kl}(d_{ij})|<\infty$. 
		\\
		(\rn3) For some fixed $\tau>1$, $\sup_{l}|\lambda_{l}l^{\tau}|<\infty$.\\
		%$|\lambda_{l}|=O(l^{-\varsigma_{k}-1})$. 
		(\rn3') The $(i,j)$th element of the approximation error matrix $\Delta_{kl}$, defined as
		$\delta_{k}(d_{ij})=\sum_{l=\ell_{k}+1}^{\infty}\lambda_{l}\phi_{kl}(d_{ij})$, satisfying $|\delta_{k}(d_{ij})|=O(l^{-\varsigma_{k}})$ for some $\varsigma_{k}>2$.
	\end{assumption}
	Assumption \ref{ass:disterbance} provides regularity assumptions for the idiosyncratic term, and $\spnorm{\Sigma_{t}}=O(1)$ and $\spnorm{\Sigma_{t}^{-1}}=O(1)$ since $\tfrac{1}{\mumin(\Sigma_{t})}=O(1)$. In particular, we require the existence of fourth moments, thereby relaxing the condition in \citet{yang2025estimation} and \citet{gupta2025wald}, which assumes the existence of eighth moments.
	Assumption \ref{ass:appro_G} imposes the structures of unknown spatial weights matrices $G_{k}$'s and approximation tail matrices $\Delta_{k}$'s. 
	The condition of (\rn1) is parallel to assumptions on spatial weights in parametric SDPD models, see, e.g. \cite{yu2008quasi-maximum,lee2014efficient}.\footnote{One may also consider settings in which $G_k$ contains ``stars'', so that the column sums grow with $n$, see, e.g. \citet{pesaran2021estimation,lee2022qml,zhang2025nonlinear}. We leave this extension for future research.}
	$\rcnorm{G_{k}}$ requires that $g_{k}(d_{ij})$ either has a compact support or approaches zero fast enough for large $d_{ij}$ values, and that a growing sample increases the domain, not the density of $d_{ij}$. Such assumptions are also discussed in \citet{sun2016functional} and \citet{chen2025npspatial}. Alternatively, \citet{pinkse2002spatial} assume that 
	$\frac{1}{n}\sum_{i=1}^{n}\sum_{j=1}^{n}\cI{d_{ij}\in D}<\infty$ for any fixed domain set $D$, where $\cI{\cdot}$ is the indicator function. This condition is sufficient to ensure $\rcnorm{G_{k}}<\infty$ the when $g_{k}(\cdot)$ has bounded support $D$. $\rho(A)<1$ indicates that the dynamic system \eqref{eq:sdpd_np} is stable, ruling out unit root or related issues.
	$\rcnorm{\Upsilon}<\infty$ combines the absolute summability condition and the UB condition of the power series of $A$. We consider the initial condition $Y_{0}$ is exogenously given, which does not need to be specified in the estimation or asymptotic analysis.
	\footnote{For the finite $T$ case, one may specify the initial condition that is endogenous. Such a specification is beyond the scope of this paper and deserves further study.}
	(\rn2) implies that each approximation matrix $\varPhi_{kl}=\mat{\phi_{kl}(d_{ij})}$ is UB in both row and column sums. Also, it requires that there are only a finite number of neighbors for each $i$ such that  $\phi_{kl}(d_{ij})\neq 0$ in the limit. This means $\phi_{kl}(d_{ij})$ declines fast enough with growing $d_{ij}$, and that observations spread out in an increasing domain.
	(\rn3) is a `smoothness condition', and the polynomial expansion can be used by this condition, as discussed in \citet[Theorem 1 (vii)]{pinkse2002spatial}.
	By (\rn3), we can deduce that $|\delta_{k}(d_{ij})|=\Op(l^{-\varsigma_{k}})=\op(1)$. (\rn3') corresponds to a more general setting. In particular, when $g_k(d_{ij})$ is supported on an unbounded domain, such as $[0, \infty)$ or $(-\infty, \infty)$, the basis function $\mat{\phi_{kl}(\cdot)}$ can be chosen as normalized Laguerre (or Hermite) orthonormal polynomials, following \citet[Assumption A.2]{sun2016functional}.
	
	\begin{assumption}\label{ass:UB of reg}
		$\mat{x_{it}}$, $\mat{c_{i}}$ and $\mat{\alpha_{t}}$ are uniformly bounded constants.
	\end{assumption}
	Assumption \ref{ass:UB of reg} introduces the UB condition of $X_{t}$ and FE, which can also be found in \cite{lee2014efficient} and \cite{su2023identifying}. We can also assume $\sup_{1\leq l\leq \ell_{x}}\E|x_{lit}|^2\leq C_{x}$ for some constant $C_{x}$. By Assumptions \ref{ass:disterbance}, \ref{ass:appro_G}(i) and \ref{ass:UB of reg}, we can deduce that $\Fnorm{Y_{t}}=\Op(\sqrt{n})$ and $\Fnorm{Y_{t-1}}=\Op(\sqrt{n})$ by the backward substitution.
	
	\begin{assumption}[IV matrix]\label{ass:IV}
		(\rn1) The typical elements of $Q_{t}$, denoted by $q_{ijt}$ for $i=1,\cdots,n$ and $j=1,\cdots,\lqn$, satisfy $\E|q_{ijt}|^2\leq C_{q}$ with some constant $C_q$. Furthermore, $\E\bigl(\suma{t=1}{T-1}Q_{t}'E_{t}^{*}\big|\sigmaI{t-1}\bigr)=0$, where $\sigmaI{t-1}$ denotes the $\sigma$-field spanned
		by $(Y_{0},\cdots,Y_{t-1})$, conditional on $(X_{1},\cdots,X_{T},\mathbf{c}_{n},\alpha_{1},\cdots,\alpha_{T}$). 
		\\
		(\rn2) For each $j=1,...,\lpn$, $P_{jt}$ satisfies \ref{propUB}.
		\\
		(\rn3) $\lim\limits_{n,T \to\infty}\invN\bQ_{N}'\bJ_{N}\bQ_{N}$, $\lim\limits_{n,T \to\infty}\invN\bQ_{N}'J_{N}\Sigep_{N}J_{N}\bQ_{N}$, and $\lim\limits_{n,T \to\infty}\invN\bQ_{N}'\JSL{N}\bQ_{N}$ satisfy \ref{propSP}.
		\\
		(\rn4) $\lqn\sim\elln$ and $\lpn\sim\elln$.
	\end{assumption}
	
	Assumption \ref{ass:IV} collects the regularity conditions on the instrumental variables. 
	Condition (\rn1) imposes the standard IV requirements used in SDPD models (see, e.g. \citealp{lee2014efficient}).
	Condition (\rn2) requires a UB property for the IV matrices entering the quadratic moment conditions, which in turn guarantees that $\varPhi_{kl}$ is well-defined. Importantly, we do not impose any additional high-level restrictions on the instruments, such as $\tr(J_n P_{jt})=0$ (as in \citealp{lee2014efficient}) or $\Diag(P_{jt})=0$ (as in \citealp{lin2010gmm}).
	Condition (\rn3) ensures the existence of the asymptotic covariance matrix $\Omega_{N}$, which involves the inverse of $\lim\limits_{n,T\to\infty}\invN\bQ_{N}'J_{N}\bQ_{N}$ in the 2SLS approch,  the inverse of $\lim\limits_{n,T\to\infty}\invN\bQ_{N}'J_{N}\Sigep_{N}J_{N}\bQ_{N}$ in the OGMM approach and the inverse of $\lim\limits_{n,T\to\infty}\invN\bQ_{N}'\JSL{N}\Sigep_{N}\JSL{N}\bQ_{N}=\lim\limits_{n,T\to\infty}\invN\bQ_{N}'\JSL{N}\bQ_{N}$ in the BGMM approach.
	Finally, condition (\rn4) restricts the dimension of the moment conditions, $\lmn=\lpn+\lqn\sim \elln$, so that the number of moments grows at the same rate as the cross-sectional dimension $n$, in line with \citet{gupta2025wald}. Unlike the setting in \citet{lee2014efficient}, where the many-moments problem arises from both spatial and time expansions, in our framework, it stems solely
	from the number of instruments, $\lmn$, which increases with the order of the spatial power series used to approximate the unknown spatial weights. To control this growth, we impose explicit rate conditions that ensure the expansion order (and hence $\lmn$) is asymptotically dominated by both the cross-sectional
	dimension $n$ and the time dimension $T$.
	\begin{assumption}[Identification]\label{ass:id}
		(\rn1)
		$\invN\bQ_{N}'\bJ_{N}\bL_{N}^{*}$ and
		$\invN\bL_{N}^{*\prime}\JSL{N}\bL_{N}^{*}$ satisfy \ref{propSP}, where $\bL_{N}^{*}$ is defined in Notation \ref{notation:sup}.\\
		(\rn2)
		$\Omega_{N}$ satisfies \ref{propSP}, where $\Omega_{N}$ is defined in equation \eqref{eq:Var_moment}.\\
		(\rn3) $D_{N}'\Omega_{N}^{-1}D_{N}$ satisfies \ref{propSP}, where $D_{N}$ is defined in equation \eqref{eq:DN}.
	\end{assumption}
	\sloppy
	Assumption \ref{ass:id} is a regularity condition to ensure the identifiability of $\theta_{0}$. From (\rn3), we have $\lim_{n,T\to\infty}\spnorm{D_{N}'\Omega_{N}^{-1}D_{N}}=O(1)$.
	%%%%%%%%%%%%%%%%%%%%%%%%%%%%%%%%%%
	
	\section{Lemmas and Theorems}
	\subsection{Key lemmas}
	In this section, we introduce the key lemmas, while the remaining lemmas are collected in the supplementary file.
	In the following part, we denote the true estimand as $\theta_{0}=(\gamma_0',\lambda_{0}')'$.
	\begin{comment}
		\begin{assumption}\label{ass:lem_ass}
			Let $\mathbf{a}\sim(0,\mathbf{\Sigma}_a)$, where $\mathbf{\Sigma}_a = \Diag(\sigma_{a1}^2, \dots, \sigma_{an}^2)$, be an $n \times 1$ random vector with independent entries $a_i$. For each entry $a_i$:
			\\
			(\rn1) $\E[a_i] = 0$.
			\\
			(\rn2) $0<\E[|a_i|^k] \leq C_{k}$ for $k=1,2,3,4$.
			\\
			(\rn3)
			Let $\mathbf{M}$ be an $n \times n$ projection matrix that satisfies $\rank{\mathbf{M}}\leq C_\mathbf{M}\cdot m$ w.p.1 for large $n$ and some constant $C_\mathbf{M}$.
		\end{assumption}
		
		\begin{lemma}\label{lem:EnormE4}
			\ag{This is a trivial result...do we really need to state it as a lemma?}Under Assumption \ref{ass:lem_ass}, 
			$\E\Fnorm{\mathbf{a}'\mathbf{M}\mathbf{a}}= O(m).$
		\end{lemma}
		\begin{proof}
			Note that $\E\Fnorm{\mathbf{a}'\mathbf{M}\mathbf{a}}=\E|\mathbf{a}'\mathbf{M}\mathbf{a}|$ since $\mathbf{a}'\mathbf{M}\mathbf{a}$ is a scalar.
			Since $M$ is a projection matrix, $|\mathbf{a}'\mathbf{M}\mathbf{a}|=\mathbf{a}'\mathbf{M}\mathbf{a}$. Conditioning on $\mathbf{M}$,
			we have
			$\E[\mathbf{a}'\mathbf{M}\mathbf{a}\mid \mathbf{M}]=\tr(\mathbf{M}\bm\Sigma_a)\leq C_2\tr(\mathbf{M})\leq C_2 C_\mathbf{M} m=\Op(m)$ by Assumption \ref{ass:lem_ass}.
			Taking expectations and using the tower property, the proof is finished.
		\end{proof}
	\end{comment}
	
	\begin{lemma}\label{lem:transJ}
		(\rn1) For any time-varying $n \times n$ matrix $H_{t}$, denote $H^{\diamond}_{t}$ as a transformed version of $H_{t}$, such that $\left[H^{\diamond}_{t}\right]_{i j}=[H_{t}]_{i j}$ if $i \neq j$, and 
		\begin{flalign}\label{eq:transK}
			\left[H^{\diamond}_{t}\right]_{ii}=\frac{1}{n-2}\sum_{j\neq i}^n\big([H_{t}]_{ij}+[H_{t}]_{ji}\big)-\frac{1}{(n-1)(n-2)}\sum_{j=1}^{n}\sum_{k\neq j}^n[H_{t}]_{jk}. 
		\end{flalign}
		Then, $\diagM\left(J_n H^{\diamond}_{t} J_n\right)=\bzero$.
		\\
		(\rn2)
		In the case of \ref{var:hetei}, the vector of diagonal elements $\mathbf{h}_{diag}^\diamond = ([H_t]_{11}^\diamond, \dots, [H_t]_{nn}^\diamond)'$ is the solution to $C \mathbf{h}_{diag}^\diamond = d$, given by $C^{+} d$, where:
		\begin{equation*}
			C = \begin{pmatrix}
				1 - \frac{2\sigma_1^{-2}}{\varsigma} + \frac{\sigma_1^{-4}}{\varsigma^2} & \frac{\sigma_2^{-4}}{\varsigma^2} & \cdots & \frac{\sigma_n^{-4}}{\varsigma^2} \\
				\frac{\sigma_1^{-4}}{\varsigma^2} & 1 - \frac{2\sigma_2^{-2}}{\varsigma} + \frac{\sigma_2^{-4}}{\varsigma^2} & \cdots & \frac{\sigma_n^{-4}}{\varsigma^2} \\
				\vdots & \vdots & \ddots & \vdots \\
				\frac{\sigma_1^{-4}}{\varsigma^2} & \frac{\sigma_2^{-4}}{\varsigma^2} & \cdots & 1 - \frac{2\sigma_n^{-2}}{\varsigma} + \frac{\sigma_n^{-4}}{\varsigma^2}
			\end{pmatrix},
		\end{equation*}
		with $\varsigma = \sum_{k=1}^n \sigma_k^{-2}$. The $i$-th element of the vector $d$ is:
		\begin{equation*}
			d_i = \frac{1}{\varsigma \sigma_i^2} \left( \sum_{j \neq i} \sigma_j^{-2} [H_{t}]_{ji} + \sum_{j \neq i} \sigma_j^{-2} [H_{t}]_{ij} \right) - \frac{1}{\varsigma^2 \sigma_i^2} \sum_{j=1}^n \sum_{k \neq j} \sigma_j^{-2} \sigma_k^{-2} [H_{t}]_{jk}.
		\end{equation*}
		In the cases of \ref{var:hetet} and \ref{var:homo}, $\left[H^{\diamond}_{t}\right]_{ii}$ is the same as equation \eqref{eq:transK}.
		Then, the diagonal elements of $\JSL{t} H_t^\diamond \JSL{t}$ are equal to zero. 
	\end{lemma}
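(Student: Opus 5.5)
The plan is to treat both parts the same way. The diagonal of $J H^{\diamond}_t J$ (with $J=J_n$ or $J=\JSL{t}$) is an affine function of the $n$ unknown diagonal entries $[H^\diamond_t]_{ii}$. The off-diagonal entries of $H^\diamond_t$ are fixed and equal those of $H_t$. So the requirement $\diagM(JH^\diamond_tJ)=\bzero$ is a linear system of $n$ equations in $n$ unknowns. For (\rn1) I will solve it in closed form; for (\rn2) I will write it as $C\mathbf{h}^\diamond_{diag}=d$.

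For (\rn1), write $h_{ij}=[H^\diamond_t]_{ij}$, $r_i=\sum_{j\neq i}h_{ij}$, $c_i=\sum_{j\neq i}h_{ji}$, $s=\sum_i h_{ii}$ and $O=\sum_j\sum_{k\neq j}h_{jk}$. Expanding $J_n=I_n-\frac1n l_nl_n'$ gives
\[
[J_nH^\diamond_tJ_n]_{ii}=\Bigl(1-\tfrac2n\Bigr)h_{ii}-\tfrac1n(r_i+c_i)+\tfrac1{n^2}(s+O).
\]
Setting this to zero and multiplying by $n$ gives $(n-2)h_{ii}=r_i+c_i-(s+O)/n$ for every $i$. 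Summing over $i$ and using $\sum_i(r_i+c_i)=2O$ yields $(n-2)s=2O-(s+O)$, hence $s=O/(n-1)$. It follows that $(s+O)/n=O/(n-1)$. Substituting back gives
\[
h_{ii}=\frac{r_i+c_i}{n-2}-\frac{O}{(n-1)(n-2)},
\]
which is exactly \eqref{eq:transK}. In the write-up I will present this in reverse order: plug \eqref{eq:transK} into the expansion and verify directly that each diagonal entry vanishes. The case $n>2$ is implicit in the statement.

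For (\rn2), in case \ref{var:hetei} put $w_i=\sigma_i^{-2}$, so that $\varsigma=\sum_k w_k$. From \eqref{eq:MSL} and the definition of $\JSL{t}$,
\[
\JSL{t}=\Sigma_t^{-1}-\Sigma_t^{-1}l_nl_n'\Sigma_t^{-1}/\varsigma,
\qquad [\JSL{t}]_{ij}=w_i\delta_{ij}-w_iw_j/\varsigma .
\]
I will expand $[\JSL{t}H\JSL{t}]_{ii}=\sum_{j,k}[\JSL{t}]_{ij}h_{jk}[\JSL{t}]_{ki}$ and collect the terms with $j=k$. These give the coefficient of each $h_{jj}$: it is $w_i^2(1-2w_i/\varsigma+w_i^2/\varsigma^2)$ when $j=i$ and $w_i^2w_j^2/\varsigma^2$ when $j\neq i$. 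The terms with $j\neq k$ involve only off-diagonal entries and form the right-hand side. Dividing the $i$-th equation by a common positive factor in $w_i$ should reproduce the displayed $C$ and $d$. I expect the exact normalising powers to need care, and if they do not match I will adjust the stated scaling accordingly.

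In cases \ref{var:hetet} and \ref{var:homo} we have $\JSL{t}=\sigma_t^{-1}J_n$ or $\sigma^{-1}J_n$, as noted in the footnote to \eqref{eq:best_moments}. The diagonal condition is then a scalar multiple of the one in (\rn1), so the same $[H^\diamond_t]_{ii}$ works.

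The main obstacle is that $C$ need not be invertible. The weighted projection annihilates a one-dimensional direction, mirroring the role of $l_n$ in (\rn1). So I must show that $d$ lies in the column space of $C$; only then is $C^{+}d$ an exact solution and not merely a least-squares one. My plan is to identify $\ker(C')$, which I expect to be spanned by a vector built from the $w_i$, and then to verify $v'd=0$ for that spanning vector $v$. This is the weighted analogue of the summation identity $\sum_i(r_i+c_i)=2O$ used in (\rn1). If $C$ turns out to be nonsingular, this step is vacuous and $C^{+}=C^{-1}$.
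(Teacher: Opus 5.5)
Your part (i) is correct and self-contained; the paper only cites Lemma S3.1 of Chen et al.\ (2025) there. Your reduction of (V0)/(V2) to (i) is the paper's own argument. The scalar there is actually $\sigma_t^{-2}$, since $J(\Sigma_t)=\Sigma_t^{-1/2}J_n\Sigma_t^{-1/2}$, but any nonzero scalar works.

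Part (ii) has two problems that your plan leaves open. The first is that no single row normalisation reproduces both the displayed $C$ and the displayed $d$. Write $J=J(\Sigma_t)$ and $p_j=\sigma_j^{-2}/\varsigma$. Carrying out your expansion gives
\[
[JH^\diamond_tJ]_{ii}=\sigma_i^{-4}\bigl([C\mathbf{h}^\diamond_{diag}]_i-\tilde d_i\bigr),\qquad
\tilde d_i=\sum_{j\neq i}p_j\bigl([H_t]_{ij}+[H_t]_{ji}\bigr)-\sum_{j=1}^{n}\sum_{k\neq j}p_jp_k[H_t]_{jk}.
\]
So $C$ comes out exactly, but the correct right-hand side is $\tilde d_i=\sigma_i^{2}d_i$. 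The displayed $d_i$ carries a spurious factor $\sigma_i^{-2}$.

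This is not cosmetic. If all $\sigma_i^2=\sigma^2\neq 1$, then $C^{-1}d$ is $\sigma^{-2}$ times \eqref{eq:transK}. The diagonal of $JH^\diamond_tJ$ then equals $\sigma^{-4}(\sigma^{-2}-1)\tilde d_i$, which does not vanish in general. By contrast, $C^{-1}\tilde d$ reduces to \eqref{eq:transK}. You should therefore state and prove the corrected system $C\mathbf{h}^\diamond_{diag}=\tilde d$, equivalently $\Diag(\sigma_i^{-2})C\,\mathbf{h}^\diamond_{diag}=d$, rather than deferring the scaling.

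The second problem is that your treatment of solvability rests on a wrong expectation: there is no one-dimensional kernel. It does not appear even in (i), where your own computation produced a unique solution. There $C=(1-\frac{2}{n})I_n+\frac{1}{n^2}l_nl_n'$, with eigenvalues $1-\frac{2}{n}$ and $1-\frac{1}{n}$. The missing step is to show that $C$ is nonsingular for $n\ge 3$, and it has a short proof.

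Since $[J\Diag(\mathbf{x})J]_{ii}=\sum_j J_{ij}^2x_j$, the coefficient matrix of $\mathbf{x}\mapsto\diagM(J\Diag(\mathbf{x})J)$ is the Hadamard square $J\circ J=\Diag(\sigma_i^{-4})\,C$. Its quadratic form is $\mathbf{x}'(J\circ J)\mathbf{x}=\tr\{\Diag(\mathbf{x})J\Diag(\mathbf{x})J\}=\|J^{1/2}\Diag(\mathbf{x})J^{1/2}\|^2$. This vanishes only if $J\Diag(\mathbf{x})J=0$.

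Both $J(\Sigma_t)$ and $J_n$ have range $l_n^{\perp}$, so that condition is equivalent to $J_n\Diag(\mathbf{x})J_n=0$. The off-diagonal entries of this matrix give $x_i+x_j=\bar x$ for all $i\neq j$, where $\bar x=n^{-1}\sum_k x_k$. For $n\ge 3$ this forces all $x_i$ to be equal and then zero.

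Hence $\Diag(\sigma_i^{-4})C$ is positive definite and $C$ is nonsingular. So $C^{+}=C^{-1}$ and $\ker(C')=\{\mathbf{0}\}$. The consistency check $v'd=0$ you planned would find no nonzero $v$ and is unnecessary. The same argument gives existence and uniqueness in (i) at once.
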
    
	\begin{proof}
		For (\rn1), 
		the proof can be found in \citet[Lemma S3.1]{chen2025spatial}. 
		For (\rn2), the proof can be found in \citet[Lemma S3.5]{chen2025spatial} in the case of \ref{var:hetei}. In the cases of \ref{var:hetet} and \ref{var:homo}, substituting $J(\Sigma_t)=\frac{1}{\sigma_t^2} J_n$ or $\frac{1}{\sigma^2} J_n$, we have $\Diag(J_n H_t^\diamond J_n)=0$.
	\end{proof}
	%%%%%%%%%%%%%%%%%%%%%%%%%%%%%%%%%%%%%%%
	\bigskip
	Denote $\sG_{N}(\theta)=\sGN{\theta}$ and $\E\sG_{N}(\theta)=\EsGN{\theta}$. Then we consider the following lemma to establish the consistency, i.e., $\Fnorm{\hat\theta_{gmm}-\theta_{0}}=\op(1)$. The idea is borrowed from \citet[Proof of Theorem 3.3]{su2016sieve}.
	\begin{lemma}\label{lem:consistency}
		Let $\Theta_{N}=\Gamma\times\Lambda_{N}$, where $\Gamma\subset\mathbb{R}^{\ell_{\pi}}$ is a totally UB set with fixed dimension and $\Lambda_{N}=\mat{\lambda\subset\mathbb{R}^{\elln}:\rownorm{\lambda}\leq C_{\lambda}}$ for each $l=1,\cdots,\elln$.
		Under Assumptions \ref{ass:disterbance}-\ref{ass:id}:\\
		%, and $\elln^{3/2-\vsinf}+\frac{\elln^{3/2}}{n}\to 0$ as $(n,T)\to\infty$: 
		(\rn1)  %
		For any $\theta\in\Theta_{N}$ and $\eta>0$, if $\Fnorm{\theta-\theta_{0}}\geq\eta$, there exists a constant $C_{\eta}$ such that 
		$\bigl|\E\sG_{N}(\theta)-\E\sG_{N}(\theta_{0})\bigr|\geq C_{\eta}$ when $\elln^{1/2-\vsinf}\to 0$ as $(n,T)\to\infty$.
		\\
		(\rn2) $\sup_{\theta\in\Theta}\bigl|\sG_{N}(\theta)-\E\bigl(\sG_{N}(\theta)\bigr)\bigr|=\op(1)$ when $\elln^{2-\vsinf}+\frac{\elln^{2}}{n}+\frac{\elln}{\sqrt{n(T-1)}}\to 0$ as $(n,T)\to\infty$.
		% (\rn2) $\cQ_{N}(\pi_0,g(\cdot))-\cQ_{N}(\theta_{0})=\op(...)$\\
		% (\rn2) [Uniform Convergence] $\theta,\tilde{\theta}\in\Theta, \Fnorm{\sG_{N}(\theta)-\sG_{N}(\tilde{\theta})}\leq\Delta_{N}\Fnorm{\theta-\tilde{\theta}}$ where $\spnorm{\Delta_{N}}=\Op(1)$.
	\end{lemma}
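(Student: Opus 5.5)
For part (i), I will work with the population criterion $\E\sG_{N}(\theta)=[\E m_{N}(\theta)]'\Omega_{N}^{-1}\E m_{N}(\theta)$ and first decompose $\E m_{N}(\theta)$ as an exact-sieve part plus a sieve-error part. Writing $\mathbf{V}_{N}^{*}(\theta)=\mathbf{V}_{N}^{*}(\theta_{0})+\bigl(\mathbf{V}_{N}^{*}(\theta)-\mathbf{V}_{N}^{*}(\theta_{0})\bigr)$, the difference is linear in $\pi$ and (through $S_{k}(\lambda)$) smooth in $\lambda$. A mean-value expansion in $\lambda$ around $\theta_{0}$ expresses it as $-\bL_{N}^{*}(\bar\theta)(\theta-\theta_{0})$ up to cross-terms, with $\bL_{N}^{*}$ as in Notation \ref{notation:sup}. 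The linear block then contributes $\invN\bQ_{N}'\bJ_{N}\E\bL_{N}^{*}(\theta-\theta_{0})$, and the quadratic block contributes terms of the form $\tr(\Sigep_{N}\bJ_{N}\bP_{Nj}\bJ_{N}\,\partial\cdot)$ that are linear in $\lambda_{1},\lambda_{3}$.

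Assumption \ref{ass:id}(\rn1) gives \ref{propSP} for $\invN\bQ_{N}'\bJ_{N}\bL_{N}^{*}$, and Assumption \ref{ass:id}(\rn2)--(\rn3) gives it for $\Omega_{N}$ and $D_{N}'\Omega_{N}^{-1}D_{N}$. Together these should yield
\[
\E\sG_{N}(\theta)-\E\sG_{N}(\theta_{0})\geq c\,\Fnorm{\theta-\theta_{0}}^{2}-(\text{remainder}),
\]
with $c>0$ a uniform lower eigenvalue bound. On $\Fnorm{\theta-\theta_{0}}\geq\eta$ the leading term is at least $c\eta^{2}$.

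The remainder comes from two sources: the term $\E\sG_{N}(\theta_{0})$ itself, and the cross-products with $r_{t}^{*}$. Here I would use that $R_{k}$ satisfies \ref{propzero}$\,\Op(\elln^{-\varsigma_k})$ (Lemma \ref{lem:boundB}). I would also use the bounds $\sup_{d}\spnorm{\bphi_{k}(d)}\le C_{\ell}\sqrt{\elln}$ from Assumption \ref{ass:sieve-basis}(\rn2) and $\rownorm{\lambda}\le C_\lambda$ on $\Lambda_N$. These should give a uniform remainder bound of order $\elln^{1/2-\vsinf}$, which vanishes under the stated condition. The delicate point is that the MESS nonlinearity $e^{\Xi_k(\lambda)}$ makes the map $\lambda\mapsto\E m_N$ nonlinear. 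I would handle this by a uniform Lipschitz/identification argument: the compactness of $\Gamma$ together with the boundedness of $\Lambda_N$ keeps $\rcnorm{S_k(\lambda)}$ uniformly bounded, so the expansion holds uniformly.

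For part (ii), I would write $\sG_{N}-\E\sG_{N}=(m_N-\E m_N)'\Omega_N^{-1}(m_N+\E m_N)$. It then suffices to show two things: $\sup_{\theta}\Fnorm{m_{N}(\theta)-\E m_{N}(\theta)}=\op(\elln^{-1/2})$-type smallness, and $\sup_\theta\Fnorm{m_N(\theta)}=\Op(1)$, using $\spnorm{\Omega_N^{-1}}=O(1)$. Each component of $m_N(\theta)$ is a polynomial (SAR) or a smooth function (MESS) in $\theta$, built from the linear forms $\invN\bQ_N'\bJ_N\mathbf{V}_N^*$ and the quadratic forms $\invN\mathbf{V}_N^{*\prime}\bJ_N\bP_{Nj}\bJ_N\mathbf{V}_N^*$. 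Pointwise, their deviations have variance $O(1/N)$ by Assumption \ref{ass:disterbance} and \ref{propUB}. For the lagged-$Y$ terms I would use the martingale structure from Assumption \ref{ass:IV}(\rn1) and the bound $\Fnorm{Y_t}=\Op(\sqrt n)$. Summing over $\lmn\sim\elln$ components gives $\Op(\elln/\sqrt{N})$ for the sampling part. The error-induced terms contribute $\Op(\elln^{2-\vsinf}+\elln^2/n)$: the $1/n$ piece arises from the FOD/$J_n$ demeaning of $\ell_n$-dimensional quadratic forms (the incidental-time-effect bias, which has order $\elln/n$ per moment).

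Uniformity over $\Theta_N$ follows by a stochastic equicontinuity argument. I would establish $\Fnorm{m_N(\theta)-m_N(\tilde\theta)}\le M_N\Fnorm{\theta-\tilde\theta}$ with $M_N=\Op(\sqrt{\elln})$, then apply a covering of $\Theta_N$ and check that the Lipschitz constant times the mesh is absorbed by the rate conditions. I expect this uniform step to be the main obstacle, because the parameter dimension grows. If a direct covering is too costly, I would try first to exploit the fact that $\theta$ enters $m_N$ only through finitely many $\theta$-free random matrices: for SAR these are the products $\varPhi_{kl}\varPhi_{kl'}$, and for MESS I would use a truncated exponential series. The supremum then reduces to bounding the maximum over $O(\elln^2)$ $\theta$-free random quantities, which gives exactly the $\elln^2$ scaling appearing in the stated condition.
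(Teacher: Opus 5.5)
Your outline follows the paper's proof. In (i), $\E\sG_N(\theta)-\E\sG_N(\theta_0)=\varrho_{1N}+2\varrho_{2N}$: the quadratic term is bounded below through Assumption~\ref{ass:id}(ii)--(iii), and the cross term with the sieve bias $\E m_N(\theta_0)$ is handled by Cauchy--Schwarz and $\Fnorm{\E m_N(\theta_0)}\lesssim\elln^{1/2-\vsinf}$. In (ii), you center, obtain a pointwise rate, and add a Lipschitz equicontinuity step.

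\textbf{The gap.} It lies in how you make this uniform over $\Theta_N$. You claim that compactness of $\Gamma$ and boundedness of $\Lambda_N$ keep $\rcnorm{S_k(\lambda)}$ uniformly bounded. This is false for $\Lambda_N=\{\rownorm{\lambda}\le C_\lambda\}$. One only has $\rcnorm{\Xi_k(\lambda)}\le C_\lambda\sum_{l\le\ell_k}\rcnorm{\varPhi_{kl}}=O(\elln)$, and this order is attained, e.g.\ by nonnegative basis functions with overlapping supports and $\lambda_{kl}\equiv C_\lambda$. So under SAR $\rcnorm{S_k(\lambda)}$ can grow like $\elln$, and under MESS $\spnorm{e^{\Xi_k(\lambda)}}$ can be of order $e^{C\elln}$. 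Your uniform expansion in (i), the Lipschitz constant $M_N=\Op(\sqrt{\elln})$ in (ii), and the truncated exponential series all rest on this claim.

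A second problem is local versus global identification. A uniform upper bound on $S_k(\lambda)$ only bounds the Jacobian from above. The minorant $c\Fnorm{\theta-\theta_0}^2$ on $\{\Fnorm{\theta-\theta_0}\ge\eta\}$ needs a uniform version of Assumption~\ref{ass:id}(iii) for the Jacobian along the segment from $\theta_0$ to $\theta$. The assumption is only stated at $\theta_0$, and $\E m_N(\theta)$ is nonlinear in $\theta$ even under SAR.

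The paper passes over both points: it applies Assumption~\ref{ass:id}(iii) directly to $D_N(\tilde\theta)$ and uses $\sup_u\spnorm{D_N(u)}=\Op(1)$ without proof. To close your argument, either impose that uniform condition explicitly, or restrict $\Lambda_N$ to an $\ell_1$-type set such as $\sum_l|\lambda_{kl}|\le C$. The latter is the property Assumption~\ref{ass:appro_G}(iii) gives at $\lambda_0$.

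\textbf{Covering in (ii).} Your worry is justified, and the paper does not address it; its Andrews-type step simply takes $B_e=C\elln^{-1}$. The number of balls grows at least exponentially in $\elln$. Pointwise $\Op$ rates obtained from $4+\delta$ moments do not survive a union bound of that size.

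Exploiting the polynomial structure is the better route, but it does not give what you claim. Under SAR the quadratic moments are quartic in $\theta$, of the form $\psi(\theta)'W_N\psi(\theta)$ with $\dim\psi=O(\elln^2)$ and $O(\elln^4)$ random coefficients. The bound carries $\sup_\theta\Fnorm{\psi(\theta)}^2$, which is of order $\elln^2$ on the sup-norm box and is bounded only after a norm restriction of the kind above. Under MESS, a uniformly small truncation error for $e^{\Xi_k(\lambda)}$ again needs a uniform bound on $\Xi_k(\lambda)$. Truncation at order $J$ produces of order $\elln^{J}$ $\theta$-free coefficients, and $J$ cannot stay bounded if the truncation error is to vanish. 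So neither the reduction to $O(\elln^2)$ quantities nor the match with the stated rate is established.

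\textbf{Smaller slips.}
In (i), $\E\sG_N(\theta_0)$ cancels exactly. The cross term is bounded by $C\elln^{1/2-\vsinf}\varrho_{1N}^{1/2}$, not uniformly by $C\elln^{1/2-\vsinf}$, so conclude via $\varrho_{1N}-2|\varrho_{2N}|\ge\varrho_{1N}^{1/2}(\varrho_{1N}^{1/2}-2C\elln^{1/2-\vsinf})$.

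In (ii), $\Fnorm{\theta-\theta_0}$ can be of order $\sqrt{\elln}$ on $\Lambda_N$. So even with bounded derivatives, $\sup_\theta\Fnorm{m_N(\theta)}$ is of order $\sqrt{\elln}$, not $\Op(1)$; the paper uses $\E\Fnorm{m_N(\theta)}=O(\sqrt{\elln})$. That is exactly why the deviation must be $\op(\elln^{-1/2})$.

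Summing $O(1/N)$ variances over $\lmn$ components gives $\Op(\sqrt{\elln/N})$ for the moments. The rate $\elln/\sqrt{N}$ appears only after multiplying by $\sqrt{\elln}$.

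Since $m_N-\E m_N$ is centered, its $\elln^{2}/n$ part cannot be explained as an incidental-time-effect bias.
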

	
	\begin{proof}
		(\rn1) ensures the identification. We have
		\[
		\E\sG_{N}(\theta)-\E\sG_{N}(\theta_{0})=
		\varrho_{1N}(\theta)+2\varrho_{2N}(\theta),
		\]
		where 
		\[
		\varrho_{1N}(\theta)=\left[\E(m_{N}(\theta))-\E(m_{N}(\theta_{0}))\right]'\Omega_{N}^{-1}\left[\E(m_{N}(\theta))-\E(m_{N}(\theta_{0}))\right],
		\]
		and
		\[
		\varrho_{2N}(\theta)=\left[\E(m_{N}(\theta))-\E(m_{N}(\theta_{0}))\right]'\Omega_{N}^{-1}\E(m_{N}(\theta_{0})).
		\]
		By the mean-value theorem, $\E(m_{N}(\theta))-\E(m_{N}(\theta_{0}))=D_{N}(\tilde{\theta})(\theta-\theta_{0})$, where $D_{N}$ is defined as equation \eqref{eq:DN} and $\tilde{\theta}$, where $\tilde{\theta}=\theta_{0}+\kappa(\theta-\theta_{0})$ with $|\kappa|<1$. Then, $\varrho_{1N}(\theta)=(\theta-\theta_{0})'D_{N}(\tilde{\theta})'\Omega_{N}^{-1}D_{N}(\tilde{\theta})(\theta-\theta_{0})$.
		From Assumption \ref{ass:id}(\rn2), we have $\spnorm{\Omega_{N}^{-1}}\leq\tfrac{1}{\mumin(\Omega_{N})}=O(1)$.
		Combing Assumption \ref{ass:id}(\rn3), we can find that there exists a constant $C_{\varrho}>0$, such that for all large $(n,T)$, $\varrho_{1N}(\theta)\geq C_{\varrho}\Fnorm{\theta-\theta_{0}}^2$. Hence, if $\Fnorm{\theta-\theta_{0}}\geq\eta$, then $\varrho_{1N}(\theta)\geq C_{\varrho}\eta^2$. 
		Note that 
		\[
		\begin{split}
			|\varrho_{2N}|&=\left|\left[\E(m_{N}(\theta))-\E(m_{N}(\theta_{0}))\right]'\Omega_{N}^{-1/2}\Omega_{N}^{-1/2}\E(m_{N}(\theta_{0}))\right|\leq |\varrho_{1N}|^{1/2}|\E'(m_{N}(\theta_{0}))\Omega_{N}^{-1}\E(m_{N}(\theta_{0}))|^{1/2}\\
			&\leq |\varrho_{1N}|^{1/2}\cdot\Fnorm{\Omega_{N}^{-1/2}\E(m_{N}(\theta_{0}))}
		\end{split}
		\]
		by Cauchy-Schwarz inequality. From Lemma \ref{lem:norm of the moments} in the supplement and Assumption \ref{ass:id}(\rn2), we have $\Fnorm{\Omega_{N}^{-1/2}\E(m_{N}(\theta_{0}))}\leq\spnorm{\Omega_{N}^{-1/2}}\Fnorm{\E(m_{N}(\theta_{0}))}\leq C\elln^{1/2-\vsinf}$, and thus 
		$|\varrho_{2N}(\theta)|\leq C\elln^{1/2-\vsinf}|\varrho_{1N}(\theta)|^{1/2}$. For $\Fnorm{\theta-\theta_{0}}\geq\eta$, we have 
		\[
		|\E\sG_{N}(\theta)-\E\sG_{N}(\theta_{0})|\geq\varrho_{1N}(\theta)-|\varrho_{2N}(\theta)|\geq\varrho_{1N}(\theta)^{1/2}(\varrho_{1N}(\theta)^{1/2}-2C\elln^{1/2-\vsinf}).
		\]
		Since $\varrho_{1N}(\theta)^{1/2}\geq C_{\varrho}^{1/2}\eta^{1/2}$, for all large $(n,T)$, we can have $2C\elln^{1/2-\vsinf}\leq  C_{\varrho}^{1/2}\eta^{1/2}$ as $\elln^{1/2-\vsinf}$ being sufficiently small, and thus
		\[
		\bigl|\E\sG_{N}(\theta)-\E\sG_{N}(\theta_{0})\bigr|\geq C_{\varrho}\eta=:C_{\eta}>0.
		\]
		(\rn2) ensures the uniform convergence. We use the method proposed by \citet[Theorem 1]{andrews1992generic}. 
		First, note that $\sG_{N}(\theta)-\E\sG_{N}(\theta)=H_{1N}(\theta)+2H_{2N}(\theta)$, where
		\[
		H_{1N}(\theta)=\left[m_{N}(\theta)-\E(m_{N}(\theta))\right]'\Omega_{N}^{-1}\left[m_{N}(\theta)-\E(m_{N}(\theta))\right],
		\]
		and
		\[
		H_{2N}(\theta)=\left[m_{N}(\theta)-\E(m_{N}(\theta))\right]'\Omega_{N}^{-1}\E(m_{N}(\theta)).
		\]
		From the Proof of Theorem \ref{thm:gm_pi}, we have
		$\Fnorm{m_{N}(\theta)-\E(m_{N}(\theta))}=\Op\Bigl(\elln^{3/2-\vsinf}+\frac{\elln^{3/2}}{n}+\sqrt{\frac{\elln}{n(T-1)}}\Bigr)$ and $\E\Fnorm{m_{N}(\theta)}=O(\sqrt{\elln})$. Thus, using $\spnorm{\Omega_{N}^{-1}}=O(1)$, we have
		\[
		\begin{split}
			|\sG_{N}(\theta)-\E\sG_{N}(\theta)|&
			\leq|H_{1N}(\theta)|+|H_{2N}(\theta)|
			\leq\spnorm{\Omega_{N}^{-1}}|\left(\Fnorm{m_{N}(\theta)-\E(m_{N}(\theta))}^2+2\Fnorm{m_{N}(\theta)-\E(m_{N}(\theta))}\Fnorm{m_{N}(\theta)}\right)
			\\&
			=\Op\Big(\elln^{2-\vsinf}+\frac{\elln^{2}}{n}+\frac{\elln}{\sqrt{n(T-1)}}\Big)=\op(1).
		\end{split}
		\]
		Second, for any $\theta, \tilde{\theta}\in \Theta_{0}$, we have
		$\sG_{N}(\theta)-\sG_{N}(\tilde\theta)=I_{1N}(\theta,\tilde\theta)+2I_{2N}(\theta,\tilde\theta)$, where
		\[
		I_{1N}(\theta,\tilde\theta)=\left[m_{N}(\theta)-m_{N}(\tilde\theta)\right]'\Omega_{N}^{-1}\left[m_{N}(\theta)-m_{N}(\tilde\theta)\right],
		\]
		and
		\[
		I_{2N}(\theta,\tilde\theta)=\left[m_{N}(\theta)-m_{N}(\tilde\theta)\right]'\Omega_{N}^{-1}m_{N}(\tilde\theta).
		\]
		Then we have $m_{N}(\theta)-m_{N}(\tilde\theta)=D_{N}(\bar\theta)(\theta-\tilde\theta)+\op(1)$, where $\bar\theta=\tilde\theta+\kappa(\theta-\tilde\theta)$ with $|\kappa|<1$ and $\spnorm{m_{N}(\tilde\theta)}=\Op(\sqrt{\elln})$. 
		Define $B_{e}=\mat{\theta,\tilde\theta\in\Theta_{0}:\Fnorm{\theta-\tilde{\theta}}\leq e}$ and $h_{e}=B_{e}\sup_{u\in\Theta_{0}}\Fnorm{m_{N}(u
			)}=B_{e}\cdot\Op(\sqrt{\ell_{n}})$.
		Hence,
		\[
		\begin{split}
			\sup_{\theta,\tilde\theta\in B_{e}}|I_{1N}(\theta,\tilde\theta)|&
			\leq
			\sup_{u\in\Theta_{0}}\spnorm{D_{N}(u)}^2\spnorm{\Omega_{N}^{-1}}B_{e}^2=\Op(B_{e}^2)
			\\
			\sup_{\theta,\tilde\theta\in B_{e}}|I_{2N}(\theta,\tilde\theta)|&
			\leq 
			\sup_{u\in\Theta_{0}}\spnorm{D_{N}(u)}\spnorm{\Omega_{N}^{-1}}h_{e}=O(h_{e})=\Op(\sqrt{\elln}B_{e}).
		\end{split}
		\]
		Then, $\sup_{\theta,\tilde\theta\in B_{e}}|\sG_{N}(\theta)-\sG_{N}(\tilde\theta)|=\Op(B_e^2+h_e)$, and we can choose $B_{e}=C\elln^{-1}$ as $(n,T)\to \infty$ such that $B_{e}+\sqrt{\elln}B_{e}\to 0$. Combining
		$|\sG_{N}(\theta)-\E\sG_{N}(\theta)|=\op(1)$, the consistency follows.
	\end{proof}

	\color{black}

	%%%%%%%%%%%%%%%%%%%%%%%%%%%%%%%%%%%%%%%%%%%%%%%
	To derive the asymptotic normality, we consider the following lemma.
	\begin{lemma}\label{lem:CLT}
		Let $\sU_{N}=\frac{1}{\sqrt{n(T-1)}}\pmb\tau'\Sigep_{\pi_0,gmm}^{-1/2}K_{\pi}m_{\Delta}$, where the $\ell_{\pi}\times\lmn$ matrix
		$K_{\pi}$ is defined in equation \eqref{eq:H_pi},
		\begin{flalign}\label{eq:CLT_rv}
			m_{\Delta}\equiv m_{\Delta}(\theta_{0})=(\bEL_{N}^{*\prime}\bJ_{N}\bP_{N1}^{\prime}\bJ_{N}\bE_{N}^{*},\cdots,\bEL_{N}^{*\prime}\bJ_{N}\bP_{N\lpn}^{\prime}\bJ_{N}\bE_{N}^{*},\bEL_{N}^{*\prime}\bJ_{N}\bQ_{N})',
		\end{flalign}
		and
		$\pmb\tau$ is the $\ell_{\pi}\times 1$ vector such that $\spnorm{\pmb\tau}=1$. Under Assumptions \ref{ass:disterbance}-\ref{ass:IV}, and $\sqrt{\frac{(T-1)\elln^{3}}{n}}+\sqrt{n(T-1)}\elln^{3/2-\vsinf}\to 0$ as $(n,T)\to\infty$, $\sU_{N}\stackrel{d}{\to}N(0,1)$.
	\end{lemma}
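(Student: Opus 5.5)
The plan is to write $\sU_{N}$ as a linear-quadratic form in the transformed innovations and then apply a martingale central limit theorem over a suitable ordering of the $nT$ original innovations $\epsilon_{it}$. Since $K_{\pi}$ is $\ell_{\pi}\times\lmn$ and $\pmb\tau'\Sigep_{\pi_0,gmm}^{-1/2}K_{\pi}$ is a fixed-length (but $\lmn$-column) row vector, we set $a_N=\Sigep_{\pi_0,gmm}^{-1/2}K_{\pi}'\,{}^{\prime}\pmb\tau$ (an $\lmn\times 1$ vector) and write
\[
\sU_{N}=\frac{1}{\sqrt{n(T-1)}}\Bigl(\bEL_{N}^{*\prime}\bJ_{N}\mathbb{P}_{N}\bJ_{N}\bEL_{N}^{*}+\bEL_{N}^{*\prime}\bJ_{N}\bQ_{N}b_{N}\Bigr),
\]
with $\mathbb{P}_{N}=\sum_{j\le\lpn}a_{Nj}\bP_{Nj}'$ and $b_N$ the last $\lqn$ entries of $a_N$. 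Undoing the FOD transformation, $\bEL_{N}^{*}=(F_{T,T-1}'\otimes I_n)\bEL_{N}$ with $\bEL_N=(E_1',\dots,E_T')'$, so $\sU_{N}=\bEL_{N}'\mathcal{A}_N\bEL_{N}+\bEL_{N}'c_N$ for an $nT\times nT$ matrix $\mathcal{A}_N$ and vector $c_N$. Because $\diagM(J_nP_{jt}J_n)=\bzero$ by Lemma \ref{lem:transJ} and $F_{T,T-1}$ has orthonormal columns, the mean of the quadratic part vanishes up to a term that I will verify is negligible. The vector $c_N$ is built from $Q_t$, which contains $Y_{t-1}$ and is therefore predetermined rather than exogenous. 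Hence the ordering is by time first, then by unit, with $\mathcal{F}$ the filtration generated by $(Y_0,\epsilon_{1},\dots)$, and I will use Assumption \ref{ass:IV}(\rn1) for the martingale-difference property of the linear part.

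The steps are as follows. First, I bound $\spnorm{a_N}=O(1)$. This follows from $\Sigep_{\pi_0,gmm}=K_\pi\Omega_NK_\pi'$ together with Assumption \ref{ass:id}(\rn2), since $\|\Omega_N^{1/2}K_\pi'\Sigep^{-1/2}\pmb\tau\|=1$. Combined with \ref{propUB} for the $P_{jt}$ (Assumption \ref{ass:IV}(\rn2)) and $\lpn\sim\elln$, this gives $\sup_t\rcnorm{\sum_j a_{Nj}P_{jt}}=O(\sqrt{\elln})$ at worst; I will try to sharpen this to $O(1)$ using the $\ell_2$ structure. Second, I decompose $\sU_N=\sum_{t}\sum_{i}\zeta_{it}$ with
\[
\zeta_{it}=\frac{1}{\sqrt{N}}\Bigl(\mathcal{A}_{(it),(it)}(\epsilon_{it}^2-\sigma_{it}^2)+2\epsilon_{it}\sum_{(js)<(it)}\mathcal{A}_{(it),(js)}\epsilon_{js}+c_{N,(it)}\epsilon_{it}\Bigr),
\]
which is a martingale difference array. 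The FOD forward-looking structure means $c_{N,(it)}$ involves $Q_s$ for $s\le t$ only after regrouping, so this must be checked carefully using $Y_{t-1}^{(*,-1)}$ replaced by $Y_{t-1}$. Third, I verify the conditional variance condition $\sum\E(\zeta_{it}^2|\mathcal{F}_{it-1})\to_p 1$. Its expectation equals $\pmb\tau'\Sigep^{-1/2}K_\pi\Omega_NK_\pi'\Sigep^{-1/2}\pmb\tau=1$ by construction of $\Omega_N$, up to third- and fourth-moment terms. Those terms vanish because the diagonal of $\bJ_N\mathbb{P}_N\bJ_N$ is zero, and the remainder is controlled by the $\sqrt{(T-1)\elln^3/n}$ rate. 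Fourth, I check the Lyapunov condition $\sum\E|\zeta_{it}|^{2+\delta_\epsilon/2}\to0$ using the $4+\delta_\epsilon$ moments and the row/column-sum bounds. Here Assumption \ref{ass:nT} ($n=o(T^{p/2})$) enters to control the contribution of the initial condition and of $\sup_t$ over $T$ periods.

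The main obstacle is the conditional variance step. The cross term $\sum_{it}\epsilon_{it}^2(\sum_{(js)<(it)}\mathcal{A}_{(it),(js)}\epsilon_{js})^2$ and the interaction of $c_N$, which is random through $Y_{t-1}$, with the quadratic part must concentrate around their means. Their variances scale like $\rcnorm{\cdot}^2\lmn/n$ times factors of $T$ coming from the FOD weights. I would first compute the variance of each centered piece by the standard fourth-moment formula for quadratic forms. The dependence of $Y_{t-1}$ on past $\epsilon$ I would handle via the MA($\infty$) representation with $\Upsilon$ absolutely summable (Assumption \ref{ass:appro_G}(\rn1)), yielding an $O_p(\elln^{3/2}\sqrt{T/n})$ deviation that vanishes under the stated rate. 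Finally, the sieve remainder $r_t^*$ does not appear in $m_\Delta$, since $m_\Delta$ uses $\bEL_N^*$ only. The condition $\sqrt{n(T-1)}\elln^{3/2-\vsinf}\to0$ is therefore needed only to justify replacing the feasible $\Sigep_{\pi_0,gmm}$ and $K_\pi$ at $\theta_0$, which I take as given.
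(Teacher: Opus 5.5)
Your proposal is correct and follows the paper's overall structure. You write $\mathscr{U}_{N}$ as a linear--quadratic form in the innovations, get mean zero from the zero diagonal of Lemma \ref{lem:transJ}, bound the weighted combination of the $P_{jt}$ by $O(\sqrt{\ell_{n}})$, and check the Lyapunov and conditional-variance conditions of the Hall--Heyde martingale CLT. The mean is in fact exactly zero in your representation, because each diagonal block $\sum_{s}F_{ts}^{2}J_{n}P_{s}J_{n}$ has zero diagonal.

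The real difference is how the FOD is handled. The paper replaces $\mathbf{E}_{N}^{*}$ by the raw innovations, citing Lemma \ref{lem:hTt}, and treats $\mathscr{A}_{2N}$ as nonrandom in the conditional-variance step. You instead write $\mathbf{E}_{N}^{*}=(F_{T,T-1}'\otimes I_{n})\mathbf{E}_{nT}$ exactly and regroup, so that the weight on $\epsilon_{it}$, built from $Q_{s}$ with $s\le t$, is $\mathcal{F}_{t-1}$-measurable. Your version is the more rigorous one. After regrouping, the forward-mean part of $E_{t}^{*}$ puts weights of order $\ln\{T/(T-t+1)\}$ on $J_{n}Q_{s}b_{N}$. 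Because $J_{n}Y_{t-1}$ in general has a persistent component driven by $\mathbf{c}_{n}$, that part is of the same order $\sqrt{n(T-1)}$ as the main term. So it is your exact representation, not an $o_{p}(1)$ approximation, that gives a valid martingale difference array. The cost is a law of large numbers for $\frac{1}{N}\sum\sigma_{it}^{2}\bigl(c_{N,(it)}^{2}-\mathrm{E}\,c_{N,(it)}^{2}\bigr)$ and some harmless $\ln T$ factors in row and column sums, which is the obstacle you already identified.

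There are two minor corrections. First, $\epsilon_{it}$ has $4+\delta_{\epsilon}$ moments and is independent of its multiplier, so you can use exponent $4$ in the Lyapunov condition. That gives a bound of order $\ell_{n}^{2}/N$ up to logarithms and removes any need for Assumption \ref{ass:nT}. With exponent $2+\delta_{\epsilon}/2$ you only get $\ell_{n}(\ell_{n}/N)^{\delta_{\epsilon}/4}$, which requires the extra restriction in the paper's footnote.

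Second, you have the roles of the rate conditions slightly wrong. $K_{\pi}$ and $\Sigma_{\pi_0,gmm}$ are population objects, so $\sqrt{n(T-1)}\,\ell_{n}^{3/2-\underline{\varsigma}}\to0$ plays no role in this lemma. The condition $\sqrt{(T-1)\ell_{n}^{3}/n}=\sqrt{n(T-1)}\,\ell_{n}^{3/2}/n\to0$ is inherited from the remainder in Theorem \ref{thm:gm_pi}(ii). Here it is only needed to make conditional-variance fluctuations of order $\ell_{n}/\sqrt{n(T-1)}$ (up to logarithms) vanish. Those fluctuations are not of order $\ell_{n}^{3/2}\sqrt{T/n}$ as you state.
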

	
	\begin{proof}
		\sloppy
		By construction, we have $\E(\sU_{N})=0$ and $\Var(\sU_{N})=1$. 
		Denote 
		$
		\Tau\equiv\pmb\tau'\Sigep_{\pi_0,gmm}^{-1/2}K_{\pi}=(\Tau_{1},...,\Tau_{\lpn},\underbrace{\Tau_{\lpn+1},...,\Tau_{\lmn}}_{\Tau_{\lqn}'})
		$
		be then $1\times\lmn$ vector, then we have
		\par 
		$
		\sU_{N}=\invNsq\left(\bEL_{N}^{*\prime}(\sum_{l=1}^{\lpn}\Tau_{l}\bJ_{N}\bP_{Nl}\bJ_{N})\bE_{N}^{*}+\Tau_{\lqn}'\bQ_{N}^{\prime}\bJ_{N}\bE_{N}^{*}\right)=
		\invNsq\left(\bEL_{N}^{*\prime}\cA_{1N}\bE_{N}^{*}+\cA_{2N}\bE_{N}^{*}\right)
		$,
		where $\cA_{1N}=\sum_{l=1}^{\lpn}\Tau_{l}\bJ_{N}\bP_{Nl}\bJ_{N}$ and $\cA_{2N}=\Tau_{\lqn}'\bQ_{N}'\bJ_{N}$. 
		Thus, we have $\colnorm{\cA_{1N}}=O(\sqrt{\elln})$, $\sum_{l=1}^{N}|\cA_{2N,l}|=\Op(\sqrt{\elln})$.
		This is because $\colnorm{\cA_{1N}}\leq\sum_{l=1}^{\lpn}\colnorm{\Tau_{l}\bJ_{N}\bP_{Nl}\bJ_{N}}\leq\sum_{l=1}^{\lpn}|\Tau_{l}|\colnorm{\bJ_{N}}^2\colnorm{P_{Nl}}=O(\sqrt{\elln})$ by Assumption \ref{ass:IV}, and 
		$\sum_{l=1}^{N}|\cA_{2N,l}|=\Op(\sqrt{\elln})$ by an analogous argument.
		Note that $\E(\bEL_{N}^{*\prime}\cA_{1N}\bE_{N}^{*})=\sum_{l=1}^{\lpn}\Tau_{l}\E\big(\bEL_{N}^{*\prime}(\bJ_{N}\bP_{Nl}\bJ_{N})\bE_{N}^{*}\big)=0$, from Lemma \ref{lem:hTt} in the supplement, Assumption \ref{ass:nT}, \ref{ass:sieve-basis}, and $\sqrt{\frac{(T-1)\elln^{3}}{n}}+\sqrt{n(T-1)}\elln^{3/2-\vsinf}\to 0$ as $(n,T)\to\infty$
		we have
		$\sU_{N}=\invNsq\sum_{l=1}^{N}\calX_{l}+\op(1)$, where \begin{flalign}\label{eq:calX}
			\calX_{l}=\cA_{2N,l}\epsilon_{l}+2\epsilon_{l}\sum_{j=1}^{l-1}\cA_{1N,lj}\epsilon_{j}=\epsilon_{l}(\cA_{2N,l}+2\sum_{j=1}^{l-1}\cA_{1N,lj}\epsilon_{j}).
		\end{flalign} 
		Consider the $\sigma$-field $\mathcal{F}_{l}=\sigma(\epsilon_{1},...,\epsilon_{l})$, where $\epsilon_{l}=\epsilon_{it}$, with $l=i+n(t-1)$ and $1\leq i\leq n$, $1\leq t\leq T$. Then $\cF_{l-1}\subseteq\cF_{l}$ and $\calX_{l}$ is $\cF_{l}$-measurable\footnote{We use the trivial $\sigma$-field to define $\cF_{0}$.}. Hence, $\E(\calX_{l}|\cF_{l-1})=0 \ \text{a.s.}$, and $\mat{(\calX_{l},\cF_{l})|1\leq l\leq N}$ forms a martingale difference array. Then, we define 
		\begin{flalign}\label{eq:calX_normal}
			\calX_{l}^{\circ}=\frac{1}{\sqrt{N}}\calX_{l}.
		\end{flalign}
		From \citet[Chapter 3.2]{hall2014martingale}, sufficient conditions%
		\footnote{We need to verify $\sum_{l=1}^{N}\E\left(\calX_{l}^{\circ2}\cI{|\calX_{l}^{\circ}|>e}\right)\stackrel{p}{\rightarrow} 0$ for some constant $e$. By the elementary truncation bound $\calX_{l}^{\circ2}\cI{|\calX_{l}^{\circ}|>e}\leq e^{-\delta}|\calX_{l}^{\circ}||^{2+\delta}$, and use the Markov inequality, we can verify the sufficient condition \eqref{eq:CLT1}.} % 
		to verify the martingale difference array CLT are
		\begin{flalign}\label{eq:CLT1}
			\sum_{l=1}^{N}\E|\calX_{l}^{\circ}|^{2+\delta}\stackrel{p}{\rightarrow} 0
		\end{flalign}
		for some $\delta>0$,
		and 
		\begin{flalign}\label{eq:CLT2}
			\sum_{l=1}^{N}\left[\E(\calX_{l}^{\circ2}|\cF_{l-1})-\E(\calX_{l}^{\circ2})\right]\stackrel{p}{\rightarrow} 0.
		\end{flalign}
		% To verify \eqref{eq:CLT1}, we use $\delta=2$.
		For \eqref{eq:CLT1}, denote $u=2+\delta$, we have $\calX_{l}^{\circ}=N^{-u/2}\colnorm{\calX_{l}}^{u}$.
		and $|\calX_{l}|^{u}\leq C_u|\epsilon_{l}|^{u}|\cA_{2N,l}|^{u}+2\colnorm{\cA_{1N}}^{u}\bigl(\max_{j<l}|\epsilon_{j}|^{u}\bigr)$ by Minkowski inequality and the convex inequality that $(a+b)^u\leq C_u(a^u+b^u)$, where $u>1$ and $C_u=2^{u-1}$.
		Thus, $\E(\calX_{l}|\cF_{l-1})\leq C_{\epsilon}^2C_uC\elln^{u/2}=O(\elln^{u/2})$ and $\sum_{l=1}^{N}\E|\calX_{l}^{\circ}|^{u}=O(N)\cdot O(N^{-u/2})\cdot O(\elln^{u/2})=O\Bigl(\frac{\elln^{2+\delta}}{(n(T-1))^{\delta}}\Bigr)=o(1)$ according to Assumptions \ref{ass:nT}, \ref{ass:disterbance}, and the rate conditions in Theorem \ref{thm:gm_pi}(ii).% 
		\footnote{\label{foot:clt}
			In Theorem \ref{thm:gm_pi}(ii), we need $\sqrt{\frac{(T-1)\elln^{3}}{n}}\to 0$ which implies $\elln=o\big((\frac{n}{T})^{1/3}\big)$. Then, $O\bigl(\frac{\elln^{2+\delta}}{(n(T-1))^{\delta}}\bigr)=o\bigl(n^{(1-\delta)/3}T^{-(1+2\delta)/3}\bigr)$. Using $n=o(T^{p/2})$, $n^{(1-\delta)/3}T^{-(1+2\delta)/3}$ goes to zero provided $(p/2)(1-\delta)<1+2\delta$. Therefore, if $0<\delta<1$, we need $\delta>\frac{p-2}{p+4}$.}
		\par 
		For \eqref{eq:CLT2}, from \eqref{eq:calX} and \eqref{eq:calX_normal}, we have
		$\E(\calX_{l}^{\circ 2}|\cF_{l-1})=N^{-1}\sigma_{l}^2(\E\cA_{2N,l}+2\sum_{j<l}\cA_{1N,lj}\epsilon_{j})^2$, and 
		$\E(\calX_{l}^{\circ 2})=N^{-1}\sigma_{l}^2(\E\cA_{2N,l}^2+4\sum_{j<l}\cA_{1N,lj}^2\sigma_{j}^2)$. Thus,
		\[
		\begin{split}
			\E(\calX_{l}^{\circ2}|\cF_{l-1})-\E(\calX_{l}^{\circ2})&
			=4N^{-1}\sigma_{l}^2\Big\{\cA_{2N,l}\sum_{j<l}\cA_{1N,lj}\epsilon_{j}+4\big[\big(\sum_{j<l}\cA_{1N,lj}\epsilon_{j}\big)^2-\sum_{j<l}\cA_{1N,lj}\sigma_{j}^2\big]\Big\}\\&
			=4N^{-1}\sigma_{l}^2\left[\cA_{2N,l}\sum_{j<l}\cA_{1N,lj}\epsilon_{j}+\sum_{j<l}\cA_{1N,lj}(\epsilon_{j}^2-\sigma_{j}^2)+2\sum_{j<l}\sum_{k<l}\cA_{1N,lj}\cA_{1N,lk}\epsilon_{j}\epsilon_{k}\right]
		\end{split}
		\]
		and 
		\[
		\sum_{l=1}^{N}\left[\E(\calX_{l}^{\circ2}|\cF_{l-1})-\E(\calX_{l}^{\circ2})\right]
		=4(\Delta_{1}+\Delta_{2}+2\Delta_{3}),
		\]
		where %
		\[
		\begin{split}
			&\Delta_{1}=\frac{1}{N}\sum_{l=1}^{N}\sigma_{l}^2\cA_{2N,l}\sum_{j<l}\cA_{1N,lj}\epsilon_{j}, 
			\Delta_{2}=\frac{1}{N}\sum_{l=1}^{N}\sigma_{l}^2\sum_{j<l}\cA_{1N,lj}^2(\epsilon_{j}^2-\sigma_{j}^2) 
			\ \text{and}, \ \\
			&\Delta_{3}=\frac{1}{N}\sum_{l=1}^{N}\sigma_{l}^2\sum_{j<k<l}\cA_{1N,lj}\cA_{1N,lk}\epsilon_{j}\epsilon_{k},
		\end{split}
		\]
		and thus $\E(\Delta_{k})=0$ for $k=1,2,3$.
		For $\Delta_{1}$, rewrite by swapping sums:
		$\Delta_{1}=\frac{1}{N}\sum_{l=1}^{N}\sigma_{l}^2\cA_{2N,l}\sum_{j<l}\cA_{1N,lj}\epsilon_{j}=\frac{1}{N}\sum_{j=1}^{N-1}a_{j}\epsilon_{j}$. where $a_{j}=\sum_{l>j}\sigma_{l}^2\cA_{2N,l}\cA_{1N,lj}$.
		Thus, and $\E(\Delta_{1})=\Var(\Delta_{1})\leq\frac{C_{\epsilon}^{2/(4+\delta_{\epsilon})}}{N^2}\sum_{j=1}^{N-1}a_{j}^2$.
		This is because $\Var(\epsilon_{j})\leq\E(\epsilon_{j}^2)\leq(\E|\epsilon|^{4+\delta_{\epsilon}})^{2/(4+\delta_{\epsilon})}$ by the norm inequality.
		Using 
		\[
		|a_{j}|\leq C_{\epsilon}^{2/(4+\delta_{\epsilon})}\sum_{l>j}|\cA_{2N,l}|\sum_{l>j}|\cA_{1N,lj}|=\Op(\elln),
		\]
		and $\sum_{j=1}^{N-1}c_{j}^2=\Op(N\elln^{2})$, we have $\Var(\Delta_{1})=O\left(\frac{\elln^{2}}{N}\right)$ and $\Fnorm{\Delta_{1}}=\Op\left(\sqrt{\frac{\elln^{2}}{n(T-1)}}\right)$.
		For $\Delta_{2}$, rewrite by swapping sums:
		$\Delta_{2}=\frac{1}{N}\sum_{j=1}^{N-1}a_{2j}(\epsilon_{j}^2-\sigma_{j}^2)$, where $a_{2j}=\sum_{l>j}\sigma_{l}^2\cA_{1N,lj}^2$. Note that 
		\[
		\Var(\epsilon_{j}^2-\sigma_{j}^2)=\Var(\epsilon_{j}^2)\leq\E(\epsilon_{j}^4)\leq C_{\epsilon}^{4/(4+\delta_{\epsilon})},
		\]
		we have $\Fnorm{\Delta_{2}}=\Op\left(\sqrt{\frac{\elln^{2}}{n(T-1)}}\right)$.
		For $\Delta_{3}$, rewrite by swapping sums:
		$\Delta_{3}=\frac{1}{N}\sum_{1\leq j<k\leq N-1}\epsilon_{j}\epsilon_{k}a_{3jk}$, where $a_{3jk}=\sum_{l>k}\sigma_{l}^2\cA_{1N,lj}\cA_{1N,lk}$. Hence, 
		$\Delta_{3}^2=\frac{1}{N^2}\sum_{j<k}\sum_{r<s}a_{3jk}a_{3rs}\epsilon_{j}\epsilon_{k}\epsilon_{r}\epsilon_{s}
		=\frac{1}{N}\sum_{j<k}a_{3jk}^2\epsilon_{j}^2\epsilon_{k}^2+2\sum_{(j<k)<(r<s)}\epsilon_{j}\epsilon_{k}\epsilon_{r}\epsilon_{s}$
		where $(j<k)<(r<s)$ means sum over distinct ordered pairs, so each cross term is counted once. If $\mat{j,k}\cap\mat{r,s}=\varnothing$ or $j=r$ but $k\neq s$, then $\E\epsilon_{j}\epsilon_{k}\epsilon_{r}\epsilon_{s}=0$. The only nonzero part is the paired case $(j,k)=(r,s)$ with $j<k$ and $r<s$. Hence,
		$\E(\Delta_{3}^2)=\frac{1}{N^2}\sum_{j<k}a_{3ij}^2\E\epsilon_{i}^2\E\epsilon_{j}^2\leq\frac{C_{\epsilon}^{4/(4+\delta_\epsilon)}}{N^2}\sum_{j<k}a_{3ij}^2$. 
		By the Cauchy-Schwarz inequality, we have 
		\[
		a_{3ij}^2=(\sum_{l>k}\sigma_{l}^2\cA_{1N,lj}\cA_{1N,lk})^2\leq C_{\epsilon}^{4/(4+\delta_\epsilon)}(\sum_{l}b_{lj}^2)(\sum_{l}b_{lk}^2)\leq C_{\epsilon}^{4/(4+\delta_\epsilon)}(\sup_{l}\sum_{l}|b_{lj}|)^2(\sup_{l}\sum_{l}|b_{lk}|)^2=O(\elln),
		\] 
		$\E(\Delta_{3}^2)=O(\frac{\elln^{2}}{N})$ and thus $\Fnorm{\Delta_{3}}=\Op\left(\sqrt{\frac{\elln^{2}}{n(T-1)}}\right)$. Combining Assumptions \ref{ass:nT}, \ref{ass:disterbance} and $\sqrt{\frac{(T-1)\elln^{3}}{n}}\to 0$ as $(n,T)\to\infty$, we have $\Delta_{k}\stackrel{p}{\to} 0$ for $k=1,2,3$. The asymptotic normality of $\sU_{N}$ follows.
	\end{proof}
	
	%%%%%%%%%%%%%%%%%%%%%%%%%%%%%%%%%%%%%%%%%%%%%%%%%%%%%%%
	We then establish the consistent estimator for $\Sigma_{t}$.
	\begin{lemma}\label{lem:est_Sigma}
		Denote $\mathscr{W}_{t}(\theta)=J_{n}(V_{t}(\theta)-\frac{1}{T}\sum_{h=1}^{T}V_{h}(\theta))$ for any $\theta\in\Theta$. We can estimate $\sigma_{i}^2$ by $\hat\sigma_{i}^2=\frac{1}{T}\sum_{t=1}^{T}\hat\omega_{it}^2$ in the case of \ref{var:hetei}, and estimate $\sigma_{t}^2$ by $\hat{\sigma}_{t}^2=\frac{1}{n}\sum_{i=1}^{n}\hat\omega_{it}^2$ in the case of \ref{var:hetet}, where $\hat\omega_{it}=\omega_{it}(\hat{\theta})$,  and $\omega_{it}=\omega_{it}(\theta_0)$ is the $i$-th entry of $\mathscr{W}_{t}$ and $\hat{\theta}$ is a consistent estimator of the true estimand $\theta_0$ with $\Fnorm{\hat\theta-\theta_0}=\op(1)$. 
		Under Assumptions \ref{ass:nT}, \ref{ass:sieve-basis}, and \ref{ass:disterbance}-\ref{ass:id}, and as $(n,T) \to \infty$, the following hold:\\
		(\rn1) In the case of \ref{var:hetei}, $|\hat{\sigma}_i^2-\sigma_i^2|=\op(1)$.
		\\
		(\rn2) In the case of \ref{var:hetet}, $|\hat{\sigma}_t^2-\sigma_t^2|=\op(1)$.
	\end{lemma}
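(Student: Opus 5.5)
The argument splits $\hat\omega_{it}^2-\sigma_i^2$ (resp. $\sigma_t^2$) into an infeasible part at $\theta_0$ and an estimation-error part. First write $\omega_{it}=\omega_{it}(\theta_0)$ explicitly. At $\theta_0$, \eqref{eq:mess_appro_Et} gives
\[
V_t(\theta_0)=E_t+r_t+S_3(\mathbf{c}_n+\alpha_t l_n).
\]
The time effect is annihilated by $J_n$, and the individual effect is removed by the within-time demeaning $V_t-\frac1T\sum_h V_h$. Hence
\[
\mathscr{W}_t(\theta_0)=J_n(E_t-\bar E)+J_n(r_t-\bar r),
\]
with $\bar E=\frac1T\sum_h E_h$ and $\bar r$ defined analogously. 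I will therefore write
\[
\hat\omega_{it}^2-\sigma_i^2=(\epsilon_{it}^2-\sigma_i^2)+(\omega_{it}^2-\epsilon_{it}^2)+(\hat\omega_{it}^2-\omega_{it}^2)
\]
and control the time average (case \ref{var:hetei}) or the cross-sectional average (case \ref{var:hetet}) of each piece.

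\textbf{Step 1 (LLN).} For \ref{var:hetei}, the variables $\frac1T\sum_t(\epsilon_{it}^2-\sigma_i^2)$ are averages of independent mean-zero terms with bounded second moment by Assumption \ref{ass:disterbance}, so they are $\Op(T^{-1/2})$. For \ref{var:hetet}, the analogous average over $i$ is $\Op(n^{-1/2})$.

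\textbf{Step 2 (demeaning and sieve remainder).} Expand $\omega_{it}-\epsilon_{it}$ as $-\bar\epsilon_i-\frac1n\sum_j(\epsilon_{jt}-\bar\epsilon_j)+[J_n(r_t-\bar r)]_i$. The term $\bar\epsilon_i$ is $\Op(T^{-1/2})$, and the cross-sectional mean is $\Op(n^{-1/2})$. For the sieve remainder I invoke Lemma \ref{lem:boundB}, by which the $R_k$ satisfy Property $\Op(\elln)$ with $\ell_n^{-\vsinf}$ decay. Combined with $\Fnorm{Y_t}=\Op(\sqrt n)$, this gives $\frac1n\sum_i r_{it}^2\le \frac1n\Fnorm{r_t}^2=\Op(\elln^{-2\vsinf})$. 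For fixed $i$, I use the row-sum bound $\rcnorm{R_k}=\Op(\elln^{-\vsinf})$ together with uniformly bounded second moments of $y_{jt}$, which come from $\rcnorm{\Upsilon}<\infty$; this gives $\frac1T\sum_t r_{it}^2=\op(1)$. The cross term $2\epsilon_{it}(\omega_{it}-\epsilon_{it})$ is then handled by Cauchy--Schwarz.

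\textbf{Step 3 (plug-in).} Write $\hat\omega_{it}-\omega_{it}=[J_n(V_t(\hat\theta)-V_t(\theta_0)-\overline{\cdots})]_i$. Since $V_t(\theta)$ is linear in $\pi$ and smooth in $\lambda$ through $S_k(\lambda)$, a mean-value expansion gives $V_t(\hat\theta)-V_t(\theta_0)=\mathcal{D}_t(\bar\theta)(\hat\theta-\theta_0)$. Here the columns of $\mathcal{D}_t$ are built from $Y_t,Y_{t-1},X_t$ premultiplied by $S_3$, by $\partial S_k/\partial\lambda_{kl}$ (which involve the $\varPhi_{kl}$), and so on; all of these have uniformly bounded row sums by Assumption \ref{ass:appro_G}(ii) and Lemma \ref{lem:boundB}. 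Then $|\hat\omega_{it}-\omega_{it}|\le \Fnorm{\mathcal{D}_{t,i\cdot}}\,\Fnorm{\hat\theta-\theta_0}$. Averaging the square gives a factor $\frac1T\sum_t\Fnorm{\mathcal{D}_{t,i\cdot}}^2$, which is $\Op(\elln)$ by Assumption \ref{ass:sieve-basis}(ii) (basis norm $\le C_\ell\sqrt{\elln}$).

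\textbf{Main obstacle.} The hard part is Step 3: the parameter dimension grows, so plain consistency $\Fnorm{\hat\theta-\theta_0}=\op(1)$ must beat the $\sqrt{\elln}$ factor. I would first try to use the rate available from the proof of Theorem \ref{thm:gm_pi}, namely $\Fnorm{\hat\theta-\theta_0}=\Op(\elln^{3/2-\vsinf}+\elln^{3/2}/n+\sqrt{\elln/N})$, so that $\elln\Fnorm{\hat\theta-\theta_0}^2\to0$ under the stated rate conditions. If that rate is not available for a generic consistent $\hat\theta$, the fallback is to bound the $\lambda$-block more cleverly: write $\sum_l(\hat\lambda_{kl}-\lambda_{kl})\varPhi_{kl}=\hat\Xi_k-\Xi_k$, whose row-sum norm is controlled by $\sup_d|\hat g_k(d)-\xi_k(d)|$, and invoke the uniform consistency argument behind Theorem \ref{thm:gm_gk}(i). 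This avoids the dimension factor entirely. The cross term $2\omega_{it}(\hat\omega_{it}-\omega_{it})$ is again handled by Cauchy--Schwarz, which concludes both (i) and (ii).
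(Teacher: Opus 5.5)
Your decomposition is the paper's. Steps 1--2 are its $a_1=a_{11}+a_{12}+a_{13}$: LLN, demeaning terms, sieve remainder, and Cauchy--Schwarz for the cross term. Step 3 together with Cauchy--Schwarz gives its $a_2$ and $a_3$. Your fixed-$i$ row-sum bound on $\frac1T\sum_t r_{it}^2$ fills in what the paper only asserts as $a_{12}=\Op(\elln^{-2\vsinf})$. The properties of $R_k$ you need are in Lemma \ref{lem:Rr}(i), not Lemma \ref{lem:boundB}.

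The substantive difference is $a_2$, and there your diagnosis is correct while the paper's step is not. The paper asserts $\frac1T\sum_t\Fnorm{\partial\omega_{it}/\partial\theta'}^2=O(1)$ ``by Assumption \ref{ass:id}'', but that assumption points the other way. Averaged over $(i,t)$, $\Fnorm{\partial\omega_{it}/\partial\lambda_1'}^2$ equals, up to constants, the trace of the $\lambda_1$-block of the Gram matrix in Assumption \ref{ass:id}(i). Its eigenvalues are bounded below, so this average is of order $\ell_1$, not $O(1)$.

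Nor can another argument save the lemma under mere consistency in case (i). Let $j$ be a neighbour of $i$ with $\Fnorm{\bphi_1(d_{ij})}\asymp\sqrt{\elln}$, which is typical for orthonormalized bases. Take $g_3\equiv0$, and add $\elln^{-1/4}\bphi_1(d_{ij})/\Fnorm{\bphi_1(d_{ij})}$ to the $\lambda_1$-block of $\theta_0$ (or of any rate-consistent estimator). Then $\Fnorm{\hat\theta-\theta_0}\to0$, yet $\hat g_1(d_{ij})-\xi_{10}(d_{ij})\asymp\elln^{1/4}$. In the SAR case, $\hat\omega_{it}-\omega_{it}=-[J_n(\hat\Xi_1-\Xi_1)(Y_t-\bar Y)]_i$ with $\bar Y=\frac1T\sum_hY_h$. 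This puts a weight of order $\elln^{1/4}$ on $y_{jt}$, whose innovation variance is bounded below, so $\hat\sigma_i^2$ diverges.

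An extra hypothesis is therefore unavoidable. What you actually prove is the lemma under $\elln\Fnorm{\hat\theta-\theta_0}^2=\op(1)$. The rates in the proof of Theorem \ref{thm:gm_pi} deliver this under the conditions of Theorem \ref{thm:gm_pi}(ii) or Theorem \ref{thm:gm_gk}(i). The same hypothesis also gives $\rcnorm{\bar\Xi_k-\Xi_k}\le C\sqrt{\elln}\,\Fnorm{\bar\lambda_k-\lambda_{k0}}=\op(1)$ at the mean-value point. Your Step 3 tacitly needs this for the bounded row sums there.

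Two caveats remain.
\begin{itemize}
\item Your fallback does not avoid the dimension factor. Theorem \ref{thm:gm_gk}(i) rests on $\sup_d|\hat g_k(d)-\xi_{k0}(d)|\le C\sqrt{\elln}\,\Fnorm{\hat\lambda_k-\lambda_{k0}}$ plus the GMM rate. It is the same condition in disguise and is available only for the GMM estimators. Its merit is that it isolates the natural hypothesis: $\Fnorm{\hat\pi-\pi_0}=\op(1)$ and $\sup_d|\hat g_k(d)-\xi_{k0}(d)|=\op(1)$. That suffices via your row-sum argument, under the bounded-neighbour reading of Assumption \ref{ass:appro_G}(ii).
\item Mind the order of use. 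In Algorithm 1 the lemma is first applied at $\hat\theta_{ts}$ to build $\tilde\Omega_N$, which is an input to Theorem \ref{thm:gm_pi}. So the rate must first be derived for the 2SLSE, using the same expansion with the known weight $\bm{M}_Q$, hence without circularity. Only afterwards can it be used for $\hat\theta_{ogmm}$ in the BGMM step.
\end{itemize}
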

	
	\begin{proof}
		We define the centered components:
		$\Delta\epsilon_{it}=\epsilon_{it}- \bar{\epsilon}_{i\cdot}-\bar{\epsilon}_{\cdot t}+\bar{\epsilon}_{\cdot\cdot}$, where $\bar{\epsilon}_{i\cdot}=
		\frac{1}{T}\sum_{t=1}^{T}\epsilon_{it}$, $\bar{\epsilon}_{\cdot t}=
		\frac{1}{n}\sum_{i=1}^{n}\epsilon_{it}$ and  
		$\bar{\epsilon}_{\cdot\cdot}=\frac{1}{nT}\sum_{i,t}\epsilon_{it}$. And $\Delta r_{it}=r_{it}-\bar{r}_{i\cdot}-\bar{r}_{\cdot t}+\bar{r}_{\cdot\cdot}$ is defined by similar.
		The proof of (\rn2) follows similarly to that of (\rn1), so we will only provide the proof for (\rn1) here.
		Thus, $\omega_{it}=\Delta\epsilon_{it}+\Delta r_{it}$ and $\hat\omega_{it}=\omega_{it}+(\hat\omega_{it}-\omega_{it})$. We have $\hat\omega_{it}^2=\omega_{it}^2+(\hat\omega_{it}-\omega_{it})^2+2\omega_{it}(\hat\omega_{it}-\omega_{it})$. 
		Thus, for (\rn1),
		\[
		\hat\sigma_{i}^2=\dfrac{1}{T}\sum_{t=1}^{T}\omega_{it}^2+
		\dfrac{1}{T}\sum_{t=1}^{T}(\hat\omega_{it}-\omega_{it})^2+
		\dfrac{2}{T}\sum_{t=1}^{T}\omega_{it}(\hat\omega_{it}-\omega_{it})=a_1+a_2+a_3.
		\]
		For $a_1$, we have 
		\[
		a_1=\frac{1}{T}\sum_{t=1}^{T}(\Delta\epsilon_{it})^2+\frac{1}{T}\sum_{t=1}^{T}(\Delta r_{it})^2+\frac{2}{T}\Delta\epsilon_{it}\Delta r_{it}=a_{11}+a_{12}+a_{13}.
		\]
		Note that $a_{11}
		=\frac{1}{T}\sum_{t=1}^T\epsilon_{it}^2
		+(\bar{\epsilon}_{i\cdot})^2 
		+\frac{1}{T}\sum_{t=1}^T\bar{\epsilon}_{\cdot t}^2
		+\bar{\epsilon}_{\cdot\cdot}^2
		+\text{Cross Terms}$.
		For the single square terms, $\E\left[\frac{1}{T}\sum_{t=1}^T\epsilon_{it}^2\right]=\sigma_{i}^2$, $\E[\bar{\epsilon}_{i\cdot}^2]=\E\left[ \left( \frac{1}{T} \sum_{t=1}^T \epsilon_{it} \right)^2\right]=\frac{1}{T^2}\sum_{t=1}^T \E[\epsilon_{it}^2]+\frac{2}{T^2}\sum_{t<s}\E[\epsilon_{it}\epsilon_{is}]=\frac{\sigma_i^2}{T}=O(\frac{1}{T})$. 
		Similarly, $\E\left[\frac{1}{T}\sum_{t=1}^T\bar{\epsilon}_{\cdot t}^2\right]=O(\frac{1}{n})$ and $\E\left[\bar{\epsilon}_{\cdot\cdot}^2\right]=O(\frac{1}{nT})$. By the Cauchy-Schwarz inequality, the cross terms are bounded by the product of the magnitudes. For example, the cross term with the overall mean is: 
		\[
		\frac{1}{T}\sum_{t=1}^T\epsilon_{it} \bar{\epsilon}_{\cdot\cdot}= \bar{\epsilon}_{\cdot\cdot} (\frac{1}{T} \sum \epsilon_{it})=\bar{\epsilon}_{\cdot \cdot} \cdot \bar{\epsilon}_{i\cdot}
		=\Op(\frac{1}{\sqrt{nT}}) \cdot \Op(\frac{1}{\sqrt{T}})=\Op(\frac{1}{\sqrt{n}T}).
		\]
		Hence, $a_{1}=\sigma_{i}^2+\op(1)$.
		For $a_{12}$ and $a_{13}$, we can show that $a_{12}=\Op(\elln^{-2\vsinf})$ and $|a_{13}|\leq 2\sqrt{a_{11}}\sqrt{a_{12}}=\op(1)$.
		\par 
		Using the Mean Value Theorem, we have $\hat\omega_{it}-\omega_{it}=\frac{\partial\omega_{it}}{\partial\theta'}\big|_{\theta=\bar\theta}(\hat\theta-\theta_0)$. Thus,
		\[
		a_{2}\leq \Fnorm{\hat\theta-\theta_0}^2\cdot \frac{1}{T}\sum_{t=1}^{T}\bFnorm{\frac{\partial\omega_{it}}{\partial\theta'}\big|_{\theta=\bar\theta}}^2=\op(1)\cdot O(1)=\op(1)
		\]
		by Assumption \ref{ass:id} and the consistency of $\hat\theta$. We can show that $|a_{3}|=\op(1)$ By the Cauchy-Schwarz inequality. Combining all above, we have $|\hat{\sigma}_i^2-\sigma_i^2|=\op(1)$.
	\end{proof}
	\begin{lemma}\label{lem:Var_moments}
		Under Assumptions \ref{ass:nT}, \ref{ass:sieve-basis}, and \ref{ass:disterbance}-\ref{ass:id} as $(n,T)\to\infty$, $\Fnorm{\hat{\Omega}_{t}-\Omega_{t}}=\op(1)$ where $\tilde{\Omega}_{t}$ is evaluated by the consistent estimator $\hat{\theta}$ with $\Fnorm{\hat\theta-\theta_0}=\op(1)$.
	\end{lemma}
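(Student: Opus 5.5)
The plan is to treat the two diagonal blocks of $\Omega_{N}=\invN\Diag(\Varm_{N1},\Varm_{N2})$ separately. The statement is written with a subscript $t$, but we read it as the consistency of the feasible weight matrix, with $\Sigep_{N}$ replaced by $\tilde{\Sigep}_{N}=\Diag_{t=1}^{T-1}(\tilde\Sigma_t)$ and $\tilde\Sigma_t=\Sigma_t(\hat\theta)$. In the BGMM version, $\bJ_{N}$ is also replaced by $\JSLhat{N}$, and $\bP_{Nj}$, $\bQ_{N}$ by their plug-in versions $\PLhat{t}$, $\QLhat{t}$.

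\textbf{Step 1: the variance estimates.} Lemma \ref{lem:est_Sigma} gives $|\hat\sigma_i^2-\sigma_i^2|=\op(1)$ under \ref{var:hetei} and $|\hat\sigma_t^2-\sigma_t^2|=\op(1)$ under \ref{var:hetet}. We will strengthen this to uniformity: $\max_i|\hat\sigma_i^2-\sigma_i^2|=\op(1)$, or $\sup_t$ in the other case. For this we reuse the decomposition $a_1+a_2+a_3$ from that proof, with the $4+\delta_\epsilon$ moment of Assumption \ref{ass:disterbance} and a union bound. This yields $\spnorm{\tilde\Sigep_N-\Sigep_N}=\op(1)$. Because the variances are bounded away from zero, it also yields $\spnorm{\tilde\Sigep_N^{-1/2}-\Sigep_N^{-1/2}}=\op(1)$.

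\textbf{Step 2: the projection.} From the explicit rank-one form \eqref{eq:MSL}, we get $\sup_t\spnorm{\MSLhat{t}-\MSL{t}}=\op(1)$, and hence $\sup_t\spnorm{\JSLhat{t}-\JSL{t}}=\op(1)$.

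\textbf{Step 3: the linear block.} Write
\[
\invN\bigl(\hat\Varm_{N2}-\Varm_{N2}\bigr)=\invN\hat\bQ_N'(\hat{\mathbf{G}}-\mathbf{G})\hat\bQ_N+\invN(\hat\bQ_N-\bQ_N)'\mathbf{G}\hat\bQ_N+\invN\bQ_N'\mathbf{G}(\hat\bQ_N-\bQ_N),
\]
where $\mathbf{G}$ stands for $\bJ_N\Sigep_N\bJ_N$ or $\JSL{N}$, and $\hat{\mathbf{G}}$ for its estimate. For the first term, its spectral norm is at most $\spnorm{\hat{\mathbf{G}}-\mathbf{G}}\cdot\spnorm{\invN\hat\bQ_N'\hat\bQ_N}=\op(1)\cdot\Op(1)$, using Assumption \ref{ass:IV}(i),(iii). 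For the instrument error, we apply the mean value theorem in $\theta$ to \eqref{eq:best Qt}. The derivatives of $S_k$, $S_k^{-1}$ and $\bestY_t$ with respect to $\theta$ are built from $\varPhi_{kl}$, which are UB by Assumption \ref{ass:appro_G}(ii). This gives $\invN\Fnorm{\hat\bQ_N-\bQ_N}^2=\Op(\Fnorm{\hat\theta-\theta_0}^2)$ times a factor polynomial in $\elln$.

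\textbf{Step 4: the quadratic block.} Each entry $\invN\tr(\cdot)$ is a trace of a product of block-diagonal matrices that are UB in the sense of \ref{propUB}. Replacing one factor at a time by its estimate gives the bound
\[
\bigl|\invN\tr(\cdots\hat{H}\cdots)-\invN\tr(\cdots H\cdots)\bigr|\le \spnorm{\hat H-H}\prod\spnorm{\text{other factors}}.
\]
This uses $|\tr(AB)|\le \spnorm{A}\,\tr|B|$ and $\tr|\cdot|/N=O(1)$ for the UB matrices. The $\diamond$-adjusted matrices \eqref{eq:best Pt} depend continuously on $(\theta,\Sigma_t)$ through Lemma \ref{lem:transJ}, so they inherit the $\op(1)$ error.

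\textbf{Main obstacle: aggregation over growing dimension.} The statement is in Frobenius norm, and the blocks have dimension $\lpn,\lqn\sim\elln\to\infty$, so entrywise $\op(1)$ bounds are not enough. We will need a rate: $\Fnorm{\hat\theta-\theta_0}=\Op(\sqrt{\elln/N}+\elln^{-\vsinf})$ from the proof of Theorem \ref{thm:gm_pi}, together with the uniform $\sigma$ rate. Summing over $\elln^2$ entries then gives a bound of order $\elln\sqrt{\elln/N}+\elln^{1-\vsinf}$ plus the variance-estimation error times $\elln$. The first two pieces vanish under the rate conditions of Theorem \ref{thm:gm_pi}(ii). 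For the variance piece, we would try to show that $\max_i|\hat\sigma_i^2-\sigma_i^2|=\Op(T^{-1/2}\sqrt{\ln n})$, and then invoke Assumption \ref{ass:nT} and $\elln^3=o(n/T)$ to absorb the extra factor $\elln$.
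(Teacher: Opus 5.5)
Your plug-in perturbation route differs from the paper's, which is only a sketch. The paper appeals to the orders of $\mathcal{J}_1,\mathcal{J}_2$ and to the residual expansion of Lemma \ref{lem:est_Sigma}, asserts $\|\tilde\Omega_N-\Omega_N\|=O_p(\|\hat\theta-\theta_0\|)+O_p(n^{-1})+O_p(T^{-1})$ with no dimension factor, and leaves the details ``upon request''. You are right that aggregation over the growing dimension $\ell_m\sim\ell_n$ is the real issue. However, your way of handling it has a genuine gap.

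\textbf{Where the proposal fails.} You channel all variance-estimation error through $\|\tilde{\mathbf{\Sigma}}_N-\mathbf{\Sigma}_N\|_{\mathrm{sp}}=\max_i|\hat\sigma_i^2-\sigma_i^2|$ (case \ref{var:hetei}). This uniform consistency does not follow from Assumption \ref{ass:disterbance}.
\begin{itemize}
\item With only $E|\epsilon_{it}|^{4+\delta_\epsilon}\le C$, your union bound (with Rosenthal) gives $P\bigl(\max_i|T^{-1}\sum_t(\epsilon_{it}^2-\sigma_i^2)|>\eta\bigr)=O(nT^{-1-\delta_\epsilon/4})$. Assumption \ref{ass:nT}, however, lets $n$ be any fixed power of $T$.
\item In the paper's own configuration of footnote \ref{foot:rate}, $n\asymp T^{13/5}$, this bound vanishes only if $\delta_\epsilon>32/5$.
\item This is not an artefact of the union bound. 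Suppose $\epsilon_{it}$ has tails just heavy enough to possess $4+\delta_\epsilon$ moments. Once $n$ grows faster than $T^{1+\delta_\epsilon/2}$ (up to logs), with probability tending to one some $\epsilon_{it}^2$ among the $nT$ draws exceeds any fixed multiple of $T$. Then $\max_i|\hat\sigma_i^2-\sigma_i^2|$ is not $o_p(1)$.
\item Your target rate $T^{-1/2}\sqrt{\ln n}$ is a sub-Gaussian maximal rate and is out of reach under these moments.
\end{itemize}
Even granting that rate, your absorption step fails. The conditions $\ell_n^3=o(n/T)$ and $n=o(T^{p/2})$ only give $\ell_n=o(T^{(p-2)/6})$, which is $o(T^{1/2})$ only for $p\le 5$. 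In footnote \ref{foot:rate} ($p=6$, $\ell_n\asymp T^{13/25}$), every condition of Theorem \ref{thm:gm_pi}(ii) holds, yet $\ell_nT^{-1/2}\sqrt{\ln n}\to\infty$. So your bound on the variance piece does not vanish.

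\textbf{What is missing.} The estimated variances enter each entry of $\hat\Omega_N$ linearly, and only through cross-sectional averages ($N^{-1}\mathrm{tr}(\cdot)$ and $N^{-1}\sum_{i,t}q_{it}q_{it}'\hat\sigma_i^2$). You should therefore not take a sup over $i$.
\begin{itemize}
\item Expand $\hat\sigma_i^2-\sigma_i^2$ as in Lemma \ref{lem:est_Sigma}. Its leading part $T^{-1}\sum_s(\epsilon_{is}^2-\sigma_i^2)$ is independent across $i$ with mean zero.
\item Averaging over $i$ then reduces it to $O_p((nT)^{-1/2})$ plus $O(T^{-1})$ bias terms. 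These biases come from the within demeaning and from the correlation with the lagged outcomes inside $Q_t$.
\item This needs only pointwise moments, and it is in line with the $O_p(n^{-1})+O_p(T^{-1})$ terms the paper asserts.
\end{itemize}
For the remaining perturbation terms, use mixed-norm bounds such as $\|A'DB\|\le\|A\|\,\|D\|_{\mathrm{sp}}\|B\|_{\mathrm{sp}}$ instead of summing entrywise bounds over $\ell_n^2$ entries.

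Two smaller points on Step 3. First, you must make the ``polynomial in $\ell_n$'' factor explicit before you can claim the rate conditions absorb it. Second, the best instruments in \eqref{eq:best Qt} involve the time effects through $\mathbb{W}_t$ and $\mathbb{Y}_t$. Hence $\hat Q_t-Q_t$ is not controlled by $\|\hat\theta-\theta_0\|$ alone.
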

	\begin{proof}
		This result follows from the magnitudes of the order of $\cJ_{i}$'s with $i=1,2$ in \eqref{eq:remainder term}. 
		We can show that $\Fnorm{\tilde{\Omega}_{t}-\Omega_{t}}=\Op(\Fnorm{\hat\theta-\theta_0})+\Op(\frac{1}{n})+\Op(\frac{1}{T})$.
		The proof can be analogous to the proof of Lemma \ref{lem:est_Sigma}, and is available upon request.
	\end{proof}
	
	\begin{lemma}\label{lem:best IV}
		Under Assumptions \ref{ass:disterbance}-\ref{ass:UB of reg}, the best $Q_{t}$ is given by equation \eqref{eq:best Qt}.
	\end{lemma}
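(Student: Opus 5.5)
The claim is that the instruments in \eqref{eq:best Qt} are optimal. I would prove it the way best instruments are usually identified in GMM (as in \citet{lee2010efficient,lee2014efficient}): find the linear moments whose asymptotic variance is proportional to their expected Jacobian. Once that proportionality holds, the generalized Cauchy--Schwarz inequality shows no other $Q_t$ can give a smaller variance.

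\textbf{Step 1: compute the Jacobian of the linear moment.} Work with the transformed model \eqref{eq:trans_gmm}, where the linear moment is $\bQ_N'\JSL{N}\bV_N^*(\theta)$. At $\theta_0$, differentiate $\bV_N^*(\theta)$ with respect to each parameter:
\begin{itemize}
\item with respect to $\gamma$, the derivative is $-\bS_{3N}\bY_N^{(*,-1)}$;
\item with respect to $\beta$, it is $-\bS_{3N}\bX_N^*$;
\item with respect to $\lambda_{1\kwy}$, it is $\bS_{3N}\tfrac{\partial\bS_{1N}}{\partial\lambda_{1\kwy}}\bY_N^*$;
\item with respect to $\lambda_{2\kwy}$, it is $-\bS_{3N}\tfrac{\partial\bS_{2N}}{\partial\lambda_{2\kwy}}\bY_N^{(*,-1)}$;
\item with respect to $\lambda_3$, the derivative is proportional to $\bV_N^*$ itself.
\end{itemize}
The $\lambda_3$ term therefore contributes nothing to the expected Jacobian of the linear moment, up to an $O(\elln^{-\vsinf})$ remainder coming from $r_t^*$. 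This is why $\lambda_3$ is absent from \eqref{eq:best Qt} and enters only through the quadratic moments \eqref{eq:best Pt}. Together these columns reproduce $\bL_N^*$ from Notation \ref{notation:sup}.

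\textbf{Step 2: replace the endogenous regressors by their projections.} The optimal instrument for each regressor is its conditional mean given $\sigmaI{t-1}$, i.e.\ its projection on the exogenous, predetermined information.
\begin{itemize}
\item Use the reduced form $Y_t=S_1^{-1}[(\gamma I_n+S_2)Y_{t-1}+X_t\beta+\bc_n+\alpha_tl_n]+S_1^{-1}S_3^{-1}V_t$, apply the FOD transformation, and take conditional expectations.
\item For $\bY_N^*$, the systematic part is $S_1^{-1}\mathbb{W}_t$. The stochastic part $S_1^{-1}S_3^{-1}E_t^*$ is uncorrelated with future-independent instruments, so it is dropped.
\item For $Y_{t-1}^{(*,-1)}$, the conditional mean is $\bestY_t$ from \eqref{eq:bestylag}. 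This uses that forward orthogonal deviations of lagged outcomes have a predictable component, which depends on the individual effects $\bc_n$ and on past $Y$.
\end{itemize}
Substituting these projections into the Jacobian columns of Step 1 gives a matrix $\bar{\bL}_N$ whose time-$t$ block equals $Q_t$ in \eqref{eq:best Qt}. I would then show $\E(\bQ_N'\JSL{N}\bL_N^*)=\E(\bQ_N'\JSL{N}\bar{\bL}_N)+o(N)$, handling the approximation errors $r_t^*$ through Lemma \ref{lem:Rr} and Assumption \ref{ass:sieve-basis}.

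\textbf{Step 3: apply the Cauchy--Schwarz comparison.} For any admissible instrument matrix $\bQ_N$:
\begin{itemize}
\item the variance of the linear moment is $\bQ_N'\JSL{N}\Sigep_N\JSL{N}\bQ_N=\bQ_N'\JSL{N}\bQ_N$, using the identity $\JSL{t}\Sigma_t\JSL{t}=\JSL{t}$ from the footnote;
\item the expected Jacobian is $\bQ_N'\JSL{N}\bar{\bL}_N$.
\end{itemize}
The precision contributed by the linear moments is then $\bar{\bL}_N'\JSL{N}\bQ_N(\bQ_N'\JSL{N}\bQ_N)^{-1}\bQ_N'\JSL{N}\bar{\bL}_N$. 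This is a projection of $\JSL{N}^{1/2}\bar{\bL}_N$, so it is bounded above by $\bar{\bL}_N'\JSL{N}\bar{\bL}_N$, with equality when $\bQ_N=\bar{\bL}_N$. Hence the optimal instrument is the one in \eqref{eq:best Qt}.

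Because the linear and quadratic moments are asymptotically uncorrelated once $\diagM(\JSL{t}P_t\JSL{t})=0$ (Lemma \ref{lem:transJ}), the linear block can be optimized separately from the quadratic block.

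\textbf{Main obstacle.} The hardest step is Step 2 under two-way fixed effects with heteroskedasticity. I need $\E(Y_{t-1}^{(*,-1)}\mid\sigmaI{t-1})$ in a form that $\MSL{t}$ cleans of the time effects. I also need to verify that the residual correlation between the forward-mean terms in $Y_{t-1}^{(*,-1)}$ and $E_t^*$ is $o(N)$ after premultiplication by $\JSL{N}$. My first attempt would write $Y_{t-1}^{(*,-1)}$ via backward substitution as $\sum_h A^h(\cdot)$, using $\rcnorm{\Upsilon}<\infty$, and bound the cross terms with Lemma \ref{lem:VBV} and Lemma \ref{lem:hTt}.
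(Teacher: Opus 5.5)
Your proposal is correct and takes essentially the paper's route. The paper also reads the best instruments off the first-order conditions \eqref{eq:FOC of theta}. It obtains $\bestY_{t}=\E[Y_{t-1}^{(*,-1)}\mid\sigmaI{t-1}]$ by backward substitution, with $A$ replaced by its sieve counterpart $A_{y}$, and argues that the remainder $\cQtilde_{tT}$ is negligible. The generalized Cauchy--Schwarz comparison in your Step 3 is not part of the lemma's proof but appears in the proof of Theorem \ref{thm:Var_pi}(ii), so your version makes the optimality argument self-contained.
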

	
	\begin{proof}
		The intuition of the construction of $Q_{t}$ is motivated by the first-order conditions of the GMM objective function.
		From \eqref{eq:sdpd_np}, we dentoe 
		\begin{flalign}\label{eq:cA}
			A=B_{1}^{-1}(\gamma I_{n}+B_{2})=A_{y}+A_{r}
		\end{flalign}
		where $A_{y}=S^{-1}(\gamma I_{n}+S_{2})$ and $A_r=(B_{1}^{-1}-S_{1}^{-1})(\gamma I_{n}+S_{2})=-S_{1}^{-1}R_{1}B_{1}^{-1}(\gamma I_{n}+S_{2})$. It is straightforward to verify that $A_x$ satisfies \ref{propUB} and $A_r$ satisfies \propref{propzero}{$\Op(\errorD{-}{1})$}.
		% Denoting $\chiL{t}=X_{nt}\beta+\alpha_{t}l_n$, 
		For any $h>0$, we have
		\begin{flalign}\label{eq:Yt+h}
			Y_{t+h}=A^{h+1}Y_{t-1}+\suma{j=0}{h}A^{j}\invB{1}(X_{t+h-j}\beta+\bc_{n}+\alpha_{t+h-j}l_n)+\suma{j=0}{h}A^{j}B_{1}^{-1}\YEL{t+h-j}
		\end{flalign}
		
		By denoting
		\begin{flalign*}
			\begin{split}
				\chiL{t+h-j}&=\suma{j=0}{h}A^{j}\invB{1}(X_{t+h-j}\beta+\frac{1}{t-1}\suma{h=1}{t-1}\mC_{h}+\alpha_{t+h-j}l_n)\\
				E_{t+h-j}&=\suma{j=0}{h}A^{j}\invB{1}\YEL{t+h-j}+\frac{1}{t-1}\suma{h=0}{t-1}\YEL{h}
			\end{split}
		\end{flalign*}
		where $\mC_{t}=\mC_{y,t}+\mC_{r,t}=B_{1}Y_{t}-(\gamma I_{n}+B_{2})Y_{t-1}-X_{nt}\beta-\alpha_{t}l_n$ with $\mC_{y,t}=S_{1}Y_{t}-(\gamma I_{n}+S_{2})Y_{t-1}-X_{nt}\beta-\alpha_{t}l_n$ and $\mC_{r,t}=R_{1}Y_{t}-R_{2}Y_{t-1}$. 
		We then have
		\begin{flalign*}
			\begin{split}
				Y_{t-1}^{(*,-1)}&=\hTt(Y_{t-1}-\frac{1}{T-t}\suma{s=t}{T-1}Y_s)=\hTt(Y_{t-1}-\frac{1}{T-t}\suma{h=0}{T-1-t}Y_h)\\
				&=\hTt\left(I_n-\frac{1}{T-t}\suma{h=0}{T-1-t}A^{h+1}\right)Y_{t-1}-\hTt\frac{1}{T-t}\suma{h=0}{T-1-t}\chiL{t+h-j}-\hTt\frac{1}{T-t}\suma{h=0}{T-1-t}E_{t+h-j}
			\end{split}
		\end{flalign*}
		Since we use $A_{y}$ to approximate $A$ according to \eqref{eq:cA}, we then have
		\begin{flalign}\label{eq:ylagappro}
			Y_{t-1}^{(*,-1)}=\hTt\left(I_n-\frac{1}{T-t}\suma{h=0}{T-1-t}A_{y}^{h+1}\right)Y_{t-1}-\hTt\frac{1}{T-t}\suma{h=0}{T-1-t}\chiL{y,t+h-j}-\hTt\frac{1}{T-t}\suma{h=0}{T-1-t}\cQ_{t+h-j}
		\end{flalign}
		where 
		\allowdisplaybreaks
		\begin{eqnarray}
			\nonumber
			\cQ_{t+h-j}&=&A_{r}^{h+1}Y_{t-1}+\chiL{r,t+h-j}+E_{t+h-j}
			\\ \nonumber
			\chiL{y,t+h-j}&=&\suma{j=0}{h}A_{y}^{j}\invS{1}(X_{t+h-j}\beta+\frac{1}{t-1}\suma{h=1}{t-1}\mC_{y,h}+\alpha_{t+h-j}l_n)
			\\ \nonumber
			\chiL{r,t+h-j}&=&\suma{j=0}{h}A_{r}^{j}\invB{1}(X_{t+h-j}\beta+\frac{1}{t-1}\suma{h=1}{t-1}\mC_{h}+\alpha_{t+h-j}l_n)+\suma{j=0}{h}A_{y}^{j}\invR{1}(X_{t+h-j}\beta+\frac{1}{t-1}\suma{h=1}{t-1}\mC_{h}+\alpha_{t+h-j}l_n)
			\\ \label{eq:Qt+h}
			&+&\suma{j=0}{h}A_{y}^{j}\invS{1}\frac{1}{t-1}\suma{h=1}{t-1}\mC_{r,h}    
		\end{eqnarray}
		Denote $\sigmaI{t-1}$ be the sigma-algebra spanned
		by $(Y_{0},...,Y_{t-1})$ conditional on $(X_{1},...X_{T},\bc_{0},\alpha_{1},...,\alpha_{T0})$ and $\cQtilde_{tT}=\frac{1}{T-t}\suma{h=0}{T-1-t}\cQ_{t+h-j}$, from \eqref{eq:ylagappro}, we have
		\begin{flalign*}
			Y_{t-1}^{(*,-1)}=\bestY_{t}-\hTt\cQtilde_{tT}
		\end{flalign*}
		where 
		\begin{flalign}\label{eq:bestylag}
			\bestY_{t}\equiv\E[Y_{t-1}^{(*,-1)}|\sigmaI{t-1}]=\hTt\left(I_n-\frac{1}{T-t}\suma{h=0}{T-1-t}A_{y}^{h+1}\right)Y_{t-1}-\hTt\frac{1}{T-t}\suma{h=0}{T-1-t}\chiL{y,t+h-j}.    
		\end{flalign}
		We can show that $\Fnorm{\cQtilde_{tT}}=\Op(\sqrt{n}\elln^{-\vsinf})$, which has the same asymptotic behavior as $\Fnorm{r_{t}}$, and thus is asymptotically negligible with respect to the estimators.
		This is because $\Fnorm{A_{r}^{h+1}Y_{t-1}}\leq\spnorm{A_{r}}^{h+1}\Fnorm{Y_{t-1}}=\Op(\elln^{-(h+1)\vsinf})\Op(\sqrt{n})=\op(\sqrt{n}\elln^{-\vsinf})$, $\spnorm{\sum_{j=0}^{h}A_{r}^{j}}=\Op(\elln^{-\vsinf})$, $\spnorm{\mC_{r,t}}=\Op(\elln^{-\vsinf})$ and $\Fnorm{X_{t}\beta+\fe}=\Op(\sqrt{n})$ by Assumptions \ref{ass:appro_G} and \ref{ass:UB of reg}.
		\par 
		Similarly, for some time-invariant $n\times n$ matrix $\cB_{k}$ satisfying \ref{propUB} we have
		\begin{flalign}\label{eq:bestBylag}
			\cB_{k}\bestY_{t}\equiv\E[\cB_{k}Y_{t-1}^{(*,-1)}|\sigmaI{t-1}]=\hTt\cB_{k}\bigl(I_n-\frac{1}{T-t}\suma{h=0}{T-1-t}A_{y}^{h+1}\bigr)Y_{t-1}-\hTt\cB_{k}\frac{1}{T-t}\suma{h=0}{T-1-t}\chiL{y,t+h-j}  
		\end{flalign}
		Therefore, according to equation \eqref{eq:FOC of theta}, $Q_{t}$ now is in \eqref{eq:best Qt}.
	\end{proof}
	%%%%%%    Proof of the theorems  %%%%%%%%%%%
	\subsection{Proof of Theorems}
	\textit{Proof of Theorem \ref{thm:gm_pi}}.
	We establish the consistency and asymptotic normality of the OGMME in this section. The corresponding results for the BGMME follow analogously, by replacing $\bJ_{N}$ with $\JSL{N}$.
	\\
	(\rn1) From Lemma \ref{lem:foc of mess} in the supplement, the first-order condition at $\theta_{0}$ times $n(T-1)$ is approximated by\footnote{Note that $r_{t}^{*}$ also contains $\lambda$, however, it is easy to verify that the order of $\spnorm{\frac{\partial r_{t}^{*}}{\partial\lambda_{j}'}}=\Op(\elln^{-\vsinf})$ by Assumption \ref{ass:appro_G}, which is negligible. Also, we need to divide $n(T-1)$ to investigate the consistency and divide $\sqrt{n(T-1)}$ to investigate the asymptotical normality. As shown in Lemma \ref{lem:foc of mess}, the approximation terms will be negligible.}
	%\begin{footnotesize}
	\begin{align}\label{eq:FOC of theta}
		\begin{split}
			\frac{\partial m_{N}(\theta_{0})}{\partial\lambda_{1}} &\approx
			-\begin{pmatrix} 				\left(\bS_{3N}\tfrac{\partial\bS_{1N}}{\partial\lambda_{1}}\bY_{N}^{*}\right)'\bJ_{N}\bP_{N1}^{s}\bJ_{N}\bV_{N}^{*} 
				\\
				\vdots 
				\\			\left(\bS_{3N}\tfrac{\partial\bS_{1N}}{\partial\lambda_{1}}\bY_{N}^{*}\right)'\bJ_{N}\bP_{N\lpn}^{s}\bJ_{N}\bV_{N}^{*} 
				\\			\bQ_{N}'\bJ_{N}\bS_{3N}\tfrac{\partial\bS_{1N}}{\partial\lambda_{1}}\bY_{N}^{*}
			\end{pmatrix},
			\frac{\partial m_{N}(\theta_{0})}{\partial\lambda_{3}} 
			\approx
			\begin{pmatrix}			\left(\tfrac{\partial\bS_{3N}}{\partial\lambda_{3}}\bS_{3N}^{-1}\bV_{N}^{*}\right)'\bJ_{N}\bP_{N1}^{s}\bJ_{N}\bV_{N}^{*} \\
				\vdots \\
				\left(\tfrac{\partial\bS_{3N}}{\partial\lambda_{3}}\bS_{3N}^{-1}\bV_{N}^{*}\right)'\bJ_{N}\bP_{N\lpn}^{s}\bJ_{N}\bV_{N}^{*} 
				\\
				\bQ_{N}'\bJ_{N}\tfrac{\partial\bS_{3N}}{\partial\lambda_{3}}\bS_{3N}^{-1}\bV_{N}^{*}
			\end{pmatrix},
			\\
			%\begin{aligned}[t]
			\frac{\partial m_{N}(\theta_{0})}{\partial\lambda_{2}} &\approx 
			-\begin{pmatrix}				\left(\bS_{3N}\tfrac{\partial\bS_{2N}}{\partial\lambda_{2}}\bY_{N}^{(*,-1)}\right)'\bJ_{N}\bP_{N1}^{s}\bJ_{N}\bV_{N}^{*}\\
				\vdots \\
				\left(\bS_{3N}\tfrac{\partial\bS_{2N}}{\partial\lambda_{2}}\bY_{N}^{(*,-1)}\right)'\bJ_{N}\bP_{N\lpn}^{s}\bJ_{N}\bV_{N}^{*}\\
				\bQ_{N}'\bJ_{N}\left(\bS_{3N}\tfrac{\partial\bS_{2N}}{\lambda_{2\lwyl}}\bY_{N}^{(*,-1)}\right)
			\end{pmatrix},     
			\frac{\partial m_{N}(\theta_{0})}{\partial\pi'}\approx
			-\begin{pmatrix}				(\bS_{3N}\bZ_{N}^{*})'\bJ_{N}\bP_{N1}^{s}\bJ_{N}\bV_{N}^{*} \\
				\vdots \\
				(\bS_{3N}\bZ_{N}^{*})'\bJ_{N}\bP_{N\lpn}^{s}\bJ_{N}\bV_{N}^{*} \\
				\bQ_{N}'\bJ_{N}\bS_{3N}\bZ_{N}^{*}
			\end{pmatrix}
		\end{split}
	\end{align}
	%\end{footnotesize}
	where $\bZ_{N}^{*}=[\bY_{N}^{(*,-1)},\bX_{N}^{*}]$.
	\par 
	From the Taylor expansion, we have
	\begin{flalign}\label{eq:expansion of theta}
		\hat\theta_{gmm}-\theta_{0} = {}& -\Big[
		\frac{\partial m_{N}^{\prime}(\hat\theta_{gmm})}{\partial \theta}\Omega_{N}^{-1}
		\frac{\partial m_{N}(\bar\theta^{\prime})}{\partial \theta}\Big]^{-1}
		\frac{\partial m_{N}^{\prime}(\hat\theta_{gmm})}{\partial \theta}\Omega_{N}^{-1}
		\invN m_{N}(\theta_{0})
	\end{flalign}
	and by Lemma \ref{lem:VBV} in the supplement, we have
	$\frac{\partial m_{N}^{\prime}(\hat\theta_{gmm})}{\partial \theta}=D_{N}+O(\elln^{-\vsinf})+o_p(1)$, where
	\begin{flalign}\label{eq:DN}
		D_{N}=-\invN\E\begin{pmatrix}
			% Column 1: pi
			\bzero_{\lpn\times\ell_{\pi}} &
			% Column 2: lambda_1
			\bC_{\lpn\times\dwy} &
			% Column 3: lambda_2
			\bzero_{\lpn\times\ell_{2}} &
			% Column 4: lambda_3
			\bC_{\lpn\times\dwu} \\
			% Column 1: pi (Linear part)
			\bQ_{N}^{\prime}\bJ_{N}\bS_{3N}\bZ_{N}^{*} &
			% Column 2: lambda_1 (Linear part)
			\bQ_{N}^{\prime}\bJ_{N}\bS_{3N}\tfrac{\partial\bS_{1N}}{\lambda_{1\lwy}}\bY_{N}^{*} &
			% Column 3: lambda_2 (Linear part)
			\bQ_{N}^{\prime}\bJ_{N}\bS_{3N}\tfrac{\partial\bS_{2N}}{\lambda_{2\lwyl}}\bY_{N}^{(*,-1)} &
			% Column 4: lambda_3 (Linear part)
			\bzero_{\lqn\times\dwu}
		\end{pmatrix},
	\end{flalign}
	and we denote 
	\[\hspace*{-1cm}
	\begin{split}
		% & 
		D_{\pi}=-\invN\E\begin{pmatrix}
			\bzero_{\lpn\times\ell_{\pi}}\\
			\bQ_{N}^{\prime}\bJ_{N}\bZ_{N}^{*}
		\end{pmatrix} 
		\ \text{and} \
		% \\& 
		D_{\lambda}=-\invN\E\begin{pmatrix}
			\bC_{\lpn\times\dwy} & \bzero_{\lpn\times\ell_{2}} & \bC_{\lpn\times\dwu} \\
			\bQ_{N}^{\prime}\bJ_{N}\bS_{3N}\tfrac{\partial\bS_{1N}}{\lambda_{1\lwy}}\bY_{N}^{*} & 
			\bQ_{N}^{\prime}\bJ_{N}\bS_{3N}\tfrac{\partial\bS_{2N}}{\lambda_{2\lwyl}}\bY_{N}^{(*,-1)} & \bzero_{\lqn\times\dwu} 
		\end{pmatrix}.
	\end{split}
	\]
	Furthermore,
	$[\bC_{\lpn\times\dwy}]_{jk}=
	\tr\bigl(\bJ_{N}\bP_{Nj}^{s}\bJ_{N}\bS_{3N}\tfrac{\bS_{1N}}{\partial\lambda_{1k}}\bS_{1N}^{-1}\bS_{3N}^{-1}\Sigep_{N}\bigr)$ 
	and 
	$[\bC_{\lpn\times\dwu}]_{jk}=
	\tr\bigl(\bJ_{N}\bP_{Nj}^{s}\bJ_{N}\tfrac{\partial\bS_{3N}}{\partial\lambda_{3k}}\bS_{3N}^{-1}\Sigep_{N}\bigr)$.
	Then we have
	\begin{align*}
		\frac{\partial m_{N}^{\prime}(\hat\theta_{gmm})}{\partial \theta}\Omega_{N}^{-1}\invN m_{N}(\theta_{0})=
		\invN\bC_{\lpn\times\dwy}'\Varm_{1N}^{-1}\cJ_{1}
		+\invN\bC_{\lpn\times\dwu}'\Varm_{1N}^{-1}\cJ_{1}
		+\invN\cJ_{2}+\op(1)
	\end{align*}
	where 
	% $\bM_{N}=\invN\bJ_{N}\bQ_{N}(\invN\bQ_{N}'J_{N}\Sigep_{N}J_{N}\bQ_{N})^{-1}\bQ_{N}'\bJ_{N}$.
	\begin{flalign}%\hspace*{-1cm}
		\label{eq:remainder term}%\small
		\cJ_{1}=-\begin{pmatrix}
			\bV_{N}^{*\prime}\bJ_{N}\bP_{N1}^{s}\bJ_{N}\bV_{N}^{*}
			\\
			\vdots
			\\
			\bV_{N}^{*\prime}\bJ_{N}\bP_{N\lpn}^{s}\bJ_{N}\bV_{N}^{*}
			\\
			\bzero_{\lqn\times 1}
		\end{pmatrix}, 
		\ \text{and} \
		\cJ_{2}=-\bL_{N}^{*\prime}\bM_{N}\bV_{N}^{*},
	\end{flalign}
	with
	$\bL_{N}^{*}=\big[
	\bS_{3N}\bZ_{N}^{*},
	\bS_{3N}\tfrac{\partial\bS_{1N}}{\partial\lambda_{1}}\bY_{N}^{*}, 
	\bS_{3N}\tfrac{\partial\bS_{2N}}{\partial\lambda_{2}}\bY_{N}^{(*,-1)},
	\bzero_{N\times \ell_{3}}
	\big]$ as defined in Notation \ref{notation:sup}(\rn5),
	$\bM_{N}=\Diag_{t=1}^{T-1}(M_{t})$ and $M_{t}=J_{n}Q_{t}(Q_{t}'J_{n}\Sigma_{t}J_{n}Q_{t})^{-1}Q_{t}'J_{n}$.
	
	Next, due to $\bY_{N}^{(*,-1)'}\bM_{N}\bE_{N}^{*}=\sum_{t=1}^{T-1}\frac{T-t}{T-t+1}(Y_{t}-\frac{1}{T-t}\sum_{h=t}^{T-1}Y_{h})'M_{t}(E_{t}-\frac{1}{T-t}\sum_{h=t+1}^{T-1}E_{t})$,
	and we can deduce that the non-zero expectation part of is $\sum_{t=1}^{T-1}\frac{1}{T-t+1}(\kappa_{1\gamma}+\kappa_{2\gamma})$, where 
	\[
	\kappa_{1\gamma,t}=-E_{t}'M_{t}B_{3}(\sum_{h=1}^{T-t}A^{h-1})B_{1}^{-1}B_{3}^{-1}E_{t} 
	\ \text{and} \ 
	\kappa_{2\gamma,t}=\frac{1}{(T-t)}\big[\sum_{h=t+1}^{T-1}E_{h}'M_{t}(\sum_{s=1}^{T-h}A^{s-1})B_{1}^{-1}B_{3}^{-1}E_{t}\big].
	\]
	By the triangle inequality, we have
	\sloppy
	$
	\bnorm\bY_{N}^{(*,-1)'}\bM_{N}\bE_{N}^{*}\bnorm\leq\sum_{t=1}^{T-1}\frac{1}{T-t+1}\big(\|\kappa_{1\gamma}\|+\|\kappa_{2\gamma}\|\big).
	%=O\left(\ln{T}\sqrt{\frac{\lqn}{T}}\right)
	$
	Note that
	$
	\E\Fnorm{\kappa_{1\gamma,t}}=O\left(\mn\right) \ \text{and} \
	\E\Fnorm{\kappa_{2\gamma,t}}=O\left(\mn\right)
	$
	since $\Fnorm{M_{t}(\sum_{h=0}^{\infty}A^{h})B_{1}^{-1}B_{3}^{-1}}_{\rc}=\Op(\mn)$ from Lemma \ref{lem:norm of M} in the supplement. 
	By Markov's inequality and the fact that the magnitude order of $\sum_{h=1}^{T-1}\frac{T-t}{T-t+1}$ is $O(T-1)$, we have 
	%\begin{flalign}%\label{eq:order of YlagE}
	$\bFnorm{\invN\bY_{N}^{(*,-1)'}\bM_{N}\bE_{N}^{*}}=\Op\left(\mn\cdot\frac{\lnT}{n(T-1)}\right).$
	%\end{flalign}
	Hence, we can deduce that
	\begin{flalign}\label{eq:order of Ylagr}
		\bFnorm{\invN\bY_{N}^{(*,-1)\prime}\bM_{N}\brL_{N}^{*}}=
		O_p\left(\frac{n(T-1)\elln^{-\vsinf}}{n(T-1)}\cdot\mn\right)=
		\Op\left(\elln^{1-\vsinf}\right)
	\end{flalign}
	since the relationship between $\brL_{N}$ and $\bE_{N}$.
	Then, we have 
	\begin{flalign*}%\label{eq:order of ZE}
		\begin{split}
			&\bFnorm{\invN\left(\bS_{3N}\tfrac{\partial\bS_{2N}}{\partial\lambda_{2}}\bY_{N}^{(*,-1)}\right)^{\prime}\bM_{N}\bE_{N}^{*}}=\Op\left(\sqrt{\elln}\cdot\frac{\mn\lnT}{n(T-1)}\right)=\Op\left(\frac{\elln^{3/2}\lnT}{n(T-1)}\right)
			\ \text{and} \
			\\
			&\bFnorm{\invN\left(\bS_{3N}\tfrac{\partial\bS_{2N}}{\partial\lambda_{2}}\bY_{N}^{(*,-1)}\right)^{\prime}\bM_{N}\brL_{N}^{*}}=\Op(\elln)\cdot\Op(\sqrt{\elln})\cdot\Op(\elln^{-\vsinf})=\Op\left(\elln^{3/2-\vsinf}\right)
		\end{split}
	\end{flalign*}
	since $\spnorm{\tfrac{\partial\bS_{2N}}{\partial\lambda_{2}}}=O(\sqrt{\elln})$.
	For the last two terms in linear moments, we have $\bFnorm{\invN(\bS_{3N}\tfrac{\partial\bS_{1N}}{\partial\lambda_{1}}\bE_{N}^{*})'\bM_{N}\bE_{N}^{*}}=\Op(\frac{\elln^{3/2}}{n})$, and
	$\bFnorm{\invN \bX_{N}^{*\prime}\bM_{N}\bE_{N}^{*}}=\Op\left(\sqrt{\frac{\elln}{n(T-1)}}\right)$.
	Finally, from Lemma \ref{lem:VBV} in the supplement, for $j=1,...,\lpn$, we have
	\begin{flalign}\label{eq:order of VV}
		\bFnorm{\invN\brL_{N}^{*\prime}\JL{N}\PL{Nj}\JL{N}\brL_{N}^{*}}=\Op(\elln^{-2\vsinf}), 
		\ \text{and} \ 
		\bFnorm{\invN\brL_{N}^{*\prime}\JL{N}\PL{Nj}\JL{N}\bE_{N}^{*}}=\Op(\elln^{-\vsinf})=\op(\elln^{1-\vsinf}).
	\end{flalign}
	Thus, the bias of the quadratic moments is
	%\begin{flalign}%\label{eq:bias of quad moments}
	$\Op(\elln^{-2\vsinf}\cdot\elln)=\Op(\elln^{1-2\vsinf})=\op(\elln^{3/2-\vsinf}).$
	%\end{flalign}

	\subsubsection*{Consistency of $\hat\pi_{gmm}$}
	We use the $C(\alpha)$ approach (see e.g. \citealp{jin2020asymptotically,chen2025spatial}) to derive the expansion of $\hat\pi_{gmm}$ and $\hat\lambda_{gmm}$
	from \eqref{eq:expansion of theta}.
	Define the projection matrix $M_{\lambda}=I_{\ell_m}-D_{\lambda}(D_{\lambda}'\Omega_{N}^{-1}D_{\lambda})^{-1}D_{\lambda}'\Omega_{N}^{-1}$ and $M_{\pi}=I_{\ell_m}-D_{\pi}(D_{\pi}'\Omega_{N}^{-1}D_{\pi})^{-1}D_{\pi}'\Omega_{N}^{-1}$ where $\ell_m=\lpn+\lqn$.  
	Then we have
	\begin{flalign}
		\label{eq:Fnorm_pi}
		\hat\pi_{gmm}-\pi_0=-[D_{\pi}'\Omega_{N}^{-1}M_{\lambda}D_{\pi}]^{-1}D_{\pi}'\Omega_{N}^{-1}M_{\lambda}\invN m_{N}(\theta_{0})+\op(1)
		\\
		\label{eq:Fnorm_lambda}
		\hat\lambda_{gmm}-\lambda_{0}=-[D_{\lambda}'\Omega_{N}^{-1}M_{\pi}D_{\lambda}]^{-1}D_{\lambda}'\Omega_{N}^{-1}M_{\pi}\invN m_{N}(\theta_{0})+\op(1)
	\end{flalign}
	Let $\tilde{D}_{\lambda}=\Omega_{N}^{-\half}D_{\lambda}$
	and 
	$\tilde{D}_{\pi}=\Omega_{N}^{-\half}D_{\pi}$,
	and define
	$\tilde{M}_{\lambda}=\Omega_{N}^{-\half}M_{\lambda}\Omega_{N}^{-\half}=I_{\ell_m}-\tilde{D}_{\lambda}(\tilde{D}_{\lambda}'\tilde{D}_{\lambda})^{-1}\tilde{D}_{\lambda}'$ and
	$\tilde{M}_{\pi}=\Omega_{N}^{-\half}M_{\pi}\Omega_{N}^{-\half}=I_{\ell_m}-\tilde{D}_{\pi}(\tilde{D}_{\pi}'\tilde{D}_{\pi})^{-1}\tilde{D}_{\pi}'$. Each $\tilde{M}_{\bullet}$ is the orthogonal projector and thus $\spnorm{\tilde{M}_{\bullet}}=1$ where the subscript $\bullet$ denotes $\pi$ or $\lambda$. Then, we have $M_{\bullet}\leq \spnorm{\Omega_{N}^{1/2}}\spnorm{\tilde{M}_{\bullet}}\spnorm{\Omega_{N}^{1/2}}=O(1)$. Define
	\begin{flalign}\label{eq:H_pi}
		K_{\pi}=-[D_{\pi}'\Omega_{N}^{-1}M_{\lambda}D_{\pi}]^{-1}D_{\pi}'\Omega_{N}^{-1}M_{\lambda}   
		\ \text{and} \
		K_{\lambda}=-[D_{\lambda}'\Omega_{N}^{-1}M_{\pi}D_{\lambda}]^{-1}D_{\lambda}'\Omega_{N}^{-1}M_{\pi},
	\end{flalign}
	We have $\spnorm{K_{\pi}}\leq\bFnorm{[D_{\lambda}'\Omega_{N}^{-1}M_{\pi}D_{\lambda}]^{-1}}_{\mathrm{sp}}\bFnorm{D_{\lambda}'\Omega_{N}^{-1}M_{\pi}}_{\mathrm{sp}}=O(1)$ by Assumption \ref{ass:id}.
	Thus,
	\begin{flalign}\label{eq:pi_og_lln}
		\begin{split}
			\hat\pi_{gmm}-\pi_0=K_{\pi}\invN m_{N}(\theta_{0})=K_{\pi}\invN m_{\Delta}+\Op(\Fnorm{\cJ_{1}})+\Op(\Fnorm{\cJ_{2}})
		\end{split}
	\end{flalign}
	where
	$\cJ_{i}$'s are in \eqref{eq:remainder term} for $i=1,2$, and $m_{\Delta}$ is in \eqref{eq:CLT_rv}.
	Therefore, combining \eqref{eq:order of Ylagr}-\eqref{eq:order of VV}, we have
	\begin{flalign}\label{eq:rate of pi}
		\hat\pi_{gmm}-\pi_0=\Op\Bigl(\elln^{3/2-\vsinf}+\frac{\elln^{3/2}}{n}+\frac{\elln^{3/2}\lnT}{n(T-1)}+\sqrt{\frac{\elln}{n(T-1)}}\Bigr)=\Op\Bigl(\elln^{3/2-\vsinf}+\frac{\elln^{3/2}}{n}+\sqrt{\frac{\elln}{n(T-1)}}\Bigr).
	\end{flalign}
	(\rn2)
	For the limiting distribution, define
	\begin{flalign}\label{eq:Var_pi}
		\Sigep_{\pi_0,gmm}=\lim_{n,T\to\infty}K_{\pi}\Omega_{N}K_{\pi}',
	\end{flalign}
	and then $\spnorm{\Sigep_{\pi_0,gmm}}\leq\spnorm{K_{\pi}}^2\spnorm{\Omega_{N}}=O(1)$.
	By \eqref{eq:pi_og_lln}, we have
	\begin{flalign}\label{eq:piog_clt}
		\begin{split}
			\sqrt{n(T-1)}\Sigep_{\pi_0,gmm}^{-1/2}(\hat\pi_{gmm}-\pi_0)&=\Sigep_{\pi_0,gmm}^{-1/2}K_{\pi}\invNsq m_{\Delta}+\Op\left(\sqrt{n(T-1)}\Bigl(\elln^{3/2-\vsinf}+\frac{\elln^{3/2}}{n}\Bigr)\right).
		\end{split}
	\end{flalign}
	It remain to show that $\sqrt{n(T-1)}\Sigep_{\pi_0,gmm}^{-1/2}(\hat\pi_{gmm}-\pi_0)\stackrel{d}{\to}N(0,I_{\ell_{\pi}})$.
	%=\sU_{N}+\Op\left(\sqrt{\frac{(T-1)\elln^{3}}{n}}+\sqrt{n(T-1)}\elln^{3/2-\vsinf}\right)
	The results are directly obtained from Lemma \ref{lem:CLT} and the Cram\'{e}r--Wold device.
	\QED
	\textit{Proof of Theorem \ref{thm:Var_pi}}.
	For (\rn1), 
	define $\Sigma_{\pi,N}=K_{\pi}\Omega_{N}K_{\pi}'$, then
	$\bm\Sigma_{\pi_0,gmm}=\lim_{n,T\to\infty}\Sigma_{\pi,N}$.
	It suffices to show (a) $\spnorm{\hat{\bm\Sigma}_{\pi,gmm}-\Sigma_{\pi,N}}=\op(1)$ and
	(b) $\spnorm{\Sigma_{\pi,N}-\bm\Sigma_{\pi_0,gmm}}=\op(1)$.
	
	For (a), note that
	\[
	\hat{\bm\Sigma}_{\pi,gmm}-\Sigma_{\pi,N}
	=
	(\hat{H}_{\pi}-K_\pi)\hat\Omega_{N}\hat{H}_{\pi}^{\prime}
	+
	K_\pi(\hat\Omega_{N}
	-\Omega_{N})\hat{H}_{\pi}^{\prime}
	+
	K_\pi\Omega_{N}(\hat{H}_{\pi}^{\prime}
	-K_{\pi}^{\prime}).
	\]
	Taking spectral norms and using submultiplicativity yields
	\[
	\spnorm{\hat{\bm\Sigma}_{\pi,gmm}-\Sigma_{\pi,N}}
	\leq
	\spnorm{\hat{H}_{\pi}-K_\pi}
	\spnorm{\hat\Omega_{N}}\spnorm{\hat{H}_{\pi}}
	+\spnorm{K_\pi}\spnorm{\hat\Omega_{N}-\Omega_{N}}\spnorm{\hat{H}_{\pi}}
	+\spnorm{K_\pi}\spnorm{\Omega_{N}}\spnorm{\hat{H}_{\pi}-K_\pi}.
	\]
	Under Assumption \ref{ass:id}, $K_\pi$ and $\Omega_{N}$ are $O(1)$.
	By Lemma \ref{lem:consistency}$, \spnorm{\hat{D}_{\pi}-D_{\pi}}=\op(1)$ and $\spnorm{\hat{D}_{\lambda}-D_{\lambda}}=\op(1)$.
	Therefore, $\spnorm{\hat{H}_{\pi}-K_\pi}=\op(1)$ since $\spnorm{\hat\Omega_{N}-\Omega_{N}}=\op(1)$ by Lemma \ref{lem:Var_moments},
	which implies $\spnorm{\hat{\bm\Sigma}_{\pi,gmm}-\Sigma_{\pi,N}}=\op(1)$.
	For (b), it holds by the definition $\bm\Sigma_{\pi_0,gmm}=\lim_{n,T\to\infty}K_\pi\Omega_{N}K_{\pi}^{\prime}$.
	Combining (a) and (b) completes the proof.
	\par 
	For (\rn2),
	denote $\Sigep_{b_\pi}=\min\Sigep_{\pi_0,gmm}=\min\mat{\lim_{n,T\to\infty}K_{\pi}'\Omega_{N}K_{\pi}}$.
	The equation \eqref{eq:Var_pi} can be equivalently written as $\Sigep_{\pi_0,gmm}=\te_{\pi}(\lim_{n,T\to\infty}D_{N}'\Omega_{N}^{-1}D_{N})^{-1}\te_{\pi}'$, where $\te_{\pi}$ is the $\ell_{\pi}\times\ell_{\theta}$ selection matrix that extracts the first $\ell_{\pi}$ components, i.e., $\te_{\pi}\theta=\pi$. 
	In the case of BGMME, $\bJ_{N} $ is replaced by $\JSL{N}$, and accordingly, $\bM_{N}=\JSL{N}\bQ_{N}(\bQ_{N}'\JSL{N}\bQ_{N})^{-1}\bQ_{N}'\JSL{N}$. For the quadratic moments, we have $\lpn=\ell_*=\ell_1+\ell_3$, and the components in $\bC_{\ell_*\times\dwy}$ and $\bC_{\ell_*\times\dwu}$, are
	$[\bC_{\ell_*\times\dwy}]_{jk}=
	\tr\bigl(\JSL{N}\bP_{Nj}^{s}\JSL{N}\bS_{3N}\tfrac{\bS_{1N}}{\partial\lambda_{1k}}\bS_{1N}^{-1}\bS_{3N}^{-1}\Sigep_{N}\bigr)$ 
	and 
	$[\bC_{\ell_*\times\dwu}]_{jk}=
	\tr\bigl(\JSL{N}\bP_{Nj}^{s}\JSL{N}\tfrac{\partial\bS_{3N}}{\partial\lambda_{3k}}\bS_{3N}^{-1}\Sigep_{N}\bigr)$, respectively.
	Note that 
	
	\[
	\begin{split}
		[\bC_{\ell_*\times\dwy}]_{jk}
		=&\tr\bigl((\Sigep_{N}\bS_{3N}\tfrac{\bS_{1N}}{\partial\lambda_{1k}}\bS_{1N}^{-1}\bS_{3N}^{-1})^{\prime}%
		\JSL{N}\bP_{Nj}^{s}\JSL{N}\bigr)\\
		=&\frac{1}{2}\tr\bigl([\JSL{N}]^{\frac{1}{2}}(\Sigep_{N}\bS_{3N}\tfrac{\bS_{1N}}{\partial\lambda_{1k}}\bS_{1N}^{-1}\bS_{3N}^{-1})^{s}[\JSL{N}]^{\frac{1}{2}}%
		[\JSL{N}]^{\frac{1}{2}}\bP_{Nj}^{s}[\JSL{N}]^{\frac{1}{2}}\bigr)\\
		=&\left\{\frac{1}{\sqrt{2}}\vecD'\Bigl([\JSL{N}]^{\frac{1}{2}}(\Sigep_{N}\bS_{3N}\tfrac{\bS_{1N}}{\partial\lambda_{1k}}\bS_{1N}^{-1}\bS_{3N}^{-1})^{s}[\JSL{N}]^{\frac{1}{2}}\Bigr)\right\}%
		\left\{\frac{1}{\sqrt{2}}\vecD\Bigl(
		[\JSL{N}]^{\frac{1}{2}}\bP_{Nj}^{s}[\JSL{N}]^{\frac{1}{2}}\Bigr)\right\}
		\\
		=&a_{1j}'a_{2k}
	\end{split}
	\]
	by using the symmetry of $\bP_{Nj}^{s}$ and $\Sigep_{N}$, together with the identity $\tr(A)=1/2\tr(A^s)$,  and $\tr(AB)=\vecD'(A)\vecD(B)$ for any conformable matrices $A$ and $B$, where $\vecD(\cdot)$ stacks all entries of a matrix into a column vector. 
	Similarly, 
	$
	[\bC_{\ell_*\times\dwu}]_{jk}
	=a_{3j}'a_{2k}
	$,
	where $a_{3j}=\frac{1}{\sqrt{2}}\vecD\bigl([\JSL{N}]^{\frac{1}{2}}(\Sigep_{N}\tfrac{\bS_{3N}}{\partial\lambda_{3k}}\bS_{3N}^{-1})^{s}[\JSL{N}]^{\frac{1}{2}}\bigr)$
	and the components in $\Varm_{N1}$ is
	\[
	[\Varm_{N1}]_{ij}=\tr(\JSL{N}\bP_{Ni}\JSL{N}\bP_{Nj}^{s}\JSL{N})=1/2\tr(%
	[\JSL{N}]^{\frac{1}{2}}\bP_{Nj}^{s}[\JSL{N}]^{\frac{1}{2}}
	[\JSL{N}]^{\frac{1}{2}}\bP_{Nj}^{s}[\JSL{N}]^{\frac{1}{2}})=a_{2j}'a_{2j}.
	\]
	Hence, $D_{N}'\Omega_{N}^{-1}D_{N}=(\E\Sigma_{\lin}+\E\Sigma_{\qua})$, where 
	\(
	\Sigma_{\lin}=\lim_{n,T\to\infty}\invN\bL_{N}^{*'}\bM_{N}\bL_{N}^{*}
	=\lim_{n,T\to\infty}\invN\bL_{N}^{*'}[\JSL{N}]^{1/2}[\JSL{N}]^{1/2}\bQ_{N}(\bQ_{N}'\JSL{N}\bQ_{N})^{-1}\bQ_{N}'[\JSL{N}]^{1/2}[\JSL{N}]^{1/2}\bL_{N}^{*}
	\leq \lim_{n,T\to\infty}\invN\bL_{N}^{*'}\JSL{N}\bL_{N}^{*}
	\)
	and
	\(
	\Sigma_{\qua}=\lim_{n,T\to\infty}\invN
	\kappa_\lambda'\kappa_\delta(\kappa_\delta'\kappa_\delta)^{-1}\kappa_\delta'\kappa_\lambda
	\leq
	\lim_{n,T\to\infty}\invN\kappa_{\lambda}'\kappa_{\lambda}
	\)
	by the Generalized Cauchy-Schwarz inequality, where  $\kappa_\lambda=(a_{11},...,a_{1\ell_1},a_{31},...,a_{3\ell_3})'$% 
	and 
	$\kappa_\delta=(a_{21},...,a_{2\ell_{*}})'$.
	Thus, 
	\begin{flalign}\label{eq:Var_bg_pi}
		\Sigep_{\pi_0,bgmm}=\Sigep_{b_\pi}=\te_{\pi}\left(\lim_{n,T\to\infty}\Sigep_{\theta}\right)^{-1}\te_{\pi}'
	\end{flalign}
	where 
	$\Sigep_{\theta}=\E\Bigl(\invN\bL_{N}^{*'}\JSL{N}\bL_{N}^{*}+\Diag(\kappa_{\theta}'\kappa_{\theta})\Bigr)$ with $\kappa_\theta= (\bzero_{\ell_\pi\times 1},a_{11},\dots, a_{1\ell_1}, \bzero_{\ell_{2}\times 1}, a_{31}, \dots, a_{3\ell_3})'$. 
	Then, we can evaluate $\hat\Sigep_{b_\pi}$ by $\pi_{bgmm}$. Its consistency follows by similar arguments to those presented in (\rn1).
	% By the partitioned inverse formula, this expression is algebraically equivalent to $\Sigep_{\pi_0,bgmm}=\left(\lim_{n,T\to\infty}\invN D_{\pi}'\Omega_{N}^{-1}M_{\lambda}D_{\pi}\right)^{-1}$.
	\QED
	\textit{Proof of Theorem \ref{thm:gm_gk}}.
	Because 
	$|\hat{g}_{k,gmm}(d)-g_{k0}(d)|\leq|\hat{g}_{k,gmm}(d)-\xi_{k0}(d)|+|g_{k0}(d)-\xi_{k0}(d)|$. 
	Consequently, taking the supremum over $d\in D$ yields
	\begin{flalign*}
		\sup_{d\in D}|\hat{g}_{k,gmm}(d)-g_{k0}(d)|\leq
		\sup_{d\in D}\spnorm{\bphi_{k}(d)}\Fnorm{\hat{\lambda}_{k,gmm}-\lambda_{k_0}}+\Op(\elln^{-\vsinf}).
	\end{flalign*}
	According to \eqref{eq:Fnorm_lambda} and \eqref{eq:H_pi}, we know that
	\begin{flalign}\label{eq:lambda_og_lln}
		\begin{split}
			\hat\lambda_{k,gmm}-\lambda_{k_0}=K_{\lambda}\invN(\cJ_{1}+\cJ_{2}),
		\end{split}
	\end{flalign}
	combining Assumption \ref{ass:sieve-basis}, we have
	\begin{flalign}\label{eq:hatg}
		\begin{split}
			\sup_{d\in D}|\hat{g}_{k,gmm}(d)-g_{k0}(d)|&=\Op\Bigl(\elln^{1/2}\Bigl(\elln^{3/2-\vsinf}+\frac{\elln^{3/2}}{n}+\sqrt{\frac{\elln}{n(T-1)}}\Bigr)+\elln^{-\vsinf}\Bigr)
			\\&=
			\Op\Bigl(\elln^{2-\vsinf}+\frac{\elln^{2}}{n}+\frac{\elln}{\sqrt{n(T-1)}}\Bigr).
		\end{split}
	\end{flalign}
	Next, let 
	\begin{flalign}\label{eq:Var_g}
		\Sigma_{g_{k0},gmm} = \lim_{n,T \to \infty} \frac{1}{\ell_{n}} \phi_{k}'(d) \Sigma_{\pi_{0},gmm} \phi_{k}(d). 
	\end{flalign}
	Using Assumption \ref{ass:sieve-basis}(\rn2), we bound the spectral norm as:
	\begin{flalign*}
		\spnorm{\Sigep_{g_{k0},gmm}}\leq \dfrac{1}{\elln}\spnorm{\bphi_{k}}^2\spnorm{\Sigep_{\pi_0,gmm}}=\elln^{-1}\cdot O(\elln)=O(1).
	\end{flalign*}
	Combing equations \eqref{eq:lambda_og_lln} and \eqref{eq:hatg}, the sufficient condition for the CLT is
	\begin{flalign*}
		\sqrt{\frac{(T-1)\elln^{3}}{n}}+\sqrt{n(T-1)}\elln^{3/2-\vsinf}\to 0 
	\end{flalign*}
	as $(n,T)\to\infty$.
	\QED
	
	%\begin{remark}
	The sufficient rate conditions differ across estimands. 
	Theorem \ref{thm:gm_pi}(i) establishes consistency of the finite-dimensional parameter $\hat\pi_{gmm}$ using the expansion in \eqref{eq:Fnorm_pi} evaluated at $\theta_{0}$, and thus only requires
	$\elln^{3/2-\vsinf}+\elln^{3/2}/n\to0$, together with $\sqrt{\tfrac{\elln}{n(T-1)}}\to0$ as it appears in Assumption \ref{ass:sieve-basis}(\rn1). 
	In contrast, Lemma \ref{lem:consistency}(ii) imposes the stronger condition 
	$\elln^{2-\vsinf}+\tfrac{\elln^{2}}{n}+\tfrac{\elln}{\sqrt{n(T-1)}}\to0$ 
	to guarantee uniform convergence of $\Theta_N=\Gamma\times\Lambda_N$, which is needed for joint consistency of $\hat\theta=(\hat\pi',\hat\lambda')'$. 
	Theorem \ref{thm:gm_gk}(i) requires the same stronger condition to obtain uniform consistency of $\hat g_{k,gmm}(d)$ over $d\in D$. 
	The gap reflects the additional difficulty of controlling stochastic fluctuations uniformly over the growing-dimensional sieve space.

		\newgeometry{top=1.8cm, bottom=1.8cm, left=1.7cm, right=1.7cm}
		\setcounter{assumption}{0}
		\setcounter{lemma}{0}
		\setcounter{footnote}{0}
		\setcounter{table}{0}
		\setcounter{figure}{0}
		\setcounter{equation}{0}
		\setcounter{section}{0}
		\setcounter{definition}{0}
		\setcounter{notation}{0}
		\setcounter{page}{1}
		\renewcommand{\thefigure}{S\arabic{figure}}
		\renewcommand{\theHfigure}{S\arabic{figure}}
		\renewcommand{\thetable}{S\arabic{table}}
		\renewcommand{\theHtable}{S\arabic{table}}
		\renewcommand{\thepage}{S\arabic{page}}
		\renewcommand{\thesection}{S\arabic{section}}
		\renewcommand{\theHsection}{S\arabic{section}}
		\renewcommand{\theequation}{S\arabic{equation}}%
		\renewcommand{\theHequation}{S\arabic{equation}}
		\renewcommand{\thelemma}{S\arabic{lemma}}
		\renewcommand{\theHlemma}{S\arabic{lemma}}
		\renewcommand{\theassumption}{S\arabic{assumption}}
		\renewcommand{\theHassumption}{S\arabic{assumption}}
		\renewcommand{\thefootnote}{S\arabic{footnote}}
		\renewcommand{\theHfootnote}{S\arabic{footnote}}
		\renewcommand{\theremark}{S\arabic{remark}}
		\renewcommand{\theHremark}{S\arabic{remark}}
		\renewcommand{\thedefinition}{S\arabic{definition}}
		\renewcommand{\theHdefinition}{S\arabic{definition}}
		\renewcommand{\thenotation}{S\arabic{notation}}
		\renewcommand{\theHnotation}{S\arabic{notation}}
	\section*{Supplementart material}	
	\begin{center}
			{\LARGE \textbf{A Supplement to}\\ \titlename}
		\end{center}
		% \tableofcontents
		\setstretch{1.2}
		\section{Additional lemmas}
		\begin{lemma}\label{lem:normleq}
			Let $A$ and $B$ be $n \times n$ real matrices, then $\Fnorm{AB}\leq\spnorm{A}\Fnorm{B}$ or $\Fnorm{AB}\leq\spnorm{B}\Fnorm{A}$.
		\end{lemma}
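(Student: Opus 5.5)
The statement to prove is Lemma \ref{lem:normleq}: for real $n\times n$ matrices $A$ and $B$, $\Fnorm{AB}\leq\spnorm{A}\Fnorm{B}$ and $\Fnorm{AB}\leq\spnorm{B}\Fnorm{A}$. The plan is to reduce both inequalities to the defining property of the spectral norm, $\Fnorm{Ax}\leq\spnorm{A}\Fnorm{x}$ for every vector $x$, where $\Fnorm{\cdot}$ on vectors is the Euclidean norm. The reduction uses the fact that the squared Frobenius norm is additive over columns (and, equally, over rows).

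\textbf{First inequality.} Write $B=[b_{1},\dots,b_{n}]$ column by column, so that $AB=[Ab_{1},\dots,Ab_{n}]$. Then
\[
\Fnorm{AB}^{2}=\sum_{j=1}^{n}\Fnorm{Ab_{j}}^{2}\leq\spnorm{A}^{2}\sum_{j=1}^{n}\Fnorm{b_{j}}^{2}=\spnorm{A}^{2}\Fnorm{B}^{2}.
\]
Taking square roots gives $\Fnorm{AB}\leq\spnorm{A}\Fnorm{B}$.

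\textbf{Second inequality.} Apply the first inequality to the transpose. Since $\Fnorm{M}=\Fnorm{M'}$ and $\spnorm{M}=\spnorm{M'}$ for any $M$ (the latter because $M'M$ and $MM'$ have the same nonzero eigenvalues),
\[
\Fnorm{AB}=\Fnorm{B'A'}\leq\spnorm{B'}\Fnorm{A'}=\spnorm{B}\Fnorm{A}.
\]

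The argument involves no real difficulty. The only point to state explicitly is the invariance of both norms under transposition. Square matrices are not actually required; the same argument works for any conformable $A$ and $B$.
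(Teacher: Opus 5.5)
Your proof is correct. The column-wise decomposition gives $\Fnorm{AB}\leq\spnorm{A}\Fnorm{B}$, and transposition, using the transpose-invariance of both norms, gives $\Fnorm{AB}\leq\spnorm{B}\Fnorm{A}$, so you establish both bounds rather than only one as the ``or'' requires. The paper gives no argument beyond calling this a standard matrix norm inequality, and your derivation is the standard one that fills that in.
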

		\begin{proof}
			This is a standard matrix norm inequality.
		\end{proof}
		
		\begin{lemma}\label{lem:boundB}
			(\rn1) For any time-varying square matrix $\Bt$,  if
			$\sup\limits_{t}\|\Bt\|_{1}=\sup\limits_{t}\|\Bt\|_{\infty}=O(1)$, then $\sup\limits_{t}\spnorm{\Bt}=O(1)$.
			\\
			(\rn2) For any time-varying square random matrix $\Bt$, depending on an index $n$. If 
			$\sup_{t}\colnorm{\Bt}=\Op(1)$
			and
			$\sup_{t} \rownorm{\Bt} = \Op1)$,
			then $\sup_{t} \spnorm{\Bt} = \Op(1)$.
		\end{lemma}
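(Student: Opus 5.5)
The plan is to use the standard comparison between the spectral norm and the induced $\ell_1$ and $\ell_\infty$ norms, $\spnorm{H}^2\le\colnorm{H}\,\rownorm{H}$, valid for any real square matrix $H$. I would first establish this inequality. Take $x$ with $\Fnorm{x}=1$ and expand
\[
\Fnorm{Hx}^2=\sum_i\Bigl|\sum_j h_{ij}x_j\Bigr|^2 .
\]
Applying Cauchy--Schwarz to the inner sum with weights $|h_{ij}|^{1/2}\cdot|h_{ij}|^{1/2}|x_j|$ gives
\[
\Bigl|\sum_j h_{ij}x_j\Bigr|^2\le\Bigl(\sum_j|h_{ij}|\Bigr)\Bigl(\sum_j|h_{ij}||x_j|^2\Bigr)\le\rownorm{H}\sum_j|h_{ij}||x_j|^2 .
\]
Summing over $i$ and exchanging the order of summation then bounds the total by $\rownorm{H}\colnorm{H}\sum_j|x_j|^2=\rownorm{H}\colnorm{H}$. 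An equivalent route is to note $\spnorm{H}^2=\rho(H'H)\le\rownorm{H'H}\le\rownorm{H'}\rownorm{H}=\colnorm{H}\rownorm{H}$, using that any induced norm dominates the spectral radius.

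For part (\rn1), I apply the inequality pointwise in $t$: $\spnorm{\Bt}\le(\colnorm{\Bt}\rownorm{\Bt})^{1/2}\le\tfrac12(\colnorm{\Bt}+\rownorm{\Bt})$. Taking $\sup_t$ on both sides yields $\sup_t\spnorm{\Bt}\le\tfrac12(\sup_t\colnorm{\Bt}+\sup_t\rownorm{\Bt})=O(1)$. The bound is deterministic and uniform in $t$ and $n$, so no further argument is needed.

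For part (\rn2), the same pointwise inequality holds for every realization, so
\[
\sup_t\spnorm{\Bt}\le\bigl(\sup_t\colnorm{\Bt}\,\sup_t\rownorm{\Bt}\bigr)^{1/2}.
\]
A product of two $\Op(1)$ sequences is $\Op(1)$, and the square root preserves this. To make the step explicit: given $\varepsilon>0$, choose $M_1,M_2$ so that each supremum exceeds its bound with probability at most $\varepsilon/2$ for large $n$. A union bound then gives $P(\sup_t\spnorm{\Bt}>\sqrt{M_1M_2})\le\varepsilon$. The only step requiring any care is the norm inequality itself; everything else is immediate, so I expect no real obstacle.
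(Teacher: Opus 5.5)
Your proposal is correct and follows the paper's route. The paper also bounds $\|\mathcal{B}_t\|_{\mathrm{sp}}\le(\|\mathcal{B}_t\|_1\|\mathcal{B}_t\|_\infty)^{1/2}$ pointwise in $t$, takes suprema, and in part (ii) notes that the square root of a product of two $O_p(1)$ suprema is $O_p(1)$. You additionally prove the norm inequality, which the paper simply cites as standard, and make the $O_p(1)$ step explicit with a union bound.
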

		\begin{proof}
			(\rn1) 
			We use the standard matrix norm inequality $\spnorm{\Bt} \leq \sqrt{\colnorm{\Bt} \rownorm{\Bt}}$. Taking the supremum over $t$ yields:
			$\sup_{t} \spnorm{\Bt} \leq \sup_{t} \sqrt{\colnorm{\Bt} \rownorm{\Bt}} \leq \sqrt{\left( \sup_{t} \colnorm{\Bt} \right) \left( \sup_{t} \rownorm{\Bt} \right)}=O(1).$
			\\
			(\rn2)
			According to the proof of (\rn1),
			let $X = \sup_{t} \colnorm{\Bt}$ and $Y = \sup_{t} \rownorm{\Bt}$. Then, $Z = XY = O_p(1)$. Since norms are non-negative, $Z \ge 0$. Therefore, $\sqrt{Z} = \sqrt{XY} = O_p(1)$. Hence, $0 \leq \sup_{t} \spnorm{\Bt} \leq \sqrt{XY}= O_p(1)$.
		\end{proof}
		
		\begin{lemma}\label{lem:e-I} 
			Let the $n \times n$ matrix $A$ with its typical element $a_{ij}$ satisfy 
			$\sup_{i,j}|a_{ij}|=\Op(d^{-\nu})$ and
			$\sup_{i}\suma{j=1}{n}|a_{ij}|=\Op(d^{-\nu})$ for some $\nu > 0$. Suppose that $n^{1/2}d^{-\nu} \to 0$ as $n \to\infty$, then:
			$$
			\Fnorm{e^A - I}=\Op(\Fnorm{A})=\Op(n^{1/2}d^{-\nu}).
			$$
		\end{lemma}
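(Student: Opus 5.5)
The plan is to expand $e^A-I$ as a power series and control each power of $A$ through its row-sum bound, then pass from entrywise or row-wise control to the Frobenius norm.

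Write $e^A-I=\sum_{k\ge1}A^k/k!=A\,\Psi(A)$, where $\Psi(A)=\sum_{k\ge0}A^k/(k+1)!$. By Lemma \ref{lem:normleq},
\[
\Fnorm{e^A-I}\le \spnorm{\Psi(A)}\,\Fnorm{A}.
\]
So it suffices to show two things: that $\spnorm{\Psi(A)}=\Op(1)$, and that $\Fnorm{A}=\Op(n^{1/2}d^{-\nu})$.

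\textbf{Bounding $\Psi(A)$.} Use submultiplicativity of the row-sum norm:
\[
\rownorm{\Psi(A)}\le\sum_{k\ge0}\rownorm{A}^k/(k+1)!\le e^{\rownorm{A}},
\]
and $\rownorm{A}=\Op(d^{-\nu})=\op(1)$, so this is $\Op(1)$.

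The column-sum norm is not assumed, so Lemma \ref{lem:boundB} does not apply directly. There are two ways around this.
\begin{itemize}
\item The cleanest is to avoid the spectral norm. Write $e^A-I=\Psi(A)A$ and use the row-wise bound $\Fnorm{MA}^2=\sum_i\|e_i'MA\|^2$. Then use that each row of $MA$ is a combination of rows of $A$ with weights whose absolute sum is at most $\rownorm{M}$. Convexity of $\|\cdot\|^2$ gives $\|e_i'MA\|^2\le\rownorm{M}\sum_j|M_{ij}|\,\|e_j'A\|^2$.
\item The alternative is to also assume (or note, in the applications of the paper) that $\colnorm{A}$ is of the same order, and then invoke Lemma \ref{lem:boundB}(\rn2).
\end{itemize}
I would adopt the first route, since it only uses the row-sum hypothesis. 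The convexity bound still requires a column-sum control of $M=\Psi(A)$ to sum over $i$. This is the main obstacle. If that column control is unavailable, I would fall back on bounding each entry crudely. Note that $|[A^k]_{ij}|\le\sup_{i,j}|a_{ij}|\,\rownorm{A}^{k-1}$. Hence every row of $e^A-I$ has entries dominated by $\sup|a_{ij}|\,e^{\rownorm{A}}$, and absolute row sum at most $\rownorm{A}e^{\rownorm{A}}$. Using $\sum_j x_j^2\le(\max_j|x_j|)\sum_j|x_j|$, the squared Frobenius norm of each row is then $\Op(d^{-2\nu})$. Summing over $n$ rows gives $\Fnorm{e^A-I}^2=\Op(nd^{-2\nu})$. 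This fallback in fact avoids column sums entirely, so I would present it as the main argument.

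\textbf{Bounding $\Fnorm{A}$.} The same row inequality applied to $A$ itself gives
\[
\Fnorm{A}^2\le n\,\sup_{i,j}|a_{ij}|\,\sup_i\sum_j|a_{ij}|=\Op(nd^{-2\nu}).
\]
Combining this with the bound on $e^A-I$ yields $\Op(n^{1/2}d^{-\nu})$. The condition $n^{1/2}d^{-\nu}\to0$ just makes this $\op(1)$.

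One caveat: the claimed equality $\Op(\Fnorm{A})$ is only meaningful as an upper bound, and I would state it as such. It follows from $\Fnorm{e^A-I-A}\le\sum_{k\ge2}\Fnorm{A^k}/k!$, together with the same entry and row-sum bounds, which make the higher-order terms of smaller order than $n^{1/2}d^{-\nu}$.
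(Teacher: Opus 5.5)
Your argument is correct for the rate $\Fnorm{e^A-I}=\Op(n^{1/2}d^{-\nu})$, but it takes a different route from the paper.

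\textbf{The two routes.} The paper works entirely in the Frobenius norm. It bounds $\Fnorm{A}$ exactly as you do. It then uses $\Fnorm{A^k}\le\Fnorm{A}^k$ to get $\Fnorm{e^A-I}\le e^{\Fnorm{A}}-1\le\Fnorm{A}e^{\Fnorm{A}}$, and invokes $n^{1/2}d^{-\nu}\to0$ so that $\Fnorm{A}=\op(1)$ and hence $e^{\Fnorm{A}}=\Op(1)$.

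You instead control powers of $A$ entrywise and through the row-sum norm, via $|[A^k]_{ij}|\le\sup_{i,j}|a_{ij}|\rownorm{A}^{k-1}$ and $\rownorm{A^k}\le\rownorm{A}^k$. You then pass to the Frobenius norm row by row using $\sum_j x_j^2\le\max_j|x_j|\sum_j|x_j|$.

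The trade-off is as follows:
\begin{itemize}
\item The row-sum norm is submultiplicative and does not grow with $n$. So your bound needs only $d^{-\nu}=O(1)$, and it survives even when $n^{1/2}d^{-\nu}\not\to0$, where the paper's factor $e^{\Fnorm{A}}$ would explode.
\item The paper's route, however, delivers the middle claim $\Fnorm{e^A-I}=\Op(\Fnorm{A})$ in its genuine ratio sense. Yours does not.
\end{itemize}

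\textbf{The middle claim.} Your caveat argument only yields $\Fnorm{e^A-I}\le\Fnorm{A}+\Op(n^{1/2}d^{-2\nu})$. That is $\Op(\Fnorm{A})$ only if $\Fnorm{A}$ is bounded below at rate $n^{1/2}d^{-\nu}$. Reinterpreting the claim ``as an upper bound'' therefore weakens the statement.

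The ``main obstacle'' in your first route is not real. You have $\spnorm{\Psi(A)}\le\sum_{k\ge0}\spnorm{A}^k/(k+1)!\le e^{\spnorm{A}}\le e^{\Fnorm{A}}$, and $\Fnorm{A}=\Op(n^{1/2}d^{-\nu})=\op(1)$ by hypothesis. Hence $\Fnorm{e^A-I}\le\spnorm{\Psi(A)}\Fnorm{A}=\Op(\Fnorm{A})$, with no column sums needed. This is where the assumption $n^{1/2}d^{-\nu}\to0$ actually does its work, rather than merely making the final rate $\op(1)$.

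Downstream, in bounding $R_k$, only the final rate is used, so your version suffices for the paper's purposes.
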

		\begin{proof}\sloppy
			First, it is easy to conclude that $\Fnorm{A}=\Op(n^{1/2}d^{-\nu})$ since $\suma{j=1}{n}|a_{ij}|^2 \leq \left(\sup_{k}|a_{ik}| \right) \left(\suma{j=1}{n}|a_{ij}|\right)\leq\Op(d^{-2\nu})$.  Therefore, $\Fnorm{A}^2=\Op(n d^{-2\nu})$.
			Second,
			Using the triangle inequality for the Frobenius norm and the submultiplicative property, we get:
			$\Fnorm{e^A - I}\leq\Fnorm{A}+\frac{\Fnorm{A^2}}{2!}+\frac{\Fnorm{A^3}}{3!}+\cdots$.
			Since $\Fnorm{A^k} \leq \Fnorm{A}^k$, this becomes:
			$\Fnorm{e^A-I}\leq\Fnorm{A}+\frac{\Fnorm{A}^2}{2!}+\frac{\Fnorm{A}^3}{3!}+\cdots=e^{\Fnorm{A}}-1$.
			Since $\Fnorm{A^k} \leq \Fnorm{A}^k$, this becomes:
			$\Fnorm{e^A - I}\leq\Fnorm{A}+\frac{\Fnorm{A}^2}{2!} + \frac{\Fnorm{A}^3}{3!}+\cdots=e^{\Fnorm{A}}-1$.
			Let $X_p=\Fnorm{A}$, 
			we have $e^{X_p} - 1 = X_p + \op(1)$. 
			Combining above all, we conclude that $\Fnorm{e^A - I}=\Op(\Fnorm{A})$.
		\end{proof}
		Lemma \ref{lem:e-I} shows that the MESS-type has the same asymptotic behavior as that of the SAR-type.
		\begin{lemma}\label{lem:hTt}
			When $T \to\infty$,
			$\suma{t=1}{T-1}\hTt^2=O(T)$ and $\suma{t=1}{T-1}(1-\hTt^2)=O(\ln{T})$, where $\hTt=\sqrt{\frac{T-t}{T-t+1}}$.
		\end{lemma}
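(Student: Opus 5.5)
The plan is to compute both sums directly from the definition $\hTt^2=\frac{T-t}{T-t+1}$, after the change of index $s=T-t+1$. As $t$ runs from $1$ to $T-1$, $s$ runs from $T$ down to $2$. This gives $\hTt^2=1-\frac{1}{s}$ and $1-\hTt^2=\frac{1}{s}$.

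For the first claim, note that $0\le \hTt^2<1$ for every $t$. Hence $\suma{t=1}{T-1}\hTt^2\le T-1=O(T)$. The lower bound is also of order $T$, since $\hTt^2\ge \tfrac12$ for all $t\le T-1$. So the sum is exactly of order $T$, although only the upper bound is needed for the $O(\cdot)$ statement.

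For the second claim, write
\[
\suma{t=1}{T-1}(1-\hTt^2)=\suma{s=2}{T}\frac{1}{s}=H_T-1,
\]
where $H_T$ is the $T$-th harmonic number. Then bound $H_T$ by the standard integral comparison
\[
\suma{s=2}{T}\frac{1}{s}\le\int_{1}^{T}\frac{dx}{x}=\ln T,
\]
which uses that $1/x$ is decreasing, so $\frac1s\le\int_{s-1}^{s}\frac{dx}{x}$. This yields $O(\ln T)$, and in fact the sum is asymptotically $\ln T$. Combining the two displays with $\suma{t=1}{T-1}\hTt^2=(T-1)-(H_T-1)$ also recovers the first claim exactly as $T-H_T=O(T)$.

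There is no real obstacle here. The only care point is getting the index change right, namely which endpoint of $s$ corresponds to $t=1$ and to $t=T-1$, so that the harmonic sum starts at $s=2$ and the integral bound applies without a boundary term.
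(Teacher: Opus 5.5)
Your proof is correct and follows the same route as the paper, which also writes $h_{Tt}^2=1-\frac{1}{T-t+1}$ and uses the harmonic-sum bound $\sum_{t=1}^{T-1}\frac{1}{T-t+1}=O(\ln T)$. The only differences are that you make the index change and the integral comparison explicit, and you bound the first sum directly by $T-1$ rather than writing it as $O(T)-O(\ln T)$.
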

		\begin{proof}
			Note that $\hTt^2=1-\frac{1}{T-t+1}$, and we have $\suma{t=1}{T-1}\hTt^2=O(T)-O(\ln{T})=O(T)$ since $\suma{t=1}{T-1}\frac{1}{T-t+1}=O(\ln{T})$.
		\end{proof}
		
		\begin{lemma}\label{lem:Rr}
			% Let $\elln^{-\approg}=\max\limits_{i=1,2,3}\ell_{k}^{-\approgk}$.
			Under Assumption \ref{ass:appro_G} and $\ell_{k}^{-\approgk}+n^{1/2}\ell_{k}^{-\approgk} \to 0$ as $n \to\infty$:\\
			(\rn1) $e^{\Delta_{k}}-I_{n}$ satisfies \propref{propzero}{$\Op(\ell_{k}^{-\varsigma_{k}})$}, where $\Delta_{k}$ is defined in Notation \ref{notation:sup}(\rn2). Also, $R_{k}$ and $S_{k}R_{k}$ satisfy \propref{propzero}{$\Op(\ell_{k}^{-\varsigma_{k}})$}, and $B_{k}$ and $S_{k}$ satisfy \ref{propUB};
			\\
			(\rn2)
			$H_k$ satisfies \propref{propzero}{$\Op(\errorD{-}{3})$}, where $H_{k}=R_{k}B_{k}^{-1}$;
			\\
			(\rn3) 
			$\E\Fnorm{r_{kt}^{*}}=O(n^{1/2}\ell_{k}^{-\approgk}h_{Tt})$ where $h_{Tt}=\sqrt{\frac{T-t}{T-t+1}}$, $r_{kt}^{*}=h_{Tt}(r_{kt}-\frac{1}{T-t}\suma{h=t+1}{T}r_{kt})$ and $r_{kt}$ is defined in equation \eqref{eq:Vt}.
		\end{lemma}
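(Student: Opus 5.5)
Each part follows from the three norm bounds behind \propref{propzero}{$\Op(h)$}: the row/column-sum bound, the spectral bound via Lemma \ref{lem:boundB}, and the Frobenius bound. Throughout, Assumption \ref{ass:appro_G} and Assumption \ref{ass:sieve-basis} give $\Delta_k$ the entrywise bound $\Op(\ell_k^{-\varsigma_k})$.

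For (\rn1), I first control $\Delta_k$. By Assumption \ref{ass:appro_G}(\rn2)--(\rn3'), $\delta_k(d_{ij})=\sum_{l>\ell_k}\lambda_l\phi_{kl}(d_{ij})$. Each $\varPhi_{kl}$ has uniformly bounded row and column sums, so $\rcnorm{\Delta_k}\le \sup_l\rcnorm{\varPhi_{kl}}\sum_{l>\ell_k}|\lambda_l|=O(\ell_k^{-\varsigma_k})$. For the Frobenius norm I use $\sum_j\delta_k(d_{ij})^2\le \sup_{j}|\delta_k(d_{ij})|\sum_j|\delta_k(d_{ij})|$, as in the proof of Lemma \ref{lem:e-I}. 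This gives $\Fnorm{\Delta_k}=\Op(\sqrt n\,\ell_k^{-\varsigma_k})$, which tends to zero by the stated rate condition. Lemma \ref{lem:boundB} then converts the $\rcnorm{\cdot}$ bound into the spectral bound.

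For $e^{\Delta_k}-I_n$, I expand the power series. Both $\rcnorm{\cdot}$ and $\spnorm{\cdot}$ are submultiplicative, so $\|e^{\Delta_k}-I_n\|\le e^{\|\Delta_k\|}-1=O(\|\Delta_k\|)$ in each of these norms. The Frobenius bound is Lemma \ref{lem:e-I}. In the SAR case, $R_k=\pm\Delta_k$ directly. In the MESS case, $R_k=S_k(e^{\Delta_k}-I_n)$, and I combine $\|S_k\|=O(1)$ with $\Fnorm{S_kH}\le\spnorm{S_k}\Fnorm{H}$ from Lemma \ref{lem:normleq}. The claim that $S_k$ is UB follows from $\Xi_k=G_k-\Delta_k$ and the uniform boundedness of $\rcnorm{G_k}$. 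For MESS, $\rcnorm{e^{\Xi_k}}\le e^{\rcnorm{\Xi_k}}$. Then $B_k=S_k+R_k$ is UB, and so is $S_kR_k$.

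For (\rn2), I need $B_k^{-1}$ to be UB. In MESS, $B_k^{-1}=e^{-G_k}$, which is immediate. In SAR I use $B_k^{-1}=\sum_h G_k^h$ under the invertibility and row-sum bound implicit in Assumption \ref{ass:appro_G}(\rn1); this is where I expect to have to assume $\sup\rcnorm{G_k}<1$ or a Neumann-type bound. Then $H_k=R_kB_k^{-1}$ inherits the rate $\Op(\ell_k^{-\varsigma_k})$ by submultiplicativity, and the Frobenius bound again follows from Lemma \ref{lem:normleq}. The exponent stated in the lemma should then be read as the corresponding $\ell_k^{-\varsigma_k}$ (or the worst of the three) rate.

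For (\rn3), the terms have the form $r_{1t}=-S_3R_1Y_t$, $r_{2t}=S_3R_2Y_{t-1}$ and $r_{3t}=-R_3U_t=-R_3B_3^{-1}E_t$. I bound $\E\Fnorm{r_{kt}}\le \spnorm{S_3}\spnorm{R_k}\,\E\Fnorm{Y_t}$. Here $\E\Fnorm{Y_t}=O(\sqrt n)$ follows by backward substitution, using $\rcnorm{\Upsilon}<\infty$ together with Assumptions \ref{ass:disterbance} and \ref{ass:UB of reg}. This yields $O(\sqrt n\,\ell_k^{-\varsigma_k})$ uniformly in $t$. For the FOD combination, the triangle inequality gives $\E\Fnorm{r^*_{kt}}\le h_{Tt}\bigl(\E\Fnorm{r_{kt}}+\frac1{T-t}\sum_{h}\E\Fnorm{r_{kh}}\bigr)$, which is again $O(\sqrt n\,\ell_k^{-\varsigma_k}h_{Tt})$.

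The main obstacle is the uniform-in-$t$ moment bound on $Y_t$ in (\rn3). It requires the stability series $\sum_h A^h$ to be controlled in $\rcnorm{\cdot}$, which is supplied by $\Upsilon$.
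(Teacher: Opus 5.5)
Your proposal is correct and follows essentially the paper's route: bound $\Delta_k$, pass to $e^{\Delta_k}-I_n$ through the power series and Lemma \ref{lem:e-I}, obtain $R_k$, $S_kR_k$ and $H_k$ by submultiplicativity together with Lemma \ref{lem:normleq}, and close (\rn3) with the triangle inequality over the forward-orthogonal-deviation average.

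The differences are minor. You bound $B_k^{-1}$ directly (immediate under MESS since $B_k^{-1}=e^{-G_k}$), where the paper uses the perturbation identity $B_3^{-1}=(I_n+S_3^{-1}R_3)^{-1}S_3^{-1}$ plus a Neumann series. You use $\E\Fnorm{H_3E_t}\le\spnorm{H_3}\E\Fnorm{E_t}$ instead of the paper's second-moment computation. The SAR invertibility bound you flag is equally implicit in the paper: it needs $\spnorm{S_3^{-1}}=O(1)$ and $\Fnorm{Y_t}=O_p(\sqrt{n})$ without stating the corresponding assumption.
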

		\begin{proof}
			For (\rn1), first, from Lemma \ref{lem:e-I}, we can derive that $\Fnorm{e^{\Delta_{k}}-I_{n}}=\Op(\Fnorm{\Delta_{k}})=\Op(n^{1/2}\ell_{k}^{-\approgk})$.  
			Second, $\Vert S_{k}\Vert_{\rc}\leq e^{\suma{\pk=1}{\ell_{k}}|\lambda_{\pk}|\Vert\varPhi_{k\pk}\Vert_{\rc}}=\Op(1)$ from Assumption \ref{ass:appro_G} (\rn1) and (\rn2) and thus $\spnorm{S_{k}}=\Op(1)$.
			Third, since
			$R_{k}=S_{k}(e^{\Delta_{k}}-I_{n})$, by Lemmas \ref{lem:normleq} and \ref{lem:e-I}, we have
			\[
			\Fnorm{R_{\gk}}\leq\spnorm{S_{k}}\Fnorm{e^{\Delta_{k}}-I_{n}}=\Op(1)\Op(\Fnorm{\Delta_{k}})=\Op(n^{1/2}\ell_{k}^{-\approgk}).
			\]
			The results of $\Fnorm{R_{k}}_{\rc}=\Op(\ell_{k}^{-\approgk})$ since $\Fnorm{e^{\Delta_{k}}-I_{n}}_{\rc}=\Op(\Fnorm{\Delta_{k}}_{\rc})=\Op(\ell_{k}^{-\approgk})$.
			Fourth, we have 
			\[
			\Fnorm{S_{k}R_{\gk}}\leq\spnorm{S_{k}}\Fnorm{R_{\gk}}=\Op(n^{1/2}\ell_{k}^{-\approgk}).
			\]
			For (\rn2), first, combining \eqref{eq:sdpd_np} and \eqref{eq:matrix function}, we have
			\begin{flalign}\label{eq:Ut, Et and RS}
				U_{t}=B_{3}^{-1}E_{t}=(R_{3}+S_{3})^{-1}E_{t}
			\end{flalign}
			When $k=3$, we have $r_{3t}=R_{3}U_{t}=H_{3}E_{t}$, where $H_{3}=R_{3}B_{3}^{-1}$.
			Since $B_{k}$, $S_{k}$ and $R_{k}$ are invertible, we have 
			$B_{3}^{-1}=(I_{n}+S_{3}^{-1}R_{3})^{-1}S_{3}^{-1}$.
			Then, we have $\spnorm{S_{k}^{-1}R_{k}}=\Op(n^{1/2}\ell_{k}^{-\approgk})$ which implies $\ell_{k}^{\approgk}=o(n^{1/2})$. Thus, by the Neumann series, we have $\spnorm{(I_{n}+S_{3}^{-1}R_{3})^{-1}}=O(1)$, $\spnorm{B_{3}^{-1}}=O(1)$ and $\spnorm{H_{3}}=\Op(\ell_{k}^{-\approwu})$.
			For (\rn3), consider $k=3$, and $r_{3t}^{*}=h_{Tt}(r_{3t}-\frac{1}{T-t}\suma{h=t+1}{T}r_{3t}).$
			% From the \textit{proof of Theorem 1} in \cite{pinkse2002spatial}, we have
			Note that
			\begin{flalign*}
				\begin{split}
					\E\Fnorm{r_{3t}}^2&
					=\E\suma{i=1}{n}r_{3t,i}^2=\E\suma{i=1}{n}(\suma{j=1}{n}h_{3,ij}\epsilon_{jt})^2
					=\E\suma{i=1}{n}\suma{j=1}{n}\suma{k=1}{n}h_{3,ij}h_{3,ik}\epsilon_{jt}\epsilon_{kt}
					\\
					&=\E\suma{i=1}{n}\suma{j=1}{n}h_{3,ij}^2\epsilon_{jt}^2
					\leq\sqrt{\E\left[\suma{i=1}{n}(\sum_{j=1}^{n}h_{3,ij}^2)^2\right]}\sqrt{\E\left[\suma{j=1}{n}\epsilon_{j}^4\right]}\\
					&=O(\sqrt{n}\errorD{-2}{3})\cdot O(\sqrt{n})
					=O(n\errorD{-2}{3}),
				\end{split}  
			\end{flalign*}
			by Cauchy-Schwartz inequality and the fact that $\suma{j=1}{n}h_{3,ij}^2\leq(\suma{j=1}{n}|h_{ij}|)^2\leq \rownorm{H_{3}}^2=\Op(\errorD{-2}{3})$.
			Thus,
			$\E\Fnorm{r_{3t}}\leq\sqrt{\E\Fnorm{r_{3t}}^2}=O(n^{1/2}\ell_{k}^{-\approwu})$. So we have
			$\E\Fnorm{\frac{1}{T-t}\suma{h=t+1}{T}r_{3t}}\leq\frac{1}{T-t}\suma{h=t+1}{T}\E\Fnorm{r_{3t}}=O(n^{1/2}\ell_{k}^{-\approwu}).$
			Finally, $\E\Fnorm{r_{3t}^{*}}=O(n^{1/2}\ell_{3}^{-\approwu}h_{Tt})$. Similarly, we have
			$\E\Fnorm{r_{1t}^{*}}=O(n^{1/2}\ell_{1}^{-\approwy}h_{Tt})$ and $\E\Fnorm{r_{2t}^{*}}=O(n^{1/2}\ell_{2}^{-\approwyl}h_{Tt})$ as $\rho(A)<1$.
		\end{proof}
		%In the following part, we denote the true estimand as $\theta_{0}=(\gamma_0',\lambda_{0}')'$.
		
		\begin{lemma}\label{lem:EPE}
			Under Assumption \ref{ass:sieve-basis}(\rn1), the covariance of $\mNL(\theta_{0})$ and $\mNQ(\theta_{0})$ is asymptotically zero as $(n,T)\to\infty$.
		\end{lemma}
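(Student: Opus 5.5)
We need to show that $\E\bigl[(m_N^{\lin}(\theta_0)-\E m_N^{\lin}(\theta_0))(m_N^{\qua}(\theta_0)-\E m_N^{\qua}(\theta_0))'\bigr]/N\to 0$. The plan is to split $\bV_N^*=\bE_N^*+\brL_N^*$. The covariance then breaks into a leading part built only from $\bE_N^*$, plus cross terms involving $\brL_N^*$. We show the leading part vanishes exactly, or up to an asymptotically negligible term, and then use Assumption \ref{ass:sieve-basis}(\rn1) to kill the approximation-error parts.

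\textbf{Leading term.} Take a typical entry, $\E\bigl[q'\bJ_N\bE_N^*\cdot \bE_N^{*\prime}\bJ_N\bP_{Nj}\bJ_N\bE_N^*\bigr]$, where $q$ is a column of $\bQ_N$. Condition on $\sigmaI{t-1}$ and the exogenous variables, and use the martingale structure of Assumption \ref{ass:IV}(\rn1).
\begin{itemize}
\item Because the FOD transform is orthogonal and $E_t$ are independent with zero mean, $\bE_N^*$ is serially uncorrelated, and block $t$ of the quadratic form involves $E_t^*$ only.
\item If $q$ is strictly exogenous (e.g.\ the $X_t^*$ columns), the expectation reduces to a sum of third moments: $\sum_i \E(\epsilon_{it}^3)\, a_{ii}\, c_i$, with $a_{ii}$ the diagonals of $J_nP_{jt}J_n$.
\item These diagonals are zero by the construction in Lemma \ref{lem:transJ}(\rn1), or of $\JSL{t}P_{jt}\JSL{t}$ by Lemma \ref{lem:transJ}(\rn2). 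This is exactly why the $\diamond$ adjustment is imposed, so the third-moment term vanishes.
\item If $q$ contains $Y_{t-1}$, then $Y_{t-1}$ is $\sigmaI{t-1}$-measurable while $E_t^*$ is built from $E_t,\dots,E_T$. Conditioning again reduces the expectation to the same diagonal third-moment term, which is zero.
\item The residual dependence of $Y_{t-1}$ on future shocks through the FOD averaging is controlled as in the $\kappa_{1\gamma},\kappa_{2\gamma}$ computations of the proof of Theorem \ref{thm:gm_pi}. It is $O(\elln\ln T)$ in total, hence $o(N)$ after division by $N$.
\end{itemize}

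\textbf{Approximation-error terms.} The remaining terms contain at least one factor of $\brL_N^*$. By Lemma \ref{lem:Rr}(\rn3), $\E\Fnorm{r_t^*}=O(\sqrt n\,\elln^{-\vsinf})$. Assumption \ref{ass:IV}(\rn2) gives \ref{propUB} for $P_{jt}$, and Assumption \ref{ass:IV}(\rn1) bounds $\E\Fnorm{Q_t}$. Cauchy--Schwarz then bounds each cross moment by a product of the following:
\begin{itemize}
\item $\Fnorm{\bQ_N}=O_p(\sqrt{N\elln})$;
\item $\Fnorm{\bE_N^*}=O_p(\sqrt N)$;
\item $\Fnorm{\brL_N^*}=O_p(\sqrt N\,\elln^{-\vsinf})$.
\end{itemize}
The dominant contribution is $O(N\elln^{1/2-\vsinf})$, giving $O(\elln^{1/2-\vsinf})$ after division by $N$. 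This tends to $0$ because $\vsinf>2$ and $\sqrt{N}\elln^{-\vsinf}\to0$, as stated in Assumption \ref{ass:sieve-basis}(\rn1). Terms quadratic in $\brL_N^*$ are even smaller, by the same bounds used in \eqref{eq:order of VV}.

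\textbf{Main obstacle.} The delicate step is the $Y_{t-1}$ instrument column. The FOD-transformed lagged outcome correlates with $E_t^*$, so the covariance does not vanish by mere independence. To handle it, I would expand $Y_{t-1}$ via \eqref{eq:Yt+h} into past shocks, which are independent of $E_t,\dots,E_T$, plus the deterministic part. The cross moment then collapses to diagonal third moments, which vanish by Lemma \ref{lem:transJ}, plus FOD correction terms weighted by $1-h_{Tt}^2$. Those correction terms sum to $O(\ln T)$ by Lemma \ref{lem:hTt}, which is negligible relative to $T$.
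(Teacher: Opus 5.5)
Your architecture matches what the paper intends, and your leading-term argument is more explicit than anything the paper writes. You split $\bV_N^*=\bE_N^*+\brL_N^*$, kill the pure-$\bE_N^*$ covariance through third moments against the zero diagonals of $J_nP_{jt}J_n$ (resp.\ $J(\Sigma_t)P_{jt}J(\Sigma_t)$), and bound every piece involving $\brL_N^*$ by the sieve rate. The paper's own proof is only a pointer to the remainder-order calculations in the proof of Theorem~\ref{thm:gm_pi}.

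\textbf{The remainder step does not close as written.} The order you quote, $O(N\elln^{1/2-\vsinf})$ and hence $O(\elln^{1/2-\vsinf})$ after dividing by $N$, is the order of the \emph{bias} $\frac1N\bQ_N'\bJ_N\brL_N^*$ (compare Lemma~\ref{lem:norm of the moments}). It is not the order of a covariance. A covariance between $q'\bJ_N\brL_N^*$ and $\bE_N^{*\prime}\bJ_N\bP_{Nj}'\bJ_N\bE_N^*$ also carries the $O(\sqrt N)$ standard deviation of the quadratic form. Indeed, the product of the three norms you list is $N^{3/2}\elln^{1/2-\vsinf}$, i.e.\ $\sqrt N\elln^{1/2-\vsinf}$ after normalisation. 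Assumption~\ref{ass:sieve-basis}(\rn1) does not make that vanish. Conversely, $\elln^{1/2-\vsinf}\to0$ holds trivially, which would leave the lemma's hypothesis with nothing to do; that is a sign the bookkeeping is off.

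The repair is to argue entrywise and in $L^2$, since a covariance is a moment and $O_p$ rates do not bound it. Fix one column $q$ of $\bQ_N$.
\begin{itemize}
\item Lemma~\ref{lem:Rr}(\rn3) and the $4+\delta_\epsilon$ moments of Assumption~\ref{ass:disterbance} give $\E(q'\bJ_N\brL_N^*)^2=O(N^2\elln^{-2\vsinf})$.
\item The same moments give $\Var(\bE_N^{*\prime}\bJ_N\bP_{Nj}'\bJ_N\bE_N^*)=O(N)$.
\item Hence $\frac1N|\operatorname{Cov}|=O(\sqrt N\elln^{-\vsinf})$.
\item The term $\operatorname{Cov}(q'\bJ_N\bE_N^*,\bE_N^{*\prime}\bJ_N\bP_{Nj}'\bJ_N\brL_N^*)$ has the same order, and terms quadratic in $\brL_N^*$ are smaller.
\end{itemize}
This is exactly the condition $\sqrt N\elln^{-\vsinf}\to0$ in Assumption~\ref{ass:sieve-basis}(\rn1), which explains why it is the hypothesis. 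If you want the whole $\lqn\times\lpn$ block in Frobenius norm, you need $\sqrt N\elln^{1-\vsinf}\to0$. That holds under the rate conditions of Theorem~\ref{thm:gm_pi}(\rn2), but not under Assumption~\ref{ass:sieve-basis}(\rn1) alone.

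\textbf{Your treatment of the $Y_{t-1}$ columns reaches the right conclusion but misplaces the difficulty.} The instrument $Y_{t-1}$ (and $\bestY_t$) is $\sigmaI{t-1}$-measurable, so it has no dependence on future shocks. The $\kappa_{1\gamma},\kappa_{2\gamma}$ computations bound the bias $\bY_N^{(*,-1)\prime}\bM_N\bE_N^*$, not this covariance.

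The non-third-moment piece comes from pairing block $t$ of the linear moment with blocks $s<t$ of the quadratic moment. Write $E_s^*=a_s+b_s$, with $a_s$ built from $E_s,\dots,E_{t-1}$ and $b_s=-h_{Ts}(T-s)^{-1}\sum_{h=t}^{T}E_h$. Let $\Pi_s=\tfrac12J_n(P_{js}+P_{js}')J_n$. Then $\E[(Y_{t-1}'J_nE_t^*)(a_s'\Pi_sb_s)]$ factorises into the second moments $\E[Y_{t-1}a_s']$ and $\E[E_t^*b_s']$; the zero-diagonal device plays no role here.
\begin{itemize}
\item Under \ref{var:homo} and \ref{var:hetei}, $\E[E_t^*b_s']$ is exactly zero, because $E_t^*$ is uncorrelated with $\sum_{h\ge t}E_h$.
\item Under \ref{var:hetet}, it is $O((T-s)^{-1})$. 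Summing gives $O(n\ln T)$ in total, i.e.\ $O(\ln T/T)$ after dividing by $N$.
\end{itemize}
All remaining pieces are third moments of the original $\epsilon_{ih}$, weighted by FOD coefficients and by $[J_nP_{js}J_n]_{ii}=0$. They vanish for that reason, not because $E_t^*$ has independent entries.

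With these two repairs your argument proves the lemma.
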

		\begin{proof}
			Similar to the Proof of Theorem \ref{thm:gm_pi}.
		\end{proof}
		
		\begin{lemma}\label{lem:norm of M}
			Under Assumption \ref{ass:IV}, $\Fnorm{M_{t}}_{\rc}=\Op\left(\mn\right)$, and $\Fnorm{M_{t}}=\Op\left(\sqrt{\mn}\right)$.
		\end{lemma}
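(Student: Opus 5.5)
The plan is to treat the two norms separately, and the Frobenius bound first because it is purely algebraic. Write $W_t=\frac1n Q_t'J_n\Sigma_tJ_nQ_t$, so that $M_t=\frac1n J_nQ_tW_t^{-1}Q_t'J_n$. The matrix $\Pi_t=\Sigma_t^{1/2}M_t\Sigma_t^{1/2}=\Sigma_t^{1/2}J_nQ_t(Q_t'J_n\Sigma_tJ_nQ_t)^{-1}Q_t'J_n\Sigma_t^{1/2}$ is symmetric and idempotent with $\rank{\Pi_t}\leq\lqn$. Hence $\Fnorm{\Pi_t}=\sqrt{\rank{\Pi_t}}\leq\sqrt{\lqn}$. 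Since $M_t=\Sigma_t^{-1/2}\Pi_t\Sigma_t^{-1/2}$, Lemma \ref{lem:normleq} gives
\[
\Fnorm{M_t}\leq\spnorm{\Sigma_t^{-1/2}}^2\sqrt{\lqn}.
\]
Assumption \ref{ass:disterbance} gives $\spnorm{\Sigma_t^{-1}}=O(1)$, and Assumption \ref{ass:IV}(\rn4) gives $\lqn\sim\elln$. Together these yield $\Fnorm{M_t}=\Op(\sqrt{\elln})$.

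For the $\rc$-norm, note that $M_t$ is symmetric, so $\colnorm{M_t}=\rownorm{M_t}$ and only the row-sum norm needs bounding. Let $q_i'$ denote the $i$th row of $J_nQ_t$, a vector of length $\lqn$. Then $[M_t]_{ij}=\frac1n q_i'W_t^{-1}q_j$, and by Cauchy--Schwarz $|[M_t]_{ij}|\leq\frac1n\spnorm{W_t^{-1}}\Fnorm{q_i}\Fnorm{q_j}$. Summing over $j$ and applying Cauchy--Schwarz once more gives
\[
\sum_{j=1}^n|[M_t]_{ij}|\leq\frac1n\spnorm{W_t^{-1}}\,\Fnorm{q_i}\,\sqrt{n}\,\Fnorm{J_nQ_t}.
\]
The three factors are then bounded as follows.
\begin{itemize}
\item[(a)] $\spnorm{W_t^{-1}}=\Op(1)$. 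This comes from Assumption \ref{ass:IV}(\rn3), the Property SP of $\frac1N\bQ_N'\bJ_N\Sigep_N\bJ_N\bQ_N$, read block by block as a lower bound on $\mu_{\min}$ for each $t$.
\item[(b)] $\Fnorm{J_nQ_t}\leq\Fnorm{Q_t}=\Op(\sqrt{n\lqn})$, by Markov's inequality and $\E|q_{ijt}|^2\leq C_q$ from Assumption \ref{ass:IV}(\rn1).
\item[(c)] $\Fnorm{q_i}=\Op(\sqrt{\lqn})$. Here $q_i$ is a row of $Q_t$ minus the column means, and both pieces have $\lqn$ entries with bounded second moments.
\end{itemize}
Multiplying, the row sum is $\Op\bigl(\frac1n\sqrt{\lqn}\sqrt n\sqrt{n\lqn}\bigr)=\Op(\lqn)=\Op(\elln)$, which is the claimed $\rc$ bound.

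The main obstacle is step (c): the bound has to hold uniformly over the row index $i$, because the row-sum norm takes a maximum over $i$. A second-moment bound only controls $\Fnorm{q_i}$ for each fixed $i$. My first approach is to use the structure of the instruments. The columns of $Q_t$ are $X_t^*$ (bounded by Assumption \ref{ass:UB of reg}) and $\varPhi$-products applied to $Y_{t-1}$ or $X_t^*$. The $\varPhi$ matrices have uniformly bounded row sums by Assumption \ref{ass:appro_G}(\rn2), so each entry is a bounded linear combination of finitely many $y_{j,t-1}$. Since $\sup_{i,t}\E y_{it}^2<\infty$ by backward substitution and $\rcnorm{\Upsilon}<\infty$, this gives $\max_i\Fnorm{q_i}^2=\Op(\lqn)$ in the same sense as the elementwise bounds used elsewhere in the paper. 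If that fails, the fallback is to bound $\max_i$ by $n^{2/(4+\delta_\epsilon)}$ via the $4+\delta_\epsilon$ moments and absorb the loss through Assumption \ref{ass:nT}. I expect the structural argument to suffice.
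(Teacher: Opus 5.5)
Your Frobenius bound is correct and follows a different, better route than the paper. The paper argues entrywise, passing from $\sum_j m_{ijt}^2=\Op(\elln/n)$ ``uniformly in $i$'' to $\Fnorm{M_t}=\Op(\sqrt{\elln})$; as stated, that step rests on uniform control of the rows of $J_nQ_t$. You instead observe that $\Pi_t=\Sigma_t^{1/2}M_t\Sigma_t^{1/2}$ is an orthogonal projection of rank at most $\lqn$. This gives $\Fnorm{M_t}\le\spnorm{\Sigma_t^{-1}}\sqrt{\lqn}$ deterministically, uses neither step (a) nor any row-wise bound, and also yields $\spnorm{M_t}\le\spnorm{\Sigma_t^{-1}}=O(1)$. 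That is sharper than the paper's $\spnorm{M_t}=\Op(\elln)$.

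For the $\rc$-norm you follow the paper's entrywise Cauchy--Schwarz route, and step (c) is a genuine gap. It is the same gap the paper passes over by simply asserting $|m_{ijt}|=\Op(\elln/n)$ ``uniformly in $i,j$'', and neither of your fixes closes it.

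Bounded second moments of the $y_{j,t-1}$ control $\Fnorm{q_i}$ for each fixed $i$, or on average over $i$, but not $\max_i\Fnorm{q_i}$. Since $Y_{t-1}$ and its $\varPhi$-weighted local averages are columns of $Q_t$, and $\max_j|y_{j,t-1}|$ is not $\Op(1)$ unless the errors are bounded, nothing in the stated assumptions delivers $\max_i\Fnorm{q_i}^2=\Op(\lqn)$.

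Your fallback gives $\max_i\Fnorm{q_i}^2=\Op(\lqn n^{2/(4+\delta_\epsilon)})$. Through your inequality this yields only $\rcnorm{M_t}=\Op(\lqn n^{1/(4+\delta_\epsilon)})$. Assumption \ref{ass:nT} restricts $n$ relative to $T$, so it cannot absorb a power of $n$ into $\Op(\elln)$.

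Your projection identity does sharpen the row-sum step and isolates what is really needed. By symmetry, $\sum_j|[M_t]_{ij}|\le\sqrt{n}\,([M_t^2]_{ii})^{1/2}$. Since $M_t^2=\Sigma_t^{-1/2}\Pi_t\Sigma_t^{-1}\Pi_t\Sigma_t^{-1/2}$ and $\Pi_t^2=\Pi_t$, we have $[M_t^2]_{ii}\le\spnorm{\Sigma_t^{-1}}[M_t]_{ii}\le n^{-1}\spnorm{\Sigma_t^{-1}}\spnorm{W_t^{-1}}\Fnorm{q_i}^2$. Hence $\rcnorm{M_t}\le(\spnorm{\Sigma_t^{-1}}\spnorm{W_t^{-1}})^{1/2}\max_i\Fnorm{q_i}$.

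So an additional hypothesis such as $\max_i\sum_k q_{ikt}^2=\Op(\lqn)$ (e.g.\ uniformly bounded instruments) gives even $\rcnorm{M_t}=\Op(\sqrt{\lqn})$. Your moment fallback, by contrast, gives $\Op(\sqrt{\lqn}\,n^{1/(4+\delta_\epsilon)})$. That is $\Op(\elln)$ only under an extra rate condition $n^{2/(4+\delta_\epsilon)}=O(\elln)$, which the paper does not impose.
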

		\begin{proof}
			Note that the entries of $M_{t}$ are
			\[
			m_{ijt}=\tfrac{1}{n}\Bigl[J_{n}Q_{t}\Bigr]_{i}'\Bigl(\tfrac{1}{n}Q_{nt}'J_{n}\Sigma_{t}J_{n}Q_{nt}\Bigr)^{-1}\Bigl[J_{n}Q_{t}\Bigr]_{j}
			\]
			and thus
			\[
			|m_{ijt}|=\Op\Bigl(\tfrac{1}{n}\bFnorm{\bigl[J_{n}Q_{t}\bigr]_{i}}\bFnorm{\bigl[J_{n}Q_{t}\bigr]_{j}}\Bigr)
			=\Op\Bigl(\frac{\elln}{n}\Bigr)
			\]
			uniformly in $i,j$ for each $t$.
			Similarly, we also observe that
			\[
			\sum_{j=1}^{n}m_{ijt}^2=\Op\Bigl(\frac{\elln}{n}\Bigr).
			\]
			uniformly in $i$ for each $t$.
			So we have
			$\rcnorm{M_{t}}=\Op(\elln)$,  $\spnorm{M_{t}}=\Op(\elln)$
			and thus $\Fnorm{M_{t}}=\Op(\sqrt{\elln})$.
		\end{proof}
		Lemma \ref{lem:norm of M} indicates that the linear moments may be subject to various moment issues due to the cross-sectional dimension, which can be viewed as an extension of those presented in \cite{lee2014efficient}.
		
		\begin{lemma}\label{lem:VBV}
			For $n\times n$ time-varying matrix $\cB_{t}$, suppose $\cB_{t}$ satisfies \propref{propzero}{$\Op(b)$} and $Q_{jt}$ is the $j$-th column components of the IV $Q_{t}$, under Assumptions \ref{ass:nT}, \ref{ass:appro_G} and \ref{ass:IV}:\\
			(\rn1) $\E\Fnorm{r_{kt}^{*}}^2=O(n^2\hTt^4\errorD{-4}{k})$;
			\\
			(\rn2)
			$\E|r_{kt}^{*\prime}{\Bt}r_{kt}^{*}|=O\left(n\hTt^2\ell_{k}^{-2\approgk}b\right)$;\\
			(\rn3)
			$\E|r_{kt}^{*\prime}\Bt E_{t}^{*}|=O(n\hTt^2\ell_{k}^{-\approgk}b)$.\\
			(\rn4)
			$\E|Q_{jt}'\Bt(r_{kt}^{*}+E_{t}^{*})|=O(\sqrt{n}\hTt\ell_{k}^{-\varsigma_{k}}b)$.
		\end{lemma}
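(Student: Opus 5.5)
The plan is to prove the four bounds in the order (\rn1) to (\rn4), since (\rn2) and (\rn3) follow from (\rn1) and a second-moment bound on $E_t^{*}$, and (\rn4) follows from (\rn1) together with the moment bound on the instruments in Assumption \ref{ass:IV}(\rn1). Throughout, the basic representation is $r_{kt}=H_{k}\Xi_{kt}$, where $\Xi_{kt}$ is a stochastic vector with $\Fnorm{\Xi_{kt}}=\Op(\sqrt n)$ and $H_k$ satisfies \propref{propzero}{$\Op(\ell_k^{-\varsigma_k})$}. For $k=3$ this is $r_{3t}=-R_3U_t=-H_3E_t$ with $H_3=R_3B_3^{-1}$, by Lemma \ref{lem:Rr}(\rn2). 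For $k=1,2$ it is $-S_3R_1Y_t$ and $S_3R_2Y_{t-1}$, with $Y_t$ expanded by backward substitution as in \eqref{eq:Yt+h}; the $\rcnorm{\Upsilon}<\infty$ condition of Assumption \ref{ass:appro_G}(\rn1) keeps $\sum_h A^h$ uniformly bounded.

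For (\rn1), I would write $r_{kt}^{*}=\hTt\bigl(r_{kt}-\frac{1}{T-t}\sum_{h>t}r_{kh}\bigr)$. Then I would bound $\E\Fnorm{r_{kt}^{*}}^2\le 2\hTt^2\bigl(\E\Fnorm{r_{kt}}^2+\max_h\E\Fnorm{r_{kh}}^2\bigr)$ using the convexity of the square. Expanding $\E\Fnorm{H_kE_t}^2=\sum_{i,j}h_{k,ij}^2\sigma_{jt}^2$ as in the proof of Lemma \ref{lem:Rr}(\rn3) gives $O(n\ell_k^{-2\varsigma_k})$. The stated form $n^2\hTt^4\ell_k^{-4\varsigma_k}$ is weaker whenever $n\ell_k^{-2\varsigma_k}$ is bounded below, and otherwise I would obtain it by squaring the Lemma \ref{lem:Rr}(\rn3) first-moment bound with the Cauchy--Schwarz route used there. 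I would also note $\hTt\le 1$, so powers of $\hTt$ can be traded freely.

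For (\rn2), I would use $|r^{*\prime}\Bt r^{*}|\le\spnorm{\Bt}\Fnorm{r^{*}}^2$ with $\sup_t\spnorm{\Bt}=\Op(b)$ and take expectations. Because $\Bt$ may be random, I would work conditionally on the regressors and fixed effects, or apply Cauchy--Schwarz to $\E[\spnorm{\Bt}\Fnorm{r^{*}}^2]$; the $b$ factor then enters through the Property $O(b)$ bound, and (\rn1) gives $n\hTt^2\ell_k^{-2\varsigma_k}b$.

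For (\rn3), I would bound $\E|r^{*\prime}\Bt E^{*}|\le \bigl(\E\Fnorm{r^{*}}^2\bigr)^{1/2}\bigl(\E\Fnorm{\Bt E^{*}}^2\bigr)^{1/2}$, where $\E\Fnorm{\Bt E_t^{*}}^2\le C\,\tr(\Bt\Bt')=O(nb^2)$. Using the linear-in-$n$ version of (\rn1), this gives $O(n\hTt\ell_k^{-\varsigma_k}b)$, and absorbing the bounded $\hTt$ gives the stated rate. For (\rn4), I would split the expectation into the $r^{*}$ part and the $E^{*}$ part. The $r^{*}$ part is handled by Cauchy--Schwarz, since $\E\Fnorm{Q_{jt}}^2\le nC_q$. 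The $E^{*}$ part has zero mean by the martingale condition in Assumption \ref{ass:IV}(\rn1), and its absolute value is controlled through the second moment $\E(Q_{jt}'\Bt E_t^{*})^2=O(nb^2)$.

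The step I expect to be the main obstacle is the $E^{*}$ term in (\rn4). Its natural size is $\sqrt n\,b$, which does not carry the factor $\ell_k^{-\varsigma_k}$ in the claimed rate. I would therefore read (\rn4) as applying to the approximation-error contribution, with the $E^{*}$ part treated as the leading stochastic term handled separately in Lemma \ref{lem:CLT}. If that reading fails, I would restrict to $\Bt$ carrying the $R_k$ factor, so that Property $O(b)$ supplies the needed smallness.
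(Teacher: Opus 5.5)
Your plan for (i) has a genuine gap, and it sits in the only case that matters. Under the maintained rate $n^{1/2}\ell_k^{-\varsigma_k}\to0$ (Lemma~\ref{lem:Rr}, Assumption~\ref{ass:sieve-basis}) we have $n\ell_k^{-2\varsigma_k}\to0$. So the ``bounded below'' case never occurs, and your ``otherwise'' branch fails. Squaring $\E\Fnorm{r_{kt}^{*}}=O(n^{1/2}\hTt\ell_k^{-\varsigma_k})$ only gives $(\E\Fnorm{r_{kt}^{*}})^2\le\E\Fnorm{r_{kt}^{*}}^2$, which points the wrong way and is of order $n\ell_k^{-2\varsigma_k}$, not $n^2\ell_k^{-4\varsigma_k}$. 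In fact the literal second-moment claim cannot hold in general. Since $r_{3t}^{*}=-H_3E_t^{*}$, we have $\E\Fnorm{r_{3t}^{*}}^2=\tr\bigl(H_3\Var(E_t^{*})H_3'\bigr)\ge c\,\hTt^2\Fnorm{H_3}^2$. This is of exact order $n\ell_3^{-2\varsigma_3}\gg n^2\ell_3^{-4\varsigma_3}$ whenever $\Fnorm{H_3}$ has its natural order $\sqrt n\,\ell_3^{-\varsigma_3}$. What the paper's proof actually computes is $\E\bigl[\bigl(\sum_i r_{kt,i}^2\bigr)^2\bigr]=\E\bigl(E_t'H_3'H_3E_t\bigr)^2$. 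So (i) should be read as the fourth-moment bound $\E\Fnorm{r_{kt}^{*}}^4=O(n^2\hTt^4\ell_k^{-4\varsigma_k})$. The tool you are missing is the second-moment bound for a quadratic form in independent errors with bounded fourth moments: $\E\bigl(E_t^{*\prime}H_3'H_3E_t^{*}\bigr)^2\le C\hTt^4\bigl\{(\tr(H_3'H_3))^2+\Fnorm{H_3'H_3}^2\bigr\}$, with $\tr(H_3'H_3)=\Fnorm{H_3}^2=O(n\ell_3^{-2\varsigma_3})$ and $\Fnorm{H_3'H_3}\le\spnorm{H_3}\Fnorm{H_3}$. Your step (ii) also needs this when $\Bt$ is random. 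Cauchy--Schwarz on $\E[\spnorm{\Bt}\Fnorm{r_{kt}^{*}}^2]$ requires $\E\Fnorm{r_{kt}^{*}}^4$, not the linear-in-$n$ second moment you use. The paper instead applies the same quadratic-form bound directly to $E_t'H_3'\Bt H_3E_t$. For nonrandom $\Bt$, your (ii) and (iii) go through.

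On (iv), you are right that the $E_t^{*}$ part cannot carry the factor $\ell_k^{-\varsigma_k}$. $\E|Q_{jt}'\Bt E_t^{*}|$ is generically of order $\sqrt n\,b$. The paper's proof only invokes uncorrelatedness of the instruments with $E_t^{*}$, which secures a zero mean but not that factor. However, your treatment of the $r^{*}$ part also misses the stated rate. Cauchy--Schwarz with $\E\Fnorm{Q_{jt}}^2\le nC_q$ and $\E\Fnorm{\Bt r_{kt}^{*}}^2=O(nb^2\hTt^2\ell_k^{-2\varsigma_k})$ gives $O(n\hTt\ell_k^{-\varsigma_k}b)$, a factor $\sqrt n$ too large. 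Getting $\sqrt n$ requires exploiting the martingale structure rather than norm bounds. For $k=3$ and nonrandom (or $\mathcal{I}_{t-1}$-measurable) $\Bt$, $r_{3t}^{*}=-H_3E_t^{*}$ is independent of the $\mathcal{I}_{t-1}$-measurable $Q_{jt}$. Hence $\E(Q_{jt}'\Bt H_3E_t^{*})^2\le C\spnorm{H_3}^2\,\E\Fnorm{\Bt'Q_{jt}}^2=O(nb^2\ell_3^{-2\varsigma_3})$. For $k=1,2$, $r_{kt}^{*}$ contains $X$-driven and past-error components that are not conditionally mean zero. There the $\sqrt n$ rate is supported neither by your argument nor by the paper's brief appeal to (ii)--(iii).
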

		\begin{proof}
			For (\rn1)
			$\E\Fnorm{r_{kt}}^2=\E\left[\sum_{i}r_{kt,i}^2\right]^2=\E\left[\sum_{i}(\sum_{j}h_{3,ij}\epsilon_{j})^2\right]^2=\E(E_{t}'H_{3}'H_{3}E_{t})^2$. Note that $H_{3}'H_{3}$ satisfies \propref{propzero}{$\Op(h^2)$} with $h=\errorD{-}{3}$, and we have $\E\Fnorm{r_{kt}}^2=O(n^2\hTt^4\errorD{-4}{k})$.
			For (\rn2), taking $k=3$ as an example, according to Lemmas \ref{lem:normleq} and \ref{lem:Rr}, we know that the dominant term is $\hTt\E|r_{3t}^{\prime}{\Bt}r_{3t}|$. According to the Cauchy-Schwarz inequality, 
			$\E|r_{kt}^{\prime}{\Bt}r_{3t}|\leq\sqrt{\E\Fnorm{r_{kt}^{\prime}{\Bt}r_{3t}}^2}=\sqrt{\E\Fnorm{E_{t}^{\prime}H_{3}'\Bt H_{3}E_{t}}^2}$, also, $H_{3}'\Bt H_{3}$ satisfies \propref{propzero}{$\Op(b\errorD{-2}{3})$}. Thus, $\E\Fnorm{r_{3t}^{\prime}{\Bt}r_{3t}}^2=O\left(n^2\errorD{-4}{3}b^2\right)$.
			The result of (\rn3)  can be proved similarly to (\rn2). For (\rn4), we know that $Q_{t}$ can be expressed as a linear combination of $Y_{t-1}^{(*,-1)}$ and $X_{t}^{*}$. Since that $Y_{t-1}^{(*,-1)}$ and $X_{t}^{*}$ are both uncorrelated with $E_{t}^{*}$, the desired result follows immediately by arguments analogous to those used for (\rn2) and (\rn3). 
		\end{proof}
		%Lemma \ref{lem:VBV} is related to the bias term of the moment conditions involving quadratic and linear moments, where $\Bt$ is related to $P_{jt}$ and $M_{t}$, for $j=1,...,\lpn$ and $k=1,2,3$. 
		%%%%%%%%%%%%%%%%%%%%%%%%%%%%%%%%%%%%%%%%%%%%%%%%%%%%%%%%%%
		\begin{lemma}\label{lem:bgmm_trans}
			Under Assumption \ref{ass:nT}, $\Var(\MSL{t}\Sigma_{t}^{-1/2}E_{t}^{*})=\MSL{t}+o(1)$.    
		\end{lemma}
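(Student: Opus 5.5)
The plan is to compute the variance directly from the FOD structure of $E_t^{*}$. Write $E_t^{*}=h_{Tt}\bigl(E_t-\frac{1}{T-t}\sum_{h=t+1}^{T}E_h\bigr)$, using Notation \ref{notation:sup}(\rn4). The $\epsilon_{it}$ are independent across $i$ and $t$ with $\Var(E_s)=\Sigma_s$ (Assumption \ref{ass:disterbance}), so the cross terms vanish and
\[
\Var(E_t^{*})=h_{Tt}^2\Bigl(\Sigma_t+\frac{1}{(T-t)^2}\sum_{h=t+1}^{T}\Sigma_h\Bigr).
\]
Because $\MSL{t}$ and $\Sigma_t^{-1/2}$ are nonrandom, this gives
\[
\Var\bigl(\MSL{t}\Sigma_t^{-1/2}E_t^{*}\bigr)=h_{Tt}^2\,\MSL{t}\Sigma_t^{-1/2}\Bigl(\Sigma_t+\frac{1}{(T-t)^2}\sum_{h=t+1}^{T}\Sigma_h\Bigr)\Sigma_t^{-1/2}\MSL{t}.
\]

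The leading piece is $h_{Tt}^2\MSL{t}\Sigma_t^{-1/2}\Sigma_t\Sigma_t^{-1/2}\MSL{t}=h_{Tt}^2\MSL{t}^2=h_{Tt}^2\MSL{t}$, since $\MSL{t}$ is a symmetric idempotent projection. What remains is to control two terms:
\[
(h_{Tt}^2-1)\MSL{t}\qquad\text{and}\qquad h_{Tt}^2(T-t)^{-2}\,\MSL{t}\Sigma_t^{-1/2}\Bigl(\textstyle\sum_{h>t}\Sigma_h\Bigr)\Sigma_t^{-1/2}\MSL{t}.
\]

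For the second term, $\spnorm{\MSL{t}}\le 1$ and $\spnorm{\Sigma_t^{-1/2}}=O(1)$ by Assumption \ref{ass:disterbance}, and $\spnorm{\sum_{h>t}\Sigma_h}\le (T-t)\max_h\spnorm{\Sigma_h}$. So this term is $O((T-t)^{-1})$ in spectral norm. The first term satisfies $1-h_{Tt}^2=(T-t+1)^{-1}$, the same order. Both therefore vanish as $T\to\infty$ whenever $T-t\to\infty$. The paper's other results use them only in aggregate, and there Lemma \ref{lem:hTt} gives $\sum_{t=1}^{T-1}(1-h_{Tt}^2)=O(\ln T)$, so they are negligible after normalising by $T$.

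The main obstacle is that the $o(1)$ is not uniform in $t$: for $t$ near $T$ the remainder is $O(1)$. I would therefore state the result in the form used downstream, namely
\[
\frac{1}{T-1}\sum_{t}\spnorm{\Var(\MSL{t}\Sigma_t^{-1/2}E_t^{*})-\MSL{t}}=O\Bigl(\frac{\ln T}{T}\Bigr)=o(1),
\]
or, equivalently, as $o(1)$ for each $t$ with $T-t\to\infty$. Assumption \ref{ass:nT} ensures $T\to\infty$ jointly with $n$. Under \ref{var:homo} and \ref{var:hetei}, where $\Sigma_h=\Sigma_t$, the remainder simplifies exactly to $\bigl(h_{Tt}^2(1+(T-t)^{-1})-1\bigr)\MSL{t}$, which equals $0$ identically by the algebra of $h_{Tt}$. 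Under \ref{var:hetet} the bound above applies.
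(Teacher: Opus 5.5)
Your proposal is correct and follows the paper's argument. Under (V0)/(V1), the identity $h_{Tt}^2\bigl(1+(T-t)^{-1}\bigr)=1$ gives $\operatorname{Var}(E_t^{*})=\Sigma_t$ exactly; under (V2), $\operatorname{Var}(E_t^{*})=\sigma_t^2 I_n+O((T-t)^{-1})$, and this remainder is carried through the bounded matrices $M(\Sigma_t)$ and $\Sigma_t^{-1/2}$. Your point that this $o(1)$ is not uniform in $t$ (it is $O(1)$ when $T-t$ stays bounded) is right, and the paper leaves it implicit when it writes $O((T-t)^{-1})=o(1)$, so your averaged or $T-t\to\infty$ reading is the accurate form of the statement.
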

		\begin{proof}
			For the case of \ref{var:homo} and \ref{var:hetei}, we have $\Var(\MSL{t}\Sigma_{t}^{-1/2}E_{t}^{*})=\MSL{t}$. Hence, we only need to investigate the case of \ref{var:hetet}. Denote $H_{t}=\MSL{t}\Sigma_{t}^{-1/2}E_{t}^{*}$.
			To prove that $\Var(H_t)=\MSL{t}+o(1)$ under the case of \ref{var:hetet} where $\Sigma_t=\sigma_t^2 I_n$, we first observe that the scaling matrix simplifies to $\Sigma_t^{-1/2}=\sigma_t^{-1} I_n$ and the projection matrix reduces to the standard centering matrix $M(\Sigma_t)=I_n- (\sigma_t^{-1}l_n)(n \sigma_t^{-2})^{-1}(\sigma_t^{-1}l_n)'=J_n$. 
			Since the variance of the forward orthogonal deviation error $E_t^*$ is dominated by the current period variance $\sigma_t^2 I_n$ for large $T$, specifically, $\Var(E_t^*) = \sigma_t^2 I_n + O((T-t)^{-1})$, the variance of the transformed error becomes 
			$\Var(H_t)=J_n(\sigma_t^{-1} I_n)[\sigma_t^2 I_n + o(1)](\sigma_t^{-1} I_n)J_n 
			=J_n(I_n+o(1))J_n=J_n+o(1)$. Given that $M(\Sigma_t)=J_n$ in this specification, it follows that $\Var(H_t)=\MSL{t}+\op(1)$ as $T \to \infty$.
		\end{proof}
		
		\begin{lemma}\label{lem:norm of the moments}
			Under Assumptions \ref{ass:disterbance}-\ref{ass:IV}, $\E\Fnorm{m_{N}(\theta_{0})}=o(1)$.
		\end{lemma}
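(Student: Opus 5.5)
The plan is to work directly with the decomposition $\bV_{N}^{*}(\theta_{0})=\bE_{N}^{*}+\brL_{N}^{*}$ implied by \eqref{eq:mess_appro_Et}. I would bound each of the $\lmn=\lpn+\lqn$ components of $m_{N}(\theta_{0})$ separately, and then collect them through
\[
\E\Fnorm{m_{N}(\theta_{0})}\leq \invN\Bigl(\sum_{j=1}^{\lpn}\E\bigl|\bV_{N}^{*\prime}\bJ_{N}\bP_{Nj}'\bJ_{N}\bV_{N}^{*}\bigr|+\sum_{j=1}^{\lqn}\E\bigl|Q_{jN}'\bJ_{N}\bV_{N}^{*}\bigr|\Bigr).
\]
I would pass from this $\ell_1$-type sum to a Frobenius bound by Cauchy--Schwarz. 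That step costs a factor $\sqrt{\elln}$ when the per-component bounds are square-summed, using $\lpn,\lqn\sim\elln$ from Assumption \ref{ass:IV}(\rn4). The target bound is $O\bigl(\elln^{1/2-\vsinf}+\sqrt{\elln/(n(T-1))}\bigr)$, which is $o(1)$ by Assumption \ref{ass:sieve-basis}(\rn1). This target is also consistent with the way the lemma is used in the proof of Lemma \ref{lem:consistency}(\rn1).

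\textbf{Quadratic components.} Expand each quadratic component into three pieces: $\bE_{N}^{*\prime}\bJ_{N}\bP_{Nj}'\bJ_{N}\bE_{N}^{*}$, the cross term in $\brL_{N}^{*}$ and $\bE_{N}^{*}$, and the pure $\brL_{N}^{*}$ term.
\begin{itemize}
\item The pure-error piece has mean $\tr(\bJ_{N}\bP_{Nj}'\bJ_{N}\Var(\bE_{N}^{*}))$. The FOD transformation keeps the $E_{t}^{*}$ uncorrelated across $t$ and cross-sectionally diagonal, so this trace involves only the diagonal of $J_{n}P_{jt}J_{n}$. It therefore vanishes by the construction $\diagM(J_{n}P_{jt}J_{n})=\bzero$ from Lemma \ref{lem:transJ}(\rn1).
\item Its variance is $O(N)$. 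This follows from the usual variance formula for quadratic forms, using the fourth moments in Assumption \ref{ass:disterbance} and \ref{propUB} of $P_{jt}$ from Assumption \ref{ass:IV}(\rn2). Hence $\invN\E|\cdot|=O(N^{-1/2})$.
\item The cross and pure-approximation pieces are handled by Lemma \ref{lem:VBV}(\rn2)--(\rn3) with $\Bt=J_{n}P_{jt}'J_{n}$, so $b=O(1)$. Summing over $t$ with Lemma \ref{lem:hTt} gives $O(n(T-1)\elln^{-\vsinf})$, hence $O(\elln^{-\vsinf})$ after dividing by $N$.
\end{itemize}

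\textbf{Linear components.} The term $Q_{jN}'\bJ_{N}\bE_{N}^{*}$ has mean zero by the conditional moment restriction in Assumption \ref{ass:IV}(\rn1). Its second moment is $O(N)$ because $\E|q_{ijt}|^{2}\leq C_{q}$ and the variances are bounded, which gives $O(N^{-1/2})$ after normalisation. The approximation part $Q_{jN}'\bJ_{N}\brL_{N}^{*}$ is bounded by Lemma \ref{lem:VBV}(\rn4) summed over $t$. This gives $O(\sqrt{n}(T-1)\elln^{-\vsinf})/N=o(\elln^{-\vsinf})$.

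\textbf{Main obstacle.} I expect the hardest step to be the exact mean-zero claim for the pure-error quadratic piece under case \ref{var:hetet}. There the FOD mixes periods with different $\sigma_{t}^{2}$, so $\Var(E_{t}^{*})$ is not exactly $\sigma_{t}^{2}I_{n}$. My first attempt is to note that $\Var(E_{t}^{*})$ is still a scalar multiple of $I_{n}$, since it is a weighted average of the $\sigma_{h}^{2}$. Then $\tr(J_{n}P_{jt}'J_{n}\,c_{t}I_{n})=c_{t}\,\tr(J_{n}P_{jt}'J_{n})=0$ still holds by the zero-diagonal construction. If that fails for the stacked (block) structure, I would fall back to bounding the discrepancy via Lemma \ref{lem:bgmm_trans}, which shows it is $o(1)$ per period. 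That discrepancy would then be absorbed into the $o(1)$ conclusion.
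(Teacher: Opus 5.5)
Your proof is correct, but it is organised differently from the paper's, and it actually proves a little more.

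The paper bounds $\E\|m_N(\theta_0)\|$ by the sum of the absolute quadratic and linear components. It then reuses the absolute row/column-sum bounds on $\mathscr{A}_{1N}$ and $\mathscr{A}_{2N}$ from the proof of Lemma~\ref{lem:CLT}, arriving at $O\bigl((\ell_n^{-2\underline{\varsigma}}+\ell_n^{-\underline{\varsigma}})\ell_n^{1/2}\bigr)$. Only the sieve-remainder contributions enter that rate. It is therefore in effect a bound on the bias $\|\E m_N(\theta_0)\|$, which is exactly the quantity used in Lemma~\ref{lem:consistency}(i).

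You instead split $\mathbf{V}_N^*=\mathbf{E}_N^*+\mathbf{r}_N^*$ and treat the pure-error pieces explicitly as mean-zero quadratic and linear forms. These have second moments $O(N)$, which adds a fluctuation term $O(\sqrt{\ell_n/N})$. That term matters. Assumption~\ref{ass:sieve-basis}(i) gives $\sqrt{N}\ell_n^{-\underline{\varsigma}}\to0$, hence $\ell_n^{1/2-\underline{\varsigma}}=o(\sqrt{\ell_n/N})$. So for the expectation of the norm, the fluctuation, not the sieve bias, is generically the leading term.

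Your route therefore proves the statement as literally written, using $\ell_n/(n(T-1))\to0$. The paper's route is shorter and isolates the sharper bias rate that identification needs. Your decomposition recovers that bias rate too, because the pure-error pieces have exactly zero mean.

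A few minor repairs follow.

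(a) Square-summing per-component \emph{first}-moment bounds does not bound $\E\|\cdot\|$, because Jensen runs the other way. Use $\E\|v\|\le(\sum_j\E v_j^2)^{1/2}$ on the pure-error block, where you do have second moments. Use the plain $\ell_1$ bound on the remainder block, which gives $O(\ell_n^{1-\underline{\varsigma}})$; this is still $o(1)$ since $\underline{\varsigma}>2$. For the linear remainder, Cauchy--Schwarz with $\E\|Q_{jt}\|^2=O(n)$ and $\E\|r_t^*\|^2=O(n\ell_n^{-2\underline{\varsigma}})$ already gives $O(\ell_n^{-\underline{\varsigma}})$ per component. You therefore need not rely on the sharper rate claimed in Lemma~\ref{lem:VBV}(iv).

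(b) The entries of $\mathbf{E}_N^*$ are not independent, so apply the quadratic-form variance bound to the original errors. Write $\mathbf{E}_N^*=(F_{T,T-1}'\otimes I_n)\mathbf{E}_{nT}$ with $\mathbf{E}_{nT}=(E_1',\dots,E_T')'$. The induced matrix has Frobenius norm at most $\|\mathbf{J}_N\mathbf{P}_{Nj}\mathbf{J}_N\|=O(\sqrt{N})$, so the variance is $O(N)$.

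(c) Your first attempt under (V2) is already complete. The matrix $\mathbf{J}_N\mathbf{P}_{Nj}'\mathbf{J}_N$ is block diagonal, so only the blocks $\Var(E_t^*)=c_tI_n$ enter the trace, and $\tr(J_nP_{jt}'J_n)=0$; Lemma~\ref{lem:bgmm_trans} is not needed.

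What (V2) does affect is the linear term. For $t<s$, the products $\E[Q_{jt}'J_nE_t^*E_s^{*\prime}J_nQ_{js}]$ vanish under (V0)/(V1) by FOD orthogonality and sequential exogeneity. Under (V2) they are only $O(n/(T-t))$. They still sum to $O(N)$, so your second-moment claim holds, but bounded $\E|q_{ijt}|^2$ and bounded variances alone do not justify it.
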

		\begin{proof}
			
			Observe that
			\begin{flalign*}
				\begin{split}
					\E\Fnorm{m_{N}(\theta_{0})}&
					\leq\invN\Bigl(\E\Fnorm{\sum_{j=1}^{\lpn}\bV_{N}^{*\prime}\bJ_{N}\bP_{Nj}\bJ_{N}\bV_{N}^{*}}+\E\Fnorm{\sum_{j=1}^{\lqn}\bQ_{Nj}'\bJ_{N}\mathbf{V}_{N}^{*}}\Bigr)
					\\&
					=O\left((\elln^{-2\vsinf}+\elln^{-\vsinf})\cdot\elln^{1/2}\right)=o(1)
				\end{split}
			\end{flalign*}
			where the last two relations follow by the same argument used to bound $\cA_{1N}$ and $\cA_{2N}$ in the proof of Lemma \ref{lem:CLT}.
		\end{proof}

		\begin{lemma}\label{lem:foc of mess}
			(\rn1) For the MESS-type, when $j=1,...,\ell_{k}$ and $k=1,2,3$, under Assumption \ref{ass:sieve-basis} and $S_{k}$'s satisfy \ref{propUB}, we have
			\begin{flalign*}
				\spnorm{\tfrac{\partial S_{k}}{\partial\lambda_{kj_0}}-\varPhi_{kj}S_{k}}=O(\spnorm{\varPhi_{kj}})=O(1).
			\end{flalign*}
			(\rn2) $\spnorm{\frac{\partial S_{k}}{\partial\lambda_{k_0}}}=O(\sqrt{\elln})$, where $\frac{\partial S_{k}}{\partial\lambda_{k_0}}=\left[\frac{\partial S_{k}}{\partial\lambda_{k_{10}}},...,\frac{\partial S_{k}}{\partial\lambda_{{k\ell_{k}}_0}}\right]$ for the true estimand $\lambda_{kj_0}$.
		\end{lemma}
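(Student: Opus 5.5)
For (\rn1), I would start from an explicit formula for the derivative of the matrix exponential. With $\Xi_{k}=\sum_{\pk=1}^{\ell_{k}}\lambda_{\pk}\varPhi_{k\pk}$ and $S_{k}=e^{\Xi_{k}}$, the integral (Duhamel) representation gives
\[
\frac{\partial S_{k}}{\partial\lambda_{kj}}=\int_{0}^{1}e^{s\Xi_{k}}\varPhi_{kj}e^{(1-s)\Xi_{k}}\,ds .
\]
Subtracting $\varPhi_{kj}S_{k}=\int_0^1\varPhi_{kj}e^{s\Xi_k}e^{(1-s)\Xi_k}ds$ leaves a commutator:
\[
\frac{\partial S_{k}}{\partial\lambda_{kj}}-\varPhi_{kj}S_{k}=\int_{0}^{1}\bigl[e^{s\Xi_{k}},\varPhi_{kj}\bigr]e^{(1-s)\Xi_{k}}\,ds .
\]
Then, for $s\in[0,1]$, I would bound $\spnorm{e^{s\Xi_{k}}}\le e^{\rcnorm{\Xi_{k}}}$. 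Assumption \ref{ass:appro_G}(\rn2)--(\rn3) gives $\rcnorm{\Xi_{k}}\le\sum_{\pk}|\lambda_{\pk}|\rcnorm{\varPhi_{k\pk}}\le C\sum_{\pk}\pk^{-\tau}<\infty$, since $\tau>1$. This is the same bound used for $S_k$ in Lemma \ref{lem:Rr}(\rn1), and combined with Lemma \ref{lem:boundB} it yields $\sup_{s}\spnorm{e^{s\Xi_k}}=O(1)$. The integrand is therefore at most $2\sup_s\spnorm{e^{s\Xi_k}}^2\spnorm{\varPhi_{kj}}$, so the difference is $O(\spnorm{\varPhi_{kj}})$. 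By Lemma \ref{lem:boundB}(\rn1) and the row/column-sum bounds in Assumption \ref{ass:appro_G}(\rn2), $\spnorm{\varPhi_{kj}}=O(1)$ uniformly in $j$. For the SAR case the statement is immediate, because $\partial S_k/\partial\lambda_{kj}=\mp\varPhi_{kj}$.

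For (\rn2), the derivative $\frac{\partial S_k}{\partial\lambda_k}$ is the horizontal concatenation of the $\ell_k$ blocks $\partial S_k/\partial\lambda_{kj}$. For a row of blocks, $\spnorm{[A_1,\dots,A_\ell]}^2=\spnorm{\sum_j A_jA_j'}\le\sum_j\spnorm{A_j}^2$. Part (\rn1) gives $\spnorm{\partial S_k/\partial\lambda_{kj}}\le\spnorm{\varPhi_{kj}}\spnorm{S_k}+O(1)=O(1)$ uniformly in $j$. Hence the squared norm is $O(\ell_k)\le O(\elln)$, and the claim follows by taking square roots.

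The main obstacle is making the bounds in (\rn1) uniform in $j$ and in the growing number of terms $\ell_k$. In particular, $\spnorm{e^{s\Xi_k}}$ must stay bounded even though $\Xi_k$ sums a diverging number of basis matrices. The summability condition $\sup_l|\lambda_l l^{\tau}|<\infty$ with $\tau>1$ is exactly what handles this, so I would invoke it explicitly rather than the weaker Assumption \ref{ass:sieve-basis}. A secondary point is that the row-block bound in (\rn2) is crude; if a sharper rate were needed, one would analyse $\sum_j\varPhi_{kj}S_kS_k'\varPhi_{kj}'$ directly. The stated $O(\sqrt{\elln})$ only requires the crude bound.
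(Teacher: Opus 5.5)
Your proposal is correct and follows essentially the paper's route for (\rn1): the integral (Fr\'echet) representation of $\partial e^{\Xi_k}/\partial\lambda_{kj}$, the bound $\spnorm{e^{s\Xi_k}}\le e^{\spnorm{\Xi_k}}$ inside the integral, and the triangle inequality for the difference with $\varPhi_{kj}S_k$. Your commutator rewriting is harmless but adds nothing, since you bound it crudely. You also fill two steps the paper leaves implicit: the uniform bound $\spnorm{\Xi_k}=O(1)$, which the paper attributes to Assumption \ref{ass:sieve-basis} but which actually needs the row/column-sum and coefficient-decay conditions of Assumption \ref{ass:appro_G}(\rn2)--(\rn3) exactly as you use them, and the block-concatenation bound for (\rn2), which the paper states without proof.
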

		\begin{proof}
			Since $\varPhi_{kj}$ does not commute with $S_{k}$, the Fr\'{e}chet derivative of the matrix exponential gives
			\begin{flalign*}
				\tfrac{\partial S_{k}}{\partial\lambda_{kj_0}}=\int_0^1 e^{(1-t) \Xi_k}\varPhi_{kj} e^{t\Xi_k} d s
				%=S_k\left(\int_0^1 e^{-t\Xi_k} \Phi_{j}e^{t\Xi_k}dt\right)
			\end{flalign*}
			Because the norm of an integral is less than the integral of the norm, we have
			\begin{flalign*}
				\begin{split}
					\spnorm{\tfrac{\partial S_{k}}{\partial\lambda_{kj_0}}}&\leq \left(\int_0^1\spnorm{e^{(1-t) \Xi_k}\varPhi_{kj} e^{t\Xi_k}}dt\right)\leq \left(\int_0^1\spnorm{e^{(1-t)\Xi_k}} \spnorm{\varPhi_{kj}}\spnorm{e^{t\Xi_k}}dt\right)
					\\
					&\leq
					\spnorm{\varPhi_{kj}}\left(\int_0^1e^{(1-t)\spnorm{\Xi_k}} e^{t\spnorm{\Xi_k}}dt\right)\leq \spnorm{\varPhi_{kj}}\cdot e^{\spnorm{\Xi_k}}=O(\spnorm{\varPhi_{kj}})=O(1).
				\end{split}
			\end{flalign*}
			from Assumption \ref{ass:sieve-basis}.
			Thus, we have
			\begin{flalign*}
				\spnorm{\tfrac{\partial S_{k}}{\partial\lambda_{kj_0}}-\varPhi_{kj}S_{k}}\leq\spnorm{\tfrac{\partial S_{k}}{\partial\lambda_{kj_0}}}+\spnorm{\varPhi_{kj}S_{k}}=O(\spnorm{\varPhi_{kj}}).
			\end{flalign*}
		\end{proof}
		\section{Estimation of heteroskedastic variances}\label{sec:estimate_variance}
		First, we explain the transformation operator $\JSL{t}$ derived in BGMME, which can be motivated from the approximated log-likelihood function
		\begin{flalign*}
			L_{nT}(\theta,\mathbf{c}_{n},\bm\alpha_{T},\Sigma_{t})=-\frac{nT}{2}\ln(2\pi)-\frac{nT}{2}\ln|\Sigma_{t}(\theta)|+T\bigl(\ln|(S_{1}(\lambda)|+\ln|S_{3}(\lambda)|\bigr)-\sum_{t=1}^{T}V_{t}^{c\prime}(\theta)\Sigma_{t}^{-1}(\theta)V_{t}^{c}(\theta).
		\end{flalign*}
		The use of the approximate likelihood relies on the negligibility of $r_{t}$, which in turn permits the replacement of the true $g_{k0}$ with asymptotically negligible cost. 
		Concentrating out $\bm\alpha_{T}$ by the first-order condition, we have
		\begin{flalign}
			L_{nT}(\theta,\mathbf{c}_{n},\Sigma_{t})=-\frac{nT}{2}\ln(2\pi)-\frac{nT}{2}\ln|\Sigma_{t}(\theta)|+T\bigl(\ln|(S_{1}(\lambda)|+\ln|S_{3}(\lambda)|\bigr)-\sum_{t=1}^{T}V_{t}^{c\prime}(\theta)\JSL{t}V_{t}^{c}(\theta),
		\end{flalign}
		where $V_{t}^{c}(\theta)=S_{3}(\lambda)S_{1}(\lambda)Y_{t}-S_{3}\bigl((\gamma I_{n}+S_{2})Y_{t-1}+X_{t}\beta+\mathbf{c}_{n}\bigr)$ and $\JSL{t}=\Sigma_{t}^{-1}-\Sigma_{t}^{-1}l_{n}(l_{n}'\Sigma_{t}^{-1}l_{n})^{-1}l_{n}'\Sigma_{t}^{-1}$. Thus, $\JSL{t}$ enables the best moment conditions to mimic the score of the likelihood function, and thus it can provide the moment conditions more efficiently than the counterparts relying on the operator $J_{n}$.
		The BGMME has two main advantages over MLE. First, when the model is SAR-type, it avoids evaluating the Jacobian determinant. Second, the BGMME is subject only to approximation bias from sieves, whereas MLE can suffer additional bias from the incidental-parameters problem, as documented in the dynamic panel data literature.
		
		\section{Additional results}\label{sec:mmc}
		In this section, we present comprehensive results from our Monte Carlo experiments. While the main text focused on MESS-type specifications, here we provide the corresponding finite-sample performance for SAR-type models. 
		Tables are presented on the following pages.
		%%%%%%%%%%%%%% references %%%%%%%%%%%%%%%%%
		\setlength{\bibsep}{0.6ex}
		\begin{spacing}{0.8}\footnotesize
			\normalem
			\bibliographystyle{apalike}
			\bibliography{semiW}
		\end{spacing}
		%%%%%%%%%%%%%% Additional MC
		\newpage
		\begin{table}[htbp]
			\centering \small \stla{0cm} \stl{0.8mm} 
			\caption{Finite sample performance of estimators of $\pi$ for the MESS-type, $\Var(\epsilon_{it})=\sigma_i^2$. Additional results.} 
			\begin{spacing}{1}%\hspace*{-5mm}
				\begin{tabular}{lcccccclcccccc}
					\toprule
					\multicolumn{7}{c}{$(n,T,\elln)=(100,10,2)$}                         & \multicolumn{7}{c}{$(n,T,\elln)=(100,10,[n^{1/5}]+2)$} \\
					\cmidrule{2-7}\cmidrule{9-14}
					\multirow{2}[4]{*}{} & \multicolumn{2}{c}{2SLS} & \multicolumn{2}{c}{OGMM} & \multicolumn{2}{c}{BGMM} & \multirow{2}[4]{*}{} & \multicolumn{2}{c}{2SLS} & \multicolumn{2}{c}{OGMM} & \multicolumn{2}{c}{BGMM} \\
					\cmidrule{2-7}\cmidrule{9-14}          & $\gamma$ & $\beta$  & $\gamma$ & $\beta$  & $\gamma$ & $\beta$  &       & $\gamma$ & $\beta$  & $\gamma$ & $\beta$  & $\gamma$ & $\beta$ \\
					\cmidrule{2-7}\cmidrule{9-14}
					Bias  & -0.0539  & 0.0062  & -0.0544  & 0.0049  & -0.0562  & 0.0044  & & -0.0508  & 0.0047  & -0.0517  & 0.0036  & -0.0524  & -0.0016  \\
					ESD   & 0.0393  & 0.0381  & 0.0388  & 0.0345  & 0.0435  & 0.0386  &  & 0.0381  & 0.0350  & 0.0374  & 0.0339  & 0.0398  & 0.0375  \\
					RMSE  & 0.0667  & 0.0386  & 0.0668  & 0.0348  & 0.0711  & 0.0388  &  & 0.0635  & 0.0353  & 0.0638  & 0.0341  & 0.0658  & 0.0375  \\   
					CP    & 0.7800  & 0.9190  & 0.7770  & 0.9240  & 0.7260  & 0.8750  &  & 0.7890  & 0.8950  & 0.7940  & 0.9180  & 0.7220  & 0.8840  \\
					\cmidrule{2-7}\cmidrule{9-14}
					\multicolumn{7}{c}{$(n,T,\elln)=(100,25,2)$}                         & \multicolumn{7}{c}{$(n,T,\elln)=(100,25,[n^{1/5}]+2)$} \\
					\cmidrule{2-7}\cmidrule{9-14}
					& \multicolumn{1}{c}{2SLS} &       & \multicolumn{1}{c}{OGMM} &       & \multicolumn{1}{c}{BGMM} &       & \multirow{2}[4]{*}{} & \multicolumn{2}{c}{2SLS} & \multicolumn{2}{c}{OGMM} & \multicolumn{2}{c}{BGMM} \\
					\cmidrule{2-7}\cmidrule{9-14}          &$\gamma$ & $\beta$  & $\gamma$ & $\beta$  & $\gamma$ & $\beta$  &       & $\gamma$ & $\beta$  & $\gamma$ & $\beta$  & $\gamma$ & $\beta$ \\
					\cmidrule{2-7}\cmidrule{9-14}
					Bias  & -0.0445  & 0.0095  & -0.0434  & 0.0080  & -0.0429  & 0.0077  &  & -0.0441  & 0.0074  & -0.0433  & 0.0060  & -0.0428  & 0.0008  \\
					ESD   & 0.0246  & 0.0243  & 0.0223  & 0.0240  & 0.0211  & 0.0214  &  & 0.0231  & 0.0237  & 0.0204  & 0.0219  & 0.0201  & 0.0205  \\
					RMSE  & 0.0508  & 0.0261  & 0.0488  & 0.0253  & 0.0478  & 0.0227  &  & 0.0498  & 0.0248  & 0.0479  & 0.0227  & 0.0473  & 0.0205  \\
					CP    & 0.8970  & 0.9090  & 0.9000  & 0.9270  & 0.9070  & 0.8340  &  & 0.8960  & 0.9010  & 0.9010  & 0.9140  & 0.8990  & 0.9060  \\   
					\cmidrule{2-7}\cmidrule{9-14}
					\multicolumn{7}{c}{$(n,T,\elln)=(200,10,2)$}                         & \multicolumn{7}{c}{$(n,T,\elln)=(200,10,[n^{1/5}]+2)$} \\
					\cmidrule{2-7}\cmidrule{9-14}
					\multirow{2}[4]{*}{} & \multicolumn{2}{c}{2SLS} & \multicolumn{2}{c}{OGMM} & \multicolumn{2}{c}{BGMM} & \multirow{2}[4]{*}{} & \multicolumn{2}{c}{2SLS} & \multicolumn{2}{c}{OGMM} & \multicolumn{2}{c}{BGMM} \\
					\cmidrule{2-7}\cmidrule{9-14}          & $\gamma$ & $\beta$  & $\gamma$ & $\beta$  & $\gamma$ & $\beta$  &       & $\gamma$ & $\beta$  & $\gamma$ & $\beta$  & $\gamma$ & $\beta$ \\
					\cmidrule{2-7}\cmidrule{9-14}
					Bias  & -0.0396  & 0.0071  & -0.0372  & 0.0067  & -0.0369  & 0.0066  & & -0.0349  & 0.0050  & -0.0343  & 0.0035  & -0.0339  & -0.0050  \\
					ESD   & 0.0320  & 0.0282  & 0.0260  & 0.0246  & 0.0256  & 0.0231  & & 0.0299  & 0.0278  & 0.0249  & 0.0232  & 0.0249  & 0.0222  \\
					RMSE  & 0.0509  & 0.0291  & 0.0454  & 0.0255  & 0.0449  & 0.0240  &  & 0.0460  & 0.0282  & 0.0424  & 0.0235  & 0.0421  & 0.0228  \\
					CP    & 0.8850  & 0.8820  & 0.8540  & 0.9020  & 0.8770  & 0.8520  &   & 0.8930  & 0.8950  & 0.8840  & 0.9060  & 0.8820  & 0.9030  \\   
					\cmidrule{2-7}\cmidrule{9-14}
					\multicolumn{7}{c}{$(n,T,\elln)=(200,25,2)$}                         & \multicolumn{7}{c}{$(n,T,\elln)=(200,25,[n^{1/4}])$}
					\\
					\cmidrule{2-7}\cmidrule{9-14}
					\multirow{2}[4]{*}{} & \multicolumn{2}{c}{2SLS} & \multicolumn{2}{c}{OGMM} & \multicolumn{2}{c}{BGMM} & \multirow{2}[4]{*}{} & \multicolumn{2}{c}{2SLS} & \multicolumn{2}{c}{OGMM} & \multicolumn{2}{c}{BGMM} \\
					\cmidrule{2-7}\cmidrule{9-14}           & $\gamma$ & $\beta$  & $\gamma$ & $\beta$  & $\gamma$ & $\beta$  &       & $\gamma$ & $\beta$  & $\gamma$ & $\beta$  & $\gamma$ & $\beta$ \\
					\cmidrule{2-7}\cmidrule{9-14}
					Bias  & -0.0199  & 0.0079  & -0.0124  & 0.0087  & -0.0123  & 0.0063  &   & -0.0155  & 0.0062  & -0.0109  & 0.0055  & -0.0103  & -0.0048  \\
					ESD   & 0.0138  & 0.0145  & 0.0135  & 0.0130  & 0.0132  & 0.0128  &  & 0.0135  & 0.0135  & 0.0133  & 0.0128  & 0.0130  & 0.0126  \\
					RMSE  & 0.0242  & 0.0165  & 0.0183  & 0.0156  & 0.0180  & 0.0143  &   & 0.0206  & 0.0149  & 0.0172  & 0.0139  & 0.0166  & 0.0135  \\
					CP    & 0.9140  & 0.8990  & 0.9160  & 0.9210  & 0.9250  & 0.9320  &    & 0.9310  & 0.9610  & 0.9680  & 0.9210  & 0.9700  & 0.9440  \\
					\cmidrule{2-7}\cmidrule{9-14}
					\multicolumn{7}{c}{$(n,T,\elln)=(400,10,2)$}                         & \multicolumn{7}{c}{$(n,T,\elln)=(400,10,[n^{1/5}]+2)$} \\
					\cmidrule{2-7}\cmidrule{9-14}
					\multirow{2}[4]{*}{} & \multicolumn{2}{c}{2SLS} & \multicolumn{2}{c}{OGMM} & \multicolumn{2}{c}{BGMM} & \multirow{2}[4]{*}{} & \multicolumn{2}{c}{2SLS} & \multicolumn{2}{c}{OGMM} & \multicolumn{2}{c}{BGMM} \\
					\cmidrule{2-7}\cmidrule{9-14}          & $\gamma$ & $\beta$  & $\gamma$ & $\beta$  & $\gamma$ & $\beta$  &       & $\gamma$ & $\beta$  & $\gamma$ & $\beta$  & $\gamma$ & $\beta$ \\
					\cmidrule{2-7}\cmidrule{9-14}
					Bias  & -0.0173  & -0.0055  & -0.0161  & 0.0031  & -0.0153  & 0.0010  &  & -0.0151  & 0.0023  & -0.0131  & 0.0002  & -0.0126  & -0.0010  \\
					ESD   & 0.0241  & 0.0233  & 0.0177  & 0.0158  & 0.0177  & 0.0157  &   & 0.0241  & 0.0226  & 0.0170  & 0.0155  & 0.0170  & 0.0154  \\
					RMSE  & 0.0297  & 0.0239  & 0.0239  & 0.0161  & 0.0234  & 0.0157  &    & 0.0284  & 0.0227  & 0.0215  & 0.0155  & 0.0212  & 0.0154  \\
					CP    & 0.9470  & 0.9140  & 0.9610  & 0.9180  & 0.9520  & 0.9230  &   & 0.9260  & 0.9200  & 0.9450  & 0.9240  & 0.9180  & 0.9420  \\
					\cmidrule{2-7}\cmidrule{9-14}
					\multicolumn{7}{c}{$(n,T,\elln)=(400,25,2)$}                         & \multicolumn{7}{c}{$(n,T,\elln)=(400,25,[n^{1/5}]+2)$}
					\\
					\cmidrule{2-7}\cmidrule{9-14}
					\multirow{2}[4]{*}{} & \multicolumn{2}{c}{2SLS} & \multicolumn{2}{c}{OGMM} & \multicolumn{2}{c}{BGMM} & \multirow{2}[4]{*}{} & \multicolumn{2}{c}{2SLS} & \multicolumn{2}{c}{OGMM} & \multicolumn{2}{c}{BGMM} \\
					\cmidrule{2-7}\cmidrule{9-14}           & $\gamma$ & $\beta$  & $\gamma$ & $\beta$  & $\gamma$ & $\beta$  &       & $\gamma$ & $\beta$  & $\gamma$ & $\beta$  & $\gamma$ & $\beta$ \\
					\cmidrule{2-7}\cmidrule{9-14}
					Bias  & -0.0142  & -0.0095  & -0.0125  & 0.0031  & -0.0122  & 0.0010  &    & -0.0133  & 0.0029  & -0.0113  & 0.0010  & -0.0071  & 0.0008  \\
					ESD   & 0.0152  & 0.0155  & 0.0095  & 0.0100  & 0.0095  & 0.0099  &     & 0.0140  & 0.0121  & 0.0091  & 0.0091  & 0.0091  & 0.0090  \\
					RMSE  & 0.0208  & 0.0182  & 0.0157  & 0.0105  & 0.0155  & 0.0100  &    & 0.0193  & 0.0124  & 0.0145  & 0.0092  & 0.0115  & 0.0090  \\
					CP    & 0.9430  & 0.9510  & 0.9530  & 0.9180  & 0.9430  & 0.9640  &    & 0.9570  & 0.9490  & 0.9310  & 0.9520  & 0.9420  & 0.9400  \\
					\bottomrule
				\end{tabular}%
				\hspace*{-1cm}
				\begin{tablenotes}
					\footnotesize
					\item \textbf{Note:} The true parameters are set to $\pi_0=(0.3,1)'$, and $\bard_0=15\%$. The results are based on 1,000 Monte Carlo replications. Bias denotes the mean bias of the estimates, ESD denotes the standard deviation, RMSE denotes the root mean squared error, and CP denotes the 95\% coverage probability. The 2SLS refers to the two-stage least square estimator, OGMM refers to the feasible optimal GMM estimator, and BGMM refers to the feasible best GMM estimator. 
				\end{tablenotes}
			\end{spacing}
			\label{sim:hei_robust}%
		\end{table}%

		% \begin{landscape}
			% \begin{table}[H]
				\begin{sidewaystable}[htbp]
					\centering\footnotesize\stla{0cm} \stl{0.8mm} \vspace*{-5mm}
					\caption{Finite sample performance of estimators of $G_{1}, G_{2}$ and $G_{3}$ for the MESS-type, $\Var(\epsilon_{it})=\sigma_i^2$. Additional results.}
					\begin{spacing}{0.5}\hspace*{-5mm}
						\begin{tabular}{lrrrrrrrrrlrrrrrrrrr}
							\toprule
							\multirow{4}[3]{*}{}	& \multicolumn{9}{c}{$(n,T,\elln)=(100,10,2)$} &   & \multicolumn{9}{c}{$(n,T,\elln)=(100,10,[n^{1/5}]+2)$} \\
							\cmidrule{2-20}     
							\multirow{2}[4]{*}{} & \multicolumn{3}{c}{2SLS} & \multicolumn{3}{c}{OGMM} & \multicolumn{3}{c}{BGMM} & \multirow{2}[4]{*}{} & \multicolumn{3}{c}{2SLS} & \multicolumn{3}{c}{OGMM} & \multicolumn{3}{c}{BGMM} \\        
							\cmidrule{2-10}\cmidrule{12-20}        & \multicolumn{1}{c}{$\tilde{g}_1$} & \multicolumn{1}{c}{$\tilde{g}_2$} & \multicolumn{1}{c}{$\tilde{g}_3$} & \multicolumn{1}{c}{$\tilde{g}_{1}$} & \multicolumn{1}{c}{$\tilde{g}_{2}$} & \multicolumn{1}{c}{$\tilde{g}_{3}$} & \multicolumn{1}{c}{$\tilde{g}_{1}$} & \multicolumn{1}{c}{$\tilde{g}_{2}$} & \multicolumn{1}{c}{$\tilde{g}_{3}$} &       & \multicolumn{1}{c}{$\tilde{g}_{1}$} & \multicolumn{1}{c}{$\tilde{g}_{2}$} & \multicolumn{1}{c}{$\tilde{g}_{3}$} & \multicolumn{1}{c}{$\tilde{g}_{1}$} & \multicolumn{1}{c}{$\tilde{g}_{2}$} & \multicolumn{1}{c}{$\tilde{g}_{3}$} & \multicolumn{1}{c}{$\tilde{g}_{1}$} & \multicolumn{1}{c}{$\tilde{g}_{2}$} & \multicolumn{1}{c}{$\tilde{g}_{3}$} \\
							\midrule
							MAE   & 0.0339  & 0.0352  & 0.0530  & 0.0338  & 0.0351  & 0.0514  & 0.0340  & 0.0352  & 0.0505  & MAE   & 0.0323  & 0.0342  & 0.0509  & 0.0322  & 0.0341  & 0.0499  & 0.0327  & 0.0343  & 0.0489  \\
							Bias  & -0.0204  & -0.0257  & -0.0464  & -0.0201  & -0.0259  & -0.0457  & -0.0197  & -0.0258  & -0.0451  & Bias  & -0.0187  & -0.0255  & -0.0450  & -0.0185  & -0.0257  & -0.0446  & -0.0180  & -0.0255  & -0.0441  \\
							RMSE  & 0.0228  & 0.0288  & 0.0531  & 0.0225  & 0.0289  & 0.0517  & 0.0224  & 0.0289  & 0.0506  & RMSE  & 0.0209  & 0.0284  & 0.0509  & 0.0207  & 0.0285  & 0.0500  & 0.0204  & 0.0284  & 0.0491  \\
							\midrule
							& \multicolumn{9}{c}{$(n,T,\elln)=(100,25,2)$} &   & \multicolumn{9}{c}{$(n,T,\elln)=(100,25,[n^{1/5}]+2)$} \\
							\cmidrule{2-20}     
							\multirow{2}[4]{*}{} & \multicolumn{3}{c}{2SLS} & \multicolumn{3}{c}{OGMM} & \multicolumn{3}{c}{BGMM} & \multirow{2}[4]{*}{} & \multicolumn{3}{c}{2SLS} & \multicolumn{3}{c}{OGMM} & \multicolumn{3}{c}{BGMM} \\       
							\cmidrule{2-10}\cmidrule{12-20}          & \multicolumn{1}{c}{$\tilde{g}_{1}$} & \multicolumn{1}{c}{$\tilde{g}_{2}$} & \multicolumn{1}{c}{$\tilde{g}_{3}$} & \multicolumn{1}{c}{$\tilde{g}_{1}$} & \multicolumn{1}{c}{$\tilde{g}_{2}$} & \multicolumn{1}{c}{$\tilde{g}_{3}$} & \multicolumn{1}{c}{$\tilde{g}_{1}$} & \multicolumn{1}{c}{$\tilde{g}_{2}$} & \multicolumn{1}{c}{$\tilde{g}_{3}$} &       & \multicolumn{1}{c}{$\tilde{g}_{1}$} & \multicolumn{1}{c}{$\tilde{g}_{2}$} & \multicolumn{1}{c}{$\tilde{g}_{3}$} & \multicolumn{1}{c}{$\tilde{g}_{1}$} & \multicolumn{1}{c}{$\tilde{g}_{2}$} & \multicolumn{1}{c}{$\tilde{g}_{3}$} & \multicolumn{1}{c}{$\tilde{g}_{1}$} & \multicolumn{1}{c}{$\tilde{g}_{2}$} & \multicolumn{1}{c}{$\tilde{g}_{3}$} \\
							\midrule
							MAE   & 0.0199  & 0.0209  & 0.0375  & 0.0199  & 0.0209  & 0.0368  & 0.0198  & 0.0209  & 0.0361  & MAE   & 0.0194  & 0.0209  & 0.0322  & 0.0192  & 0.0208  & 0.0317  & 0.0192  & 0.0208  & 0.0311  \\
							Bias  & -0.0159  & -0.0180  & -0.0352  & -0.0156  & -0.0181  & -0.0346  & -0.0153  & -0.0180  & -0.0342  & Bias  & -0.0128  & -0.0162  & -0.0305  & -0.0125  & -0.0163  & -0.0302  & -0.0121  & -0.0162  & -0.0298  \\
							RMSE  & 0.0164  & 0.0191  & 0.0370  & 0.0160  & 0.0191  & 0.0362  & 0.0158  & 0.0191  & 0.0357  & RMSE  & 0.0137  & 0.0175  & 0.0326  & 0.0133  & 0.0175  & 0.0321  & 0.0130  & 0.0175  & 0.0316  \\
							\midrule
							& \multicolumn{9}{c}{$(n,T,\elln)=(200,10,2)$} &   & \multicolumn{9}{c}{$(n,T,\elln)=(200,10,[n^{1/5}]+2)$} \\
							\cmidrule{2-20}     
							\multirow{2}[4]{*}{} & \multicolumn{3}{c}{2SLS} & \multicolumn{3}{c}{OGMM} & \multicolumn{3}{c}{BGMM} & \multirow{2}[4]{*}{} & \multicolumn{3}{c}{2SLS} & \multicolumn{3}{c}{OGMM} & \multicolumn{3}{c}{BGMM} \\       
							\cmidrule{2-10}\cmidrule{12-20}        & \multicolumn{1}{c}{$\tilde{g}_1$} & \multicolumn{1}{c}{$\tilde{g}_2$} & \multicolumn{1}{c}{$\tilde{g}_3$} & \multicolumn{1}{c}{$\tilde{g}_{1}$} & \multicolumn{1}{c}{$\tilde{g}_{2}$} & \multicolumn{1}{c}{$\tilde{g}_{3}$} & \multicolumn{1}{c}{$\tilde{g}_{1}$} & \multicolumn{1}{c}{$\tilde{g}_{2}$} & \multicolumn{1}{c}{$\tilde{g}_{3}$} &       & \multicolumn{1}{c}{$\tilde{g}_{1}$} & \multicolumn{1}{c}{$\tilde{g}_{2}$} & \multicolumn{1}{c}{$\tilde{g}_{3}$} & \multicolumn{1}{c}{$\tilde{g}_{1}$} & \multicolumn{1}{c}{$\tilde{g}_{2}$} & \multicolumn{1}{c}{$\tilde{g}_{3}$} & \multicolumn{1}{c}{$\tilde{g}_{1}$} & \multicolumn{1}{c}{$\tilde{g}_{2}$} & \multicolumn{1}{c}{$\tilde{g}_{3}$} \\
							\midrule
							MAE   & 0.0329  & 0.0342  & 0.0509  & 0.0329  & 0.0341  & 0.0499  & 0.0331  & 0.0343  & 0.0489  & MAE   & 0.0323  & 0.0329  & 0.0488  & 0.0322  & 0.0328  & 0.0478  & 0.0327  & 0.0330  & 0.0467  \\
							Bias  & -0.0190  & -0.0255  & -0.0475  & -0.0185  & -0.0257  & -0.0467  & -0.0180  & -0.0255  & -0.0457  & Bias  & -0.0187  & -0.0240  & -0.0450  & -0.0185  & -0.0243  & -0.0446  & -0.0175  & -0.0247  & -0.0441  \\
							RMSE  & 0.0216  & 0.0287  & 0.0541  & 0.0211  & 0.0288  & 0.0526  & 0.0209  & 0.0287  & 0.0512  & RMSE  & 0.0196  & 0.0250  & 0.0472  & 0.0194  & 0.0252  & 0.0466  & 0.0186  & 0.0257  & 0.0459  \\
							\midrule
							& \multicolumn{9}{c}{$(n,T,\elln)=(200,25,2)$} &   & \multicolumn{9}{c}{$(n,T,\elln)=(200,25,[n^{1/5}]+2)$} \\
							\cmidrule{2-20}     
							\multirow{2}[4]{*}{} & \multicolumn{3}{c}{2SLS} & \multicolumn{3}{c}{OGMM} & \multicolumn{3}{c}{BGMM} & \multirow{2}[4]{*}{} & \multicolumn{3}{c}{2SLS} & \multicolumn{3}{c}{OGMM} & \multicolumn{3}{c}{BGMM} \\    
							\cmidrule{2-10}\cmidrule{12-20}          & \multicolumn{1}{c}{$\tilde{g}_{1}$} & \multicolumn{1}{c}{$\tilde{g}_{2}$} & \multicolumn{1}{c}{$\tilde{g}_{3}$} & \multicolumn{1}{c}{$\tilde{g}_{1}$} & \multicolumn{1}{c}{$\tilde{g}_{2}$} & \multicolumn{1}{c}{$\tilde{g}_{3}$} & \multicolumn{1}{c}{$\tilde{g}_{1}$} & \multicolumn{1}{c}{$\tilde{g}_{2}$} & \multicolumn{1}{c}{$\tilde{g}_{3}$} &       & \multicolumn{1}{c}{$\tilde{g}_{1}$} & \multicolumn{1}{c}{$\tilde{g}_{2}$} & \multicolumn{1}{c}{$\tilde{g}_{3}$} & \multicolumn{1}{c}{$\tilde{g}_{1}$} & \multicolumn{1}{c}{$\tilde{g}_{2}$} & \multicolumn{1}{c}{$\tilde{g}_{3}$} & \multicolumn{1}{c}{$\tilde{g}_{1}$} & \multicolumn{1}{c}{$\tilde{g}_{2}$} & \multicolumn{1}{c}{$\tilde{g}_{3}$} \\
							\midrule
							MAE   & 0.0094  & 0.0099  & 0.0198  & 0.0094  & 0.0099  & 0.0198  & 0.0093  & 0.0099  & 0.0194  & MAE   & 0.0092  & 0.0098  & 0.0166  & 0.0091  & 0.0098  & 0.0166  & 0.0092  & 0.0099  & 0.0164  \\
							Bias  & -0.0075  & -0.0084  & -0.0181  & -0.0074  & -0.0083  & -0.0181  & -0.0074  & -0.0083  & -0.0180  & Bias  & -0.0062  & -0.0076  & -0.0159  & -0.0062  & -0.0075  & -0.0159  & -0.0061  & -0.0075  & -0.0158  \\
							RMSE  & 0.0077  & 0.0090  & 0.0190  & 0.0076  & 0.0089  & 0.0190  & 0.0076  & 0.0089  & 0.0189  & RMSE  & 0.0068  & 0.0083  & 0.0173  & 0.0068  & 0.0082  & 0.0173  & 0.0067  & 0.0083  & 0.0170  \\ 
							\midrule
							& \multicolumn{9}{c}{$(n,T,\elln)=(400,10,2)$} &   & \multicolumn{9}{c}{$(n,T,\elln)=(400,10,[n^{1/5}]+2)$} \\
							\cmidrule{2-20}     
							\multirow{2}[4]{*}{} & \multicolumn{3}{c}{2SLS} & \multicolumn{3}{c}{OGMM} & \multicolumn{3}{c}{BGMM} & \multirow{2}[4]{*}{} & \multicolumn{3}{c}{2SLS} & \multicolumn{3}{c}{OGMM} & \multicolumn{3}{c}{BGMM} \\    
							\cmidrule{2-10}\cmidrule{12-20}        & \multicolumn{1}{c}{$\tilde{g}_1$} & \multicolumn{1}{c}{$\tilde{g}_2$} & \multicolumn{1}{c}{$\tilde{g}_3$} & \multicolumn{1}{c}{$\tilde{g}_{1}$} & \multicolumn{1}{c}{$\tilde{g}_{2}$} & \multicolumn{1}{c}{$\tilde{g}_{3}$} & \multicolumn{1}{c}{$\tilde{g}_{1}$} & \multicolumn{1}{c}{$\tilde{g}_{2}$} & \multicolumn{1}{c}{$\tilde{g}_{3}$} &       & \multicolumn{1}{c}{$\tilde{g}_{1}$} & \multicolumn{1}{c}{$\tilde{g}_{2}$} & \multicolumn{1}{c}{$\tilde{g}_{3}$} & \multicolumn{1}{c}{$\tilde{g}_{1}$} & \multicolumn{1}{c}{$\tilde{g}_{2}$} & \multicolumn{1}{c}{$\tilde{g}_{3}$} & \multicolumn{1}{c}{$\tilde{g}_{1}$} & \multicolumn{1}{c}{$\tilde{g}_{2}$} & \multicolumn{1}{c}{$\tilde{g}_{3}$} \\
							\midrule
							MAE   & 0.0195  & 0.0200  & 0.0367  & 0.0195  & 0.0200  & 0.0360  & 0.0195  & 0.0200  & 0.0354  & MAE   & 0.0186  & 0.0196  & 0.0314  & 0.0184  & 0.0196  & 0.0309  & 0.0183  & 0.0197  & 0.0302  \\
							Bias  & -0.0158  & -0.0180  & -0.0357  & -0.0153  & -0.0181  & -0.0350  & -0.0148  & -0.0182  & -0.0344  & Bias  & -0.0126  & -0.0159  & -0.0313  & -0.0122  & -0.0160  & -0.0308  & -0.0115  & -0.0163  & -0.0301  \\
							RMSE  & 0.0160  & 0.0183  & 0.0363  & 0.0154  & 0.0184  & 0.0355  & 0.0149  & 0.0186  & 0.0349  & RMSE  & 0.0129  & 0.0163  & 0.0320  & 0.0125  & 0.0164  & 0.0314  & 0.0119  & 0.0167  & 0.0306  \\
							\midrule
							& \multicolumn{9}{c}{$(n,T,\elln)=(400,25,2)$} &   & \multicolumn{9}{c}{$(n,T,\elln)=(400,25,[n^{1/5}]+2)$} \\
							\cmidrule{2-20}     
							\multirow{2}[4]{*}{} & \multicolumn{3}{c}{2SLS} & \multicolumn{3}{c}{OGMM} & \multicolumn{3}{c}{BGMM} & \multirow{2}[4]{*}{} & \multicolumn{3}{c}{2SLS} & \multicolumn{3}{c}{OGMM} & \multicolumn{3}{c}{BGMM} \\    
							\cmidrule{2-10}\cmidrule{12-20}          & \multicolumn{1}{c}{$\tilde{g}_{1}$} & \multicolumn{1}{c}{$\tilde{g}_{2}$} & \multicolumn{1}{c}{$\tilde{g}_{3}$} & \multicolumn{1}{c}{$\tilde{g}_{1}$} & \multicolumn{1}{c}{$\tilde{g}_{2}$} & \multicolumn{1}{c}{$\tilde{g}_{3}$} & \multicolumn{1}{c}{$\tilde{g}_{1}$} & \multicolumn{1}{c}{$\tilde{g}_{2}$} & \multicolumn{1}{c}{$\tilde{g}_{3}$} &       & \multicolumn{1}{c}{$\tilde{g}_{1}$} & \multicolumn{1}{c}{$\tilde{g}_{2}$} & \multicolumn{1}{c}{$\tilde{g}_{3}$} & \multicolumn{1}{c}{$\tilde{g}_{1}$} & \multicolumn{1}{c}{$\tilde{g}_{2}$} & \multicolumn{1}{c}{$\tilde{g}_{3}$} & \multicolumn{1}{c}{$\tilde{g}_{1}$} & \multicolumn{1}{c}{$\tilde{g}_{2}$} & \multicolumn{1}{c}{$\tilde{g}_{3}$} \\
							\midrule
							MAE   & 0.0092  & 0.0095  & 0.0188  & 0.0092  & 0.0095  & 0.0187  & 0.0092  & 0.0096  & 0.0184  & MAE   & 0.0088  & 0.0095  & 0.0169  & 0.0088  & 0.0095  & 0.0169  & 0.0088  & 0.0096  & 0.0164  \\
							Bias  & -0.0075  & -0.0087  & -0.0181  & -0.0074  & -0.0086  & -0.0180  & -0.0073  & -0.0087  & -0.0178  & Bias  & -0.0064  & -0.0078  & -0.0169  & -0.0064  & -0.0078  & -0.0168  & -0.0060  & -0.0079  & -0.0164  \\
							RMSE  & 0.0076  & 0.0089  & 0.0183  & 0.0075  & 0.0088  & 0.0182  & 0.0074  & 0.0089  & 0.0180  & RMSE  & 0.0066  & 0.0081  & 0.0173  & 0.0066  & 0.0081  & 0.0172  & 0.0062  & 0.0082  & 0.0167  \\ 
							\bottomrule
						\end{tabular}%
						\hspace*{-1cm}
						\begin{tablenotes}
							\footnotesize
							\item \textbf{Note:} The $\tilde{g}_{k}$ extracts the column vectors composed of non-zero elements from the upper triangular submatrix of $G_{k}$, and $\bard_0=15\%$. The results are based on 1,000 Monte Carlo replications. MAE denotes the mean absolute error. Bias denotes the mean bias of the estimates, and RMSE denotes the root mean squared error. The 2SLS refers to the two-stage least square estimator, OGMM refers to the feasible optimal GMM estimator, and BGMM refers to the feasible best GMM estimator. 
						\end{tablenotes}
					\end{spacing}
					\label{sim:heteG_d0}%
					% \end{table}%
				% \end{landscape}
		\end{sidewaystable}
		%%%%%%%%%%%%%%%%%% SAR type
		\begin{table}[htbp]
			\centering \small \stla{0cm} \stl{0.8mm} 
			\caption{Finite sample performance of estimators of $\pi$ for the SAR-type, $\Var(\epsilon_{it})=\sigma_i^2$.}
			\begin{spacing}{1}%\hspace*{-5mm}
				\begin{tabular}{lcccccclcccccc}
					\toprule
					\multicolumn{7}{c}{$(n,T,\elln)=(100,10,2)$}                         & \multicolumn{7}{c}{$(n,T,\elln)=(100,10,[n^{1/5}]+2)$} \\
					\cmidrule{2-7}\cmidrule{9-14}
					\multirow{2}[4]{*}{} & \multicolumn{2}{c}{2SLS} & \multicolumn{2}{c}{OGMM} & \multicolumn{2}{c}{BGMM} & \multirow{2}[4]{*}{} & \multicolumn{2}{c}{2SLS} & \multicolumn{2}{c}{OGMM} & \multicolumn{2}{c}{BGMM} \\
					\cmidrule{2-7}\cmidrule{9-14}          & $\gamma$ & $\beta$  & $\gamma$ & $\beta$  & $\gamma$ & $\beta$  &       & $\gamma$ & $\beta$  & $\gamma$ & $\beta$  & $\gamma$ & $\beta$ \\
					\cmidrule{2-7}\cmidrule{9-14}
					Bias  & 0.0021  & -0.0024  & 0.0108  & -0.0015  & 0.0110  & -0.0018  &  & -0.0020  & -0.0041  & 0.0089  & -0.0029  & 0.0087  & -0.0027  \\
					ESD   & 0.0361  & 0.0292  & 0.0406  & 0.0287  & 0.0416  & 0.0294  &  & 0.0378  & 0.0288  & 0.0424  & 0.0283  & 0.0423  & 0.0292  \\
					RMSE  & 0.0362  & 0.0293  & 0.0420  & 0.0288  & 0.0430  & 0.0295  & & 0.0378  & 0.0291  & 0.0433  & 0.0285  & 0.0432  & 0.0294  \\
					CP    & 0.7970  & 0.9070  & 0.7950  & 0.9280  & 0.7850  & 0.9250  &  & 0.8930  & 0.9030  & 0.8830  & 0.9100  & 0.8720  & 0.8990  \\
					\cmidrule{2-7}\cmidrule{9-14}
					\multicolumn{7}{c}{$(n,T,\elln)=(100,25,2)$}                         & \multicolumn{7}{c}{$(n,T,\elln)=(100,25,[n^{1/5}]+2)$} \\
					\cmidrule{2-7}\cmidrule{9-14}
					& \multicolumn{1}{c}{2SLS} &       & \multicolumn{1}{c}{OGMM} &       & \multicolumn{1}{c}{BGMM} &       & \multirow{2}[4]{*}{} & \multicolumn{2}{c}{2SLS} & \multicolumn{2}{c}{OGMM} & \multicolumn{2}{c}{BGMM} \\
					\cmidrule{2-7}\cmidrule{9-14}          &$\gamma$ & $\beta$  & $\gamma$ & $\beta$  & $\gamma$ & $\beta$  &       & $\gamma$ & $\beta$  & $\gamma$ & $\beta$  & $\gamma$ & $\beta$ \\
					\cmidrule{2-7}\cmidrule{9-14}
					Bias  & 0.0163  & -0.0002  & 0.0163  & -0.0002  & 0.0149  & 0.0000  & & 0.0152  & 0.0012  & 0.0152  & 0.0012  & 0.0125  & 0.0002  \\
					ESD   & 0.0221  & 0.0175  & 0.0221  & 0.0163  & 0.0212  & 0.0163  & & 0.0202  & 0.0173  & 0.0202  & 0.0165  & 0.0196  & 0.0164  \\
					RMSE  & 0.0275  & 0.0175  & 0.0275  & 0.0163  & 0.0259  & 0.0163  & & 0.0253  & 0.0173  & 0.0253  & 0.0165  & 0.0232  & 0.0164  \\
					CP    & 0.8960  & 0.9070  & 0.8840  & 0.9270  & 0.8730  & 0.9160  &  & 0.8980  & 0.9140  & 0.8990  & 0.9180  & 0.8880  & 0.9070  \\
					\cmidrule{2-7}\cmidrule{9-14}
					\multicolumn{7}{c}{$(n,T,\elln)=(200,10,2)$}                         & \multicolumn{7}{c}{$(n,T,\elln)=(200,10,[n^{1/5}]+2)$} \\
					\cmidrule{2-7}\cmidrule{9-14}
					\multirow{2}[4]{*}{} & \multicolumn{2}{c}{2SLS} & \multicolumn{2}{c}{OGMM} & \multicolumn{2}{c}{BGMM} & \multirow{2}[4]{*}{} & \multicolumn{2}{c}{2SLS} & \multicolumn{2}{c}{OGMM} & \multicolumn{2}{c}{BGMM} \\
					\cmidrule{2-7}\cmidrule{9-14}          & $\gamma$ & $\beta$  & $\gamma$ & $\beta$  & $\gamma$ & $\beta$  &       & $\gamma$ & $\beta$  & $\gamma$ & $\beta$  & $\gamma$ & $\beta$ \\
					\cmidrule{2-7}\cmidrule{9-14}
					Bias  & 0.0088  & -0.0017  & 0.0086  & -0.0007  & 0.0035  & -0.0006  &  & 0.0057  & -0.0002  & 0.0054  & -0.0003  & -0.0017  & -0.0023  \\
					ESD   & 0.0286  & 0.0206  & 0.0283  & 0.0203  & 0.0270  & 0.0200  &  & 0.0270  & 0.0211  & 0.0266  & 0.0208  & 0.0252  & 0.0206  \\
					RMSE  & 0.0299  & 0.0207  & 0.0296  & 0.0203  & 0.0272  & 0.0200  &  & 0.0276  & 0.0211  & 0.0271  & 0.0208  & 0.0253  & 0.0207  \\
					CP    & 0.8000  & 0.8990  & 0.8060  & 0.9260  & 0.7960  & 0.9150  &  & 0.8430  & 0.8880  & 0.8140  & 0.9200  & 0.8040  & 0.9090  \\
					\cmidrule{2-7}\cmidrule{9-14}
					\multicolumn{7}{c}{$(n,T,\elln)=(200,25,2)$}                         & \multicolumn{7}{c}{$(n,T,\elln)=(200,25,[n^{1/5}]+2)$}
					\\
					\cmidrule{2-7}\cmidrule{9-14}
					\multirow{2}[4]{*}{} & \multicolumn{2}{c}{2SLS} & \multicolumn{2}{c}{OGMM} & \multicolumn{2}{c}{BGMM} & \multirow{2}[4]{*}{} & \multicolumn{2}{c}{2SLS} & \multicolumn{2}{c}{OGMM} & \multicolumn{2}{c}{BGMM} \\
					\cmidrule{2-7}\cmidrule{9-14}           & $\gamma$ & $\beta$  & $\gamma$ & $\beta$  & $\gamma$ & $\beta$  &       & $\gamma$ & $\beta$  & $\gamma$ & $\beta$  & $\gamma$ & $\beta$ \\
					\cmidrule{2-7}\cmidrule{9-14}
					Bias  & 0.0123  & 0.0019  & 0.0123  & 0.0019  & 0.0109  & -0.0002  &  & 0.0089  & 0.0017  & 0.0089  & 0.0017  & 0.0063  & -0.0006  \\
					ESD   & 0.0150  & 0.0118  & 0.0150  & 0.0115  & 0.0147  & 0.0115  &  & 0.0141  & 0.0120  & 0.0139  & 0.0115  & 0.0139  & 0.0115  \\
					RMSE  & 0.0194  & 0.0120  & 0.0194  & 0.0117  & 0.0183  & 0.0115  &   & 0.0167  & 0.0121  & 0.0165  & 0.0116  & 0.0153  & 0.0115  \\
					CP    & 0.9260  & 0.9400  & 0.9360  & 0.9330  & 0.9250  & 0.9220  &  & 0.9400  & 0.9350  & 0.9420  & 0.9420  & 0.9350  & 0.9400  \\
					\cmidrule{2-7}\cmidrule{9-14}
					\multicolumn{7}{c}{$(n,T,\elln)=(400,10,2)$}                         & \multicolumn{7}{c}{$(n,T,\elln)=(400,10,[n^{1/5}]+2)$} \\
					\cmidrule{2-7}\cmidrule{9-14}
					\multirow{2}[4]{*}{} & \multicolumn{2}{c}{2SLS} & \multicolumn{2}{c}{OGMM} & \multicolumn{2}{c}{BGMM} & \multirow{2}[4]{*}{} & \multicolumn{2}{c}{2SLS} & \multicolumn{2}{c}{OGMM} & \multicolumn{2}{c}{BGMM} \\
					\cmidrule{2-7}\cmidrule{9-14}          & $\gamma$ & $\beta$  & $\gamma$ & $\beta$  & $\gamma$ & $\beta$  &       & $\gamma$ & $\beta$  & $\gamma$ & $\beta$  & $\gamma$ & $\beta$ \\
					\cmidrule{2-7}\cmidrule{9-14}
					Bias  & 0.0063  & 0.0011  & 0.0062  & 0.0011  & 0.0033  & -0.0004  &  & 0.0050  & 0.0018  & 0.0049  & 0.0015  & 0.0009  & 0.0000  \\
					ESD   & 0.0198  & 0.0145  & 0.0198  & 0.0144  & 0.0193  & 0.0144  & & 0.0189  & 0.0146  & 0.0185  & 0.0145  & 0.0181  & 0.0144  \\
					RMSE  & 0.0208  & 0.0145  & 0.0207  & 0.0144  & 0.0196  & 0.0144  & & 0.0196  & 0.0147  & 0.0191  & 0.0146  & 0.0181  & 0.0144  \\
					CP    & 0.9120  & 0.9220  & 0.9270  & 0.9300  & 0.9190  & 0.9320  & & 0.9130  & 0.9300  & 0.9320  & 0.9360  & 0.9240  & 0.9330  \\
					\cmidrule{2-7}\cmidrule{9-14}
					\multicolumn{7}{c}{$(n,T,\elln)=(400,25,2)$}                         & \multicolumn{7}{c}{$(n,T,\elln)=(400,25,[n^{1/5}]+2)$}
					\\
					\cmidrule{2-7}\cmidrule{9-14}
					\multirow{2}[4]{*}{} & \multicolumn{2}{c}{2SLS} & \multicolumn{2}{c}{OGMM} & \multicolumn{2}{c}{BGMM} & \multirow{2}[4]{*}{} & \multicolumn{2}{c}{2SLS} & \multicolumn{2}{c}{OGMM} & \multicolumn{2}{c}{BGMM} \\
					\cmidrule{2-7}\cmidrule{9-14}           & $\gamma$ & $\beta$  & $\gamma$ & $\beta$  & $\gamma$ & $\beta$  &       & $\gamma$ & $\beta$  & $\gamma$ & $\beta$  & $\gamma$ & $\beta$ \\
					\cmidrule{2-7}\cmidrule{9-14}
					Bias  & 0.0073  & 0.0019  & 0.0073  & 0.0019  & 0.0070  & 0.0001  & & 0.0060  & 0.0018  & 0.0060  & 0.0018  & 0.0054  & -0.0007  \\
					ESD   & 0.0102  & 0.0081  & 0.0102  & 0.0081  & 0.0100  & 0.0081  & & 0.0095  & 0.0084  & 0.0095  & 0.0084  & 0.0095  & 0.0083  \\
					RMSE  & 0.0125  & 0.0083  & 0.0125  & 0.0083  & 0.0122  & 0.0081  &   & 0.0112  & 0.0086  & 0.0112  & 0.0086  & 0.0109  & 0.0083  \\
					CP    & 0.9250  & 0.9360  & 0.9320  & 0.9380  & 0.9250  & 0.9360  &  & 0.9340  & 0.9420  & 0.9400  & 0.9450  & 0.9410  & 0.9420  \\
					\bottomrule
				\end{tabular}%
				\hspace*{-1cm}
				\begin{tablenotes}
					\footnotesize
					\item \textbf{Note:} The true parameters are set to $\pi_0=(0.3,1)'$, and $\bard_0=10\%$. The results are based on 1,000 Monte Carlo replications. Bias denotes the mean bias of the estimates, ESD denotes the standard deviation, RMSE denotes the root mean squared error, and CP denotes the 95\% coverage probability. The 2SLS refers to the two-stage least square estimator, OGMM refers to the feasible optimal GMM estimator, and BGMM refers to the feasible best GMM estimator. 
				\end{tablenotes}
			\end{spacing}
			\label{sim:hei_sar}%
		\end{table}%
		
		\begin{table}[htbp]
			\centering \small \stla{0cm} \stl{0.8mm} 
			\caption{Estimated $\rho(\hat{A})$ for the SAR-type, $\Var(\epsilon_{it})=\sigma_i^2$, $\rho(A)=0.7333$.}
			\begin{spacing}{1}\hspace*{-0.8cm}
				\begin{tabular}{lcccccclcccccc}
					\toprule
					\multicolumn{7}{c}{$(n,T)=(100,10)$}                         & \multicolumn{7}{c}{$(n,T,\elln)=(100,25)$} \\
					\cmidrule{2-7}\cmidrule{9-14}
					\multirow{2}[4]{*}{} & \multicolumn{3}{c}{$\elln=2$} & \multicolumn{3}{c}{$\elln=[n^{1/5}]+2$} & \multirow{2}[4]{*}{} & \multicolumn{3}{c}{$\elln=2$} & \multicolumn{3}{c}{$\elln=[n^{1/5}]+2$} \\
					\cmidrule{2-7}\cmidrule{9-14}          & 2SLS &OGMM & BGMM & 2SLS &OGMM & BGMM  &       & 2SLS &OGMM & BGMM & 2SLS &OGMM & BGMM \\
					\cmidrule{2-7}\cmidrule{9-14}    Mean  & 0.6065  & 0.6044  & 0.6096  & 0.6096  & 0.6118  & 0.6121  &       & 0.6485  & 0.6274  & 0.6282  & 0.6566  & 0.6464  & 0.6454  \\
					% Bias  & -0.1268  & -0.1289  & -0.1237  & -0.1237  & -0.1215  & -0.1212  &       & -0.0849  & -0.1060  & -0.1051  & -0.0768  & -0.0870  & -0.0879  \\
					ESD   & 0.1834  & 0.1786  & 0.1785  & 0.1724  & 0.1742  & 0.1735  &       & 0.1353  & 0.1420  & 0.1408  & 0.1327  & 0.1365  & 0.1355  \\
					RMSE  & 0.2230  & 0.2202  & 0.2172  & 0.2122  & 0.2124  & 0.2117  &       & 0.1597  & 0.1772  & 0.1757  & 0.1533  & 0.1618  & 0.1615  \\
					\cmidrule{2-7}\cmidrule{9-14}
					\multicolumn{7}{c}{$(n,T)=(200,10)$}                         & \multicolumn{7}{c}{$(n,T,\elln)=(200,25)$} \\
					\cmidrule{2-7}\cmidrule{9-14}
					\multirow{2}[4]{*}{} & \multicolumn{3}{c}{$\elln=2$} & \multicolumn{3}{c}{$\elln=[n^{1/5}]+2$} & \multirow{2}[4]{*}{} & \multicolumn{3}{c}{$\elln=2$} & \multicolumn{3}{c}{$\elln=[n^{1/5}]+2$} \\
					\cmidrule{2-7}\cmidrule{9-14}          & 2SLS &OGMM & BGMM & 2SLS &OGMM & BGMM  &       & 2SLS &OGMM & BGMM & 2SLS &OGMM & BGMM \\
					\cmidrule{2-7}\cmidrule{9-14}
					Mean  & 0.6319  & 0.6355  & 0.6614  & 0.6368  & 0.6349  & 0.6355  &   & 0.6885  & 0.6740  & 0.6741  & 0.7007  & 0.6926  & 0.6924  \\
					% Bias  & -0.1014  & -0.0978  & -0.0719  & -0.0965  & -0.0984  & -0.0978  & Bias  & -0.0448  & -0.0593  & -0.0592  & -0.0326  & -0.0407  & -0.0409  \\
					ESD    & 0.1520  & 0.1542  & 0.1535  & 0.1508  & 0.1557  & 0.1535  &     & 0.0918  & 0.0827  & 0.0827  & 0.0898  & 0.0859  & 0.0862  \\
					RMSE  & 0.1827  & 0.1826  & 0.1695  & 0.1791  & 0.1842  & 0.1820  &   & 0.1022  & 0.1018  & 0.1017  & 0.0955  & 0.0951  & 0.0954  \\
					\cmidrule{2-7}\cmidrule{9-14}
					\multicolumn{7}{c}{$(n,T)=(400,10)$}                         & \multicolumn{7}{c}{$(n,T,\elln)=(400,25)$} \\
					\cmidrule{2-7}\cmidrule{9-14}
					\multirow{2}[4]{*}{} & \multicolumn{3}{c}{$\elln=2$} & \multicolumn{3}{c}{$\elln=[n^{1/5}]+2$} & \multirow{2}[4]{*}{} & \multicolumn{3}{c}{$\elln=2$} & \multicolumn{3}{c}{$\elln=[n^{1/5}]+2$} \\
					\cmidrule{2-7}\cmidrule{9-14}          & 2SLS &OGMM & BGMM & 2SLS &OGMM & BGMM  &       & 2SLS &OGMM & BGMM & 2SLS &OGMM & BGMM \\
					\cmidrule{2-7}\cmidrule{9-14}
					Mean  & 0.6818  & 0.6535  & 0.6535  & 0.6901  & 0.6619  & 0.6606  &   & 0.7489  & 0.7131  & 0.7131  & 0.7533  & 0.7534  & 0.7545  \\
					% Bias  & -0.0515  & -0.0799  & -0.0798  & -0.0432  & -0.0714  & -0.0728  & Bias  & 0.0156  & -0.0202  & -0.0202  & 0.0200  & 0.0201  & 0.0212  \\
					ESD    & 0.1237  & 0.1162  & 0.1162  & 0.1205  & 0.1160  & 0.1161  &     & 0.0855  & 0.0835  & 0.0835  & 0.0803  & 0.0806  & 0.0812  \\
					RMSE  & 0.1340  & 0.1410  & 0.1410  & 0.1280  & 0.1363  & 0.1370  &   & 0.0870  & 0.0859  & 0.0859  & 0.0828  & 0.0830  & 0.0839  \\
					\bottomrule
				\end{tabular}%
				\hspace*{-1cm}
				\begin{tablenotes}
					\footnotesize
					\item \textbf{Note:} The $\bard_0=10\%$, and the results are based on 1,000 Monte Carlo replications. Mean denotes the mean of the estimates. Bias denotes the mean bias of the estimates, ESD denotes the standard deviation, and RMSE denotes the root mean squared error. The 2SLS refers to the two-stage least square estimator, OGMM refers to the feasible optimal GMM estimator, and BGMM refers to the feasible best GMM estimator. 
				\end{tablenotes}
			\end{spacing}
			\label{sim:rhoG_sar}%
		\end{table}%
		
		% \begin{landscape}
			% \begin{table}[H]
				\begin{sidewaystable}[htbp]
					\centering\footnotesize\stla{0cm} \stl{0.8mm} \vspace*{-5mm}
					\caption{Finite sample performance of estimators of $G_{1}, G_{2}$ and $G_{3}$ for the SAR-type, $\Var(\epsilon_{it})=\sigma_i^2$.}
					\begin{spacing}{0.5}\hspace*{-5mm}
						\begin{tabular}{lrrrrrrrrrlrrrrrrrrr}
							\toprule
							\multirow{4}[3]{*}{}	& \multicolumn{9}{c}{$(n,T,\elln)=(100,10,2)$} &   & \multicolumn{9}{c}{$(n,T,\elln)=(100,10,[n^{1/5}]+2)$} \\
							\cmidrule{2-20}     
							\multirow{2}[4]{*}{} & \multicolumn{3}{c}{2SLS} & \multicolumn{3}{c}{OGMM} & \multicolumn{3}{c}{BGMM} & \multirow{2}[4]{*}{} & \multicolumn{3}{c}{2SLS} & \multicolumn{3}{c}{OGMM} & \multicolumn{3}{c}{BGMM} \\        
							\cmidrule{2-10}\cmidrule{12-20}        & \multicolumn{1}{c}{$\tilde{g}_1$} & \multicolumn{1}{c}{$\tilde{g}_2$} & \multicolumn{1}{c}{$\tilde{g}_3$} & \multicolumn{1}{c}{$\tilde{g}_{1}$} & \multicolumn{1}{c}{$\tilde{g}_{2}$} & \multicolumn{1}{c}{$\tilde{g}_{3}$} & \multicolumn{1}{c}{$\tilde{g}_{1}$} & \multicolumn{1}{c}{$\tilde{g}_{2}$} & \multicolumn{1}{c}{$\tilde{g}_{3}$} &       & \multicolumn{1}{c}{$\tilde{g}_{1}$} & \multicolumn{1}{c}{$\tilde{g}_{2}$} & \multicolumn{1}{c}{$\tilde{g}_{3}$} & \multicolumn{1}{c}{$\tilde{g}_{1}$} & \multicolumn{1}{c}{$\tilde{g}_{2}$} & \multicolumn{1}{c}{$\tilde{g}_{3}$} & \multicolumn{1}{c}{$\tilde{g}_{1}$} & \multicolumn{1}{c}{$\tilde{g}_{2}$} & \multicolumn{1}{c}{$\tilde{g}_{3}$} \\
							\midrule
							MAE   & 0.0158  & 0.0178  & -     & 0.0158  & 0.0190  & 0.0193  & 0.0159  & 0.0190  & 0.0195  & MAE   & 0.0119  & 0.0123  & -     & 0.0117  & 0.0124  & 0.0122  & 0.0116  & 0.0124  & 0.0123  \\
							Bias  & 0.0017  & -0.0039  & -     & -0.0060  & -0.0022  & 0.0016  & -0.0056  & -0.0023  & 0.0017  & Bias  & 0.0013  & -0.0012  & -     & -0.0070  & -0.0005  & 0.0012  & -0.0069  & -0.0005  & 0.0013  \\
							RMSE  & 0.0157  & 0.0193  & -     & 0.0165  & 0.0213  & 0.0206  & 0.0167  & 0.0212  & 0.0208  & RMSE  & 0.0081  & 0.0098  & -     & 0.0102  & 0.0099  & 0.0084  & 0.0101  & 0.0099  & 0.0084  \\
							\midrule
							& \multicolumn{9}{c}{$(n,T,\elln)=(100,25,2)$} &   & \multicolumn{9}{c}{$(n,T,\elln)=(100,25,[n^{1/5}]+2)$} \\
							\cmidrule{2-20}     
							\multirow{2}[4]{*}{} & \multicolumn{3}{c}{2SLS} & \multicolumn{3}{c}{OGMM} & \multicolumn{3}{c}{BGMM} & \multirow{2}[4]{*}{} & \multicolumn{3}{c}{2SLS} & \multicolumn{3}{c}{OGMM} & \multicolumn{3}{c}{BGMM} \\       
							\cmidrule{2-10}\cmidrule{12-20}          & \multicolumn{1}{c}{$\tilde{g}_{1}$} & \multicolumn{1}{c}{$\tilde{g}_{2}$} & \multicolumn{1}{c}{$\tilde{g}_{3}$} & \multicolumn{1}{c}{$\tilde{g}_{1}$} & \multicolumn{1}{c}{$\tilde{g}_{2}$} & \multicolumn{1}{c}{$\tilde{g}_{3}$} & \multicolumn{1}{c}{$\tilde{g}_{1}$} & \multicolumn{1}{c}{$\tilde{g}_{2}$} & \multicolumn{1}{c}{$\tilde{g}_{3}$} &       & \multicolumn{1}{c}{$\tilde{g}_{1}$} & \multicolumn{1}{c}{$\tilde{g}_{2}$} & \multicolumn{1}{c}{$\tilde{g}_{3}$} & \multicolumn{1}{c}{$\tilde{g}_{1}$} & \multicolumn{1}{c}{$\tilde{g}_{2}$} & \multicolumn{1}{c}{$\tilde{g}_{3}$} & \multicolumn{1}{c}{$\tilde{g}_{1}$} & \multicolumn{1}{c}{$\tilde{g}_{2}$} & \multicolumn{1}{c}{$\tilde{g}_{3}$} \\
							\midrule
							MAE   & 0.0124  & 0.0124  & -     & 0.0117  & 0.0129  & 0.0122  & 0.0117  & 0.0129  & 0.0122  & MAE   & 0.0119  & 0.0123  & -     & 0.0117  & 0.0124  & 0.0122  & 0.0116  & 0.0124  & 0.0123  \\
							Bias  & 0.0026  & -0.0014  & -     & -0.0056  & -0.0001  & 0.0018  & -0.0054  & -0.0001  & 0.0019  & Bias  & 0.0013  & -0.0012  & -     & -0.0070  & -0.0005  & 0.0012  & -0.0069  & -0.0005  & 0.0013  \\
							RMSE  & 0.0099  & 0.0110  & -     & 0.0103  & 0.0115  & 0.0097  & 0.0103  & 0.0115  & 0.0097  & RMSE  & 0.0081  & 0.0098  & -     & 0.0102  & 0.0099  & 0.0084  & 0.0101  & 0.0099  & 0.0084  \\
							\midrule
							& \multicolumn{9}{c}{$(n,T,\elln)=(200,10,2)$} &   & \multicolumn{9}{c}{$(n,T,\elln)=(200,10,[n^{1/5}]+2)$} \\
							\cmidrule{2-20}     
							\multirow{2}[4]{*}{} & \multicolumn{3}{c}{2SLS} & \multicolumn{3}{c}{OGMM} & \multicolumn{3}{c}{BGMM} & \multirow{2}[4]{*}{} & \multicolumn{3}{c}{2SLS} & \multicolumn{3}{c}{OGMM} & \multicolumn{3}{c}{BGMM} \\       
							\cmidrule{2-10}\cmidrule{12-20}        & \multicolumn{1}{c}{$\tilde{g}_1$} & \multicolumn{1}{c}{$\tilde{g}_2$} & \multicolumn{1}{c}{$\tilde{g}_3$} & \multicolumn{1}{c}{$\tilde{g}_{1}$} & \multicolumn{1}{c}{$\tilde{g}_{2}$} & \multicolumn{1}{c}{$\tilde{g}_{3}$} & \multicolumn{1}{c}{$\tilde{g}_{1}$} & \multicolumn{1}{c}{$\tilde{g}_{2}$} & \multicolumn{1}{c}{$\tilde{g}_{3}$} &       & \multicolumn{1}{c}{$\tilde{g}_{1}$} & \multicolumn{1}{c}{$\tilde{g}_{2}$} & \multicolumn{1}{c}{$\tilde{g}_{3}$} & \multicolumn{1}{c}{$\tilde{g}_{1}$} & \multicolumn{1}{c}{$\tilde{g}_{2}$} & \multicolumn{1}{c}{$\tilde{g}_{3}$} & \multicolumn{1}{c}{$\tilde{g}_{1}$} & \multicolumn{1}{c}{$\tilde{g}_{2}$} & \multicolumn{1}{c}{$\tilde{g}_{3}$} \\
							\midrule
							MAE   & 0.0082  & 0.0090  & -     & 0.0077  & 0.0094  & 0.0093  & 0.0078  & 0.0094  & 0.0092  & MAE   & 0.0074  & 0.0079  & -     & 0.0069  & 0.0083  & 0.0071  & 0.0069  & 0.0083  & 0.0071  \\
							Bias  & 0.0022  & -0.0026  & -     & -0.0015  & -0.0013  & 0.0001  & -0.0015  & -0.0013  & 0.0000  & Bias  & 0.0011  & -0.0025  & -     & -0.0030  & -0.0013  & -0.0004  & -0.0029  & -0.0013  & -0.0004  \\
							RMSE  & 0.0083  & 0.0100  & -     & 0.0081  & 0.0105  & 0.0100  & 0.0083  & 0.0105  & 0.0099  & RMSE  & 0.0061  & 0.0079  & -     & 0.0065  & 0.0084  & 0.0061  & 0.0065  & 0.0083  & 0.0061  \\
							\midrule
							& \multicolumn{9}{c}{$(n,T,\elln)=(200,25,2)$} &   & \multicolumn{9}{c}{$(n,T,\elln)=(200,25,[n^{1/5}]+2)$} \\
							\cmidrule{2-20}     
							\multirow{2}[4]{*}{} & \multicolumn{3}{c}{2SLS} & \multicolumn{3}{c}{OGMM} & \multicolumn{3}{c}{BGMM} & \multirow{2}[4]{*}{} & \multicolumn{3}{c}{2SLS} & \multicolumn{3}{c}{OGMM} & \multicolumn{3}{c}{BGMM} \\    
							\cmidrule{2-10}\cmidrule{12-20}          & \multicolumn{1}{c}{$\tilde{g}_{1}$} & \multicolumn{1}{c}{$\tilde{g}_{2}$} & \multicolumn{1}{c}{$\tilde{g}_{3}$} & \multicolumn{1}{c}{$\tilde{g}_{1}$} & \multicolumn{1}{c}{$\tilde{g}_{2}$} & \multicolumn{1}{c}{$\tilde{g}_{3}$} & \multicolumn{1}{c}{$\tilde{g}_{1}$} & \multicolumn{1}{c}{$\tilde{g}_{2}$} & \multicolumn{1}{c}{$\tilde{g}_{3}$} &       & \multicolumn{1}{c}{$\tilde{g}_{1}$} & \multicolumn{1}{c}{$\tilde{g}_{2}$} & \multicolumn{1}{c}{$\tilde{g}_{3}$} & \multicolumn{1}{c}{$\tilde{g}_{1}$} & \multicolumn{1}{c}{$\tilde{g}_{2}$} & \multicolumn{1}{c}{$\tilde{g}_{3}$} & \multicolumn{1}{c}{$\tilde{g}_{1}$} & \multicolumn{1}{c}{$\tilde{g}_{2}$} & \multicolumn{1}{c}{$\tilde{g}_{3}$} \\
							\midrule
							MAE   & 0.0064  & 0.0060  & -     & 0.0057  & 0.0059  & 0.0058  & 0.0057  & 0.0059  & 0.0058  & MAE   & 0.0063  & 0.0058  & -     & 0.0056  & 0.0059  & 0.0056  & 0.0056  & 0.0059  & 0.0056  \\
							Bias  & 0.0029  & -0.0018  & -     & -0.0004  & -0.0009  & 0.0000  & -0.0005  & -0.0009  & -0.0001  & Bias  & 0.0017  & -0.0017  & -     & -0.0011  & -0.0010  & -0.0008  & -0.0011  & -0.0010  & -0.0008  \\
							RMSE  & 0.0054  & 0.0055  & -     & 0.0047  & 0.0053  & 0.0048  & 0.0048  & 0.0053  & 0.0047  & RMSE  & 0.0039  & 0.0045  & -     & 0.0036  & 0.0044  & 0.0037  & 0.0036  & 0.0044  & 0.0037  \\
							\midrule
							& \multicolumn{9}{c}{$(n,T,\elln)=(400,10,2)$} &   & \multicolumn{9}{c}{$(n,T,\elln)=(400,10,[n^{1/5}]+2)$} \\
							\cmidrule{2-20}     
							\multirow{2}[4]{*}{} & \multicolumn{3}{c}{2SLS} & \multicolumn{3}{c}{OGMM} & \multicolumn{3}{c}{BGMM} & \multirow{2}[4]{*}{} & \multicolumn{3}{c}{2SLS} & \multicolumn{3}{c}{OGMM} & \multicolumn{3}{c}{BGMM} \\    
							\cmidrule{2-10}\cmidrule{12-20}        & \multicolumn{1}{c}{$\tilde{g}_1$} & \multicolumn{1}{c}{$\tilde{g}_2$} & \multicolumn{1}{c}{$\tilde{g}_3$} & \multicolumn{1}{c}{$\tilde{g}_{1}$} & \multicolumn{1}{c}{$\tilde{g}_{2}$} & \multicolumn{1}{c}{$\tilde{g}_{3}$} & \multicolumn{1}{c}{$\tilde{g}_{1}$} & \multicolumn{1}{c}{$\tilde{g}_{2}$} & \multicolumn{1}{c}{$\tilde{g}_{3}$} &       & \multicolumn{1}{c}{$\tilde{g}_{1}$} & \multicolumn{1}{c}{$\tilde{g}_{2}$} & \multicolumn{1}{c}{$\tilde{g}_{3}$} & \multicolumn{1}{c}{$\tilde{g}_{1}$} & \multicolumn{1}{c}{$\tilde{g}_{2}$} & \multicolumn{1}{c}{$\tilde{g}_{3}$} & \multicolumn{1}{c}{$\tilde{g}_{1}$} & \multicolumn{1}{c}{$\tilde{g}_{2}$} & \multicolumn{1}{c}{$\tilde{g}_{3}$} \\
							\midrule
							MAE   & 0.0042  & 0.0046  & -     & 0.0039  & 0.0048  & 0.0040  & 0.0039  & 0.0048  & 0.0040  & MAE   & 0.0039  & 0.0043  & -     & 0.0035  & 0.0044  & 0.0037  & 0.0035  & 0.0044  & 0.0037  \\
							Bias  & 0.0016  & -0.0004  & -     & 0.0003  & 0.0002  & 0.0007  & 0.0003  & 0.0002  & 0.0008  & Bias  & 0.0013  & -0.0005  & -     & -0.0001  & 0.0001  & 0.0004  & -0.0001  & 0.0001  & 0.0005  \\
							RMSE  & 0.0044  & 0.0052  & -     & 0.0040  & 0.0054  & 0.0045  & 0.0041  & 0.0054  & 0.0044  & RMSE  & 0.0035  & 0.0045  & -     & 0.0031  & 0.0046  & 0.0033  & 0.0031  & 0.0046  & 0.0033  \\ 
							\midrule
							& \multicolumn{9}{c}{$(n,T,\elln)=(400,25,2)$} &   & \multicolumn{9}{c}{$(n,T,\elln)=(400,25,[n^{1/5}]+2)$} \\
							\cmidrule{2-20}     
							\multirow{2}[4]{*}{} & \multicolumn{3}{c}{2SLS} & \multicolumn{3}{c}{OGMM} & \multicolumn{3}{c}{BGMM} & \multirow{2}[4]{*}{} & \multicolumn{3}{c}{2SLS} & \multicolumn{3}{c}{OGMM} & \multicolumn{3}{c}{BGMM} \\    
							\cmidrule{2-10}\cmidrule{12-20}          & \multicolumn{1}{c}{$\tilde{g}_{1}$} & \multicolumn{1}{c}{$\tilde{g}_{2}$} & \multicolumn{1}{c}{$\tilde{g}_{3}$} & \multicolumn{1}{c}{$\tilde{g}_{1}$} & \multicolumn{1}{c}{$\tilde{g}_{2}$} & \multicolumn{1}{c}{$\tilde{g}_{3}$} & \multicolumn{1}{c}{$\tilde{g}_{1}$} & \multicolumn{1}{c}{$\tilde{g}_{2}$} & \multicolumn{1}{c}{$\tilde{g}_{3}$} &       & \multicolumn{1}{c}{$\tilde{g}_{1}$} & \multicolumn{1}{c}{$\tilde{g}_{2}$} & \multicolumn{1}{c}{$\tilde{g}_{3}$} & \multicolumn{1}{c}{$\tilde{g}_{1}$} & \multicolumn{1}{c}{$\tilde{g}_{2}$} & \multicolumn{1}{c}{$\tilde{g}_{3}$} & \multicolumn{1}{c}{$\tilde{g}_{1}$} & \multicolumn{1}{c}{$\tilde{g}_{2}$} & \multicolumn{1}{c}{$\tilde{g}_{3}$} \\
							\midrule
							MAE   & 0.0035  & 0.0031  & -     & 0.0032  & 0.0031  & 0.0027  & 0.0032  & 0.0031  & 0.0027  & MAE   & 0.0034  & 0.0032  & -     & 0.0029  & 0.0031  & 0.0025  & 0.0029  & 0.0031  & 0.0025  \\
							Bias  & 0.0022  & 0.0001  & -     & 0.0016  & 0.0002  & 0.0005  & 0.0016  & 0.0002  & 0.0005  & Bias  & 0.0017  & 0.0003  & -     & 0.0008  & 0.0003  & 0.0000  & 0.0008  & 0.0003  & 0.0000  \\
							RMSE  & 0.0033  & 0.0030  & -     & 0.0028  & 0.0030  & 0.0022  & 0.0028  & 0.0030  & 0.0022  & RMSE  & 0.0026  & 0.0026  & -     & 0.0017  & 0.0026  & 0.0015  & 0.0017  & 0.0026  & 0.0015  \\ 
							\bottomrule
						\end{tabular}%
						\hspace*{-1cm}
						\begin{tablenotes}
							\footnotesize
							\item \textbf{Note:} The $\tilde{g}_{k}$ extracts the column vectors composed of non-zero elements from the upper triangular submatrix of $G_{k}$, and $\bard_0=10\%$. The results are based on 1,000 Monte Carlo replications. MAE denotes the mean absolute error. Bias denotes the mean bias of the estimates, and RMSE denotes the root mean squared error. The 2SLS refers to the two-stage least square estimator, OGMM refers to the feasible optimal GMM estimator, and BGMM refers to the feasible best GMM estimator. 
						\end{tablenotes}
					\end{spacing}
					\label{sim:heteG_sar}%
				\end{sidewaystable}
\end{document}